\newtheorem{thm}{Theorem}[section]
\newtheorem*{thm*}{Theorem}
\newtheorem{cor}[thm]{Corollary}
\newtheorem{lem}[thm]{Lemma}
\newtheorem{prop}[thm]{Proposition}
\newtheorem*{prop*}{Proposition}
\newtheorem*{conj*}{Conjecture}
\newtheorem*{dfn*}{Definition}
\theoremstyle{definition}
\newtheorem{rem}[thm]{\textbf{Remark}}
\newtheorem*{rmk*}{Remark}
\newtheorem*{fact*}{Fact}
\newcommand{\norm}[1]{\left\Vert#1\right\Vert}
\newcommand{\abs}[1]{\left\vert#1\right\vert}
\newcommand{\set}[1]{\left\{#1\right\}}
\newcommand{\brac}[1]{\left(#1\right)}
\newcommand{\Real}{\mathbb{R}}
\newcommand{\eps}{\varepsilon}
\newcommand{\K}{\mathcal{K}}
\newcommand{\I}{\mathcal{I}}
\newcommand{\vol}{\mathrm{vol}}
\newcommand{\Var}{\mathrm{Var}}
\newcommand{\Ent}{\mathrm{Ent}}
\newcommand{\E}{\mathbb E}
\renewcommand{\P}{\mathbb P}
\def \F {\mathcal{F}}
\newlength{\defbaselineskip}
\newcommand{\setlinespacing}[1]           {\setlength{\baselineskip}{#1 \defbaselineskip}}
\numberwithin{equation}{section}
\begin{document}

\title{Transference Principles for Log-Sobolev and Spectral-Gap with Applications to Conservative Spin Systems}
\author{Franck Barthe\textsuperscript{1} and Emanuel Milman\textsuperscript{2}}

\date{}

\footnotetext[1]{Institut de Math\'ematiques de Toulouse, CNRS UMR 5219, Universit\'e Paul Sabatier, 31062 Toulouse Cedex 09, France. Email:  barthe@math.univ-toulouse.fr.}

\footnotetext[2]{Department of Mathematics, Technion - Israel
Institute of Technology, Haifa 32000, Israel. Supported by ISF, GIF, BSF and the Taub Foundation (Landau Fellow). Email:
emilman@tx.technion.ac.il.}

\maketitle

\begin{abstract}
We obtain new principles for transferring log-Sobolev and Spectral-Gap inequalities from a source metric-measure space to a target one, 
when the  curvature of the target space is bounded from below. 
As our main application, we obtain explicit estimates for the log-Sobolev and Spectral-Gap constants of various conservative spin system models, consisting of non-interacting and weakly-interacting particles, constrained to conserve the mean-spin.  When the self-interaction is a perturbation of a strongly convex potential, this partially recovers and partially extends previous results of Caputo, Chafa\"{\i}, Grunewald, Landim, Lu, Menz, Otto, Panizo, Villani, Westdickenberg and Yau. When the self-interaction is only assumed to be (non-strongly) convex, as in the case of the two-sided exponential measure, we obtain sharp estimates on the system's spectral-gap as a function of the mean-spin, independently of the size of the system. 
\end{abstract}

\section{Introduction}

The log-Sobolev and spectral-gap (or Poincar\'e) inequalities are among the most fundamental and useful functional inequalities for the analysis of equilibrium and non-equilibrium statistical mechanical systems. For instance, in the context of $n$-particle spin-systems, it is known that under various typical conditions (see \cite{YoshidaLogSobEquivToDecayOfCorrelations,StroockZegarlinskiLogSobImpliesDSCondition,BodineauHelfferSurvey,LedouxSpinSystemsRevisited}), the existence of a uniform lower bound on the spectral-gap or log-Sobolev constant associated to a spin-system, independent of boundary conditions and system-size, is equivalent to the exponential decay of spin-spin correlations in the distance between sites, and is thus a strong manifestation of the existence of a unique phase in the thermodynamic limit. However, for many natural non-trivial models, it is by no means an easy task to obtain uniform or other quantitative bounds on the spectral-gap or log-Sobolev constants. To elucidate this point, let us start by introducing our protagonists.

Let $(\Omega,d,\mu)$ denote a measure-metric space, meaning that $(\Omega,d)$ is a separable metric space and $\mu$ is a Borel probability measure on $(\Omega,d)$. 
Let $\F = \F(\Omega,d)$ denote the space of functions which are Lipschitz on every ball in $(\Omega,d)$. Given $f \in \F$, define $|\nabla f|$ as the following Borel function:
\[
 \abs{\nabla f}(x) := \limsup_{d(y,x) \rightarrow 0+} \frac{|f(y) - f(x)|}{d(x,y)} ~.
\]
(and we define it as 0 if $x$ is an isolated point - see \cite[pp. 184,189]{BobkovHoudre} for more details). In the smooth Euclidean setting, $\abs{\nabla f}$ of course coincides with the Euclidean length of the gradient of $f$. The study of the \emph{log-Sobolev inequality} was initiated in the works of Federbush \cite{Federbush-LogSobolev} and Gross \cite{GrossIntroducesLogSob} (cf. Stam \cite{Stam-LogSobolev}), and is a key feature of the Gaussian measure (see e.g. \cite{Ledoux-Book} for a general introduction and applications):

\begin{dfn*}
$(\Omega,d,\mu)$ is said to satisfy a log-Sobolev (LS) inequality with constant $\rho>0$ ($LSI(\rho)$) if:
\begin{equation} \label{eq:LS-inq-def}
\forall f \in \F \;\;\;\;  \frac{\rho}{2} \, \Ent_\mu(f^2) \leq \int |\nabla f|^2 d\mu ~,
\end{equation}
where $\Ent_\mu(g)$ denotes the entropy of a non-negative function $g$:
\[
 \Ent_\mu(g) := \int g \log g \, d\mu - \left(\int g \, d\mu  \right)\log\left(\int g \, d\mu\right) ~.
\]
The best possible constant $\rho$ above is denoted by $\rho_{LS} = \rho_{LS}(\Omega,d,\mu)$.
\end{dfn*}

\begin{dfn*}
$(\Omega,d,\mu)$ is said to satisfy a spectral-gap (SG) inequality with constant $\rho>0$ ($SG(\rho)$) if:
\begin{equation} \label{eq:SG-inq-def}
\forall f \in \F \;\;\;\;  \rho \, \Var_\mu(f) \leq \int |\nabla f|^2 d\mu ~,
\end{equation}
where $\Var_\mu(f)$ denotes the variance of $f$:
\[
\Var_\mu(f) := \int f^2 d\mu - \left(\int f\,  d\mu\right)^2 ~.
\]
The best possible constant $\rho$ above is denoted by $\rho_{SG} = \rho_{SG}(\Omega,d,\mu)$.
\end{dfn*}

By linearizing the LS inequality around constant functions, it is easy to verify that $\rho_{SG} \geq \rho_{LS}$ (see e.g. \cite{Ledoux-Book}).  The spectral-gap $\rho_{SG}$ controls the rate of convergence of an appropriate diffusion to the stationary measure $\mu$ in the variance sense, whereas the stronger LS constant $\rho_{LS}$ controls the rate of convergence in the entropy sense. By the tensorization property and the Bakry--\'Emery criterion, both types of inequalities are well-known to hold on products of spaces satisfying the corresponding inequality and on strongly convex spaces, respectively (see Appendix for details). 

One natural way of obtaining (lower) bounds on $\rho_{LS}$ or $\rho_{SG}$, is to start from a well-understood space $(\Omega,d,\mu)$, and to \emph{transfer} the $LS$ or $SG$ inequality from that space to a perturbation thereof. To this end, a \emph{transference principle} or \emph{stability result} for these inequalities is required. The most common type of perturbation is when the underlying metric space $(\Omega,d)$ remains fixed and only the measure $\mu$ is perturbed, and this is the case we will consider here. A very well-known transference principle for the log-Sobolev inequality is given by the Holley--Stroock lemma \cite{HolleyStroockPerturbationLemma}, which states that if $\mu_{1}$ and $\mu_{2}$ are two mutually absolutely continuous Borel probability measures on $(\Omega,d)$, then:
\begin{equation} \label{eq:HS}
\norm{\frac{d\mu_{2}}{d\mu_{1}}}_{L^{\infty}} \leq L_{2} ~,~  \norm{\frac{d\mu_{1}}{d\mu_{2}}}_{L^{\infty}} \leq L_{1} \;\;\; \Rightarrow \;\;\; \rho_{LS}(\Omega,d,\mu_{2}) \geq \frac{1}{L_{1}L_{2}} \rho_{LS}(\Omega,d,\mu_{1}) ~.
\end{equation}
A completely analogous statement trivially holds for the $SG$ inequality. Unfortunately, although being an extremely useful tool, a naive application of the Holley--Stroock lemma in a high-dimensional situation (consider simultaneously perturbing each single-site potential) will typically lead to an exponential degradation of $\rho_{LS}$ in the dimension, severely obstructing any hope of obtaining uniform bounds. 

Our aim in this work is to present several alternative general transference principles for the $LS$ and $SG$ inequalities, and as an application, to test their performance on a conservative spin model with and without weak-interactions. In this Introduction, we will put more emphasis on describing the application to conservative spin models, but before proceeding with this application, we briefly describe our general transference principles.

\subsection{The Transference Principles - A Brief Taste}

The Holley-Stroock perturbation principle (\ref{eq:HS})   requires uniform upper and lower bounds on  $\log(d\mu_{2}/d\mu_{1})$.  Our transference principles   rely  on an upper bound on  $\norm{d\mu_{2}/d\mu_{1}}_{L^{p}(\mu_{1})}$ for some $p > 1$, allowing $d\mu_{2}/d\mu_{1}$ to vanish or to explode on a $\mu_{1}$-small part of the space. They also rely on geometric assumptions on the spaces involved:
 we require  that $(\Omega,d)$ be given by a complete oriented smooth connected Riemannian manifold $(M,g)$, endowed with its natural geodesic distance $d$, and so that $(M,g,\mu_{2})$ satisfies an appropriate \emph{curvature lower-bound condition}. For the $SG$ inequality, we require the curvature to be non-negative, whereas for the $LS$ inequality we allow it to be bounded below by $-\kappa$, but in addition require that $\rho_{LS}(M,g,\mu_{1})$ is big enough with respect to $\kappa$. Our method is based on the equivalence between isoperimetric and concentration inequalities in the latter setting due to the second named author \cite{EMilman-RoleOfConvexity,EMilmanGeometricApproachPartI}, and extends the transference principles obtained in \cite{EMilman-RoleOfConvexity,EMilmanGeometricApproachPartII}. Let us state now a sample result and refer to Section \ref{sec:trans} for a more comprehensive account:

\begin{thm}[log-Sobolev Transference under Curvature Lower Bound - Euclidean Setting] \label{thm:intro-ls-trans}
Let $\mu_i = \exp(-V_i(x)) \, dx$ ($i=1,2$) denote two Borel probability measures on Euclidean space $(\Real^{n},|\cdot|)$, and assume that $V \in C^2(\Real^{n})$ and that $ \mathrm{Hess} V_{2} \geq -\kappa Id$ ($\kappa \geq 0$). Assume that $(\Real^{n},|\cdot|,\mu_1)$ satisfies a strong-enough log-Sobolev inequality:
\begin{equation} \label{eq:intro-ls-assump}
\rho  = \rho_{LS}(\Real^{n},|\cdot|,\mu_1) > \frac{4p}{p-1} \kappa ~,
\end{equation}
for some $p > 1$, and that:
\[
\brac{\int \brac{\frac{d\mu_2}{d\mu_1}}^p d\mu_1}^{1/p} \leq L ~.
\] 
Then $(\Real^{n},|\cdot|,\mu_2)$ satisfies a log-Sobolev inequality:
\[
\rho_{LS}(\Real^{n},|\cdot|,\mu_2) \geq C(\rho,\kappa,L,p)  ~,
\]
where:
\[
C(\rho,\kappa,L,p) := c \; \rho \; \frac{p-1}{p} \; \exp(-C (1 + \log(L)) / \theta) ~~,~~ \theta := 1 - \frac{4p \kappa}{(p-1) \rho} ~,
\]
and $c,C>0$ are universal numeric constants. Moreover, when $\kappa = 0$, one may in fact use:
\begin{equation} \label{eq:intro-L-dep-convex}
C(\rho,0,L,p) = c \; \rho \; \frac{p-1}{p} \frac{1}{1+ \log(L)} ~.
\end{equation}
\end{thm}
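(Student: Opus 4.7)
The plan is to combine the Herbst argument (log-Sobolev implies subgaussian concentration) with a density-weighted transfer of concentration, and then recover a log-Sobolev inequality on $\mu_2$ via the concentration-to-isoperimetry equivalence for spaces of bounded-below Bakry--\'Emery curvature (the second-named author's theorem). The condition $\rho > 4p\kappa/(p-1)$ corresponds precisely to the requirement that the concentration inherited by $\mu_2$ decays strictly faster than the Gaussian profile allowed by the curvature lower bound $-\kappa$.

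First, I would apply the Herbst argument to $\mu_1$: from $LSI(\rho)$ one gets that every $1$-Lipschitz function $f$ satisfies
\[
\mu_1\{f \geq \textstyle\int f\,d\mu_1 + t\} \leq \exp(-\rho t^2/2).
\]
Next, I would transfer concentration to $\mu_2$ via H\"older's inequality: for any Borel set $B$,
\[
\mu_2(B) = \int_B \frac{d\mu_2}{d\mu_1}\,d\mu_1 \leq L\cdot \mu_1(B)^{1-1/p}.
\]
Applied to tail sets of a $1$-Lipschitz $f$, this yields
\[
\mu_2\{f \geq \textstyle\int f\,d\mu_1 + t\} \leq L\,\exp\!\left(-\tfrac{p-1}{p}\cdot\tfrac{\rho t^2}{2}\right),
\]
so $\mu_2$ enjoys Gaussian concentration with rate $\rho(p-1)/p$, up to the prefactor $L$ and a small median/mean shift that can be absorbed into $t$ as an additive term of order $\sqrt{\log(L)/\rho}$.

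Then I would feed this concentration into the isoperimetric-equivalence machinery under curvature lower bound: since $\mathrm{Hess}\,V_2 \geq -\kappa\,Id$, the theorem of the second-named author upgrades Gaussian concentration on $\mu_2$ at rate $\alpha$ into an isoperimetric (hence log-Sobolev) inequality, provided $\alpha$ exceeds a multiple of $\kappa$. Taking $\alpha = \rho(p-1)/p$, the hypothesis $\rho > 4p\kappa/(p-1)$ yields $\theta = 1 - 4p\kappa/((p-1)\rho) > 0$, which is exactly what the quantitative version of the theorem needs in order to output an explicit lower bound on $\rho_{LS}(\Real^n,|\cdot|,\mu_2)$. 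Tracking the dependence on the prefactor $L$ and the "slack" $\theta$ through the inversion of Milman's isoperimetric profile should produce the advertised exponential $\exp(-C(1+\log L)/\theta)$.

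The main obstacle is Step~3, i.e., the quantitative inversion in Milman's equivalence: one must carefully account for the shift of size $\sqrt{\log(L)/\rho}$ introduced in the transfer, and show that under curvature $\geq -\kappa$, the "shifted" Gaussian concentration still yields a linear isoperimetric inequality with an explicit constant, rather than merely a qualitative statement. The improvement in the case $\kappa=0$ presumably comes from the fact that, in the non-negative-curvature setting, the Buser/Ledoux-type inequality allows one to convert subgaussian concentration directly into LS with only a $1/(1+\log L)$ loss, avoiding the exponential degradation forced by the curvature correction when $\kappa>0$.
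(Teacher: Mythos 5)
Your high-level route matches the paper's exactly: Herbst gives Gaussian concentration for $\mu_1$, this is transferred to $\mu_2$ via the $L^p$ density bound, and the concentration-to-isoperimetry equivalence under a curvature lower bound (Theorem~\ref{thm:main-equiv}) is fed the resulting profile to produce a log-Sobolev inequality via Ledoux/Beckner. However, the transfer step as you state it has a genuine gap. The one-sided H\"older bound $\mu_2(B)\le L\,\mu_1(B)^{1-1/p}$ controls $\mu_2$ of sets that are $\mu_1$-small, and hence gives deviation bounds for a Lipschitz $f$ around its $\mu_1$-mean. But the concentration profile $\K_2$ that must be plugged into Theorem~\ref{thm:main-equiv} requires bounding $\mu_2(\Omega\setminus A_r)$ for sets $A$ with $\mu_2(A)\ge 1/2$, i.e.\ concentration of $f=d(\cdot,A)$ around a $\mu_2$-median. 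These are not the same thing: knowing $\mu_2(A)\ge 1/2$ does not directly bound $\int f\,d\mu_1$. To close this you need the converse direction: that $\mu_2$-large sets are $\mu_1$-nonnegligible. Under only an $L^p$ bound on the density (which can vanish or explode on a $\mu_1$-small set) this is not automatic; the paper proves it by Markov-Chebyshev applied to the density, restricting to the set $\Omega_\eps=\{d\mu_2/d\mu_1\le M(\eps)\}$ on which $\mu_1(A\cap\Omega_\eps)\ge\mu_2(A\cap\Omega_\eps)/M(\eps)$, and it is exactly this device that produces the additive shift $\approx\sqrt{\log L/\rho}$ you invoke but do not derive. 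This is the content of Lemma~\ref{lem:conc-trans} and Proposition~\ref{prop:conc-trans-integral}, and it is the essential new ingredient, not a bookkeeping step. One smaller point: the improvement for $\kappa=0$ does not come from a different (Buser-type) final conversion; the Ledoux--Beckner step is the same in both cases. It comes from the sharper, explicit constants available in Theorem~\ref{thm:main-equiv} under the convexity assumptions ($c_{\delta_0}$ becomes a universal constant and $c_{0,\alpha}$ has an explicit form), which is why the exponential in $1/\theta$ collapses to a polynomial in $1+\log L$.
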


Note the logarithmic dependence on the perturbation parameter $L$ in (\ref{eq:intro-L-dep-convex}) when $\kappa = 0$ (i.e. when $V_2$ is convex), improving over the linear dependence in (\ref{eq:HS}).  A similar result holds for the $SG$ inequality when $\kappa = 0$ (and in fact much more may be said, see Subsection \ref{subsec:SG-trans}). See Remark \ref{rem:constants} for more on the constant $4$ in (\ref{eq:intro-ls-assump}).

Obtaining  dimension-free estimates of the malleable quantity $\norm{d\mu_{2}/d\mu_{1}}_{L^{p}(\mu_{1})}$ is already a more feasible task for many natural models, as we demonstrate in the case of the conservative spin model. Furthermore, the convenience of controlling this quantity is especially apparent when superimposing several perturbations of $\mu_1$. For instance, if $\mu_2 = f g \mu_1 / \int f g d\mu_1$ with $f g > 0$, then by several applications of Cauchy--Schwartz we may estimate:
\begin{eqnarray}
\label{eq:superimpose}\nonumber  & & \norm{\frac{d\mu_2}{d\mu_1}}^p_{L^p(\mu_1)} = \frac{\int f^p g^p d\mu_1}{ (\int f g d\mu_1)^p} \leq 
\brac{\int f^{2p} \; d\mu_1}^{1/2} \brac{\int g^{2p} \; d\mu_1}^{1/2} \brac{\int \frac{1}{fg} \; d\mu_1}^p  \\
& \leq & \brac{\int f^{2p} \; d\mu_1}^{1/2} \brac{\int g^{2p} \; d\mu_1}^{1/2} \brac{\int f^{-2} \; d\mu_1}^{p/2} \brac{\int g^{-2} \; d\mu_1}^{p/2} ~,
\end{eqnarray}
and so it is enough to analyze each perturbation ($f$ and $g$) of $\mu_1$ separately. We illustrate this point by adding weak-interactions to our conservative spin model in Section \ref{sec:inter}. 

\subsection{The Conservative Spin Model}

Let $V \in C^2(\Real)$ denote a single-site potential, so that $\mu = \exp(-V(x))\,  dx$ is a probability measure on $\Real$ with barycenter at the origin. Let $\mu_n := \mu^{\otimes n} = \exp(-H(x))\,  dx$  denote the Gibbs measure on $\Real^n$ corresponding to the grand canonical ensemble of non-interacting spins,
where $H(x)$ denotes the non-interacting Hamiltonian $H(x) = \sum_{i=1}^n V(x_i)$. By the well-known tensorization property of LS and SG inequalities (see Appendix), it follows that $\rho_{I}(\Real^n,|\cdot|,\mu_n) = \rho_{I}(\Real,|\cdot|,\mu)$, for $I = LS , SG$, and so the product space $(\Real^n,|\cdot|,\mu_n)$ is well-understood.  Here and elsewhere, $\Real^n$ is equipped with its standard Euclidean structure $|\cdot|$. 

The measure $\mu_{E_s}$, corresponding to the canonical ensemble having mean-spin $s \in \Real$, is obtained by restricting $\mu_n$ onto the hyperplane $E_s$ given by $\frac{1}{n} \sum_{i=1}^n x_i = s$, namely:
\begin{equation} \label{eq:intro-mu}
\mu_{E_s} = \frac{1}{Z_{E_s}} \exp(-H(x))\,  d\vol_{E_s}(x) ~,
\end{equation}
where we denote by $\vol_{E_s}$ the induced Lebesgue measure on the hyperplane $E_s$, and  by $Z_{E_s} > 0$ a normalization term. 

In \cite{VaradhanNonlinearDiffusionLimit}, Varadhan asked to characterize those potentials $V$ so that $(E_s,|\cdot|,\mu_{E_s})$ satisfies a SG inequality, uniformly in the system-size $n \geq 2$ and mean-spin $s \in \Real$.  Varadhan's question naturally extends to the LS inequality case as well. When $V$ is a $C^2$ perturbation of a (strongly convex) super-quadratic potential and under some mild technical assumptions, a positive answer to Varadhan's question for SG was obtained by Caputo \cite{CaputoUniformPoincareForPerturbedStricltyConvexPonentials}. A similar answer cannot be expected in the sub-quadratic case, as follows from the work of the first named author and Wolff. In \cite{BartheWolffGammaDistributions}, these authors showed that when $\mu$ is the exponential measure on $\Real_+$ (and more generally, when $\mu$ is the Gamma distribution), then $\rho_{SG}(E_s,|\cdot|,\mu_{E_s})$ behaves like $1 / s^2$ (albeit independently of $n$), and that $\rho_{LS}(E_s,|\cdot|,\mu_{E_s})$ behaves like $1 / n   s^2$. As for the corresponding questions in the LS case, various authors have addressed (using different methods) the case when $V$ is a $C^2$ perturbation of a quadratic \cite{LandimPanizoYauLSIForPerturbationOfQuadraticPotential,ChafaiConservativeSpinSystems,GOVWTwoScaleApproachForLSI}, starting from the work of Lu and Yau \cite{LuYauLSIForKawasakiAndGlauber}. In particular, Grunewald, Otto, Villani and Westdickenberg analyzed the LS case in \cite{GOVWTwoScaleApproachForLSI} using a two-scale (macroscopic - microscopic) approach, and established the system's convergence to the hydrodynamic limit. Building on this two-scale approach and extending it to a multi-scale one, Menz and Otto succeeded in \cite{MenzOttoLSIForPerturbationsOfSuperQuadraticPotential} to obtain a positive answer to the LS version of Varadhan's question when $V$ is a $C^1$ perturbation of a (strongly convex) super-quadratic potential.  We have recently learned that Fathi and Menz (in preparation) have succeeded to reprove this result by using the original two-scale approach.  

Subsequently, Menz \cite{MenzLSIwithWeakInteraction} obtained a positive answer for LS in the presence of weak-interactions between spins, when $V$ is a $C^2$ perturbation of a quadratic potential. Given a symmetric $n$ by $n$  matrix $A = \set{a_{i,j}}$ with zero diagonal ($a_{i,j} = a_{j,i}$ and $a_{i,i} = 0$) and a vector $b \in \Real^n$, the  spin-interaction term $-I_A(x)$ and boundary contribution term $B_b(x)$ are defined and added to the non-interacting Hamiltonian $H$ as follows (see Section \ref{sec:mean-spin} for more details):
\[
H_{A,b}(x) = H(x)  + B_b(x) - I_{A}(x) ~,~ B_b(x) = \sum_{i=1}^n b_i x_i ~,~ I_{A}(x) := \sum_{i,j =1}^n a_{i,j} x_i x_j ~.
\]
The corresponding weakly-interacting conservative Gibbs measure $\mu_{A,E_s,b}$ is defined as in (\ref{eq:intro-mu}) by replacing $H$ with $H_{A,b}$; when $b=0$, we simply denote it by $\mu_{A,E_s}$.

\subsection{Applying the Transference Principles}

In this work, we apply our transference principles to partially recover and partially extend the above mentioned results. Before stating a sample of our results, let us briefly describe our method. By a standard application of Cram\'er's trick, which is the key ingredient in the Cram\'er Theorem on Large Deviations - a central tool in the approaches of \cite{GOVWTwoScaleApproachForLSI,MenzOttoLSIForPerturbationsOfSuperQuadraticPotential,MenzLSIwithWeakInteraction} - we reduce the analysis of the general mean-spin case $s \in \Real$, to the case $s = 0$. The idea is to replace $\mu$ with $\mu^{\wedge a} := c_a \mu \exp(a x)$, for an $a = a(s)$ appropriately chosen so that the barycenter of $\mu^{\wedge a(s)}$ is at $s$, and to note that $(E_s,|\cdot|,\mu_{E_s})$ is isometric (as a metric-measure space) to $(E,|\cdot|,(\mu^{a(s)})_{E})$, where $\mu^a$ is the translation of $\mu^{\wedge a}$ having barycenter at $0$, and where we denote $E := E_0$. This trick also applies in the presence of weak-interactions and boundary contribution, but there the argument is more subtle, and requires the existence of a solution to a system of non-linear equations, which is guaranteed by an application of Banach's fixed point theorem (see Lemma \ref{lem:inter-invariance}). 

Next, note that $\mu_{E}$ is not absolutely continuous with respect to $\mu_n$, and so to be able to apply our transference results, we define $\mu_{E,w}$ by ``thickening" $\mu_{E}$ uniformly in the diagonal direction $D$ by a width of $w > 0$ from each side:
\[
d\mu_{E,w}(x) = \frac{1}{Z_{E,w}} \exp\brac{-H(\pi_{E}(x))} 1_{|\pi_D(x)| \leq w}\,  dx ~,~ Z_{E,w} = 2 w Z_{E} ~,
\]
where $\pi_F$ denotes orthogonal projection onto the subspace $F$.  The measure $\mu_{A,E,b,w}$ is defined analogously, replacing $H$ with $H_{A,b}$ above. 

\subsubsection{Obtaining log-Sobolev inequalities}

 By controlling $\norm{d\mu_{E,w}/d\mu_n}_{L^p(\mu_n)}$ for e.g. $p=4$ and applying Theorem \ref{thm:intro-ls-trans}, the LS inequality is transferred from $(\Real^n,|\cdot|,\mu_n)$ onto $(\Real^n,|\cdot|,\mu_{E,w})$ under appropriate assumptions on $V$; and from the latter space, which is a product of $(E,|\cdot|,\mu_{E})$ and the uniform measure on an interval of length $2w$ on $D$, the LS inequality is immediately established on $(E,|\cdot|,\mu_{E})$, and it remains to optimize over $w>0$. 
 
 To control $\norm{d\mu_{E,w}/d\mu_n}_{L^p(\mu_n)}$ for say $p > 0$, 
we require upper estimates, exponentially decaying in $u$, of:
\begin{eqnarray*}
&   & \mu_n\set{x \in \Real^n \; ; \; \abs{\pi_D(x)} \leq w ~,~ H(x) - H(\pi_E(x)) \geq u} \\
& = & \mu_n \set{x \in \Real^n \; ; \; \abs{\pi_D(x)} \leq w ~,~ \sum_{i=1}^n V(x_i) - \sum_{i=1}^n V(\pi_E(x)_i) \geq u } \\
& \leq & \mu_n \set{x \in \Real^n \; ; \; \exists t \in [-w,w] ~,~ \sum_{i=1}^n V(x_i) - \sum_{i=1}^n V(x_i + t/\sqrt{n}) \geq u} ~.
\end{eqnarray*}
Applying Taylor's theorem and expanding $V(x_i + t / \sqrt{n})$ to the second order, it remains to control the large deviation of:
\[
 \P \brac{\frac{w}{\sqrt{n}} \abs{\sum_{i=1}^n V'(X_i)} \geq u_1 } ~,~ \P \brac{ \frac{w^2}{2n} \sum_{i=1}^{n} (Y_i - \E(Y_0)) \geq u_2 } ~,
\]
where $X_1,\ldots,X_n$ and $Y_1,\ldots,Y_n$ are independent random variables having distribution identical to that of (respectively) $X_0$ and $Y_0 := \sup_{\xi \in [X_0-\delta,X_0+\delta]} |V''(\xi)|$ for a fixed $\delta > 0$, where $X_0$ is distributed according to $\mu$. Since $\E(V'(X_0)) = \int V'(x) \exp(-V(x)) dx = 0$, it is clear by the Central-Limit Theorem and the Law of Large Numbers that both terms above converge as $n \rightarrow \infty$. However, to obtain estimates valid for each individual $n$, we resort to other classical large deviation tools such as Bernstein's theorem. Furthermore, if $V''$ is bounded below, then control over the second order term is automatic. Finally, to control the normalization term $Z_E$, we apply the Berry--Esseen Theorem or a local version thereof. 

Superimposing weak-interaction on our model is also easily handled, by separately analyzing it using (\ref{eq:superimpose}). To this end, we use a non-symmetric version of the  Hanson--Wright order-two sub-Gaussian chaos deviation estimates \cite{HansonWright}, kindly communicated to us by Rafal Lata{\l}a, to whom we are grateful (see Appendix).     

\medskip

The following notation will be used throughout this work. Recall that the $\Psi_1$ norm of a random-variable $Y$, denoted $\norm{Y}_{L_{\Psi_1}}$, is defined as follows:
\[
\norm{Y}_{L_{\Psi_1}} := \inf\set{ \lambda > 0 \; ; \; \E \exp(|Y|/\lambda) \leq e} ~.
\]
Observe that $\norm{\cdot}_{L_{\Psi_1}}$ is indeed a norm, thanks to the convexity of the exponential function, and that our normalization ensures that $\norm{1}_{L_{\Psi_1}} = 1$. Set:
\begin{itemize}
\item $M_p = M_p(\mu) := \E |X_0|^p$. 
\item $D_{1,\Psi_1} = D_{1,\Psi_1}(\mu) := \norm{V'(X_0)}_{L_{\Psi_1}}$.
\item $D_{2,\Psi_1} = D^\delta_{2,\Psi_1} = D^\delta_{2,\Psi_1}(\mu) := \norm{Y_0}_{L_{\Psi_1}}$. 
\item $D_2 = D^\delta_2 = D^\delta_2(\mu) := \E(Y_0)$.
\end{itemize}
Finally, we denote the dependence of a certain parameter $\alpha$ on other parameters $\beta,\gamma,\delta,\eps$ as follows: $\alpha^{\beta,\gamma}_{\delta,\eps}$ means that $\alpha$ only depends on \emph{upper bounds} on $\beta,\eps$ and on \emph{lower bounds} on $\delta,\eps$.

\medskip

We now provide a few samples to illustrate the types of results we obtain:

\begin{thm} \label{thm:intro-model-zero-spin-2}
Let $\mu= \exp(-V(x))\,  dx$ denote a probability measure on $\Real$ with barycenter at $0$ so that $D_{1,\Psi_1} = D_{1,\Psi_1}(\mu) < \infty$ and
$\lambda := \norm{d\mu/dx}_{L^\infty} < \infty$. Assume in addition that $(\Real,|\cdot|,\mu)$ satisfies $LSI(\rho)$, and that:
\begin{equation} \label{eq:intro-model-kappa-2}
-\kappa := \inf_{x \in \Real} V''(x) \geq -\frac{\rho}{8} ~.
\end{equation}
Then for any integer $n$ greater than $\eta = \eta^{M_3/M_2^{3/2},M_3 \lambda^3}$, the conservative zero-mean spin system $(E,|\cdot|,\mu_E)$ satisfies LSI with constant:
\[
\rho_{LS}(E,|\cdot|,\mu_E) \geq c \frac{\rho}{Q^C} ~,
\]
where $c,C>0$ are universal constants and $Q$ is the following scale-invariant quantity:
\[
Q:= \max(1, (\kappa + D_{1,\Psi_1}^2)M_2) ~.
\]
\end{thm}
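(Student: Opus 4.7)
The plan is to execute the scheme outlined in the Introduction: thicken $\mu_E$ into the full-dimensional measure $\mu_{E,w}$, transfer the LSI from the product measure $\mu_n$ to $\mu_{E,w}$ using Theorem~\ref{thm:intro-ls-trans}, and descend to $\mu_E$ by a product-space argument. In the coordinates $x=y+(t/\sqrt{n})(1,\ldots,1)$ with $y\in E$ and $t\in[-w,w]$, the measure $\mu_{E,w}$ factors as $\mu_E\otimes\nu_w$, where $\nu_w$ is the uniform probability measure on $[-w,w]$. Consequently, lifting a function $f$ on $E$ to $f\circ\pi_E$ preserves both entropy and Dirichlet energy, so $\rho_{LS}(E,\mu_E)\ge \rho_{LS}(\Real^n,\mu_{E,w})$ (in effect a one-sided tensorization statement). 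The core task is therefore to bound $\rho_{LS}(\mu_{E,w})$ from below and then to optimize over $w>0$.

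To invoke Theorem~\ref{thm:intro-ls-trans} with $\mu_1=\mu_n$ and $\mu_2=\mu_{E,w}$, I verify its three hypotheses. First, $\rho_{LS}(\mu_n)=\rho_{LS}(\mu)=\rho$ by the tensorization property of LSI. Second, on the support of $\mu_{E,w}$ the potential equals $H\circ\pi_E$, whose Hessian $P_E^T\,\mathrm{diag}(V''((\pi_E x)_i))\,P_E$ satisfies the lower bound $\ge -\kappa\,\mathrm{Id}$, since $V''\ge -\kappa$ and orthogonal projections have norm at most one. Third, the hypothesis $\kappa\le\rho/8$ yields $\rho>\tfrac{4p}{p-1}\kappa$ strictly for any $p$ slightly larger than $2$; taking for instance $p=3$ one has $\theta=1-\tfrac{4p\kappa}{(p-1)\rho}\ge 1/4$. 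It remains only to control $L:=\|d\mu_{E,w}/d\mu_n\|_{L^p(\mu_n)}$.

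A direct computation yields
\[
L^p = \frac{1}{(2wZ_E)^p}\,\E_{\mu_n}\!\bigl[\mathbf{1}_{|\pi_D(X)|\le w}\,e^{p(H(X)-H(\pi_E(X)))}\bigr].
\]
Writing $\pi_D(X)=\bar X\,(1,\ldots,1)$ so that $|\bar X|\le w/\sqrt{n}$ on the support, a second-order Taylor expansion gives
\[
H(X)-H(\pi_E(X)) = \bar X\sum_{i=1}^n V'(X_i) - \tfrac{1}{2}\bar X^{\,2}\sum_{i=1}^n V''(\xi_i).
\]
The curvature bound $V''\ge-\kappa$ raises the second-order term by at most $\tfrac{\kappa}{2}n\bar X^{\,2}\le \kappa w^2/2$. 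For the first-order term, the $V'(X_i)$ are i.i.d., mean-zero (since $\int V'e^{-V}\,dx=0$), with $L_{\Psi_1}$-norm equal to $D_{1,\Psi_1}$; the standard Bernstein-type moment generating function estimate yields
\[
\E_{\mu_n}\exp\!\bigl(p(w/\sqrt{n})\bigl|\textstyle\sum_i V'(X_i)\bigr|\bigr) \le 2\exp\!\bigl(Cp^2 w^2 D_{1,\Psi_1}^2\bigr),
\]
provided $pwD_{1,\Psi_1}/\sqrt{n}$ stays bounded (which holds once $n$ is large enough). For the denominator, observe that $Z_E$ equals the density at $0$ of the marginal of $\mu_n$ under $\pi_D$, equivalently the density of $\sqrt n\,\bar X$ at $0$. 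A Berry--Esseen-type local central limit theorem, whose hypotheses demand $n\ge\eta^{M_3/M_2^{3/2},M_3\lambda^3}$, then gives $Z_E\ge c/\sqrt{M_2}$.

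Assembling these pieces, $\log L\le \tfrac{1}{2}\log(M_2/w^2)+C(\kappa+D_{1,\Psi_1}^2)w^2+O(1)$; the optimum at $w^2\sim 1/(\kappa+D_{1,\Psi_1}^2)$ yields $\log L\le \tfrac{1}{2}\log Q+O(1)$, where $Q=\max(1,M_2(\kappa+D_{1,\Psi_1}^2))$. Since the $L$-dependence in Theorem~\ref{thm:intro-ls-trans} takes the form $\exp(-C(1+\log L)/\theta)$ with $\theta\ge 1/4$, this logarithmic bound on $\log L$ converts into polynomial dependence on $Q$: $\rho_{LS}(\mu_{E,w})\ge c\rho/Q^C$, and hence $\rho_{LS}(\mu_E)\ge c\rho/Q^C$ as required. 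The most delicate step is the application of Theorem~\ref{thm:intro-ls-trans} itself: $\mu_{E,w}$ carries an indicator cutoff in the $\pi_D$ coordinate and is not smooth on all of $\Real^n$, so one needs either the more general Riemannian-manifold version of the transference principle announced in Section~\ref{sec:trans} (applied to the strip $\{|\pi_D(x)|\le w\}$ with the Euclidean metric) or a smoothing-and-limit argument.
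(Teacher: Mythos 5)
Your proof is correct and follows the paper's own route: thicken $\mu_E$ to $\mu_{E,w}$, control $\|d\mu_{E,w}/d\mu_n\|_{L^p(\mu_n)}$ by the one-sided Taylor argument plus Bernstein for the first-order term and the local CLT (Theorem~\ref{thm:local-BE}) for $Z_E$ (as in Proposition~\ref{prop:conc-trans-model-2} and Corollary~\ref{cor:conc-trans-model-2}), apply the log-Sobolev transference principle, optimize $w$, and descend by tensorization. The caveat you flag about the indicator cutoff is already handled by the paper's $\kappa$-semi-convexity assumptions, which expressly allow the measure to be supported on the closure of a geodesically convex domain with $C^2$ boundary and a $C^2$ potential on that closure --- the slab $\{|\pi_D(x)|\le w\}$ qualifies --- so Theorem~\ref{thm:log-Sob-trans} (rather than the cosmetically more restrictive Theorem~\ref{thm:intro-ls-trans}) applies directly, with no smoothing argument needed.
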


\begin{thm} \label{thm:intro-inter-zero-spin}
Let $\mu = \exp(-V(x))\,  dx$ denote a probability measure on $\Real$ with barycenter at the origin, so that $D_{1,\Psi_1} = D_{1,\Psi_1}(\mu) < \infty$ and $D_{2,\Psi_1} = D^\delta_{2,\Psi_1}(\mu) < \infty$ for some $\delta > 0$. Let $A$ denote an $n$ by $n$ symmetric matrix with zero diagonal.
Assume that $(\Real,|\cdot|,\mu)$ satisfies $LSI(\rho)$, and that:
\begin{itemize}
\item
\begin{equation} \label{eq:intro-inter-kappa}
-\kappa := \inf_{x \in \Real} V''(x) \geq -\frac{\rho}{8} ~;
\end{equation}
\item
\[ \norm{A}_{op} \leq c \rho ~,
\] where $c>0$ is an appropriate universal constant.
\end{itemize}
Then for any integer $n$ greater than $\eta = \eta^{M_3 / M_2^{3/2}, D_{1,\Psi_1},D_{2,\Psi_1}}_{\rho,\delta}$, the weakly-interacting conservative zero-boundary zero-mean spin system
$(E,|\cdot|,\mu_{A,E})$ satisfies LSI with constant:
\[
\rho_{LS}(E,|\cdot|,\mu_{A,E}) \geq c \frac{\rho}{Q^C} ~,
\]
where $c,C>0$ are universal constants and $Q$ is the following scale-invariant quantity:
\[
Q := \max\brac{1 , M_2 (D_2 + D_{1,\Psi_1}^2)} \exp(\norm{A}_{HS}^2 / \rho^2) ~.
\]
\end{thm}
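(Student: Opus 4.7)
The plan is to combine the strategy already developed for the non-interacting case (Theorem \ref{thm:intro-model-zero-spin-2}) with the superposition bound \eqref{eq:superimpose}, using the Hanson--Wright estimate (discussed at the end of the introduction and in the Appendix) to absorb the quadratic interaction factor $\exp(I_A)$. As in the non-interacting case, I would first thicken the canonical measure to $\mu_{A,E,w}$, which is absolutely continuous with respect to $\mu_n$, apply Theorem \ref{thm:intro-ls-trans} to transfer LSI from $(\Real^n,|\cdot|,\mu_n)$ to $(\Real^n,|\cdot|,\mu_{A,E,w})$, deduce LSI on $(E,|\cdot|,\mu_{A,E})$ from the product-like structure of $\mu_{A,E,w}$ (which factors as $\mu_{A,E}$ tensored with the uniform measure on an interval of length $2w$ in the diagonal direction $D$), and finally optimize in $w>0$.

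The key new step is controlling the $L^p(\mu_n)$ norm of the density $d\mu_{A,E,w}/d\mu_n$ for $p=4$. Writing
\[
 \frac{d\mu_{A,E,w}}{d\mu_n}(x) = \frac{1}{Z_{A,E,w}} \mathbf{1}_{|\pi_D(x)| \leq w} \exp\bigl(H(x) - H(\pi_E(x)) + I_A(\pi_E(x))\bigr) \cdot \exp(-H(x)) / \exp(-H(x)),
\]
one factors this density as $f \cdot g / \int fg\, d\mu_n$, where $f$ collects the non-interacting thickening factor (exactly as in Theorem \ref{thm:intro-model-zero-spin-2}) and $g = \exp(I_A)$ encodes the weak interaction. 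The Cauchy--Schwartz superposition bound \eqref{eq:superimpose} then decouples the analysis into bounding $\int f^{\pm 2p}\, d\mu_n$ and $\int g^{\pm 2p}\, d\mu_n$ separately.

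The factor $\int f^{\pm 2p}\, d\mu_n$ is handled exactly as in the non-interacting case: Taylor expansion of $H(x) - H(\pi_E(x))$ to second order along the one-dimensional perturbation in direction $D$ reduces the analysis to Bernstein-type large deviation estimates for the sums $\frac{1}{\sqrt n}\sum V'(X_i)$ (whose increments are sub-exponential with norm $D_{1,\Psi_1}$) and $\frac{1}{n}\sum Y_i$ (controlled via $D_{2,\Psi_1}$ and $D_2$), together with Berry--Esseen to handle the normalization $Z_{E}$ (whence the hypothesis $n \geq \eta^{M_3/M_2^{3/2},\ldots}$). For the interaction factor $g = \exp(I_A)$, the LSI$(\rho)$ assumption on $\mu$ gives via Herbst that the coordinates $X_i$ are sub-Gaussian with constant $O(1/\sqrt{\rho})$; the Hanson--Wright inequality of the Appendix then yields, under $\|A\|_{op}\leq c\rho$ with $c$ small,
\[
\int g^{\pm 2p}\, d\mu_n \leq \exp\bigl(C\,\|A\|_{HS}^2/\rho^2\bigr),
\]
which is precisely the source of the $\exp(\|A\|_{HS}^2/\rho^2)$ factor appearing in $Q$.

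To apply Theorem \ref{thm:intro-ls-trans}, I must verify the curvature lower bound on the potential of $\mu_{A,E,w}$. Since the slab $\{|\pi_D(x)|\leq w\}$ is convex, it contributes no negative curvature, and the Hessian of $H - I_A$ on $E$ is bounded below by $-(\kappa + 2\|A\|_{op})$. The hypotheses $\kappa \leq \rho/8$ and $\|A\|_{op}\leq c\rho$ (with $c$ chosen small enough) ensure that the effective curvature lower bound satisfies \eqref{eq:intro-ls-assump} with $p=4$, and so Theorem \ref{thm:intro-ls-trans} gives a quantitative LSI for $\mu_{A,E,w}$, with constant depending polynomially on the $L^4(\mu_n)$-norm of $d\mu_{A,E,w}/d\mu_n$. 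Passing to $\mu_{A,E}$ costs nothing since LSI constants project along product factors, and optimizing the width $w$ (as in the non-interacting case) yields the stated bound in terms of $Q$. The principal obstacle is the simultaneous control of the curvature perturbation and the $L^p$-density bound under the interaction: this is exactly what the combination of the Hanson--Wright estimate and the smallness assumption $\|A\|_{op}\leq c\rho$ delivers.
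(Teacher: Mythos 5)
Your proposal is correct and follows essentially the same route as the paper: thicken to $\mu_{A,E,w}$, transfer LSI from $\mu_n$ via the curvature-dependent transference theorem, use the Cauchy--Schwartz superposition bound \eqref{eq:superimpose} to split the density into the non-interacting factor (handled by Taylor expansion + Bernstein + Berry--Esseen, exactly as in the non-interacting case) and the interaction factor $\exp(I_A(\pi_E(\cdot)))$ (handled via Herbst sub-Gaussianity and the Hanson--Wright/Lata{\l}a chaos estimate), verify the effective curvature bound $-\kappa_A = -(\kappa + O(\norm{A}_{op}))$ under the two smallness hypotheses, and finally project from the thickened product measure $\mu_{A,E,w_0}$ down to $\mu_{A,E}$ by tensorization before optimizing in $w$. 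This is precisely the proof given in Section~\ref{sec:inter} (Propositions~\ref{prop:conc-trans-inter}, \ref{prop:conc-trans-inter-full} and Theorem~\ref{thm:inter-zero-spin}).
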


Here and throughout we denote the operator and Hilbert-Schmidt norms of $A$ by $\norm{A}_{op}$ and $\norm{A}_{HS}$, respectively. To demonstrate the desired uniformity in system-size $n \geq 2$, mean-spin $s \in \Real$ and boundary contribution $b \in \Real^n$ for a concrete class of measures $\mu$, we consider the following:
\begin{dfn*}
The probability measure $\mu = \exp(-V(x))\, dx$ is called $(\alpha,\beta,\omega)$ \emph{weakly Gaussian} if we may decompose $V = V_{\text{conv}} + V_{\text{pert}}$ so that:
\begin{itemize}
\item $V_{\text{conv}}, V_{\text{pert}} \in C^2(\Real)$.
\item $V_{\text{conv}}'' \geq \alpha > 0$.
\item $\sup V_{\text{pert}} - \inf V_{\text{pert}} \leq \omega < \infty$.
\item $- \kappa := - \frac{1}{8} \alpha \exp(-\omega) \leq V'' \leq \beta < \infty$. \end{itemize}
\end{dfn*}

\begin{thm}
Let $\mu$ be a $(\alpha,\beta,\omega)$ weakly Gaussian measure, and let $A$ denote an $n$ by $n$ symmetric matrix with zero diagonal satisfying:
\[
\norm{A}_{op} \leq c \alpha \exp(-\omega) ~,
\]
for an appropriate universal constant $c > 0$. 
Then the canonical ensemble $(E_s,\abs{\cdot},\mu_{A,E_s,b})$ with weak-interaction $A$, satisfies a LSI, uniformly in the system size $n \geq 2$, mean-spin value $s \in \Real$ and boundary contribution $b \in \Real^n$, depending solely on a positive lower bound on $\alpha$ and upper bounds on $\beta$, $\omega$ and $\norm{A}_{HS} / (\alpha \exp(-\omega))$.
\end{thm}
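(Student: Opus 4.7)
The plan is to reduce the theorem to an application of Theorem \ref{thm:intro-inter-zero-spin} (after extending it, mutatis mutandis, to possibly non-identical single-site potentials satisfying uniform bounds), via the Cramér reduction encapsulated in Lemma \ref{lem:inter-invariance}. That lemma produces a vector $a \in \Real^n$ solving a fixed-point equation which depends on $s$, $b$, $A$ and $V$, such that $(E_s,|\cdot|,\mu_{A,E_s,b})$ is isometric as a metric-measure space to the zero-mean zero-boundary canonical ensemble built from single-site measures of the form $\mu^{a_i}$, where $\mu^{a_i}$ denotes the translated tilt of $\mu$ by $a_i$, i.e. having density proportional to $\exp(-V(x+t_i)+a_i x)$ with $t_i$ chosen so that $\mu^{a_i}$ has barycenter at the origin.

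The next step is to check that each $\mu^{a_i}$ is again $(\alpha,\beta,\omega)$ weakly Gaussian, uniformly in $a_i \in \Real$. Writing $V^{a_i}(x) = V(x+t_i) - a_i x$, I take the decomposition $V^{a_i}_{\text{conv}}(x) := V_{\text{conv}}(x+t_i) - a_i x$ and $V^{a_i}_{\text{pert}}(x) := V_{\text{pert}}(x+t_i)$. Translation preserves strong convexity and addition of a linear term does not affect the Hessian, so $V^{a_i}_{\text{conv}}$ remains $\alpha$-strongly convex; the oscillation of $V^{a_i}_{\text{pert}}$ equals that of $V_{\text{pert}}$; and the pointwise bounds on $V''$ are translation invariant. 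Consequently, by Bakry--Émery applied to $\exp(-V^{a_i}_{\text{conv}})dx$ followed by the Holley--Stroock correction (\ref{eq:HS}) with constant $e^{\omega}$, each $\mu^{a_i}$ satisfies $LSI(\rho)$ with $\rho \geq \alpha\exp(-\omega)$. The curvature hypothesis $-\kappa \geq -\rho/8$ of Theorem \ref{thm:intro-inter-zero-spin} holds by the very definition of weak Gaussianity, and $\norm{A}_{op} \leq c\rho$ holds by the assumption on $\norm{A}_{op}$ (with the same universal constant).

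It then remains to verify that the quantities governing $\eta$ and $Q$ in Theorem \ref{thm:intro-inter-zero-spin} are bounded solely in terms of $\alpha$, $\beta$, $\omega$ and $\norm{A}_{HS}/(\alpha\exp(-\omega))$, uniformly in $i$ (and hence in $a_i$). Since $|V''| \leq \beta$ deterministically, one has $D^\delta_{2,\Psi_1}(\mu^{a_i}) \leq \beta$ and $D^\delta_2(\mu^{a_i}) \leq \beta$ for every $\delta > 0$. The sub-Gaussian concentration implied by $LSI(\rho)$, combined with the fact that $V^{a_i\prime}$ is $\beta$-Lipschitz and has mean zero under $\mu^{a_i}$ (by our centering), yields $D_{1,\Psi_1}(\mu^{a_i}) \leq C\beta/\sqrt{\rho}$ and $M_p(\mu^{a_i}) \leq C_p \rho^{-p/2}$. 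A lower bound on $M_2$ follows from the integration by parts identity $\int V^{a_i\prime}(x) x \, d\mu^{a_i} = 1$ and Cauchy--Schwartz, giving $M_2 \geq 1/\E (V^{a_i\prime})^2 \geq c\rho/\beta^2$; hence $M_3/M_2^{3/2}$ is bounded in terms of $\beta/\rho \leq \beta\exp(\omega)/\alpha$. Finally $\norm{A}_{HS}/\rho$ is controlled by hypothesis.

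The technical heart of the matter is not any of the estimates above but rather the Cramér reduction itself in the presence of both the weak interaction $A$ and the boundary term $b$, which requires the fixed-point argument of Lemma \ref{lem:inter-invariance} to produce a consistent tilt vector $a$, and a verification that the resulting tilts $a_i$ are admissible (in particular, that the hypotheses of Theorem \ref{thm:intro-inter-zero-spin} tolerate non-identical but uniformly bounded single-site factors). Once the reduction is in place, the uniform LSI constant asserted in the theorem follows directly from the lower bound $\rho_{LS}(E,|\cdot|,\mu_{A,E}) \geq c\rho/Q^C$ of Theorem \ref{thm:intro-inter-zero-spin}, with $\rho$ and $Q$ controlled as above solely in terms of $\alpha$, $\beta$, $\omega$ and $\norm{A}_{HS}/(\alpha\exp(-\omega))$.
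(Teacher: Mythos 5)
Your outline is essentially the paper's route: you re-derive the content of Lemma~\ref{lem:weakly-Gaussian} (uniform $\rho^a$, $D^a_{1,\Psi_1}$, $D^a_{2,\Psi_1}$, $M_3^a/(M_2^a)^{3/2}$ bounds in terms of $\alpha,\beta,\omega$, via Bakry--\'Emery $+$ Holley--Stroock and the $\beta$-Lipschitz gradient), and then invoke Theorem~\ref{thm:inter-zero-spin} through the Cram\'er reduction of Lemma~\ref{lem:inter-invariance} / Corollary~\ref{cor:inter-reduction-to-zero-spin}, after observing that the zero-mean machinery tolerates non-identical but uniformly controlled single-site factors. That is precisely the content of Lemma~\ref{lem:weakly-Gaussian} and Theorem~\ref{thm:uniform-inter} in the paper.

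There is, however, one genuine gap: your argument only covers $n \geq \eta^{\beta,\omega}_{\alpha}$, since Theorem~\ref{thm:inter-zero-spin} (via Proposition~\ref{prop:conc-trans-inter-full}) requires $n$ larger than a threshold $\eta$ depending on the parameters. The theorem statement asserts uniformity in all $n \geq 2$. The paper closes this for $2 \leq n < \eta^{\beta,\omega}_{\alpha}$ by a direct argument: since $\norm{A}_{op} \leq c\alpha$ with $c<1$, the Hessian $\alpha\,\mathrm{Id} - A$ of the strictly convex part of the full Hamiltonian (including the boundary term, which is linear and does not affect the Hessian, and the interaction) remains $\geq (1-c)\alpha\,\mathrm{Id}$, hence its restriction to $E_s$ satisfies $LSI((1-c)\alpha)$ by Bakry--\'Emery; the bounded perturbation $V_{\text{pert}}$ then changes the Hamiltonian by at most $n\omega$, which is uniformly bounded on the range $n < \eta^{\beta,\omega}_{\alpha}$, so Holley--Stroock gives a uniform LSI constant there. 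You should add this finite-$n$ case. A secondary, minor imprecision: the weak Gaussianity bound on $V''$ is $-\kappa \leq V'' \leq \beta$ with $\kappa = \tfrac{1}{8}\alpha e^{-\omega}$, so the pointwise bound on $|V''|$ is $\max(\beta,\kappa)$ rather than $\beta$; this does not affect the conclusion since $\max(\beta,\kappa)$ is still controlled by $\alpha,\beta,\omega$, but the estimate for $D^\delta_{2,\Psi_1}$ and $D^\delta_2$ should reflect this.
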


\subsubsection{Obtaining Spectral-Gap inequalities}

We now turn our attention to the weaker SG inequalities. Unfortunately, our transference principle for the SG inequality only applies (at least in the Euclidean case) when the target measure is \emph{log-concave}. Recall that a measure $\nu = \exp(-W(x))\, dx$ on $\Real^n$ is called log-concave if $W : \Real^n \rightarrow \Real \cup \set{\infty}$ is convex. We naturally consider a scenario where no LS inequality is possible, when the log-concave probability measure $\mu = \exp(-V(x))\, dx$ does not have sub-Gaussian tail decay and the potential $V$ is not necessarily strongly convex:

\begin{thm} \label{thm:intro-SG-any-mean}
Let $\mu = \exp(-V(x)) \, dx$ denote a log-concave probability measure on $\Real$. Assume that either:
\begin{enumerate}
\item $V$ is Lipschitz with constant $L$. 
\item $V \in C^1(\Real)$ and $V'$ is Lipschitz with constant $L^2$. 
\end{enumerate}
Given $s \in \Real$, denote $\rho_s := \rho_{SG}(\Real,|\cdot|,\mu^{a(s)})$. Then for any integer $n \geq 2$ and mean-spin $s \in \Real$:
\begin{equation} \label{eq:intro-SG-bound}
\rho_{SG}(E_s,|\cdot|,\mu_{E_s}) \geq c \frac{\rho_s}{\log(2 + L^2/\rho_s)^2} ~,
\end{equation}
where $c>0$ is a  universal constant.
\end{thm}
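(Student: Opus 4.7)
We first use the isometry $(E_s, |\cdot|, \mu_{E_s}) \cong (E, |\cdot|, (\mu^{a(s)})_E)$ from the introduction to reduce to $s=0$: set $\nu := \mu^{a(s)}$, which is log-concave with barycenter $0$ and satisfies $\rho_{SG}(\Real, |\cdot|, \nu) = \rho_s$ by definition. In case (2), tilting and translation leave $V''$ invariant, so the potential of $\nu$ has $L^2$-Lipschitz derivative. In case (1), integrability of $\mu \exp(a(s) x)$ combined with the convexity and $L$-Lipschitzness of $V$ forces $|a(s)| < L$ (since $V(x) \sim L_\pm |x|$ asymptotically), so the potential of $\nu$ is Lipschitz with constant $\leq 2L$; we absorb the factor $2$ into universal constants and write $V, L$ for the properties of $\nu$.

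\textbf{Thickening and product structure.} For a width $w>0$ to be chosen later, form the thickened measure $\nu_{E,w}$ on $\Real^n$ as in the introduction. Since $\nu$ is log-concave, so is $\nu^{\otimes n}$, and $\nu_{E,w}$ decomposes as a product of $\nu_E$ on $E$ with the uniform measure on $[-w,w]$ along the diagonal $D$. Thus $\nu_{E,w}$ is log-concave on $\Real^n$, so has non-negative curvature. Tensorization of $SG$ gives
\[
\rho_{SG}(\Real^n, |\cdot|, \nu_{E,w}) = \min\brac{\rho_{SG}(E, |\cdot|, \nu_E), \pi^2/(4w^2)}.
\]

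\textbf{SG transference and control of $\mathcal{L}$.} Apply the $SG$ analogue of Theorem \ref{thm:intro-ls-trans} with $\kappa = 0$ (see Subsection \ref{subsec:SG-trans}), available thanks to the non-negative curvature of $\nu_{E,w}$, to transfer $SG$ from $\nu^{\otimes n}$ (which has $\rho_{SG} = \rho_s$ by tensorization) to $\nu_{E,w}$. Schematically this yields
\[
\rho_{SG}(\Real^n, |\cdot|, \nu_{E,w}) \geq \frac{c\, \rho_s}{(1+\log\mathcal{L})^2} ~,~ \mathcal{L} := \norm{d\nu_{E,w}/d\nu^{\otimes n}}_{L^p(\nu^{\otimes n})},
\]
for a fixed $p>1$ (say $p=2$). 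The core technical step is to show $\log\mathcal{L} \leq C(1+Lw)$ dimension-freely. Change coordinates $x = y + t\mathbf{1}/\sqrt n$ with $y\in E$, $|t|\leq w$, and use $Z_{E,w}=2w Z_E$, where $Z_E$ is the density at $0$ of $\pi_D(X) = \sqrt n\, \bar X$ under $\nu^{\otimes n}$ and is bounded away from $0$ and $\infty$ uniformly in $n\geq n_0$ by a Berry--Esseen/local CLT estimate using $\sigma^2 := \int x^2\, d\nu < \infty$. This reduces the computation of $\mathcal{L}^p$ to estimating $\E_{\nu_E}\exp\bigl((p-1)(H(Y+t\mathbf{1}/\sqrt n) - H(Y))\bigr)$. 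Taylor expanding
\[
H(Y+t\mathbf{1}/\sqrt n) - H(Y) = \frac{t}{\sqrt n}\sum_i V'(Y_i) + O\brac{t^2/n} \sum_i V''(\xi_i),
\]
the first-order sum is controlled by sub-Gaussian/Bernstein concentration, exploiting that $\E_\nu V'(X) = 0$ (integration by parts, since $\nu$ has barycenter $0$) and the Lipschitz bound on $V$ or $V'$; the conditioning $\nu^{\otimes n}\to \nu_E$ costs only a constant factor, again by local CLT. The second-order term is absorbed as $O(w^2 L^2)$ using $|V''|\leq L^2$ in case (2), or bypassed via a direct Lipschitz-increment bound on $V(y+t/\sqrt n)-V(y)$ in case (1).

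\textbf{Optimization and main obstacle.} Setting $w := \pi/(2\sqrt{\rho_s})$ matches $\pi^2/(4w^2)=\rho_s$, forcing the minimum in the tensorization identity to be attained by the $E$-factor. Then $Lw \sim L/\sqrt{\rho_s}$ and $(1+\log\mathcal{L})^2 \leq C\log^2(2+L^2/\rho_s)$, yielding (\ref{eq:intro-SG-bound}) for $n\geq n_0$; the residual range $2\leq n < n_0$ is handled by ad-hoc arguments (e.g., $n=2$ reduces to a one-dimensional log-concave computation). The main obstacle is the dimension-free control of $\mathcal{L}$: the conditioning to the hyperplane $E$ prevents direct appeal to product-measure concentration, and one must combine sharp sub-Gaussian estimates for $\sum V'(Y_i)$ — crucially using $\E_\nu V' = 0$ — with Berry--Esseen to control the normalizer $Z_E$. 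A secondary difficulty is unifying cases (1) and (2), since the second-order Taylor term admits a uniform bound only in case (2); case (1) requires the direct Lipschitz-increment route. The logarithmic (rather than polynomial) dependence on $\mathcal{L}$ in the SG transference principle — available precisely because the target $\nu_{E,w}$ is log-concave — is what ultimately produces the $\log^2$ factor in the denominator.
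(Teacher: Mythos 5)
Your overall architecture — Cram\'er reduction to mean-spin zero (noting that tilting preserves the Lipschitz hypotheses), thickening the conditioned measure along the diagonal, applying the spectral-gap transference principle for log-concave targets, and closing via tensorization plus a separate small-$n$ argument — is exactly the paper's route (Theorem \ref{thm:SG-any-mean} via Corollary \ref{cor:SG-mean-zero}, Theorem \ref{thm:SG-model-zero-spin} and Proposition \ref{prop:SG-small-n}). Identifying $\E_\nu V' = 0$ and local CLT control of $Z_E$ as the dimension-free ingredients is also right.

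However there is a genuine gap in the quantitative core: the choice $w = \pi/(2\sqrt{\rho_s})$ together with the claimed bound $\log\mathcal{L} \leq C(1+Lw)$ cannot yield the theorem. Even taking the claimed bound for granted, with $w\sim 1/\sqrt{\rho_s}$ one gets $(1+\log\mathcal{L})^2 \lesssim (1 + L/\sqrt{\rho_s})^2 \sim 1 + L^2/\rho_s$, a \emph{polynomial} loss, not $\log^2(2+L^2/\rho_s)$; your final step does not follow. Moreover the intermediate bound is not what the Bernstein/Berry--Esseen computation (Proposition \ref{prop:conc-trans-model-2} and Corollary \ref{cor:conc-trans-model-2}) actually gives: one obtains $\log\mathcal{L} \lesssim \log(\sqrt{M_2}/w) + w^2 D_{1,\Psi_1}^2$, which at $w\sim\sqrt{M_2}\sim 1/\sqrt{\rho_s}$ with $D_{1,\Psi_1}\lesssim L$ is of order $L^2/\rho_s$, so $\mathcal{L}$ is exponentially large. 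The paper's move, which you miss, is to take $w_0 = \min\bigl(\sqrt{M_2},\, 1/D_{1,\Psi_1}\bigr)$: then $w_0^2 D_{1,\Psi_1}^2 \leq 1$ collapses the Gaussian factor to a universal constant, leaving $\mathcal{L} \lesssim \max(1, \sqrt{M_2}\,D_{1,\Psi_1})$ and hence $\log\mathcal{L} \lesssim \log\bigl(2 + L/\sqrt{\rho_s}\bigr)$. Crucially there is no need to ``match'' $\pi^2/(4w^2)=\rho_s$: tensorization already gives $\rho_{SG}(E,\nu_E) \geq \rho_{SG}(\Real^n,\nu_{E,w})$ for \emph{every} $w$, so shrinking $w$ only improves the argument. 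Two secondary points: the paper controls the second-order Taylor term in both cases solely via convexity $V''\geq 0$ (the one-sided Proposition \ref{prop:conc-trans-model-2} with $\kappa=0$), so no upper bound on $V''$ and no separate case-(1) route is needed there; the Lipschitz hypotheses enter only through $D_{1,\Psi_1}$ — directly ($D_{1,\Psi_1}\le L$) in case (1), and via Gromov--Milman exponential concentration applied to the zero-mean $L^2$-Lipschitz function $V'$ (giving $D_{1,\Psi_1}\lesssim L^2/\sqrt{\rho_s}$) in case (2).
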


We show in Section \ref{sec:SG} that any log-concave probability measure satisfies the opposite inequality:
$
\rho_{SG}(E_s,|\cdot|,\mu_{E_s}) \leq C \rho_s$,
for some universal constant $C>0$. A well-known conjecture in convexity of Kannan, Lov\'asz and Simonovits \cite{KLS} predicts that the logarithmic term in (\ref{eq:intro-SG-bound}) and the restrictions on the log-concave measure $\mu$  in Theorem \ref{thm:intro-SG-any-mean} may be removed. 
In the special case of the two-sided exponential measure $\nu = \frac{1}{2} \exp(-|x|) dx$, we are indeed able to resolve this conjecture, which translates into:

\begin{thm} \label{thm:intro-two-sided-exp}
Let $\nu = \frac{1}{2} \exp(-|x|)\,  dx$. Then the canonical ensemble $(E_s,|\cdot|,\nu_{E_s})$ satisfies:
\[
\rho_{SG}(E_s,|\cdot|,\nu_{E_s})  \simeq \frac{1}{1+s^2} ~,
\]
uniformly in the system-size $n \geq 2$ and mean-spin $s \in \Real$. 
\end{thm}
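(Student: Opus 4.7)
The plan is to apply the Cram\'er exponential-tilt reduction outlined in the Introduction to pass from general mean-spin $s$ to a zero-mean-spin tilted product measure, use the general upper bound $\rho_{SG}(E_s,|\cdot|,\mu_{E_s})\leq C\rho_s$ for log-concave $\mu$ (stated just below Theorem~\ref{thm:intro-SG-any-mean}) for the upper inequality, and refine the transference scheme of Section~\ref{sec:trans} using the piecewise-exponential structure of the tilted measure for the lower inequality. By the symmetry $\nu(B)=\nu(-B)$ we may assume $s\geq 0$.

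For $|a|<1$, set $\nu^{\wedge a}(x)=\frac{1-a^2}{2}\,e^{-|x|+ax}\,dx$; elementary integration gives barycenter $s(a)=2a/(1-a^2)$, so $a\mapsto s(a)$ is a smooth bijection of $(-1,1)$ onto $\Real$ with the convenient identity $1+s(a)^2=((1+a^2)/(1-a^2))^2$. Writing $\nu^{a(s)}$ for the centered translate of $\nu^{\wedge a(s)}$, the Cram\'er trick gives the metric-measure isometry $(E_s,|\cdot|,\nu_{E_s})\cong(E,|\cdot|,(\nu^{a(s)})_E)$, reducing the problem to bounding $\rho_{SG}(E,|\cdot|,(\nu^{a(s)})_E)$. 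A direct moment computation yields $\Var(\nu^{a(s)})=2(1+a^2)/(1-a^2)^2\simeq 1+s^2$; since $\nu^{a(s)}$ is log-concave on $\Real$, a classical theorem of Bobkov then gives
\[
\rho_s:=\rho_{SG}(\Real,|\cdot|,\nu^{a(s)})\simeq \frac{1}{\Var(\nu^{a(s)})}\simeq \frac{1}{1+s^2},
\]
and the upper bound $\rho_{SG}(E_s,|\cdot|,\nu_{E_s})\leq C/(1+s^2)$ is immediate from the stated general log-concave inequality.

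For the lower bound, since $V(x)=|x|$ is $1$-Lipschitz, a naive application of Theorem~\ref{thm:intro-SG-any-mean}(1) would give only
\[
\rho_{SG}(E_s,|\cdot|,\nu_{E_s})\geq c\,\frac{\rho_s}{\log(2+1/\rho_s)^2}\simeq \frac{1}{(1+s^2)\log(2+s^2)^2},
\]
off by a logarithmic factor: Theorem~\ref{thm:intro-two-sided-exp} is precisely the assertion that this factor can be absorbed for the two-sided exponential. To do so, I would return to the $SG$ transference principle of Subsection~\ref{subsec:SG-trans} applied to $\mu_1=(\nu^{a(s)})^{\otimes n}$ and $\mu_2=(\nu^{a(s)})_{E,w}$, and exploit the piecewise-affine structure of the shifted potential $\tilde V$ of $\nu^{a(s)}$: the random variable $\tilde V'(X_0)$ takes exactly the two values $1-a$ (probability $(1+a)/2$) and $-(1+a)$ (probability $(1-a)/2$), hence is bounded by $2$ with variance exactly $1-a^2\simeq 1/\sqrt{1+s^2}$. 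Bernstein's inequality then delivers the sharp Gaussian-type bound
\[
\P\brac{\frac{w}{\sqrt n}\abs{\sum_{i=1}^n \tilde V'(X_i)}\geq u}\leq 2\exp\brac{-\frac{u^2}{Cw^2(1-a^2)}}
\]
in the relevant range of $u$, uniformly in $s$ and $n$, while the second-order ``kink'' contribution from the non-smoothness of $\tilde V$ at a single point is of order $w^2$ by a direct computation thanks to $\tilde V''\equiv 0$ almost everywhere. Plugging these sharp estimates into the $SG$ transference principle and optimizing the thickening width $w>0$ should absorb the $\log(1/\rho_s)$ loss.

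The main obstacle is precisely this last sharpening. The logarithmic factor in Theorem~\ref{thm:intro-SG-any-mean} is generic for merely Lipschitz log-concave potentials; removing it for the two-sided exponential requires exploiting the two-point structure of $\tilde V'$ to replace the generic Bernstein input by an exact Chernoff estimate, together with a careful bookkeeping of the non-smoothness of $\tilde V$ at the origin, and verifying that the resulting $L^p(\mu_1)$-bounds on $d\mu_2/d\mu_1$ depend only polynomially on $1/\rho_s$. Once this is in hand, the symmetry $s\mapsto -s$ closes the argument uniformly in $n\geq 2$ and $s\in\Real$.
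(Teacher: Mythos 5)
Your treatment of the upper bound and the reduction to the tilted zero-mean measure (Cram\'er trick, Bobkov's one-dimensional spectral-gap estimate, $\Var(\nu^{a(s)})\simeq 1+s^2$) matches the paper and is fine. For the lower bound, however, your plan of staying within the product-to-thickened-slab transference of Section~\ref{sec:trans} and merely sharpening the Bernstein input cannot succeed, even in principle: the logarithmic loss in Theorems~\ref{thm:SG-trans} and~\ref{thm:SG-trans2} does not originate in a slack concentration estimate but in the normalization of $\mu_{E,w}$. By Jensen,
\[
\int\brac{\frac{d\mu_{E,w}}{d\mu_n}}^p d\mu_n \;=\; \int\brac{\frac{d\mu_{E,w}}{d\mu_n}}^{p-1} d\mu_{E,w} \;\geq\; \brac{\int \frac{d\mu_{E,w}}{d\mu_n}\,d\mu_{E,w}}^{p-1} \;\geq\; \mu_n\brac{|\pi_D|\leq w}^{-(p-1)} \;\simeq\; Z_{E,w}^{-(p-1)}\,,
\]
so that any admissible $L$ must satisfy $\log L\gtrsim \frac{p-1}{p}\log\brac{\sqrt{M_2}/w}$. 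On the other hand, even using the sharper variance $\Var(\tilde V'(X_0))=1-a^2\simeq 1/\sqrt{1+s^2}$ in the Bernstein/Chernoff step (which is the correct two-point computation, and is indeed a genuine improvement over $D_{1,\Psi_1}^2$), the Gaussian exponent $\exp(p^2w^2(1-a^2))$ forces $w\lesssim (1+s^2)^{1/4}$, whence $\sqrt{M_2}/w\gtrsim (1+s^2)^{1/4}$ and $\log L\gtrsim \log(1+s^2)$. Thus every choice of $p$ and $w$ leaves a $\log(1+s^2)^2$ loss in the transferred spectral-gap bound; your own closing remark that the $L^p$-bounds ``depend only polynomially on $1/\rho_s$'' already encodes this defect, since a polynomial $L$ still yields $\log L\simeq\log(1/\rho_s)$.

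The paper resolves this by abandoning the $\mu_n\to\mu_{E,w}$ scheme for $|s|$ large and changing the source measure entirely. For $s\geq C_2>0$, the observation is that $H(x)=\sum_i|x_i|\equiv sn$ on the simplex $\Delta_{n,s}:=E_s\cap\Real^n_+$, so $\nu_{E_s}$ is \emph{uniform} on $\Delta_{n,s}$. The uniform measure on the regular simplex has $\rho_{SG}\geq c/s^2$ (Barthe--Wolff \cite{BartheWolffGammaDistributions}, after scaling), and a direct calculation shows $\nu_{E_s}(\Delta_{n,s})\geq 1/2$ for $s\geq C_2$, i.e., $\norm{d\lambda_{\Delta_{n,s}}/d\nu_{E_s}}_{L^\infty}\leq 2$. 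One then transfers from $\lambda_{\Delta_{n,s}}$ to $\nu_{E_s}$ via case~2 of Theorem~\ref{thm:old-SG-trans}, with a universally bounded $L$ and hence no logarithmic degradation. The range $|s|<C_2$ is handled by Theorem~\ref{thm:SG-any-mean} as you suggest, since there the $\log$ factor is harmless. You should replace the ``sharpening'' step by this change of reference measure; without it the argument is not merely incomplete but cannot be repaired within the chosen framework.
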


Here and throughout, we use $A \simeq B$  to denote that $c_1 \leq A/B \leq c_2$ for some two universal constants $c_1,c_2 > 0$. Theorem \ref{thm:intro-two-sided-exp} answers a question suggested to us by Pietro Caputo, whom we would like to thank, which served as motivation for this entire work.  His question was inspired by the relationship
between this specific canonical ensemble  and a continuous version of the one-dimensional
Solid-on-Solid model. See \cite{SOS} for recent results and references.

\subsection{Comparison With Previous Works}

Theorem \ref{thm:intro-ls-trans} should be compared with the Otto--Reznikoff criterion for LSI \cite{OttoReznikoffLSICriterion}. Roughly speaking, the latter result  applies on a product space, where the $i$-th factor satisfies $LSI(\rho_i)$, and a smallness condition (relative to the $\set{\rho_i}$'s) on the interaction between different factors is required, in the form of a lower bound on the cross-factors' Hessian. Our Theorem \ref{thm:intro-ls-trans}, which applies to an arbitrary not-necessarily-product space, also requires a smallness condition on the lower bound of the Hessian of the target measure $\mu_2$'s potential, relative to the LS constant of the source measure $\mu_1$. However, this smallness is required not just for the ``cross-terms", but also for the ``diagonal-terms" in some sense. 

Our approach to functional inequalities for conservative spin systems, based on transference principles, is novel 
and has several advantages. Firstly, it follows closely the natural heuristics: if $\int x\, d\mu(x)=0$
and the $X_i$'s are independent copies distributed according to $\mu$, then by the Law of Large Numbers and Central Limit Theorem, conditioning on $\sum_{i=1}^n X_i=0$
should be less and less stringent as $n$ increases, and hence the measure $\mu_E$ is expected to behave similarly to the product measure $\mu_n$; our proof precisely mimics this heuristic argument. Secondly, our approach is global and allows to deduce estimates on the ergodic constants for an individual 
value of the mean spin $s$ which hold uniformly in the dimension of the system. In contrast, previous
methods relied on a kind of induction on sub-systems, which seems to work well only when 
the ergodic constants are bounded independently of the dimension $n$ \emph{and} the spin $s$. For this reason, they 
could not be applied to sub-quadratic potentials, where a dependence in $s$ is expected. 
Finally, our approach is soft and flexible. It  adapts to weak interactions and is likely 
to be useful for other models.

Next, let us comment on our results in comparison to previous ones.
Our spectral-gap estimates for the conservative spin-model with convex self-interactions are new and almost sharp. Apart from the work \cite{BartheWolffGammaDistributions} on log-concave gamma distributions, which used very specific properties of these laws, our results are the first dimension-free estimates for potentials with a sub-quadratic growth.

Clearly, our results on the LS inequality for the conservative spin-models, being based on our general transference principle, do not fully recover the best known results described earlier. Our main limitation lies in the requirements (\ref{eq:intro-model-kappa-2})
and (\ref{eq:intro-inter-kappa}) that the LSI constant of $(\Real,|\cdot|,\mu)$ is strong-enough relative to the curvature lower-bound satisfied by the potential $V$. Consequently, we cannot handle the case of an \emph{arbitrarily large} $C^2$ perturbation of a quadratic potential as in \cite{LuYauLSIForKawasakiAndGlauber,LandimPanizoYauLSIForPerturbationOfQuadraticPotential,ChafaiConservativeSpinSystems,GOVWTwoScaleApproachForLSI}, let alone $C^1$ perturbations as in \cite{MenzOttoLSIForPerturbationsOfSuperQuadraticPotential}.
However we are able to handle  any \emph{weakly Gaussian measure}, i.e. small $C^2$ perturbations of not just a quadratic potential as in \cite{LuYauLSIForKawasakiAndGlauber,LandimPanizoYauLSIForPerturbationOfQuadraticPotential,ChafaiConservativeSpinSystems,GOVWTwoScaleApproachForLSI,MenzLSIwithWeakInteraction}, but of any potential $V$ with $0 < \alpha \leq V'' \leq \beta < \infty$. Furthermore, if no upper bound on  $V''$ is assumed, we can still handle the zero mean-spin case under a very mild technical assumption (essentially, that $V$ grows sub-exponentially); handling arbitrary mean-spin values via our method seems technically involved, but not impossible. Therefore in the non-interacting case, we propose a soft and robust approach to known results. Our approach avoids in particular the delicate and technical proofs of local versions of Cram\'er's Theorem, which are essential ingredients in the approaches of \cite{GOVWTwoScaleApproachForLSI,MenzOttoLSIForPerturbationsOfSuperQuadraticPotential,MenzLSIwithWeakInteraction}, and which must be reproved for every small variation of the underlying model (and in some cases, as in the weakly-interacting model of \cite{MenzLSIwithWeakInteraction}, are obtained by an indirect perturbation argument).

In the presence of weak-interactions, we partially recover and partially extend the work of Menz  \cite{MenzLSIwithWeakInteraction}, which deals with arbitrary $C^2$ bounded perturbations of a quadratic potential. Passing to a more general potential $V$ (even one with $0 < \alpha \leq V'' \leq \beta < \infty$) is a genuine issue for the approach of \cite{MenzLSIwithWeakInteraction}, due to the perturbative nature of the argument. In contrast, we can also deal with weakly Gaussian potentials if the interaction is small enough.
Furthermore, the result of \cite{MenzLSIwithWeakInteraction} requires that $\norm{\, |A| \,}_{op}$ be small-enough, where $|A| := \set{|a_{i,j}|}$ for $A = \set{a_{i,j}}$. Our estimates only 
require the smallness of  $\norm{A}_{op}$, which is always smaller (in the worst case by a factor $\sqrt n$):
$$  \norm{A}_{op} \leq \norm{\,|A|\,}_{op} \leq \norm{\,|A|\,}_{HS} = \norm{A}_{HS}\le  \sqrt n \, \norm{A}_{op}.$$
However the constants appearing in our estimates also depend on the value of $\norm{A}_{HS}$. 
So this advantage would only be apparent in  mean-field type situations, or at least when 
$\norm{A}_{HS}/ \norm{A}_{op}$ is upper-bounded independently of the dimension. 

To summarize, we present in this work a single framework for simultaneously handling several spin-models, which may be easily extended to other models or scenarios. 
All of our estimates are stated in terms of explicit dependencies on several concrete parameters extracted from the initial single-site measure $\mu$.

\medskip

The rest of the paper is organized as follows. Section \ref{sec:trans} presents our transference principles for the LS and SG inequalities. In Section \ref{sec:model}, we deduce a uniform bound in the system size on the LSI constant of a conservative spin system with zero-mean spin, under various assumptions on the self-interaction. Section \ref{sec:inter} achieves similar results for a weakly-interacting model. In Section \ref{sec:mean-spin}, we extend the uniformity of our bounds for both models to arbitrary mean-spin values. In Section \ref{sec:SG}, we turn our attention to the SG inequality when the self-interaction is assumed (non-strictly) convex. We obtain uniform bounds on the system size, whose dependence on the mean-spin is essentially sharp. The Appendix collects various useful auxiliary statements. 

\medskip

\noindent \textbf{Acknowledgements.} We thank Rafal Lata{\l}a and Sasha Sodin for their help with references and estimates, and Pietro Caputo, Georg Menz and Cedric Villani for their comments and interest. E.M. would also like to thank Thierry Bodineau and Tom Spencer for introducing him to Spin Systems.

\section{The Transference Principle} \label{sec:trans}

\subsection{Concentration Transference}

Let $(\Omega,d,\mu)$ denote a measure-metric space, meaning that $(\Omega,d)$ is a separable metric space and $\mu$ is a Borel probability measure.
Given such a space, its concentration profile $\K = \K(\Omega,d,\mu) : \Real_+ \rightarrow [0,1/2]$ is defined by:
\[
\K(r) := \sup \set{ \mu(\Omega \setminus A^d_r) \; ; \; \mu(A) \geq 1/2} ~, ~ A^d_r := \set{x \in \Omega \; ; \; d(x,A) < r} ~.
\]
It is immediate to check that this is equivalent to requiring that $\K$ is minimal with:
\[
\mu(A) > \K(r) \;\; \Rightarrow \;\; \mu(A^d_r) > 1/2 ~.
\]
Using the ``pessimistic" convention for the inverse of a non-increasing function:
\[
\K^{-1}(\eps) := \inf \set{r > 0 \; ; \; \K(r) < \eps} ~.
\]
it follows that:
\begin{equation} \label{eq:conc-reversed}
\mu(A) \geq \eps \;\; \Rightarrow \;\; \mu\big(\overline{A^d_r}\big) \geq 1/2 \;\; \text{ for } r = \K^{-1}(\eps) ~.
\end{equation}
Concentration inequalities provide upper bounds on the decay of $\K(r)$ in the large-deviation regime, as $r \rightarrow \infty$. Note that trivially $\K(r) \leq 1/2$ for all $r > 0$, and so the concentration inequality $\K(r) \leq \alpha(r)$ is only meaningful when $r > \alpha^{-1}(1/2)$. 

The following crucial lemma is a generalization of \cite[Lemma 3.1]{EMilmanGeometricApproachPartII}. 

\begin{lem}[Concentration Transference] \label{lem:conc-trans}
Let $\mu_1$,$\mu_2$ be two probability measures on a common metric space $(\Omega,d)$, with $\mu_2 \ll \mu_1$. 
Let $\K_i = \K_{(\Omega,d,\mu_i)}$ denote the corresponding concentration profiles. Assume that there exists a right-continuous non-increasing function $M: (0,1/4] \rightarrow (0,\infty)$ so that:
$$ 
\forall \eps \in (0,1/4]\,, \quad \mu_2\left(\Big\{x\in\Omega;\; \frac{d\mu_2}{d\mu_1}(x)>M(\eps)\Big\} \right)\le \eps.
$$
Then for all $r \geq 2 \K_1^{-1}(\beta(1/4))$,
\begin{equation} \label{eq:conc-trans-2}
\K_2(r) \leq 2 \beta^{-1}\left(\K_1\big(r/2\big)\right) ~, \end{equation}
where $\beta : [0,1/4] \rightarrow [0,\beta(1/4)]$ denotes the increasing function
$\displaystyle
\beta(\eps) := \eps/M(\eps)$, with the convention that $\beta(0) = 0$.
\end{lem}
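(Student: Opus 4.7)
The plan is to combine a basic density-based transfer estimate with the $\mu_1$-concentration profile, applied at two different thresholds. First I would record the elementary observation: setting $G_\eps := \set{x \in \Omega \; ; \; (d\mu_2/d\mu_1)(x) \leq M(\eps)}$, for every measurable $B \subseteq \Omega$ and every $\eps \in (0,1/4]$,
\[
\mu_2(B) = \mu_2(B \cap G_\eps) + \mu_2(B \setminus G_\eps) \leq M(\eps)\mu_1(B) + \eps,
\]
since on $G_\eps$ the density is at most $M(\eps)$, while $\mu_2(\Omega \setminus G_\eps) \leq \eps$ by the hypothesis on $M$.

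Next, fix $A \subseteq \Omega$ with $\mu_2(A) \geq 1/2$; our goal is to bound $\mu_2(\Omega \setminus A^d_r)$ from above. Applying the splitting with $\eps = 1/4$ gives $\mu_2(A \cap G_{1/4}) \geq 1/4$, and hence
\[
\mu_1(A \cap G_{1/4}) \;\geq\; \frac{1}{4\, M(1/4)} \;=\; \beta(1/4).
\]
I would then feed this $\mu_1$-lower bound into the reversed concentration relation (\ref{eq:conc-reversed}): with $s_0 := \K_1^{-1}(\beta(1/4))$, the closure of $(A \cap G_{1/4})^d_{s_0}$ has $\mu_1$-measure at least $1/2$. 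Applying the definition of $\K_1$ to this set and enlarging by $r/2$, its $(r/2)$-enlargement has $\mu_1$-complement at most $\K_1(r/2)$; and since $s_0 + r/2 \leq r$ by the hypothesis $r \geq 2\K_1^{-1}(\beta(1/4))$, this enlargement sits inside $A^d_r$. Thus
\[
\mu_1(\Omega \setminus A^d_r) \;\leq\; \K_1(r/2).
\]

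Finally, I would re-apply the density splitting to the set $B = \Omega \setminus A^d_r$, but with an optimally chosen \emph{second} parameter $\eps' = \beta^{-1}(\K_1(r/2))$. The hypothesis on $r$, combined with the monotonicity of $\K_1$ and $\beta^{-1}$, guarantees $\K_1(r/2) \leq \beta(1/4)$ and therefore $\eps' \leq 1/4$, making the use of $M(\eps')$ legitimate. The balancing identity $\beta(\eps') = \eps'/M(\eps') \geq \K_1(r/2)$ --- which is exactly where the right-continuity of $M$ (and hence of $\beta$) enters, to ensure the pessimistic inverse $\beta^{-1}$ behaves correctly --- yields $M(\eps')\, \K_1(r/2) \leq \eps'$, whence
\[
\mu_2(\Omega \setminus A^d_r) \;\leq\; M(\eps')\, \K_1(r/2) + \eps' \;\leq\; 2\eps' \;=\; 2\beta^{-1}(\K_1(r/2)),
\]
as announced. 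The main technical nuisance I expect is the careful bookkeeping of open versus closed $r$-enlargements (particularly at the boundary $r = 2\K_1^{-1}(\beta(1/4))$), together with confirming that the two pessimistic inverses $\K_1^{-1}$ and $\beta^{-1}$ interact compatibly with the right-continuity of $M$ so that the final balancing holds with the stated inequality.
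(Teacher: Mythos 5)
Your proof is correct and uses essentially the same mechanism as the paper's: split $A$ by a density threshold to transfer a measure lower bound to $\mu_1$, push through the $\mu_1$-concentration profile, and split a second time to bound $\mu_2$ of the complement of $A^d_r$. The only structural difference is that you decouple the two density splits, using $\eps=1/4$ for the first (giving $\mu_1(A\cap G_{1/4})\ge\beta(1/4)$, so that $s_0=\K_1^{-1}(\beta(1/4))\le r/2$ falls directly out of the hypothesis on $r$) and the optimally balanced $\eps'$ only for the second, whereas the paper runs a single parameter $\eps_0$ through both splits, first derives the general intermediate inequality (\ref{eq:conc-trans-1}), and must then verify $\K_1^{-1}\bigl((1/2-\eps_0)/M(\eps_0)\bigr)\le\K_1^{-1}(\beta(\eps_0))\le r/2$ before optimizing.
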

\begin{proof}
Let $A \subset \Omega$ with $\mu_2(A) \geq 1/2$. Given $\eps \in (0,1/4]$, set 
\begin{equation}\label{eq:M-cond}
\Omega_\eps:=\left\{x\in \Omega;\;  \frac{d\mu_2}{d\mu_1}(x)\le M(\eps)\right\}.
\end{equation}
By hypothesis $\mu_2(\Omega_\eps)\ge 1-\eps$. Denote $A^\eps := A \cap \Omega_\eps$, and observe that
$\mu_2(A^\eps) \geq 1/2 - \eps$ and hence $\mu_1(A^\eps) \geq \delta_\eps := (1/2 - \eps)/M(\eps)$. Denoting $r_0 = \K_1^{-1}(\delta_\eps)$, it follows from (\ref{eq:conc-reversed}) that $\mu_1(\overline{(A^\eps)_{r_1}}) \geq 1/2$, and hence for any $r > 0$:
\[
\mu_1(\Omega_\eps \setminus (A^\eps)_{r_1+r}) \leq \mu_1(\Omega \setminus (A^\eps)_{r_1+r}) \leq \K_1(r) ~.
\]
Using (\ref{eq:M-cond}) again, we obtain:
\[
\mu_2(\Omega_\eps \setminus A_{r_1+r}) \leq \mu_2(\Omega_\eps \setminus (A^\eps)_{r_1+r}) \leq M(\eps) \K_1(r) ~,
\]
and thus:
\[
\mu_2(\Omega \setminus A_{r_1+r}) \leq \mu_2(\Omega \setminus \Omega_\eps) + \mu_2(\Omega_\eps \setminus A_{r_1+r}) \leq \eps + M(\eps) \K_1(r) ~.
\]
It follows that for all $\eps\in (0,1/4]$, 
\begin{equation} \label{eq:conc-trans-1}
 \K_2\brac{r + \K_1^{-1}\brac{\frac{1/2 - \eps}{M(\eps)}}} \leq \eps + M(\eps) \K_1(r) \;\;\;\; \forall r > 0 ~,~ \forall \eps \in (0,1/4] ~.
\end{equation}

Now given $r \geq 2 \K_1^{-1}(\beta(1/4))$, we set $\eps_0 := \beta^{-1}(\K_1(r/2)) \in [0,1/4]$, ensuring (by the right-continuity of $\beta$) that $\eps_0 / M(\eps_0) \geq \K_1(r/2)$. Since:
\[
\K_1^{-1}\brac{\frac{1/2 - \eps_0}{M(\eps_0)}} \leq \K_1^{-1}\brac{\frac{\eps_0}{M(\eps_0)}} \leq \frac{r}{2} ~,
\]
we obtain from (\ref{eq:conc-trans-1}):
\[
\K_2(r) \leq \eps_0 + M(\eps_0) \K_1\brac{r - \K_1^{-1}\brac{\frac{1/2 - \eps_0}{M(\eps_0)}}} \leq \eps_0 + M(\eps_0) \K_1(r/2) \leq 2 \eps_0 ~,
\]
and (\ref{eq:conc-trans-2}) follows.
\end{proof}

When $M(\eps) = M$ for all $\eps \in (0,1/4]$, that is when $d\mu_2/d\mu_1$ is essentially upper bounded, 
 the above concentration transference result  appeared in \cite[Lemma 3.1]{EMilmanGeometricApproachPartII} with better numerical constants. However the previous lemma adapts to cases when the density is unbounded.

\begin{prop}[Concentration Transference - Integral Form] \label{prop:conc-trans-integral}
Let $\mu_1$, $\mu_2$ be two probability measures on a common metric space $(\Omega,d)$, verifying  $\mu_2 \ll \mu_1$. Let $\K_i = \K_{(\Omega,d,\mu_i)}$, $i\in\{1,2\}$ be their concentration profiles. Let $G : \Real_+ \rightarrow \Real_+$ denote a continuous function increasing to infinity, and set $F(x) = x G(x)$ on $\Real_+$. Assume that:
\begin{equation} \label{eq:F-integral}
\int F\brac{\frac{d\mu_2}{d\mu_1}} d\mu_1 = \int G\brac{\frac{d\mu_2}{d\mu_1}} d\mu_2 \leq L < \infty ~.
\end{equation}
Then:
\[
\K_2(r) \leq 2 \K_1(r/2) F^{-1}\big(L / \K_1(r/2)\big) \;\;\; \forall  r > 0 ~.
\]
\end{prop}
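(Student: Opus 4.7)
The plan is to reduce the integral assumption to the pointwise tail condition required by Lemma~\ref{lem:conc-trans}, via Markov's inequality. Applying Markov to the nonnegative function $G(d\mu_2/d\mu_1)$ under $\mu_2$, the hypothesis (\ref{eq:F-integral}) gives for every $\eps>0$
$$\mu_2\set{x \in \Omega \; ; \; G(d\mu_2/d\mu_1(x)) > L/\eps} \; \leq \; \eps~.$$
Since $G:\Real_+\to\Real_+$ is continuous and strictly increasing to infinity (so $G^{-1}$ is well-defined, continuous, and strictly increasing), this is equivalent to $\mu_2\set{d\mu_2/d\mu_1 > G^{-1}(L/\eps)} \leq \eps$. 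I would therefore take $M(\eps) := G^{-1}(L/\eps)$, which is continuous and strictly decreasing in $\eps$, in particular right-continuous and non-increasing, placing us squarely within the hypotheses of Lemma~\ref{lem:conc-trans}.

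The next step is to invert $\beta(\eps) := \eps/M(\eps) = \eps/G^{-1}(L/\eps)$ and recognize the answer in terms of $F$. To solve $\beta(\eps) = a$, I set $y := G^{-1}(L/\eps)$, so that $\eps = L/G(y)$ and $a = \eps/y = L/(y G(y)) = L/F(y)$. Hence $y = F^{-1}(L/a)$ and therefore
$$\beta^{-1}(a) \; = \; \eps \; = \; a\, y \; = \; a\, F^{-1}(L/a)~.$$
Since $\beta$ is continuous and strictly increasing on $(0,\infty)$, $\beta^{-1}$ is well-defined and strictly increasing, and in particular $\beta^{-1}(\beta(1/4)) = 1/4$.

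It remains to conclude by splitting on the value of $a_r := \K_1(r/2)$. If $a_r \leq \beta(1/4)$, the pessimistic inverse convention ensures $r \geq 2 \K_1^{-1}(\beta(1/4))$, so Lemma~\ref{lem:conc-trans} directly yields $\K_2(r) \leq 2\beta^{-1}(a_r) = 2\, a_r\, F^{-1}(L/a_r)$, as desired. In the complementary range $a_r > \beta(1/4)$, monotonicity gives $a_r\, F^{-1}(L/a_r) = \beta^{-1}(a_r) \geq \beta^{-1}(\beta(1/4)) = 1/4$, so the claimed right-hand side is at least $1/2 \geq \K_2(r)$, matching the trivial universal bound on any concentration profile. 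Nothing deep arises: the core of the argument is the one-line identity $\beta^{-1}(a) = a\, F^{-1}(L/a)$ coming directly from $F(x)=xG(x)$, and the only mild subtlety is handling the small-$r$ regime where Lemma~\ref{lem:conc-trans} does not directly apply.
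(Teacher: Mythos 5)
Your proof follows essentially the same route as the paper's: Markov's inequality applied to $G(d\mu_2/d\mu_1)$ under $\mu_2$, the choice $M(\eps) = G^{-1}(L/\eps)$, the algebraic identity $\beta^{-1}(a) = a\,F^{-1}(L/a)$, and then Lemma~\ref{lem:conc-trans}. The algebra is correct (the paper verifies the same identity by noting $F(M(\eps)) = L/\beta(\eps)$ and then inverting, which is the same computation read in the opposite direction). The one place where you go beyond the paper is the final case split: Lemma~\ref{lem:conc-trans} only applies for $r \geq 2\K_1^{-1}(\beta(1/4))$, yet the proposition is claimed for all $r>0$; the paper glosses over this, whereas you observe that in the complementary regime the right-hand side already exceeds $1/2$ by monotonicity of $\beta^{-1}$, so the bound holds trivially. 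That is a genuine and welcome point of care. One microscopic remark on the split itself: it is slightly cleaner to split on $r$ versus $2\K_1^{-1}(\beta(1/4))$ rather than on $a_r = \K_1(r/2)$ versus $\beta(1/4)$, since the pessimistic-inverse convention guarantees $r/2 < \K_1^{-1}(\beta(1/4)) \Rightarrow \K_1(r/2) \geq \beta(1/4)$ but the converse implication from $\K_1(r/2) \leq \beta(1/4)$ to $r/2 \geq \K_1^{-1}(\beta(1/4))$ can fail at a boundary value where $\K_1$ is locally constant; both regimes you consider are nevertheless covered, so the conclusion is unaffected.
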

\begin{proof}
By the Markov-Chebyshev inequality:
\[
\mu_2\set{x \in \Omega \; ; \;  G\brac{\frac{d\mu_2}{d\mu_1}(x)} \geq \frac{L}{\eps}} \leq \frac{\int G\brac{\frac{d\mu_2}{d\mu_1}} d\mu_2}{L / \eps}  \leq \eps ~.
\]
It follows that the hypothesis of Lemma \ref{lem:conc-trans} is satisfied with $M(\eps) := G^{-1}(L/\eps)$. Observe that:
\[
F(M(\eps)) = M(\eps) G(M(\eps)) = \frac{L}{\beta(\eps)} ~,
\]
where as in the lemma, $\beta(\eps) := \eps / M(\eps)$. Consequently:
\[
\beta^{-1}(x) = M^{-1}(F^{-1}(L/x)) = \frac{L}{G(F^{-1}(L/x))} = x F^{-1}(L/x) ~,
\]
and the conclusion of Lemma \ref{lem:conc-trans} yields the desired assertion.
\end{proof}

\subsection{Concentration Vs. Isoperimetry}

In the ``semi-convex setting", when an additional lower-bound assumption on an appropriate (generalized Ricci) curvature associated to the target space $(\Omega,d,\mu_2)$ is imposed, it is possible, repeating the program put forth in \cite{EMilmanGeometricApproachPartII}, to utilize the results of \cite{EMilman-RoleOfConvexity,EMilmanGeometricApproachPartI} on the equivalence between isoperimetric and concentration inequalities in that setting, and translate the transference principle obtained above from the concentration to the isoperimetric level. As a by-product, a transference principle may be deduced for Sobolev-type inequalities in the semi-convex setting. We will only briefly describe this program here for the log-Sobolev and spectral-gap inequalities, as these seems the most interesting and relevant ones for applications, and refer to \cite{EMilmanGeometricApproachPartII} for results on the transference (or stability) of general isoperimetric and Sobolev-type inequalities and for missing details.

\medskip

Recall that Minkowski's (exterior) boundary measure of a
Borel set $A \subset \Omega$, which we denote here by $\mu^+(A)$, is
defined as $\mu^+(A) := \liminf_{\eps \to 0} \frac{\mu(A^d_{\eps}) -
\mu(A)}{\eps}$, where $A_{\eps}=A^d_{\eps} := \set{x \in \Omega ; \exists y
\in A \;\; d(x,y) < \eps}$ denotes the $\eps$ extension of $A$ with
respect to the metric $d$. The isoperimetric profile $\I =
\I_{(\Omega,d,\mu)}$ is defined as the pointwise maximal function $\I
: [0,1] \rightarrow \Real_+$, so that $\mu^+(A) \geq \I(\mu(A))$, for
all Borel sets $A \subset \Omega$. An isoperimetric inequality measures the relation between the boundary measure and the measure of a set, by providing a lower bound on $\I_{(\Omega,d,\mu)}$. Since $A$ and $\Omega \setminus A$ will typically have the same boundary measure, it will be convenient to also define $\tilde{\I} :[0,1/2] \rightarrow \Real_+$ as $\tilde{\I}(v) := \min(\I(v),\I(1-v))$.

The two main differences between isoperimetric and concentration inequalities are that the latter ones only measure the concentration around sets having measure $1/2$, and only provide large-deviation information on the measure of extensions of these sets, contrary to the infinitesimal information provided by the former ones. We refer to \cite{Ledoux-Book,MilmanSurveyOnConcentrationOfMeasure1988,EMilmanGeometricApproachPartII} for a wider exposition on these and related topics and for various applications.

It is known and easy to see that an isoperimetric inequality always implies a concentration inequality, simply by ``integrating'' along the isoperimetric differential inequality (see e.g. \cite{EMilmanGeometricApproachPartII}), but the converse implication is in general false, due to the possible existence of narrow ``necks'' in the geometry of the space $(\Omega,d)$ or the measure $\mu$. However, when such necks are ruled out by imposing some semi-convexity assumptions on the geometry and measure in the Riemannian-manifold-with-density setting, it was shown in the second named author's previous work \cite{EMilmanGeometricApproachPartI} (see also \cite{EMilmanSharpIsopInqsForCDD} and \cite{LedouxConcentrationToIsoperimetryUsingSemiGroups}) that general concentration inequalities imply back their isoperimetric counterparts, with quantitative estimates which \emph{do not depend} on the dimension of the underlying manifold. Semi-convexity thus serves as a bridge between the large-deviation and infinitesimal extension scales. The precise formulation is as follows.

\begin{dfn*} \label{def:CA}
We will say that our \emph{smooth $\kappa$-semi-convexity assumptions} are satisfied $(\kappa \geq 0)$ if:
\begin{itemize}
\item $(\Omega,d)$ is given by a complete smooth oriented Riemannian manifold $(M,g)$ with its induced geodesic distance.
\item $\mu$ is supported on the closure of a geodesically convex domain $S \subset M$ with (possibly empty) $C^2$ smooth boundary, on which $d\mu = \exp(-\psi) dvol_M|_S$ with $\psi \in C^2(\overline{S})$, and as tensor fields on $S$:
\[
  \mathrm{Ric}_g +  \mathrm{Hess}_g \psi \geq -\kappa g ~.
\]
\end{itemize}
We will say that our \emph{$\kappa$-semi-convexity assumptions} are satisfied if $\mu$ can be approximated in total-variation by measures $\set{\mu_m}$ so that each $(\Omega,d,\mu_m)$ satisfies our smooth $\kappa$-semi-convexity assumptions. \\
When $\kappa=0$, we will say that our \emph{convexity assumptions} are satisfied.
\end{dfn*}

\smallskip

Here $ \mathrm{Ric}_g$ denotes the Ricci curvature tensor of $(M,g)$, $ \mathrm{Hess}_g$ denotes the Riemannian Hessian, and $vol_M$ denotes the Riemannian volume form. $ \mathrm{Ric}_g +  \mathrm{Hess}_g \psi$ is the well-known Bakry--\'Emery curvature tensor, introduced in \cite{Lichnerowicz1970GenRicciTensorCRAS} and developed in \cite{BakryEmery} (in the more abstract framework of diffusion generators), which incorporates the curvature from both the geometry of $(M,g)$ and the measure $\mu$. When $\psi$ is sufficiently smooth and $S=M$, our $\kappa$-semi-convexity assumption is then precisely the Curvature-Dimension condition $CD(-\kappa,\infty)$ (see \cite{BakryEmery}). An important example to keep in mind is that of Euclidean space $(\Real^n,\abs{\cdot})$ equipped with a probability measure $\exp(-\psi(x)) dx$ with $ \mathrm{Hess} \; \psi \geq -\kappa Id$.

\begin{thm}[\cite{EMilmanGeometricApproachPartI}] \label{thm:main-equiv}
Let $\kappa \geq 0$ and let $\alpha: \Real_+ \rightarrow \Real \cup \set{+\infty}$ denote an increasing continuous function so that:
\begin{equation} \label{eq:alpha-cond}
\exists \delta_0 > 1/2 \;\;\; \exists r_0 \geq 0 \;\;\; \forall r \geq r_0 \;\;\; \alpha(r) \geq \delta_0 \kappa r^2 ~.
\end{equation}
Then under our $\kappa$-semi-convexity assumptions, the concentration inequality:
\[
\K(r) \leq \exp(-\alpha(r)) \;\;\; \forall r > 0 ~,
\]
implies the following isoperimetric inequality:
\begin{equation} \label{eq:i-2}
\tilde{\I}(v) \geq \min(c_{\delta_0} \; v \gamma(\log 1/v),c_{\kappa,\alpha}) \;\;\; \forall v \in [0,1/2] \;\;\; , \;\;\; \textrm{where} \;\; \gamma(x) = \frac{x}{\alpha^{-1}(x)} ~,
\end{equation}
and $c_{\delta_0},c_{\kappa,\alpha}>0$ are constants depending solely on their arguments. Moreover, if $\kappa = 0$, we may take $c_{\delta_0} = c$ and $c_{0,\alpha} = \frac{c}{4} \gamma(\log 4)$ for some universal constant $c>0$. If $\kappa > 0$, the dependence of $c_{\kappa,\alpha}$ on $\alpha$ may be expressed only via $\delta_0$ and $\alpha(r_0)$.
\end{thm}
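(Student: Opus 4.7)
The plan is to use the heat-semigroup / Bakry--\'Emery gradient-bound approach, which under the curvature-dimension condition $\mathrm{Ric}_g + \mathrm{Hess}_g \psi \geq -\kappa g$ translates coarse (large-scale) concentration control into infinitesimal isoperimetric content. First I would invoke the approximation built into the definition of $\kappa$-semi-convexity to reduce to the smooth setting; both $\K$ and $\tilde{\I}$ behave well under total-variation convergence of $\mu_m \to \mu$, up to mild adjustments of constants. In the smooth setting, the diffusion semigroup $P_t$ generated by $L = \Delta_g - \nabla \psi \cdot \nabla$ is symmetric in $L^2(\mu)$, and the $CD(-\kappa,\infty)$ hypothesis yields the pointwise gradient estimate $|\nabla P_t f| \leq e^{\kappa t} P_t |\nabla f|$.

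The core comparison is a Ledoux-type estimate: for any Borel set $A$ with $\mu(A) = v \leq 1/2$ and any $t > 0$, integration by parts together with the gradient bound gives
\[
\int (P_t 1_A)(1 - P_t 1_A)\, d\mu \leq C\, \sqrt{t}\, e^{\kappa t}\, \mu^+(A).
\]
In parallel, the concentration hypothesis controls the mass of enlargements: with $r_1 := \K^{-1}(v)$, one has $\mu(A_{r_1 + r}) \geq 1 - \exp(-\alpha(r))$ for all $r > 0$. A comparison of $P_t 1_A$ against these enlargements (using that $P_t$ applied to an indicator decays away from its support at rate governed by $\sqrt{t}\, e^{\kappa t}$) provides a lower bound of order $v(1-v)$ for the left-hand side above, provided $\sqrt{t}\,e^{\kappa t}$ is comparable to but slightly smaller than $r_1 + r \sim \alpha^{-1}(\log 1/v)$. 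Combining these two estimates and optimizing over $t$ and $r$ produces $\mu^+(A) \gtrsim v / \alpha^{-1}(\log 1/v) = v\, \gamma(\log 1/v)$ in the regime where $\log 1/v$ exceeds the threshold controlled by $r_0$; the uniform floor $c_{\kappa,\alpha}$ handles values of $v$ close to $1/2$, where concentration alone guarantees that such sets have enlargements of nearly full measure and the boundary measure is bounded below by a constant depending only on $\delta_0$ and $\alpha(r_0)$.

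The main obstacle is the tension between the two scales. The semigroup must diffuse across a distance comparable to $\alpha^{-1}(\log 1/v)$, which forces $t$ to grow with $1/v$, but the factor $e^{\kappa t}$ in the gradient bound then degrades exponentially. This is precisely why the hypothesis $\alpha(r) \geq \delta_0 \kappa r^2$ with $\delta_0 > 1/2$ is imposed: at the matching scale $r \sim \sqrt{t}$, the concentration decay $\exp(-\delta_0 \kappa r^2)$ beats the curvature-driven blow-up $e^{\kappa t} = e^{\kappa r^2}$ by a surplus $\exp(-(\delta_0 - 1/2)\kappa r^2)$, and this surplus is what extracts a nontrivial lower bound on $\mu^+(A)$. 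The strict threshold $\delta_0 > 1/2$ is the exact Gaussian-balance point where the heat kernel and the curvature term compete on equal footing; when $\kappa = 0$ this tension disappears entirely, which explains why in that case the constants $c_{\delta_0}$ and $c_{0,\alpha}$ are genuinely universal up to the $\gamma(\log 4)$ factor coming from matching at $v = 1/4$.
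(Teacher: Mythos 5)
You have attempted to reprove a result that the paper itself does not prove: Theorem \ref{thm:main-equiv} is imported wholesale from the reference \cite{EMilmanGeometricApproachPartI}, and the paper offers no argument for it. The proof in that reference is based on geometric measure theory: existence and regularity of isoperimetric minimizers (via Morgan's theory), a generalized Heintze--Karcher comparison for the one-sided parallel sets of the minimizer under the Bakry--\'Emery curvature lower bound, and a careful bootstrapping of the resulting differential inequality for the isoperimetric profile against the concentration hypothesis. Your sketch instead follows the heat-semigroup route, which the authors explicitly cite as an alternative path (\cite{LedouxConcentrationToIsoperimetryUsingSemiGroups}); this is a legitimate and genuinely different strategy, trading the compactness and regularity machinery of GMT for analytic semigroup estimates. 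So the choice of method is defensible, but it is not the proof the theorem's attribution points to.

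That said, as written your sketch does not reach the stated conclusion; the scale matching is off by a logarithmic factor. You choose $t$ so that the diffusion scale $\sqrt{t}\,e^{\kappa t}$ is comparable to the concentration radius $R := \alpha^{-1}(\log 1/v)$ and then invoke $\mu^+(A) \gtrsim v/(\sqrt{t}\,e^{\kappa t})$, yielding $\mu^+(A) \gtrsim v/R$. But the theorem asserts $\mu^+(A) \gtrsim v\,\gamma(\log 1/v) = v \log(1/v)/R$, which is stronger by a factor of $\log(1/v)$. This gap is not cosmetic: in the exponential-concentration case $\alpha(r) = cr$ with $\kappa = 0$, the theorem gives the uniform Cheeger bound $\tilde{\I}(v) \gtrsim v$, whereas your matching produces only $\tilde{\I}(v) \gtrsim v/\log(1/v)$, which is not a Cheeger inequality. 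The correct semigroup argument must run the diffusion for a much shorter time, roughly until $\sqrt{t}$ reaches the \emph{boundary-width} scale $\sim R/\log(1/v)$ rather than the concentration radius $R$; since this shorter scale is not directly visible from the concentration hypothesis alone, the actual proof (both in Ledoux's semigroup version and in Milman's GMT version) needs additional structure --- a co-area/level-set argument or a bootstrap of the profile inequality --- that your outline does not supply. As a side effect, your heuristic for the threshold $\delta_0 > 1/2$ (the competition between $e^{-\delta_0 \kappa r^2}$ and $e^{\kappa t}$ at $t \sim r^2$) is pegged to the wrong time scale; at the correct, much smaller, $t$, the factor $e^{\kappa t}$ is harmless, and the true origin of the $\delta_0 > 1/2$ threshold is the sharp exponent in the Heintze--Karcher volume-growth comparison under $CD(-\kappa,\infty)$, not a cancellation against the gradient-commutation blow-up.

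Two minor technical points. First, the gradient bound you want for the upper estimate is the reverse Poincar\'e $\snorm{\nabla P_s g}_\infty \leq \sqrt{\kappa/(1-e^{-2\kappa s})}\,\snorm{g}_\infty$, which after integration gives $\snorm{1_A - P_t 1_A}_{L^1} \lesssim (\sqrt{t} + \sqrt{\kappa}\,t)\,\mu^+(A)$, \emph{not} $\sqrt{t}\,e^{\kappa t}\,\mu^+(A)$; your stated form is a valid but cruder overestimate. Second, the reduction via total-variation approximation to the smooth case is less innocuous than your phrase ``behave well'' suggests: the concentration profile is indeed stable under total-variation convergence, but lower bounds on the isoperimetric profile do not pass to total-variation limits in general, and \cite{EMilmanGeometricApproachPartI} devotes real effort to making this step rigorous.
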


\subsection{log-Sobolev Transference under Curvature Lower Bound}

\begin{thm}[log-Sobolev Transference under Curvature Lower Bound] \label{thm:log-Sob-trans}
Let $\mu_2 \ll \mu_1$ denote two Borel probability measures on a common Riemannian manifold $(M,g)$,
and assume that $(M,g,\mu_2)$ satisfies our $\kappa$-semi-convexity assumptions ($\kappa \geq 0$).
Assume that $(M,g,\mu_1)$ satisfies a strong-enough log-Sobolev inequality:
\begin{equation} \label{eq:ls-assump}
\rho  = \rho_{LS}(M,g,\mu_1) > \frac{4p}{p-1} \kappa ~,
\end{equation}
for some $p > 1$, and that:
\begin{equation} \label{eq:L-cond}
\int \brac{\frac{d\mu_2}{d\mu_1}}^p d\mu_1 \leq L^p ~.
\end{equation}
Then $(M,g,\mu_2)$ satisfies a log-Sobolev inequality:
\begin{equation} \label{eq:ls-concl}
\rho_{LS}(M,g,\mu_2) \geq C(\rho,\kappa,L,p)  ~,
\end{equation}
where:
\[
C(\rho,\kappa,L,p) := c \; \rho \; \frac{p-1}{p} \; \exp(-C (1 + \log(L)) / \theta) ~~,~~ \theta := 1 - \frac{4p \kappa}{(p-1) \rho} ~,
\]
and $c,C>0$ are universal constants. Moreover, when $\kappa = 0$, one may in fact use:
\[
C(\rho,0,L,p) = c \; \rho \; \frac{p-1}{p} \frac{1}{1+ \log(L)} ~.
\]
\end{thm}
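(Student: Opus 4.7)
The plan is to establish the four-step chain: $LSI(\rho)$ for $\mu_1$ $\Rightarrow$ Gaussian concentration for $\mu_1$ $\Rightarrow$ Gaussian-type concentration for $\mu_2$ (via the transference principle of Proposition~\ref{prop:conc-trans-integral}) $\Rightarrow$ Gaussian-type isoperimetric inequality for $\mu_2$ (via Theorem~\ref{thm:main-equiv}, where $\kappa$-semi-convexity is crucial) $\Rightarrow$ $LSI$ for $\mu_2$. The strong-enough assumption (\ref{eq:ls-assump}) will enter precisely at the third step to fulfill hypothesis (\ref{eq:alpha-cond}), and the parameter $\theta$ measures the slack obtained there.

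First, Herbst's classical argument converts $LSI(\rho)$ for $\mu_1$ into the Gaussian concentration bound $\K_1(r) \leq \exp(-\rho r^2 / 2)$ (the precise numeric constant is unimportant here). Next, I apply Proposition~\ref{prop:conc-trans-integral} with $G(x) := x^{p-1}$, so that $F(x) = x^p$ and the integrability hypothesis (\ref{eq:F-integral}) becomes precisely (\ref{eq:L-cond}) with $L^p$ on the right-hand side. Since $F^{-1}(y) = y^{1/p}$, the conclusion of the proposition reads
\[
\K_2(r) \leq 2 L \, \K_1(r/2)^{(p-1)/p} \leq 2 L \exp\brac{- \tfrac{\rho (p-1)}{8 p} \, r^2} = \exp(-\alpha(r)) ~,
\]
with $\alpha(r) := \tfrac{\rho (p-1)}{8 p} r^2 - \log(2 L)$, i.e.\ Gaussian-type concentration for $\mu_2$ with an $L$-dependent additive shift.

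In the third step I verify the hypothesis (\ref{eq:alpha-cond}) of Theorem~\ref{thm:main-equiv}: the requirement $\alpha(r) \geq \delta_0 \kappa r^2$ for large $r$ with some $\delta_0 > 1/2$ is equivalent to $\tfrac{\rho (p-1)}{8 p} > \delta_0 \kappa$, which, by (\ref{eq:ls-assump}), can be met by a $\delta_0$ exceeding $1/2$ by a margin controlled from below by $\theta = 1 - \tfrac{4 p \kappa}{(p-1) \rho}$. Invoking Theorem~\ref{thm:main-equiv}, valid thanks to our $\kappa$-semi-convexity assumption on $\mu_2$, then yields
\[
\tilde{\I}(v) \geq \min\brac{c_{\delta_0} \, v \, \gamma(\log(1/v)), \, c_{\kappa,\alpha}} ~, \quad \gamma(x) = \frac{x}{\alpha^{-1}(x)} ~.
\]
A direct inversion gives $\alpha^{-1}(x) = \sqrt{8 p (x + \log 2L) / (\rho (p-1))}$, so that whenever $\log(1/v)$ significantly exceeds $\log(2 L)$, $v \gamma(\log(1/v))$ behaves like the Gaussian isoperimetric profile with constant $\simeq \sqrt{\rho (p-1)/p}$, while $c_{\kappa,\alpha}$ supplies a Cheeger-type lower bound on $\tilde{\I}$ for $v$ close to $1/2$.

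Finally, a standard implication of Ledoux / Bobkov type converts a Gaussian-like isoperimetric inequality $\tilde{\I}(v) \gtrsim K \, v \sqrt{\log(1/v)}$ into $LSI$ with rate $\simeq K^2$; tracking carefully how the additive shift $\log(2 L)$ inside $\alpha^{-1}$ propagates through this conversion, together with the Cheeger-type lower bound in the complementary range of $v$, produces the stated constant $c \, \rho \, \tfrac{p-1}{p} \exp(-C (1 + \log L)/\theta)$, with the $\theta^{-1}$ in the exponent accounting for the slack $\delta_0 - \tfrac{1}{2}$ shrinking as $\rho$ approaches the critical threshold $\tfrac{4 p \kappa}{p-1}$. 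In the convex case $\kappa = 0$, since $\mu_2$ is then log-concave one can bypass the crude Gaussian-isoperimetry-to-$LSI$ step and invoke instead the sharper equivalence between exponential/Gaussian concentration and $LSI$ available for log-concave measures (as in \cite{EMilman-RoleOfConvexity}), thereby recovering the much better logarithmic dependence $C(\rho,0,L,p) = c \rho \tfrac{p-1}{p}/(1 + \log L)$. The main obstacle I anticipate is precisely this last step: carefully bookkeeping the $L$- and $\theta$-dependence while patching together a composite isoperimetric profile (Gaussian for small $v$, Cheeger-type near $1/2$) and extracting the quantitatively sharper bound in the log-concave case.
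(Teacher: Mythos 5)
Your proposal reproduces the paper's argument essentially step-for-step: Herbst for Gaussian concentration of $\mu_1$, Proposition~\ref{prop:conc-trans-integral} with $G(x)=x^{p-1}$ (so $F(x)=x^p$, $F^{-1}(y)=y^{1/p}$) giving $\K_2(r)\le 2L\K_1(r/2)^{1-1/p}$, then Theorem~\ref{thm:main-equiv} under $\kappa$-semi-convexity, then the Ledoux/Beckner conversion of Gaussian isoperimetry to LSI. (Two small cosmetic notes: the Herbst argument is about the mean, not the median, so the resulting concentration bound carries a shift $\sqrt{2\log 2/\rho}$ that your $\exp(-\rho r^2/2)$ omits; and you must remember $L\ge 1$ by Jensen so that $\log(2L)\ge 0$.)

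One place where your explanation diverges from what the paper actually does is the $\kappa=0$ improvement. You suggest ``bypassing'' the isoperimetry-to-LSI step and invoking a sharper concentration-to-LSI equivalence for log-concave measures directly. The paper instead stays on the same route and simply uses the explicit constants stated at the end of Theorem~\ref{thm:main-equiv} for $\kappa=0$: $c_{\delta_0}=c$ (a universal constant, with no degradation as $\delta_0\downarrow 1/2$) and $c_{0,\alpha}=\frac{c}{4}\gamma(\log 4)$. Evaluating $\gamma(\log 4)=(\log 4)/\alpha^{-1}(\log 4)$ against $\alpha^{-1}(x)\simeq\sqrt{8p(x+\log 2L)/(\rho(p-1))}$ produces a $\sqrt{1+\log L}$ in the denominator rather than an exponential, and squaring (via Ledoux/Beckner) yields the stated $1/(1+\log L)$ factor; the exponential dependence in the general $\kappa>0$ case arises precisely because $c_{\kappa,\alpha}$ degrades badly as $\delta_0\to 1/2$, i.e., as $\theta\to 0$. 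Your alternative is plausible but would need to be spelled out; the mechanism in the paper is a constant-level refinement of the very same chain, not a shortcut around it.
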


\begin{proof}
Let us denote the isoperimetric and concentration profiles on the corresponding spaces by $\I_i = \I_i(M,g,\mu_i)$ and $\K_i = \K_i(M,g,\mu_i)$, $i=1,2$, respectively.
By the Herbst argument (see Appendix), it is known that the log-Sobolev inequality (\ref{eq:ls-assump}) implies the following Laplace-functional inequality:
\[
 \int \exp(\lambda f) d\mu_1 \leq \exp(\lambda^2/(2\rho)) \;\;\; \forall \lambda \geq 0 \;\; \forall \text{ $1$-Lipschitz $f$ s.t. } \int f d\mu_1 = 0 ~.
\]
By passing from expectation to median in the above requirement that $\int f d\mu_1 = 0$ (employing the Markov-Chebyshev inequality),  and from $1$-Lipschitz functions to their level-sets,  it is easy to check (see e.g. \cite[Lemma 4.2]{EMilmanGeometricApproachPartII}) that this implies the following concentration inequality on $(M,g,\mu_1)$:
\[
\K_1(r) \leq \exp\brac{ - \frac{\rho}{2} \brac{r - \sqrt{\frac{2 \log 2}{\rho}}}_+^2 } \;\;\; \forall r > 0 ~.
\]
Using Proposition \ref{prop:conc-trans-integral}, we deduce that:
\[
\K_2(r) \leq 2 L \K_1(r/2)^{1-1/p} \leq \exp\brac{ \log (2L)  - \frac{\rho(p-1)}{8p} \brac{r - 2 \sqrt{\frac{2 \log 2}{\rho}}}_+^2 } \;\;\; \forall r > 0 ~.
\]
Since $\delta_0 := \frac{\rho(p-1)}{8p} / \kappa > 1/2$ by assumption, $\K_2$ clearly satisfies the decay condition (\ref{eq:alpha-cond}) required to apply Theorem \ref{thm:main-equiv}:
\[
 \exists \delta'_0 := \frac{1}{2}(\delta_0 + 1/2) > \frac{1}{2} \;\;\; \exists r'_0 = r'_0(\rho,\kappa,L) \;\;\; \forall r \geq r'_0 \;\;\; \K_2(r) \leq \exp(-\delta'_0 \kappa r^2) ~.
\]
Consequently, Theorem \ref{thm:main-equiv} implies that the following isoperimetric inequality is satisfied:
\begin{eqnarray*}
\tilde{\I}_2(v) & \geq & \min\brac{c(\delta_0') \sqrt{\rho} v \frac{\log 1/v}{\sqrt{\frac{8p}{p-1}} \sqrt{\log 1/v + \log (2 L)} + 2 \sqrt{2 \log 2}} , c_{\rho,\kappa,L}} \\
& \geq & c'(\rho,\kappa,L,p) v \sqrt{\log 1/v} \;\;\;\;\;\; \forall v \in [0,1/2] ~.
\end{eqnarray*}
(note that by Jensen's inequality applied to (\ref{eq:L-cond}), $L \geq 1$).
This means that $(M,g,\mu_2)$ satisfies a Gaussian isoperimetric inequality. As described in \cite{EMilmanGeometricApproachPartII} in greater detail, it is known by a result of M. Ledoux \cite{LedouxBusersTheorem}, refined by B. Beckner (see \cite{LedouxLectureNotesOnDiffusion}), that this implies the log-Sobolev inequality (\ref{eq:ls-concl}) with
$C(\rho,\kappa,L,p) = c (c'(\rho,\kappa,L,p))^2$, with $c>0$ a universal constant, concluding the proof. Note that when $\kappa = 0$, the remarks at the end of the formulation of Theorem \ref{thm:main-equiv} imply that one may use:
\[
c'(\rho,0,L,p) = c \; \sqrt{\rho} \sqrt{\frac{p-1}{p}} \sqrt{\frac{1}{1+ \log(L)}}
\]
above, with $c>0$ a universal constant. When $\kappa > 0$, a tedious inspection of the estimates obtained in the proof of Theorem 1.2 in Section 5 of \cite{EMilmanGeometricApproachPartI} verifies the asserted estimate on $C(\rho,\kappa,L,p)$, concluding the proof.
\end{proof}

\subsection{Spectral-Gap Transference under Non-Negative Curvature} \label{subsec:SG-trans}

First, we state   transference  principles  which were obtained by the second named author. In their most
general form they involve total variation estimates. Recall that   the total-variation distance $d_{TV}$ between two absolutely continuous probability measures $\mu_1,\mu_2$ on $(M,g)$ is defined as:
\[
d_{TV}(\mu_1,\mu_2) := \frac{1}{2} \int \abs{\frac{d\mu_1}{d\vol_M} - \frac{d\mu_2}{d\vol_M}} d\vol_M ~. 
\]
For notational simplicity, we formulate the results of \cite{EMilman-RoleOfConvexity}  in the Euclidean setting 
and  refer to that paper for the general case. Transference results  with respect to the $1$-Wasserstein and relative-entropy distances are proposed in \cite{EMilmanGeometricApproachPartII}.

\begin{thm}[\cite{EMilman-RoleOfConvexity}] \label{thm:old-SG-trans}
Let $\mu_1$ and $\mu_2$ denote two log-concave probability measures on Euclidean space $(\Real^n,|\cdot|)$, meaning that $\mu_i = \exp(-V_i(x))\, dx$ with $V_i : \Real^n \rightarrow \Real \cup \set{+\infty}$ convex. Assume that either of the following conditions is satisfied for some $L \in (1,+\infty)$:
\begin{enumerate}
\item $\mu_2 \ll \mu_1$ and $\norm{d\mu_2/d\mu_1}_{L^\infty} \leq L$; or,
\item $\mu_1 \ll \mu_2$ and $\norm{d\mu_1/d\mu_2}_{L^\infty} \leq L$; or,
\item $\displaystyle d_{TV}(\mu_1,\mu_2) \leq 1 - \frac1L$ .
\end{enumerate}
Then the $i$-th assumption ($i=1,2,3$) implies:
\[
\rho_{SG}(\Real^n,|\cdot|,\mu_2) \geq C_i(L)^2 \rho_{SG}(\Real^n,|\cdot|,\mu_1) ~,
\]
where, for some universal constant $c>0$:
\begin{equation} \label{eq:C-eps}
C_1(L) = \frac{c}{1+\log(L)} ~,~ C_2(L) = \frac{c}{L^2} ~,~ C_3(L) = \frac{c}{L^2(1+\log(L))} ~.
\end{equation}
\end{thm}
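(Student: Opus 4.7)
The plan is to convert the spectral-gap hypothesis on $\mu_1$ into a Cheeger-type (linear) isoperimetric inequality on $\mu_2$, which then yields the desired spectral-gap estimate by a standard application of Cheeger's inequality. Log-concavity of $\mu_2$ is what makes this program feasible: since $\mu_2$ is log-concave, our convexity assumptions hold for the target space with $\kappa=0$, so Theorem \ref{thm:main-equiv} functions as a dimension-free converse, lifting purely exponential concentration for $\mu_2$ back to linear isoperimetry. Accordingly, I would execute three steps: (a) convert $SG(\rho_1)$ for $\mu_1$, where $\rho_1:=\rho_{SG}(\Real^n,|\cdot|,\mu_1)$, into exponential concentration $\K_1(r)\le C_0\exp(-c_0\sqrt{\rho_1}\,r)$ via the classical Gromov--Milman argument (recorded in the Appendix); (b) transfer this concentration from $\mu_1$ to $\mu_2$ under each of the three hypotheses separately, obtaining $\K_2(r)\le A_i(L)\exp(-b_i(L)\sqrt{\rho_1}\,r)$; (c) invoke Theorem \ref{thm:main-equiv} with $\kappa=0$ and $\alpha(r)=b_i(L)\sqrt{\rho_1}\,r-\log A_i(L)$ to pass to the Cheeger-type isoperimetric bound $\tilde{\I}_2(v)\ge c\,C_i(L)\sqrt{\rho_1}\,v$ for $v\in[0,1/2]$, and then conclude via Cheeger's (Buser--Ledoux) inequality, giving $\rho_{SG}(\Real^n,|\cdot|,\mu_2)\gtrsim C_i(L)^2\,\rho_1$ as required.

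\textbf{The transfer step.} Step (b) is the heart of the argument and takes a different form in each case. Under hypothesis (1), the pointwise bound $d\mu_2/d\mu_1\le L$ triggers a direct application of Lemma \ref{lem:conc-trans} with $M\equiv L$, giving $\K_2(r)\le 2L\,\K_1(r/2)$; absorbing the multiplicative $L$ into the exponential decay costs only a logarithmic factor, which after the inversion in Theorem \ref{thm:main-equiv} produces the sharp $C_1(L)=c/(1+\log L)$. Under hypothesis (2), the reverse bound $d\mu_1/d\mu_2\le L$ gives no pointwise control on $d\mu_2/d\mu_1$, so Lemma \ref{lem:conc-trans} does not apply directly; one must instead argue from the measure-theoretic inequality $\mu_1\le L\mu_2$, exploiting that for any Borel set $A$ one has both $\mu_2^+(A)\ge \mu_1^+(A)/L$ and $\min(\mu_1(A),1-\mu_1(A))\le L\min(\mu_2(A),1-\mu_2(A))$. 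This yields a polynomial-in-$L$ loss and produces $C_2(L)=c/L^2$. Finally, case (3) is handled by introducing the overlap measure $\nu_*:=\min(d\mu_1/d\vol,d\mu_2/d\vol)\,d\vol$ (whose total mass is $1-d_{TV}(\mu_1,\mu_2)\ge 1/L$) and its normalization $\nu:=\nu_*/|\nu_*|$, which satisfies $d\nu/d\mu_i\le L$ for $i=1,2$. Concentration then propagates $\mu_1\to\nu$ by a case-(1)-type application of Lemma \ref{lem:conc-trans} and $\nu\to\mu_2$ by a case-(2)-type argument (note that log-concavity is not needed for the intermediate $\nu$, since Theorem \ref{thm:main-equiv} is only applied at the end to $\mu_2$); the two losses compose multiplicatively to $C_3(L)=C_1(L)\cdot C_2(L)=c/(L^2(1+\log L))$.

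\textbf{Main obstacle.} The most delicate point is case (2). Because $d\mu_2/d\mu_1$ is not pointwise bounded, the concentration of $\mu_2$ cannot be read off ``set by set" from that of $\mu_1$, and one is forced to work more globally---either at the level of Cheeger's constant directly (exploiting $\mu_1\le L\mu_2$ together with log-concavity of $\mu_2$, as sketched above) or via a truncation-and-deviation argument for the $\mu_2$-tails of Lipschitz functions against their $\mu_1$-tails. Controlling the loss here sharply is what fixes the exponent $2$ in $C_2(L)=c/L^2$ and in turn dictates the form of $C_3(L)$. A secondary technical point, common to all three cases, is to verify that the additive $\log C_0$ and $\log A_i(L)$ terms inside the concentration exponent do not contaminate the inversion in Theorem \ref{thm:main-equiv}, so that the Cheeger slope emerges with precisely the announced $C_i(L)$ dependence and the claimed squared constant on the spectral gap follows cleanly from Cheeger's inequality.
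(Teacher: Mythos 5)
The paper does not prove this theorem itself; it is quoted from \cite{EMilman-RoleOfConvexity}, so there is no in-paper proof to compare against. Your Step (a)--(c) outline and your Case (1) argument are sound and track exactly the mechanism the authors use for the $L^p$ generalization in Theorem~\ref{thm:SG-trans} (Gromov--Milman concentration for $\mu_1$, transfer via Lemma~\ref{lem:conc-trans}, then Theorem~\ref{thm:main-equiv} with $\kappa=0$ followed by Cheeger). The problem is Case (2), and since your Case (3) is built by composing a Case-(1) step with a Case-(2) step, the gap contaminates both.

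In Case (2) the two facts you propose to ``exploit'' point the wrong way. From $d\mu_1/d\mu_2\le L$ you correctly get $\mu_2^+(A)\ge \mu_1^+(A)/L$ and
$\min(\mu_1(A),1-\mu_1(A))\le L\min(\mu_2(A),1-\mu_2(A))$, but the latter is a \emph{lower} bound on $\min(\mu_2(A),1-\mu_2(A))$, whereas a Cheeger bound for $\mu_2$,
\[
\mu_2^+(A)\ \ge\ h_2\,\min\!\big(\mu_2(A),1-\mu_2(A)\big),
\]
requires an \emph{upper} bound on $\min(\mu_2(A),1-\mu_2(A))$ in terms of $\min(\mu_1(A),1-\mu_1(A))$, i.e.\ the reversed inequality $\min(\mu_1(A),1-\mu_1(A))\ \gtrsim\ \min(\mu_2(A),1-\mu_2(A))$. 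That reversal simply does not follow from $\mu_1\le L\mu_2$: take $\mu_1$ a narrow Gaussian and $\mu_2$ a much wider one with $\|d\mu_1/d\mu_2\|_\infty=L$, and let $A$ be a half-line a few $\mu_1$-standard deviations out; then $\mu_1(A)$ is exponentially small while $\mu_2(A)$ is of order one, so the ratio $\min(\mu_1(A),\cdot)/\min(\mu_2(A),\cdot)$ is not bounded below by any function of $L$. The same directionality problem blocks the concentration route: $\mu_1\le L\mu_2$ only yields $\mu_2(\Omega\setminus A_r)\ge\mu_1(\Omega\setminus A_r)/L$, a lower bound on the quantity you need to make small, and $\mu_2(A)\ge 1/2$ tells you essentially nothing about $\mu_1(A)$ for $L\ge 2$. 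So neither a direct Cheeger comparison nor a direct application of Lemma~\ref{lem:conc-trans} can close Case (2), and the ``truncation-and-deviation'' alternative is not developed far enough to assess. Handling $d\mu_1/d\mu_2\le L$ genuinely requires an idea beyond what is sketched (the original argument in \cite{EMilman-RoleOfConvexity} exploits log-concavity of $\mu_2$ in a more structural way than set-by-set measure comparison), and until Case (2) is repaired the composite Case (3) ($\mu_1\to\nu\to\mu_2$, the second arrow being Case-(2)-type) is also incomplete.
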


 In order to compare the above assumptions  it is worthwhile to record:
\begin{lem} \label{lem:weak-TV}
Let $\mu_2 \ll \mu_1 \ll\vol_M$ be Borel probability measures on a  Riemannian manifold $(M,g)$.
If $\displaystyle \mu_2\set{\frac{d\mu_2}{d\mu_1} \geq D} \leq \delta$, then 
$\displaystyle d_{TV}(\mu_1,\mu_2) \leq 1 - \frac{1-\delta}{\max(1,D)}$.
 
\end{lem}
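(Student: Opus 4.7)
The plan is to work with the standard alternative expression of total variation in terms of the common mass of the two measures. Writing $f_i := d\mu_i/d\vol_M$ ($i=1,2$), the elementary identity $|a-b| = a+b-2\min(a,b)$ together with $\int f_i \, d\vol_M = 1$ gives
\[
d_{TV}(\mu_1,\mu_2) \;=\; 1 \;-\; \int \min(f_1,f_2)\, d\vol_M,
\]
so it suffices to prove the lower bound $\int \min(f_1,f_2)\, d\vol_M \geq (1-\delta)/\max(1,D)$.

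Next, I would localize on the sub-level set where the Radon--Nikodym density is controlled. Set
\[
A \;:=\; \Big\{ x \in M \;:\; \tfrac{d\mu_2}{d\mu_1}(x) < D \Big\},
\]
so that by hypothesis $\mu_2(A) \geq 1-\delta$. On $A$ we have $f_2 \leq D\, f_1$, i.e.\ $f_1 \geq f_2/D$. A short case check on whether $D\geq 1$ or $D<1$ shows that in either case
\[
\min(f_1,f_2) \;\geq\; \min\!\big(f_2/D,\, f_2\big) \;=\; \frac{f_2}{\max(1,D)} \quad \text{on } A.
\]
Integrating over $A$ and using $\mu_2(A) \geq 1-\delta$ yields
\[
\int \min(f_1,f_2)\, d\vol_M \;\geq\; \int_A \frac{f_2}{\max(1,D)}\, d\vol_M \;=\; \frac{\mu_2(A)}{\max(1,D)} \;\geq\; \frac{1-\delta}{\max(1,D)},
\]
which combined with the identity above gives the claimed bound. (The absolute continuity assumptions ensure that $f_2 = 0$ whenever $f_1 = 0$, so the set where $d\mu_2/d\mu_1$ is undefined contributes nothing.)

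There is no real obstacle here: the argument is a one-line coupling-style estimate once the total-variation identity and the definition of $A$ are in place. The only mild subtlety is handling both regimes $D \geq 1$ and $D < 1$ uniformly, which is precisely why the factor $\max(1,D)$ appears in the statement rather than just $D$.
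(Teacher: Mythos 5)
Your proof is correct and follows essentially the same route as the paper: express $1 - d_{TV}(\mu_1,\mu_2) = \int \min(f_1,f_2)\,d\vol_M$, restrict to the $\mu_2$-large set where $d\mu_2/d\mu_1 < D$, and bound the integrand by $f_2/\max(1,D)$ there. The paper writes this a bit more compactly by first rewriting $\int \min(f_1,f_2)\,d\vol_M = \int \min\bigl(1/\tfrac{d\mu_2}{d\mu_1},1\bigr)\,d\mu_2$, but the content is identical.
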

\begin{proof}
\[
1 - d_{TV}(\mu_1,\mu_2) = \int \min\brac{\frac{d\mu_1}{d\vol_M},\frac{d\mu_2}{d\vol_M}} d\vol_M = \int \min\brac{\frac{1}{\frac{d\mu_2}{d\mu_1}},1} d\mu_2 \geq \frac{1-\delta}{\max(1,D)}~.
\]
\end{proof}
Consequently the hypothesis of Case 1  (or 2) implies the one of Case 3. However, 
applying Case 3 of Theorem \ref{thm:old-SG-trans} gives the poorest 
dependence in $L$ as $C_3(L)$ is much smaller than $C_1(L)$ and $C_2(L)$.
On the other hand, the total variation distance is robust and easy to control. For instance
if we assume that $\mu_2 \ll \mu_1$ are log-concave on $\Real^n$ and verify $\int (d\mu_2/d\mu_1)^p d\mu_1\le L^p$, then
  by the Markov-Chebyshev's inequality, for any $D > 0$:
\[
\mu_2\set{\frac{d\mu_2}{d\mu_1} \geq D} \leq \frac{1}{D^{p-1}} \int \brac{\frac{d\mu_2}{d\mu_1}}^p d\mu_1 \leq \frac{L^p}{D^{p-1}} ~.
\]
Consequently, we have by Lemma \ref{lem:weak-TV}:
\[
d_{TV}(\mu_1,\mu_2) \leq \inf_{D\ge 1}\left( 1 - \frac{1}{D} \left(1 - \frac{L^p}{D^{p-1}}\right)\right)= 1-\frac{p-1}{(pL)^{p/(p-1)}} ~.
\]
Therefore applying Case 3 of Theorem \ref{thm:old-SG-trans} yields that the spectral-gap of $\mu_2$
is at least that of $\mu_1$ divided by a power of $L$. This polynomial dependence would not be good
enough for the applications we have in mind. The goal of the next theorem is to provide
 a new quantitative transference principle for the spectral-gap, which generalizes Case 1 of Theorem \ref{thm:old-SG-trans} (corresponding to the case $p=\infty$ below):
\begin{thm}[Spectral-Gap Transference under Convexity Assumptions] \label{thm:SG-trans}
Let $\mu_2 \ll \mu_1$ denote two Borel probability measures on a common Riemannian manifold $(M,g)$,
and assume that $(M,g,\mu_2)$ satisfies our convexity assumptions. Assume that for some $p > 1$:
\begin{equation} \label{eq:L-cond-SG}
\int \brac{\frac{d\mu_2}{d\mu_1}}^p d\mu_1 \leq L^p ~.
\end{equation}
Then:
\[
\rho_{SG}(M,g,\mu_2) \geq C(L,p)^2 \rho_{SG}(M,g,\mu_1) ~,
\]
where:
\begin{equation} \label{eq:C-Lp}
C(L,p) := c \; \frac{p-1}{p} \; \frac{1}{1+ \log(L)} ~,
\end{equation}
and $c>0$ is a  universal constant.
\end{thm}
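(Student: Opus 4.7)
The plan is to follow the blueprint of the proof of Theorem \ref{thm:log-Sob-trans}, replacing the log-Sobolev ``endpoints'' of the chain by spectral-gap endpoints: first extract an exponential concentration inequality for $\mu_1$ from its spectral-gap, then transfer this concentration to $\mu_2$ via the integral form (Proposition \ref{prop:conc-trans-integral}), then convert back to isoperimetry for $\mu_2$ via Theorem \ref{thm:main-equiv} (using the convexity assumption $\kappa = 0$), and finally recover a spectral-gap bound for $\mu_2$ through Cheeger's inequality.

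For the first step, it is classical (Gromov--Milman, used earlier in this section through the Herbst argument) that $SG(\rho_1)$ implies $\K_1(r) \leq C_0 \exp(-c_0 \sqrt{\rho_1}\, r)$ for universal constants $c_0, C_0 > 0$, where $\rho_1 := \rho_{SG}(M,g,\mu_1)$. Applying Proposition \ref{prop:conc-trans-integral} with $G(x) := x^{p-1}$, so that $F(x) = x^p$ and $F^{-1}(y) = y^{1/p}$, condition (\ref{eq:F-integral}) reduces exactly to (\ref{eq:L-cond-SG}) (with the $L$ of the proposition taken as $L^p$), and one obtains
\[
\K_2(r) \leq 2\, \K_1(r/2)\, F^{-1}\brac{L^p / \K_1(r/2)} = 2 L\, \K_1(r/2)^{(p-1)/p} \leq A \exp\brac{-B\,(r - r_0)_+},
\]
with $B = \tfrac{p-1}{p}\,\tfrac{c_0 \sqrt{\rho_1}}{2}$ and $\log A$, $B r_0$ both of order $1 + \log L$. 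Thus the exponential decay of $\K_2$ is preserved, degraded only by the factor $(p-1)/p$ in the rate and shifted by $r_0 \lesssim (1+\log L)/\sqrt{\rho_1}$.

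Since $\kappa = 0$, condition (\ref{eq:alpha-cond}) is vacuous, and applying Theorem \ref{thm:main-equiv} with $\alpha(r) := B(r - r_0)_+$ yields $\tilde{\I}_2(v) \geq \min\brac{c\, v\, \gamma(\log 1/v),\; \tfrac{c}{4}\gamma(\log 4)}$ on $(0,1/2]$, where $\gamma(x) = Bx/(x + Br_0)$. Since $\gamma$ is increasing, for every $v \in (0,1/2]$ we have $\gamma(\log 1/v) \geq \gamma(\log 2) \gtrsim B/(1 + B r_0) \gtrsim \tfrac{p-1}{p}\cdot\tfrac{\sqrt{\rho_1}}{1 + \log L}$; combining with the floor $\tfrac{c}{4}\gamma(\log 4)$ one obtains a Cheeger-type isoperimetric inequality $\tilde{\I}_2(v) \geq h\,\min(v, 1/2)$ with $h \gtrsim \tfrac{p-1}{p}\cdot\tfrac{\sqrt{\rho_1}}{1 + \log L}$. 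By the classical Cheeger inequality, $\rho_{SG}(M,g,\mu_2) \geq h^2/4$, which is exactly the announced bound with $C(L,p)$ as in (\ref{eq:C-Lp}).

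The main obstacle is obtaining a logarithmic rather than polynomial dependence on $L$. As the authors already observed, the naive route through Lemma \ref{lem:weak-TV} and Case 3 of Theorem \ref{thm:old-SG-trans} yields only an inverse polynomial loss; the whole point of Proposition \ref{prop:conc-trans-integral} is that it preserves the exponential shape of the concentration tail, degrading the rate by only the factor $(p-1)/p$ and introducing only a $\log L$ shift. Combined with the convexity assumption on $\mu_2$, which via Theorem \ref{thm:main-equiv} allows converting this exponential concentration back into a linear (Cheeger-type) isoperimetric bound without any dimensional loss, this delivers precisely the $1/(1+\log L)^2$ dependence and parallels the $1/(1+\log L)$ improvement recorded in (\ref{eq:intro-L-dep-convex}) for the log-Sobolev case.
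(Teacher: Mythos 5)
Your proposal is correct and follows essentially the same route as the paper's own proof: Gromov--Milman exponential concentration for $\mu_1$, Proposition~\ref{prop:conc-trans-integral} with $G(x)=x^{p-1}$ to transfer it to $\mu_2$ with only a $(p-1)/p$ loss in rate and a $\log L$ shift, Theorem~\ref{thm:main-equiv} under the convexity assumption to obtain a linear isoperimetric (Cheeger-type) bound, and finally the Maz'ya--Cheeger inequality. The explicit parametrization $\alpha(r)=B(r-r_0)_+$ and the monotonicity argument for $\gamma$ are just a slightly more spelled-out version of the paper's passage from $\tilde{\I}_2(v)\ge c\sqrt{\rho}\frac{p-1}{p}\min\brac{v\frac{\log 1/v}{\log 1/v+\log(2L)},c_2}$ to the final linear bound.
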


\begin{proof}[Proof of Theorem \ref{thm:SG-trans}]
Set $\rho = \rho_{SG}(M,g,\mu_1)$, $\K_1 = \K(M,g,\mu_1)$ and $\K_2 = \K(M,g,\mu_2)$. By a result of M. Gromov and V. Milman \cite{GromovMilmanLevyFamilies} (see also \cite[Corollary 2.7]{EMilman-RoleOfConvexity}), a spectral-gap inequality always implies the following exponential concentration:
\begin{equation}\label{eq:gromowmilman}
\K_1(r) \leq \exp\brac{ - c \sqrt{\rho} \,r} \;\;\; \forall r > 0 ~,
\end{equation}
where $c>0$ is a universal constant. Using Proposition \ref{prop:conc-trans-integral}, we deduce from (\ref{eq:L-cond-SG}) that:
\[
\K_2(r) \leq 2 L \K_1(r/2)^{1-1/p} \leq 2 L \exp\brac{- c \frac{p-1}{2 p} \sqrt{\rho}\,  r } \;\;\; \forall r > 0 ~.
\]
Using our convexity assumptions on $(M,g,\mu_2)$, Theorem \ref{thm:main-equiv} implies that the following isoperimetric inequality is satisfied:
\begin{eqnarray*}
\tilde{\I}(M,g,\mu_2)(v) & \geq & c \sqrt{\rho} \,\frac{p-1}{p} \min\brac{ v \frac{\log 1/v}{\log 1/v + \log (2 L)}, c_2} \\
& \geq & c' \sqrt{\rho}\,  \frac{p-1}{p} \frac{1}{1 + \log(L)} v \;\;\;\;\;\; \forall v \in [0,1/2]
\end{eqnarray*}
(note again that $L \geq 1$ by Jensen's inequality). This means that $(M,g,\mu_2)$ satisfies a linear (or Cheeger-type) isoperimetric inequality. As described in \cite{EMilmanGeometricApproachPartII} in greater detail, it is known by results of Maz'ya \cite{MazyaSobolevImbedding,MazyaCapacities} and independently Cheeger \cite{CheegerInq} that this implies a spectral-gap inequality:
\begin{equation} \label{eq:CheAgain}
\sqrt{\rho_{SG}(M,g,\mu_2)} \geq \frac{1}{2} \inf_{v \in (0,1/2]} \frac{\tilde{\I}(M,g,\mu_2)(v)}{v} \geq \frac{c'}{2} \frac{p-1}{p} \frac{1}{1 + \log(L)} \sqrt{\rho} ~,
\end{equation}
as asserted.
\end{proof}

We can actually prove a result stronger than Theorem~\ref{thm:SG-trans}, using stronger results from \cite{EMilman-RoleOfConvexity,EMilmanGeometricApproachPartI,EMilmanIsoperimetricBoundsOnManifolds}.
However, in practice a convenient way to check the hypothesis of the next theorem would be to 
verify the condition of the previous one.
\begin{thm}\label{thm:SG-trans2}
Let $\mu_2 \ll \mu_1$ denote two Borel probability measures on a common Riemannian manifold $(M,g)$,
and assume that $(M,g,\mu_2)$ satisfies our convexity assumptions. Let $L>0$. There is a universal constant $c>0$ such that
 \[
\mu_2 \left\{\frac{d\mu_2}{d\mu_1}> L\right\} \le \frac18 \quad \Longrightarrow 
\quad
\rho_{SG}(M,g,\mu_2) \geq \frac{c}{\log(8L)^2}\, \rho_{SG}(M,g,\mu_1) ~.
\]
\end{thm}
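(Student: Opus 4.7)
The plan is to mirror the proof of Theorem~\ref{thm:SG-trans}: extract exponential concentration of $\mu_1$ from its spectral gap via Gromov--Milman, transfer it to $\mu_2$ using the mass estimate on $\{d\mu_2/d\mu_1 > L\}$, and then exploit the convexity assumption on $\mu_2$ to pass back to a spectral gap. The new feature compared to Theorem~\ref{thm:SG-trans} is that the present hypothesis only controls the density at a single threshold (essentially $M(\eps)=L$ at $\eps = 1/8$), so Lemma~\ref{lem:conc-trans} cannot deliver a decay of $\K_2(r)$ valid on every scale. Instead, the argument will produce a bound of the form $\K_2(R^*)\le 1/4$ at a \emph{single} scale $R^* \lesssim \log(8L)/\sqrt{\rho}$, and then rely on the dimension-free equivalence, established under convexity assumptions in \cite{EMilman-RoleOfConvexity, EMilmanGeometricApproachPartI, EMilmanIsoperimetricBoundsOnManifolds}, between single-scale concentration and a Cheeger-type isoperimetric inequality.

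Concretely, set $\rho := \rho_{SG}(M,g,\mu_1)$. By (\ref{eq:gromowmilman}), $\K_1(r) \le \exp(-c\sqrt{\rho}\,r)$ for all $r>0$ with a universal $c>0$. Given a Borel set $A \subset M$ with $\mu_2(A) \ge 1/2$, put $\Omega_L := \{d\mu_2/d\mu_1 \le L\}$ and $B := A \cap \Omega_L$. The hypothesis gives $\mu_2(\Omega_L) \ge 7/8$, hence $\mu_2(B) \ge 3/8$; since $d\mu_2/d\mu_1 \le L$ on $\Omega_L$, we deduce $\mu_1(B) \ge 3/(8L)$. Proceeding as in the proof of Lemma~\ref{lem:conc-trans}, (\ref{eq:conc-reversed}) yields $\mu_1(\overline{B_{r_0}}) \ge 1/2$ for $r_0 := \K_1^{-1}(3/(8L))$, and since $B_{r_0+r} \supseteq (\overline{B_{r_0}})_r$ by the triangle inequality, we get for every $r>0$
\[
\mu_2(M \setminus A_{r_0+r}) \le \mu_2(M \setminus \Omega_L) + L\,\mu_1(\Omega_L \setminus B_{r_0+r}) \le \tfrac{1}{8} + L\,\K_1(r).
\]

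Choosing $r := \log(8L)/(c\sqrt{\rho})$ forces $L\,\K_1(r)\le 1/8$, hence $\K_2(R^*) \le 1/4$ at the single scale $R^* := r_0+r$, and the estimate $r_0 \le \log(8L/3)/(c\sqrt{\rho})$ yields $R^* \le C \log(8L)/\sqrt{\rho}$ for an absolute $C$. Invoking the self-improvement result of \cite{EMilman-RoleOfConvexity,EMilmanIsoperimetricBoundsOnManifolds} (which crucially exploits the convexity assumption on $\mu_2$), the bound $\K_2(R^*) \le 1/4$ upgrades to a Cheeger-type isoperimetric inequality $\tilde{\I}_2(v) \ge c'\,v/R^*$ for all $v \in (0,1/2]$, and Maz'ya--Cheeger (as in (\ref{eq:CheAgain})) produces
\[
\rho_{SG}(M,g,\mu_2) \ge \frac{c''}{(R^*)^2} \ge \frac{c''' \rho}{\log(8L)^2}.
\]
The main obstacle is precisely this final step: without convexity, single-scale concentration carries no spectral-gap information, and so the argument rests on the dimension-free equivalence between one-scale concentration and Cheeger's inequality under curvature lower bounds.
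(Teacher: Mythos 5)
Your proof is correct and follows essentially the same route as the paper: the paper applies its already-derived inequality (\ref{eq:conc-trans-1}) at $\eps=1/8$, $M(\eps)=L$ (you re-derive that estimate from scratch via the set $B=A\cap\Omega_L$, which amounts to the same computation), then both proofs pick $r=\log(8L)/(c\sqrt\rho)$ to get a single-scale bound $\K_2(r_2)\le 1/4$ with $r_2\lesssim\log(8L)/\sqrt\rho$, and both invoke the convexity-based self-improvement $\inf_v\tilde{\I}_2(v)/v\ge\sup_{r>0}(1-2\K_2(r))/r$ from \cite{EMilman-RoleOfConvexity,EMilmanGeometricApproachPartI,EMilmanIsoperimetricBoundsOnManifolds} together with Maz'ya--Cheeger to conclude. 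No substantive difference in approach or in the constants obtained.
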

\begin{proof}
Let us adopt the notations of the previous proof. First note that our hypothesis ensures $L\ge 7/8$.
Next we apply \eqref{eq:conc-trans-1} for $\eps=\frac18$ and $M(\eps)=L$, to get for any $r>0$:
$$ 
\K_2\left(r+\K_1^{-1}\brac{\frac{3}{8L}} \right)\le \frac18+L \K_1(r) ~ .
$$
Applying this to $r = \log(8L)/(c\sqrt\rho)$ and using exponential concentration for $\mu_1$ in the form of \eqref{eq:gromowmilman}, we obtain:
$$  
\K_2 (r_2) \leq \frac14 ~,~ r_2 := \frac{1}{c\sqrt\rho} \log(8L)+\K_1^{-1}\brac{\frac{3}{8L}} \leq \frac{2}{c \sqrt{\rho}} \log(8L)  ~.
$$
At this point, we use a much stronger result than Theorem \ref{thm:main-equiv} for transferring concentration to linear isoperimetry when the convexity assumptions are satisfied, namely (see \cite{EMilman-RoleOfConvexity,EMilmanGeometricApproachPartI,EMilmanIsoperimetricBoundsOnManifolds}):
\[
\inf_{v \in (0,1/2]} \frac{\tilde{\I}(M,g,\mu_2)(v)}{v} \geq \sup_{r > 0} \frac{1-2\K_2(r)}{r} ~.
\]
Applying this inequality for $r_2$, and arguing as in (\ref{eq:CheAgain}), we obtain:
$$
\sqrt{\rho_{SG}(M,g,\mu_2)} \geq \frac{1}{2} \inf_{v \in (0,1/2]} \frac{\tilde{\I}(M,g,\mu_2)(v)}{v} \geq \frac{1}{4 r_2} \geq 
 \frac{c\sqrt\rho}{8 \log(8L)} ~.
$$
\end{proof}

\section{The Conservative Spin Model} \label{sec:model}

Let $\mu = \exp(-V(x)) dx$ denote a probability measure on $\Real$, and let $X_0$ be a random variable on $\Real$ with law $\mu$.
Let $\set{X_i}_{i=1,\ldots,n}$ be independent copies of $X_0$, so that $X = (X_1,\ldots,X_n)$ is a random-vector in $\Real^n$ distributed according to:
\[
\mu_n := \mu^{\otimes n} = \exp(-H(x)) \,dx ~,
\]
where $H(x)$ denotes the non-interacting Hamiltonian:
\[
H(x) = \sum_{i=1}^n V(x_i)~.
\]
The measure $\mu_n$ is the Gibbs measure corresponding to the grand canonical ensemble of non-interacting spins. We obtain from it the measure $\mu_E$ corresponding to the canonical ensemble, by conditioning the mean-spin $S = \frac{1}{n} \sum_{i=1}^n X_i$ to be $0$, namely:
\[
\mu_E = \frac{1}{Z_E} \exp(-H(x)) \, d\vol_E(x) ~,
\]
where we equip $\Real^n$ with its standard Euclidean structure $|\cdot|$, denote by $\vol_E$ the induced Lebesgue measure on the hyperplane $E = E_0$ given by $\sum_{i=1}^n x_i = 0$, and denote by $Z_E > 0$ a normalization term. Note that $Z_E$ is the density at $0$ of $\mu^D$, the push-forward of $\mu_n$ by the orthogonal projection on the diagonal line $D$ spanned by the vector $(1/\sqrt{n},\ldots,1/\sqrt{n})$. We denote by $\pi_F$ the orthogonal projection onto the subspace $F$.

Note that $\mu_E$ is not absolutely continuous with respect to $\mu_n$, so we define $\mu_{E,w}$ by ``thickening" it uniformly in the diagonal direction by a width of $w > 0$ from each side:
\[
d\mu_{E,w}(x) = \frac{1}{Z_{E,w}} \exp\brac{-H(\pi_E(x))} 1_{|\pi_D(x)| \leq w}\,  dx ~,~ Z_{E,w} = 2 w Z_E ~.
\]

Recall from the Introduction that we denote $M_p = M_p(\mu) := \E |X_0|^p$ and set 
$D_{1,\Psi_1} = D_{1,\Psi_1}(\mu) := \norm{V'(X_0)}_{L_{\Psi_1}}$, where $\norm{Y}_{L_{\Psi_1}}$ denotes the $\Psi_1$ norm of a random-variable $Y$.
Given $\delta>0$, we set $Y_0 := \sup_{\xi \in [X_0-\delta,X_0+\delta]} |V''(\xi)|$, and denote $D_{2,\Psi_1} = D^\delta_{2,\Psi_1} = D^\delta_{2,\Psi_1}(\mu) := \norm{Y_0}_{L_{\Psi_1}}$ and $D_2 = D^\delta_2 = D^\delta_2(\mu) := \E(Y_0)$.
 
 \medskip
 
 A central tool we will use in this section is the following large-deviation bound of Bernstein-type (e.g. \cite{LedouxTalagrand-Book}).
\begin{thm}[Bernstein] \label{thm:Bern}
Let $Y_1,\ldots,Y_n$ denote a sequence of independent random-variables on $\Real$, with $\E(Y_i) = 0$ and $\norm{Y_i}_{L_{\Psi_1}} \leq D_{\Psi_1}$ for each $i=1,\ldots,n$. Then there exists a universal numeric constant $c>0$, so that for any vector $a = (a_1,\ldots,a_n)$ and $t > 0$:
\[
\P(\sum a_i Y_i \geq t) \leq \exp\brac{-c \min \brac{ \frac{t^2}{\norm{a}_2^2 D_{\Psi_1}^2} , \frac{t}{\norm{a}_\infty D_{\Psi_1}} } } ~.
\]
\end{thm}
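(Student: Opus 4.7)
The plan is to prove this via the standard Chernoff / Laplace transform method, which reduces the tail estimate to a uniform bound on the moment generating function of each $Y_i$ near the origin. By Markov's inequality applied to $\exp(\lambda \sum a_i Y_i)$ and independence, for every $\lambda > 0$:
\[
\P\brac{\sum a_i Y_i \geq t} \leq e^{-\lambda t} \prod_{i=1}^n \E \exp(\lambda a_i Y_i) ~,
\]
so everything reduces to controlling $\E \exp(\lambda a_i Y_i)$ and then optimizing in $\lambda$.

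The first substantive step is a single-variable lemma: if $\E Y = 0$ and $\norm{Y}_{L_{\Psi_1}} \leq D$, then for all $|\lambda| \leq 1/(2D)$ one has $\E e^{\lambda Y} \leq \exp(K \lambda^2 D^2)$ for some universal $K$. To prove this, I would first extract moment bounds from the $\Psi_1$ hypothesis: expanding $\E \exp(|Y|/D) \leq e$ in its Taylor series term by term gives $\E |Y|^k \leq e \cdot k! \, D^k$ for every $k \geq 1$. Then, using $\E Y = 0$, I would expand
\[
\E e^{\lambda Y} = 1 + \sum_{k \geq 2} \frac{\lambda^k \E Y^k}{k!} \leq 1 + e \sum_{k \geq 2} (|\lambda| D)^k = 1 + \frac{e(\lambda D)^2}{1-|\lambda| D} \leq 1 + 2e (\lambda D)^2
\]
for $|\lambda| D \leq 1/2$, and conclude via $1 + x \leq e^x$.

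Combining this with the Chernoff step, I obtain, for every $\lambda \in (0, 1/(2\snorm{a}_\infty D_{\Psi_1})]$,
\[
\P\brac{\sum a_i Y_i \geq t} \leq \exp\brac{K \lambda^2 \snorm{a}_2^2 D_{\Psi_1}^2 - \lambda t} ~.
\]
The last step is to optimize in $\lambda$ subject to this constraint. The unconstrained minimizer is $\lambda^\ast = t/(2K \snorm{a}_2^2 D_{\Psi_1}^2)$. Two regimes arise: if $\lambda^\ast \leq 1/(2 \snorm{a}_\infty D_{\Psi_1})$, substituting yields $\exp(-t^2/(4K \snorm{a}_2^2 D_{\Psi_1}^2))$; otherwise the boundary choice $\lambda = 1/(2 \snorm{a}_\infty D_{\Psi_1})$ is better and yields (after bounding $K \lambda^2 \snorm{a}_2^2 D_{\Psi_1}^2 \leq \lambda t /2$ in this regime, using $t \geq 2 K \snorm{a}_2^2 D_{\Psi_1}^2/(\snorm{a}_\infty D_{\Psi_1})$) a bound of the form $\exp(-t/(4 \snorm{a}_\infty D_{\Psi_1}))$. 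Taking the weaker of the two exponents gives the claimed $\min$-form bound with a suitable universal $c>0$.

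The only mildly delicate point is the two-regime optimization and verifying that the constraint $\lambda \leq 1/(2\snorm{a}_\infty D_{\Psi_1})$ is consistent with both regimes at the claimed constant; everything else is a standard moment expansion. No appeal to symmetrization or decoupling is needed since the $Y_i$'s are already centered and independent and only scalar linear combinations are considered.
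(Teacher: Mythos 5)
Your proof is correct, and it is worth noting that the paper itself does not prove this theorem: it is stated as a classical Bernstein-type bound with a pointer to Ledoux--Talagrand, so there is no in-paper proof to compare against. Your Chernoff / Laplace-transform argument is exactly the standard route: the $\Psi_1$ hypothesis gives $\E|Y|^k \le e\,k!\,D_{\Psi_1}^k$ term by term, centering kills the linear term, the geometric series yields $\E e^{\lambda Y}\le \exp(2e\,\lambda^2 D_{\Psi_1}^2)$ for $|\lambda|D_{\Psi_1}\le 1/2$, independence tensorizes this to $\exp(K\lambda^2\snorm{a}_2^2 D_{\Psi_1}^2 - \lambda t)$ on $0<\lambda\le 1/(2\snorm{a}_\infty D_{\Psi_1})$, and the two-regime optimization closes the argument. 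The point you flag as ``mildly delicate'' does work out cleanly: in the interior regime the exponent is $-t^2/(4K\snorm{a}_2^2 D_{\Psi_1}^2)$, in the boundary regime one has $K\lambda^2\snorm{a}_2^2 D_{\Psi_1}^2\le \lambda t/2$ so the exponent is at most $-t/(4\snorm{a}_\infty D_{\Psi_1})$, and since each regime's exponent is dominated by $-c$ times the respective term in the $\min$, taking $c=1/(4K)$ with $K=2e$ suffices.
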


\subsection{Concentration Transference from $\mu_n$ to $\mu_{E,w}$ - One-Sided Approach} \label{subsec:model-approach-2}

In this subsection, we present a first way to transfer concentration estimates from $\mu_n$ to $\mu_{E,w}$. It  only requires controlling the second derivative of $V$ from one side.  To this end, we need the following \emph{local} version of the Berry--Esseen Theorem (see \cite{Petrov-SumsOfIndependentRVsBook}), and in fact, only at the center point $x = 0$:

\begin{thm} \label{thm:local-BE}
Let $X_1,\ldots,X_n$ denote i.i.d. copies of the random variable $X_0$, having absolutely-continuous law $\mu$. Assume that $\E(X_0) = 0$, that $M_3 = M_3(\mu) < \infty$ and that $\lambda := \norm{d\mu/dx}_{L^\infty} < \infty$. Then for any $n \geq 1$, the density $p_n(x)$ of $\frac{1}{\sqrt{n}} \sum_{i=1}^n X_i$ satisfies:
\[
\sup_{x \in \Real} \abs{p_n(x) - \frac{1}{\sqrt{2 \pi M_2}} \exp\brac{-\frac{x^2}{2 M_2}} } \leq \frac{C}{\sqrt{M_2}} \max\brac{\frac{M_3}{M_2^{3/2}} , M_3 \lambda^3} \frac{1}{\sqrt{n}} ~,
\]
for some universal constant $C>0$. 
\end{thm}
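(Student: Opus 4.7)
The plan is to use Fourier inversion and dyadic splitting in frequency space, in the style of the classical Petrov proof of the local Berry--Esseen theorem. First I would rescale: setting $Y_i = X_i/\sqrt{M_2}$ reduces the problem to the case $M_2 = 1$, since both quantities $M_3/M_2^{3/2}$ and $M_3\lambda^3$ appearing in the bound are scale-invariant (the latter because $\lambda$ has units of $1/\sqrt{M_2}$ after rescaling). Let $\phi(t) = \E e^{itX_0}$, so that the characteristic function of $\frac{1}{\sqrt n}\sum X_i$ is $\phi(t/\sqrt n)^n$. Since $\mu$ has density bounded by $\lambda$, Plancherel gives $\int |\phi|^2\,dt \leq 2\pi\lambda$, hence $|\phi|^n$ is integrable for $n\geq 2$ and Fourier inversion applies:
\[
p_n(x) - \frac{1}{\sqrt{2\pi}}e^{-x^2/2} \;=\; \frac{1}{2\pi}\int_{\mathbb R} e^{-itx}\bigl[\phi(t/\sqrt n)^n - e^{-t^2/2}\bigr]\,dt.
\]
In particular $\sup_x |p_n(x)-g(x)|$ is bounded by the $L^1$-norm of the bracket, uniformly in $x$.

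Next, I would split the integration at $|t| = A\sqrt n$ for an appropriate universal $A>0$. In the \emph{low-frequency regime} $|t|\leq A\sqrt n$, write $s = t/\sqrt n$ and Taylor expand $\phi(s) = 1 - s^2/2 + \theta(s)$ with $|\theta(s)| \leq M_3|s|^3/6$. Writing $\phi(s)^n = \exp(n\log\phi(s))$ and comparing with $\exp(-ns^2/2)$, a routine computation (using $|\phi(s)|\geq 1/2$ for $A$ small) gives
\[
\bigl|\phi(t/\sqrt n)^n - e^{-t^2/2}\bigr| \;\leq\; C\,M_3\,\frac{|t|^3}{\sqrt n}\,e^{-t^2/4},
\]
which integrates to a contribution of order $M_3/\sqrt n$.

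In the \emph{high-frequency regime} $|t| \geq A\sqrt n$, the Gaussian piece contributes an amount exponentially small in $n$ (handled trivially). For the $\phi$-piece, change variables back to $s = t/\sqrt n$:
\[
\int_{|t|\geq A\sqrt n}|\phi(t/\sqrt n)|^n\,dt \;=\; \sqrt n\int_{|s|\geq A}|\phi(s)|^n\,ds.
\]
The key input here is a quantitative bound $\sup_{|s|\geq A}|\phi(s)| \leq q_A < 1$ obtained from the bounded density: the convolution $p\ast \tilde p$ (with $\tilde p(x)=p(-x)$) still has sup-norm at most $\lambda$, and writing $1-|\phi(s)|^2 = \int (1-\cos(s u))(p\ast\tilde p)(u)\,du$ yields, after splitting according to $|u|\lesssim 1/\lambda$ and $|u|\gtrsim 1/\lambda$, a lower bound $1 - q_A^2 \gtrsim 1$ uniformly for $|s| \geq A$, once $A$ is chosen as a suitable universal constant (independent of $\lambda$ after the normalization above). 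Combined with $\int|\phi|^2\,ds \leq 2\pi\lambda$, this gives
\[
\int_{|s|\geq A}|\phi(s)|^n\,ds \;\leq\; q_A^{\,n-2}\int|\phi|^2\,ds \;\leq\; 2\pi\lambda\, q_A^{\,n-2}.
\]

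The main obstacle is calibrating the thresholds so that the exponentially small tail $\sqrt n\,\lambda\, q_A^{n-2}$ is dominated by $M_3\lambda^3/\sqrt n$ for \emph{every} $n\geq 1$ rather than only asymptotically. The cleanest way is to split into two cases: for $n$ larger than a universal multiple of $\log(2 + M_3\lambda^3)$, the exponential decay $q_A^{n-2}$ beats any polynomial in $M_3\lambda^3$ and $\sqrt n$; for smaller $n$, the conclusion reduces to the crude bound $\|p_n\|_\infty \leq \sqrt n\,\lambda$ combined with $\|g\|_\infty = 1/\sqrt{2\pi}$, both of which are absorbed into the stated constant once one tracks the factor $\max(M_3/M_2^{3/2}, M_3\lambda^3)$. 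Undoing the $M_2=1$ normalization then restores the $1/\sqrt{M_2}$ prefactor in the theorem.
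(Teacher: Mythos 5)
The paper does not actually prove Theorem~\ref{thm:local-BE}; it cites Petrov's book for it, so there is no in-paper proof to compare against. Your Fourier-inversion sketch has the right general architecture, but it contains a genuine gap in the high-frequency regime, and a related flaw in the low-frequency regime, which the final ``calibration'' paragraph does not repair.

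The crucial false step is the claim that one may choose a \emph{universal} constant $A$ (after the $M_2=1$ normalization) such that $\sup_{|s|\geq A}|\phi(s)|\leq q_A<1$ uniformly over all densities with $\norm{d\mu/dx}_{L^\infty}\leq\lambda$. This is not true: $\lambda$ is still a free parameter after normalization (only $\lambda\sqrt{M_2}\gtrsim 1$ is forced), and as $\lambda\to\infty$ the characteristic function $\phi$ can come arbitrarily close to $1$ at points arbitrarily far from the origin. For a concrete family, take $\mu=(1-\eps)\,\mathrm{Unif}[-\delta,\delta]+\frac{\eps}{2}\bigl(\mathrm{Unif}[N-\delta,N+\delta]+\mathrm{Unif}[-N-\delta,-N+\delta]\bigr)$ with $\eps N^2\approx 1$ so that $M_2=1$, and $\delta\to 0$ so that $\lambda\approx 1/(2\delta)\to\infty$. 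Then $\phi(s)=\mathrm{sinc}(s\delta)\bigl[(1-\eps)+\eps\cos(sN)\bigr]$, so for $s_0=2\pi k/N$ with $1\leq k\ll N/\delta$ one has $\phi(s_0)=\mathrm{sinc}(s_0\delta)\to 1$ as $\delta\to 0$. The ``split at $|u|\lesssim 1/\lambda$'' heuristic fails precisely because, for these densities, the symmetrized measure $p*\tilde p$ puts its mass on $|u|\lesssim\delta\approx 1/\lambda$ and near $|u|\approx N$, and at both locations $1-\cos(s_0u)$ is negligible.

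This invalidates the bookkeeping at the end: with $q_A$ allowed to approach $1$, the factor $\sqrt{n}\,\lambda\,q_A^{\,n-2}$ only begins to decay when $n\gtrsim 1/(1-q_A)$, which in the example above is of order $N^2/\delta^2$ and can vastly exceed the threshold $n\gtrsim\log(2+M_3\lambda^3)$ you propose; and the ``trivial'' bound $\norm{p_n}_\infty\leq\sqrt n\,\lambda$ only covers $n\lesssim M_3\lambda^2$, leaving a window of $n$ where neither estimate gives the claimed rate. Relatedly, the low-frequency estimate $\bigl|\phi(t/\sqrt n)^n-e^{-t^2/2}\bigr|\leq C M_3|t|^3 n^{-1/2}e^{-t^2/4}$ only holds while $|\phi(s)|\leq e^{-cs^2}$, which the third-moment Taylor expansion guarantees only for $|s|\lesssim 1/M_3$, not for $|s|\leq A$ with $A$ universal. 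The standard proof must therefore split at a data-dependent threshold (roughly $|s|\leq c/M_3$) and treat the range $c/M_3\leq|s|$ with a density-dependent bound on $|\phi|$; this is where the $M_3\lambda^3$ factor in the statement actually originates, and your sketch does not supply it.
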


\begin{prop} \label{prop:conc-trans-model-2}
Let $p \in \Real$, and assume that:
\begin{itemize}
\item $V \in C^2(\Real)$ and $sign(p) V'' \geq -\kappa$, for some $\kappa \geq 0$.  
\item $D_{1,\Psi_1} = D_{1,\Psi_1}(\mu) := \norm{V'(X_0)}_{L_{\Psi_1}}  < \infty$.
\end{itemize}
Then for any $w > 0$ and integer $n$ greater than $\eta = \eta^{|p| w D_{1,\Psi_1}}$, the following estimate holds:
\[
\int_{\Real^n} \exp\brac{p (H(x) - H(\pi_E(x))} 1_{\abs{\pi_D(x)} \leq w} \, d\mu_n(x) \leq  C \exp\brac{C w^2 (|p| \kappa +  p^2 D_{1,\Psi_1}^2)} ~,
\]
for some universal constant $C > 1$.
\end{prop}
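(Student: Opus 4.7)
The plan is a second-order Taylor expansion followed by a sub-exponential concentration bound on $\sum V'(X_i)$. Writing $t = \frac{1}{\sqrt{n}}\sum x_i$, we have $|\pi_D(x)| = |t|$ and $\pi_E(x)_i = x_i - t/\sqrt{n}$; on the integration domain $|t|\le w$. Taylor-expanding each $V(x_i - t/\sqrt{n})$ to second order around $x_i$ and summing yields
\[
H(x) - H(\pi_E(x)) = \frac{t}{\sqrt{n}}\sum_{i=1}^n V'(x_i) - \frac{t^2}{2n}\sum_{i=1}^n V''(\xi_i),
\]
for intermediate points $\xi_i = \xi_i(x,t)$. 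Multiplying by $p$, the one-sided hypothesis $\mathrm{sign}(p) V'' \ge -\kappa$ is tailored precisely to make the quadratic remainder bounded from above: in each of the two cases $p>0$ and $p<0$, a direct check gives $-\frac{pt^2}{2n}\sum V''(\xi_i) \le \frac{|p|t^2\kappa}{2} \le \frac{|p|w^2\kappa}{2}$. Setting $S_n := \frac{1}{\sqrt{n}}\sum X_i$ and $T_n := \frac{1}{\sqrt{n}}\sum V'(X_i)$, and using $pS_nT_n \le |p|w|T_n|$ on $\set{|t|\le w}$, this produces the pointwise bound
\[
\brac{p\,(H(x) - H(\pi_E(x)))} 1_{|\pi_D(x)|\le w} \le \brac{|p|w|T_n| + \frac{|p|w^2\kappa}{2}} 1_{|\pi_D(x)|\le w}.
\]

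The second step is to estimate $\E\exp(|p|w|T_n|) \le \E\exp(|p|w\,T_n) + \E\exp(-|p|w\,T_n)$. The key point is that $V'(X_0)$ is centered, since integration by parts gives $\int V'(x) e^{-V(x)}\,dx = 0$ under minimal decay of $e^{-V}$ at $\pm\infty$, and has finite $\Psi_1$-norm $D_{1,\Psi_1}$. Using the standard fact that a centered random variable $Z$ with $\norm{Z}_{L_{\Psi_1}}\le D$ satisfies $\E\exp(\mu Z) \le \exp(C\mu^2 D^2)$ for all $|\mu| \le c/D$ (for universal $c,C>0$), independence and tensorization give
\[
\E\exp(\lambda T_n) = \prod_{i=1}^n \E\exp\brac{\lambda V'(X_i)/\sqrt{n}} \le \exp\brac{C\lambda^2 D_{1,\Psi_1}^2},
\]
whenever $|\lambda|/\sqrt{n} \le c/D_{1,\Psi_1}$. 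Applying this at $\lambda = \pm|p|w$ is allowed precisely when $n\ge (|p|wD_{1,\Psi_1}/c)^2$, which is exactly the required hypothesis $n\ge \eta^{|p|wD_{1,\Psi_1}}$.

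Putting the two ingredients together,
\[
\int \exp\brac{p\brac{H - H\circ\pi_E}}\,1_{|\pi_D|\le w}\, d\mu_n \le e^{|p|w^2\kappa/2}\,\E\exp(|p|w|T_n|) \le 2 \exp\brac{|p|w^2\kappa/2 + Cp^2 w^2 D_{1,\Psi_1}^2},
\]
and absorbing the factor $2$ into the universal constant $C$ yields the announced bound. I expect no serious obstacle beyond bookkeeping: the main delicate point is that the sub-exponential moment bound on $T_n$ is only valid in a bounded range of its parameter, and the assumption $n \ge \eta^{|p|wD_{1,\Psi_1}}$ is exactly what places $\lambda = \pm|p|w$ inside that range. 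In particular, neither the local Berry--Esseen theorem nor Bernstein's theorem recorded earlier is needed for this specific estimate; they will be invoked in the companion bounds (normalization of $Z_{E,w}$, and the two-sided second-order estimate) carried out elsewhere in Section~\ref{sec:model}.
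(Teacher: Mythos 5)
Your proposal is correct, and it reaches the stated estimate by the same initial decomposition as the paper but a cleaner final step. You Taylor-expand $H(x)-H(\pi_E(x))$ into a linear term $\frac{t}{\sqrt n}\sum V'(x_i)$ and a quadratic remainder, observe that the one-sided hypothesis $\mathrm{sign}(p)V''\ge-\kappa$ makes the remainder uniformly bounded by $\tfrac{|p|w^2\kappa}{2}$, and this is exactly what the paper does. The difference lies in how the remaining exponential moment $\E\exp(|p|w|T_n|)$ is controlled. The paper passes to the layer-cake representation $\int_{-\infty}^\infty p\,e^{pu}\,\mu_n\{\,\cdot\,\geq u\}\,du$, applies Bernstein's inequality (Theorem~\ref{thm:Bern}) to get a tail bound on $\P(|T_n|\ge t)$ with a $\min$ of a sub-Gaussian and sub-exponential branch, and then re-integrates the tail against $p\,e^{pu}$; the threshold $n\ge\eta^{|p|wD_{1,\Psi_1}}$ is what guarantees this integral converges and is dominated by the Gaussian branch. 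You instead estimate the Laplace transform directly: since $V'(X_0)$ is centered with $\|V'(X_0)\|_{L_{\Psi_1}}\le D_{1,\Psi_1}$, the standard sub-exponential MGF bound $\E e^{\mu Z}\le e^{C\mu^2D^2}$ for $|\mu|\le c/D$ plus i.i.d.\ tensorization gives $\E\exp(\pm|p|wT_n)\le\exp(Cp^2w^2D_{1,\Psi_1}^2)$ as soon as $n\ge(|p|wD_{1,\Psi_1}/c)^2$, which is the same threshold in disguise. These two routes are of course equivalent in content (Bernstein's inequality is itself proved via the MGF), but yours avoids the bookkeeping of the $\min$-type tail bound and the substitution in the integral, so it is the tidier presentation for this particular statement. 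You also handle both signs of $p$ at once by tracking $\mathrm{sign}(p)$ directly rather than invoking a symmetry reduction, which is a small stylistic plus.
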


\begin{proof}
Assume that $p > 0$,  it will be evident from the proof that the case that $p<0$ is treated identically, after exchanging $H(X)$ and $H(\pi_E(x))$. We write:
\begin{eqnarray}
\nonumber & &  \int_{\Real^n} \exp(p (H(x) - H(\pi_E(x)))) 1_{\abs{\pi_D(x)} \leq w} \, d\mu_n(x) \\
\nonumber & = & \int_{-\infty}^{\infty} p \exp(pu) \mu_n\set{x \in \Real^n \; ; \; \abs{\pi_D(x)} \leq w ~,~ H(x) - H(\pi_E(x)) \geq u} du \\
\label{eq:integral} & \leq & e^{p u_0} + \int_{u_0}^{\infty} p e^{pu} \mu_n\set{x \in \Real^n \; ; \; \abs{\pi_D(x)} \leq w ~,~ H(x) - H(\pi_E(x)) \geq u} du\,,
\end{eqnarray}
for some $u_0$ to be determined. We proceed by roughly evaluating the integrand as follows:
\begin{eqnarray*}
&   & \mu_n\set{x \in \Real^n \, ; \; \abs{\pi_D(x)} \leq w ~,~ H(x) - H(\pi_E(x)) \geq u} \\
& = & \mu_n \set{x \in \Real^n \, ; \; \abs{\pi_D(x)} \leq w ~,~ \sum_{i=1}^n V(x_i) - \sum_{i=1}^n V(\pi_E(x)_i) \geq u } \\
& \leq & \mu_n \set{x \in \Real^n \, ; \; \exists t \in [-w,w] ~,~ \sum_{i=1}^n V(x_i) - \sum_{i=1}^n V(x_i + t/\sqrt{n}) \geq u} ~.
\end{eqnarray*}
Applying Taylor's theorem and using in addition that $V'' \geq -\kappa$, we deduce that:
\begin{eqnarray}
\nonumber &   & \mu_n\set{x \in \Real^n \; ; \; \abs{\pi_D(x)} \leq w ~,~ H(x) - H(\pi_E(x)) \geq u} \\
\nonumber & \leq & \mu_n \set{x \in \Real^n \; ; \; \exists t \in [-w,w] ~,~ -\frac{t}{\sqrt{n}} \sum_{i=1}^n V'(x_i) + \frac{t^2}{2} \kappa \geq u } \\
\label{eq:one-term} & \leq & \P \brac{\frac{w}{\sqrt{n}} \abs{\sum_{i=1}^n V'(X_i)} \geq u - \frac{w^2}{2} \kappa } ~.
\end{eqnarray}
It is easy to verify that $\E V'(X_0) = \int_{-\infty}^\infty V'(x) \exp(-V(x)) \,dx = -\int_{-\infty}^\infty d\exp(-V(x)) = 0$.
Consequently, whenever $u > u_0$ and $n$ tends to infinity, the expression in (\ref{eq:one-term}) will be governed by the Central-Limit Theorem, in accordance with the heuristic argument from the Introduction. To obtain exponentially decaying estimates on (\ref{eq:one-term}) as $u \rightarrow \infty$ for each individual $n$, we apply Theorem \ref{thm:Bern} , which together with (\ref{eq:integral}) yields:
\begin{eqnarray*}
& & \int_{\Real^n} \exp\brac{p (H(x) - H(\pi_E(x))} 1_{\abs{\pi_D(x)} \leq w} d\mu_n(x) \\
& \leq & \exp(p u_0) + 2 p \exp(p u_0) \int_{0}^\infty \exp\brac{pu - c \min\brac{ \frac{\sqrt{n}\, u}{w D_{1,\Psi_1}} , \frac{u^2}{w^2 D_{1,\Psi_1}^2} } } du ~.
\end{eqnarray*}
When $n$ is large enough, the above estimate is clearly bounded by:
\begin{eqnarray*}
&\leq & 2 \exp(p u_0) \brac{1 + p \int_{0}^\infty \exp\brac{p u - c \frac{u^2}{w^2 D_{1,\Psi_1}^2}} du} \\
&\leq & 2 \exp(p u_0) \brac{1 + p \sqrt{\pi/c}\,  w D_{1,\Psi_1} \exp(p^2 w^2 D_{1,\Psi_1}^2 / (4 c)) } ~.
\end{eqnarray*}
Recalling the definition of $u_0$ and using that $1 + \exp(x^2) x \leq C \exp(C x^2)$ for an appropriately chosen $C>1$, we obtain that the latter quantity does not exceed:
\[ 
C' \exp(C' w^2 (p \kappa + p^2 D_{1,\Psi_1}^2)) ~, 
\]
as asserted.
\end{proof}

\begin{cor} \label{cor:conc-trans-model-2}
In addition to the assumptions of Proposition \ref{prop:conc-trans-model-2} (for positive $p$), assume that the barycenter of $\mu$ is at the origin, that $M_3 = M_3(\mu) < \infty$, and that $\lambda := \norm{d\mu/dx}_{L^\infty} < \infty$. Then setting:
\begin{equation} w_0  := \sqrt{\min\brac{M_2,\frac{1}{\kappa + D^2_{1,\Psi_1}}}} ~,
\end{equation}
the following estimate holds:
\[
\brac{\int_{\Real^n} \brac{\frac{d\mu_{E,w_0}}{d\mu_n}}^4 d\mu_n}^{1/4} \leq C \max(1,\sqrt{M_2 (\kappa + D^2_{1,\Psi_1})}) ~,
\]
for any integer $n$ greater than $\eta^{M_3/M_2^{3/2} , M_3 \lambda^3}$, where $C>0$ is a universal constant.
\end{cor}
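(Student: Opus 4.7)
The plan is to combine Proposition \ref{prop:conc-trans-model-2} (applied with $p=4$) with a lower bound on the normalization constant $Z_{E,w_0}$ coming from the local Berry--Esseen Theorem \ref{thm:local-BE}. The key observation is that $w_0$ has been chosen precisely so that $w_0^2(\kappa+D_{1,\Psi_1}^2)\le 1$, so that the ``cost'' of thickening appears only through the lower bound on $Z_{E,w_0}$.

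First I would write out the density explicitly. Since $d\mu_n = \exp(-H(x))\,dx$ and $d\mu_{E,w_0}(x)= \frac{1}{Z_{E,w_0}}\exp(-H(\pi_E(x)))\,1_{|\pi_D(x)|\le w_0}\, dx$, we have
\[
\frac{d\mu_{E,w_0}}{d\mu_n}(x)=\frac{1}{Z_{E,w_0}}\exp\bigl(H(x)-H(\pi_E(x))\bigr)\,1_{|\pi_D(x)|\le w_0}.
\]
Raising to the fourth power and integrating against $\mu_n$ gives
\[
\int_{\Real^n}\brac{\frac{d\mu_{E,w_0}}{d\mu_n}}^{4} d\mu_n = \frac{1}{Z_{E,w_0}^{4}}\int_{\Real^n}\exp\bigl(4(H(x)-H(\pi_E(x)))\bigr)\,1_{|\pi_D(x)|\le w_0}\, d\mu_n(x).
\]

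Next, apply Proposition \ref{prop:conc-trans-model-2} with $p=4$ and $w=w_0$. By the definition of $w_0$, we have $w_0^{2}(\kappa+D_{1,\Psi_1}^{2})\le 1$, hence $C w_0^{2}(4\kappa+16D_{1,\Psi_1}^{2})$ is bounded by a universal constant, and so for all $n$ larger than the threshold $\eta^{w_0 D_{1,\Psi_1}}$ (which is itself controlled by a universal constant since $w_0 D_{1,\Psi_1}\le 1$), the numerator above is bounded by a universal constant $C_1$.

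It remains to bound $Z_{E,w_0}=2w_0 Z_E$ from below. Recall that $Z_E$ is the density at $0$ of the push-forward of $\mu_n$ under the orthogonal projection onto $D$, which equals the density $p_n(0)$ of $\frac{1}{\sqrt{n}}\sum_{i=1}^{n}X_i$. Since $\E X_0=0$, the local Berry--Esseen Theorem \ref{thm:local-BE} yields
\[
p_n(0)\ge \frac{1}{\sqrt{2\pi M_2}}-\frac{C}{\sqrt{M_2}}\max\!\brac{\frac{M_3}{M_2^{3/2}},M_3\lambda^{3}}\frac{1}{\sqrt n}\ge \frac{c}{\sqrt{M_2}},
\]
provided $n$ exceeds a threshold $\eta^{M_3/M_2^{3/2},M_3\lambda^{3}}$. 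Consequently $Z_{E,w_0}\ge c'\,w_0/\sqrt{M_2}$.

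Combining the two estimates gives
\[
\brac{\int\brac{\frac{d\mu_{E,w_0}}{d\mu_n}}^{4} d\mu_n}^{1/4}\le C_1^{1/4}\,\frac{\sqrt{M_2}^{\,1}}{w_0}\cdot (c')^{-1} \le C\max\!\brac{1,\sqrt{M_2(\kappa+D_{1,\Psi_1}^{2})}},
\]
where the last inequality splits according to whether $w_0^{2}=M_2$ or $w_0^{2}=1/(\kappa+D_{1,\Psi_1}^{2})$. I expect no serious obstacle here: the only non-trivial ingredient is that the threshold for $n$ in Proposition \ref{prop:conc-trans-model-2} depends on $w_0 D_{1,\Psi_1}$, which is bounded by $1$ by construction of $w_0$, so the threshold on $n$ is ultimately dictated only by Berry--Esseen.
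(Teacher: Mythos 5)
Your proposal is correct and follows essentially the same route as the paper: apply Proposition \ref{prop:conc-trans-model-2} with $p=4$, bound $Z_{E,w_0}=2w_0 Z_E$ from below via the local Berry--Esseen Theorem \ref{thm:local-BE}, and plug in the choice $w=w_0$. The one slight stylistic difference is that you observe up front that $w_0^2(\kappa+D_{1,\Psi_1}^2)\le 1$ kills the exponential factor and renders the threshold $\eta^{w_0D_{1,\Psi_1}}$ universal, whereas the paper keeps the exponential in the intermediate estimate and only substitutes $w=w_0$ at the end; both are equivalent.
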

\begin{proof}
Recall that:
\[
d\mu_{E,w}(x) = \frac{1}{Z_{E,w}} \exp\brac{-H(\pi_E(x))} 1_{|\pi_D(x)| \leq w} \, dx ~.
\]
Consequently, applying Proposition \ref{prop:conc-trans-model-2} with $p=4$, we know that for any integer $n$ greater than $\eta^{w D_{1,\Psi_1}}$:
\begin{equation} \label{eq:cor-calc-2}
\brac{\int_{\Real^n} \brac{\frac{d\mu_{E,w}}{d\mu_n}}^4 d\mu_n}^{1/4} \leq \frac{C}{Z_{E,w}} \exp(C w^2 (\kappa + D_{1,\Psi_1}^2)) ~.
\end{equation}
Recall that $Z_{E,w} = 2 w Z_E$ and that $Z_E$ is the density at $0$ of $\frac{1}{\sqrt{n}} \sum_{i=1}^n X_i$. It follows by Theorem \ref{thm:local-BE} that  $Z_E \geq \frac{1}{2 \sqrt{\pi M_2}}$ whenever:
\[
\sqrt{n} \geq C' \max\brac{\frac{M_3}{M_2^{3/2}} , M_3 \lambda^3} ~,
\]
for an appropriately chosen universal constant $C' > 0$. Plugging this estimate into (\ref{eq:cor-calc-2}), we obtain:
\[
\brac{\int_{\Real^n} \brac{\frac{d\mu_{E,w}}{d\mu_n}}^4 d\mu_n}^{1/4} \leq C'' \frac{\sqrt{M_2}}{w} \exp(C w^2 (\kappa + D_{1,\Psi_1}^2)) ~,
\]
and the asserted estimate follows when setting $w = w_0$. 
\end{proof}

\subsection{Concentration Transference from $\mu_n$ to $\mu_{E,w}$ - Two-Sided Approach} \label{subsec:model-approach-1}

In this subsection, we present a variant of the procedure carried out above.
It  only requires the following version of the classical Berry--Esseen Theorem (e.g. \cite{Petrov-SumsOfIndependentRVsBook}):

\begin{thm}[Berry--Esseen] \label{thm:BE}
Let $X_1,\ldots,X_n$ denote a sequence of independent random-variables on $\Real$, with $\E(X_i) = 0$ and $\E(|X_i|^3) / E(X_i^2)^{3/2} \leq M$ for each $i=1,\ldots,n$.
Let $Z$ denote a standard Gaussian random variable on $\Real$ and denote:
\[
S_n = \frac{\sum_{i=1}^n X_i}{\sqrt{\sum_{i=1}^n \E(X_i^2)}} ~.
\]
Then:
\[
\forall t \in \Real \;\;\; \abs{\P(S_n \leq t) - \P(Z \leq t)} \leq \frac{C M}{\sqrt{n}} ~,
\]
for some universal constant $C>0$.
\end{thm}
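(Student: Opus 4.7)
The plan is to follow the standard characteristic-function proof. The starting point is Esseen's smoothing inequality: for any distribution function $F$, any $T>0$, and the Gaussian CDF $\Phi$,
$$ \sup_{t\in\Real}|F(t)-\Phi(t)| \leq \frac{1}{\pi}\int_{-T}^{T}\left|\frac{\hat F(s)-e^{-s^2/2}}{s}\right|ds + \frac{C_0}{T}, $$
where $\hat F$ is the characteristic function and $C_0$ is a universal constant; this is proved by convolving $F-\Phi$ with a smooth bump of bandwidth $T$ and applying Plancherel. Writing $\sigma_n^2 := \sum_i \E(X_i^2)$ and $\phi_j(s) := \E\exp(isX_j/\sigma_n)$, the characteristic function of $S_n$ factors as $\prod_j \phi_j(s)$, and the task reduces to bounding $|\prod_j \phi_j(s) - e^{-s^2/2}|/|s|$ uniformly for $|s|\leq T$.

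Next I would Taylor-expand each factor: since $\E X_j = 0$,
$$ \phi_j(s) = 1 - \frac{s^2 \E X_j^2}{2\sigma_n^2} + r_j(s), \qquad |r_j(s)| \leq \frac{|s|^3 \E|X_j|^3}{6\sigma_n^3}. $$
The hypothesis $\E|X_j|^3\leq M\,\E(X_j^2)^{3/2}$ controls the Lyapunov ratio $L_n:=\sum_j\E|X_j|^3/\sigma_n^3$ by $M\max_j\sqrt{\E X_j^2}/\sigma_n$, which is of order $M/\sqrt n$ in the uniform regime of the theorem (in particular for identically distributed summands). Passing from each factor to the full product by telescoping, together with the elementary estimate $|e^a-e^b|\leq |a-b|\exp(\max(\Re a,\Re b))$ applied to the principal logarithms, and replacing $1 - s^2 \E X_j^2/(2\sigma_n^2)$ by $\exp(-s^2\E X_j^2/(2\sigma_n^2))$ up to a quartic correction, yields for $|s|\leq c\sqrt n/M$ the bound
$$ \left|\prod_j\phi_j(s) - e^{-s^2/2}\right| \leq C\, L_n\, |s|^3\, e^{-s^2/3}. $$

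Plugging this into Esseen's inequality, the integrand is at most $C L_n s^2 e^{-s^2/3}$, whose integral over $\Real$ is of order $L_n = O(M/\sqrt n)$, while the choice $T = c\sqrt n/M$ makes the boundary term $C_0/T$ of matching order; combining these gives the stated $O(M/\sqrt n)$ Kolmogorov-distance bound. The main technical obstacle is justifying the passage from each factor's Taylor expansion to the full product: one needs the estimate on $|\prod_j\phi_j(s) - e^{-s^2/2}|$ to hold not just for $|s|$ small, but all the way up to $T\asymp \sqrt n/M$, where the individual factors $\phi_j(s)$ start leaving the region where the quadratic Taylor approximation is sharp. This is exactly the role of the $|e^a-e^b|$ inequality above, which keeps the errors additive rather than multiplicative across the $n$ factors; the argument is delicate but standard, as in Petrov's treatment cited in the paper.
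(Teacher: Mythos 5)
Your sketch follows the standard Esseen-smoothing / characteristic-function proof (smoothing inequality, factor-by-factor Taylor expansion, passage to the exponential, optimization of the cutoff $T$), which is exactly the route in Petrov's book that the paper cites --- the paper itself supplies no proof, only the reference. As a proof of the Lyapunov-ratio bound $\sup_t |\P(S_n\le t)-\Phi(t)|\le C L_n$ with $L_n=\sum_j\E|X_j|^3/\sigma_n^3$, the outline is correct, and deferring the bookkeeping on the range of validity of the product estimate to the reference is reasonable.

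The one place that deserves more than a hedge --- and you do hedge it, with ``in the uniform regime... in particular for identically distributed summands'' --- is the passage from $C L_n$ to the stated $C M/\sqrt n$. The per-summand hypothesis $\E|X_j|^3\le M\,\E(X_j^2)^{3/2}$ only yields $L_n\le M\,\max_j\sqrt{\E X_j^2}/\sigma_n$, and this last ratio is $O(1/\sqrt n)$ only when the variances $\E X_j^2$ are comparable; with one dominant Rademacher summand and $n-1$ negligible ones, each $\E|X_j|^3/\E(X_j^2)^{3/2}=1$ yet the Kolmogorov distance stays bounded away from $0$, so the $1/\sqrt n$ rate fails. Thus the theorem as literally stated (non-identically distributed, no comparability of variances) is a slight overreach; in the paper's actual uses the summands are either i.i.d.\ (Sections 3--4) or tilts $\mu^{u_i}$ of a weakly Gaussian measure whose two-sided bounds on $V''$ keep the variances uniformly comparable (Section 5), so the $M/\sqrt n$ form is restored. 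You should make this explicit: either restrict to i.i.d.\ summands, or incorporate the ratio $\max_j\sqrt{\E X_j^2}\big/\big(\tfrac1n\sum_j\E X_j^2\big)^{1/2}$ into the constant, or simply state the conclusion as $C L_n$.
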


\begin{prop} \label{prop:conc-trans-model}
Let $\mu = \exp(-V(x)) dx$ denote a probability measure on $\Real$, and assume that:
\begin{itemize}
\item $V \in C^2(\Real)$ and $\lim_{x \rightarrow \pm \infty} V(x) = +\infty$.
\item $D_{1,\Psi_1} = D_{1,\Psi_1}(\mu) < \infty$, and there exists $\delta>0$ so that $D_{2,\Psi_1} = D^\delta_{2,\Psi_1}(\mu) < \infty$.
\end{itemize}
Then for any $p \in \Real$, $w > 0$, and integer $n$ greater than $\eta = \eta^{w D_{1,\Psi_1}, w^2 D_{2,\Psi_1},w,|p|}_{\delta}$, the following estimate holds:
\[
\int_{\Real^n} \exp\brac{p \big(H(x) - H(\pi_E(x))\big)} 1_{\abs{\pi_D(x)} \leq w} \,d\mu_n(x) \leq  C \exp\brac{C w^2 (|p| D_2 + p^2 D_{1,\Psi_1}^2)} ~,
\]
for some universal constant $C > 1$.
\end{prop}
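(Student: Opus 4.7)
The plan is to follow the same architecture as the proof of Proposition~\ref{prop:conc-trans-model-2}, but to replace the one-sided bound $V''\ge -\kappa$ by a stochastic control of the second-order Taylor term, exploiting the hypothesis $D_{2,\Psi_1}<\infty$. Assume $p>0$ (the case $p<0$ is handled identically after swapping $H(x)$ and $H(\pi_E(x))$). As in (\ref{eq:integral}), I would write the integral as $e^{pu_0}+\int_{u_0}^\infty pe^{pu}\mu_n\{|\pi_D(x)|\le w,\ H(x)-H(\pi_E(x))\ge u\}\,du$ for a threshold $u_0$ to be chosen of order $w^2D_2$. Bounding $|\pi_D(x)|\le w$ by the existence of $t\in[-w,w]$ and working with the translated configuration, it suffices to estimate $\P\bigl(\exists t\in[-w,w],\ \sum_i V(X_i)-V(X_i+t/\sqrt n)\ge u\bigr)$.

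The key new step is the Taylor expansion
\[
V(X_i)-V(X_i+t/\sqrt n)=-\frac{t}{\sqrt n}V'(X_i)-\frac{t^2}{2n}V''(\xi_i),
\]
with $|\xi_i-X_i|\le w/\sqrt n$. For $n$ so large that $w/\sqrt n\le\delta$, we have $|V''(\xi_i)|\le Y_i:=\sup_{\xi\in[X_i-\delta,X_i+\delta]}|V''(\xi)|$, whence
\[
H(X)-H(X+t\mathbf 1/\sqrt n)\le \frac{w}{\sqrt n}\Bigl|\sum_i V'(X_i)\Bigr|+\frac{w^2}{2n}\sum_i Y_i.
\]
Recentering $\sum_i Y_i=nD_2+\sum_i(Y_i-D_2)$ absorbs a deterministic $w^2D_2/2$ into the threshold $u_0$, and the remaining event is contained, by a union bound, in
\[
\Bigl\{\tfrac{w}{\sqrt n}|\textstyle\sum_i V'(X_i)|\ge\tfrac{u-u_0/2}{2}\Bigr\}\cup\Bigl\{\tfrac{w^2}{2n}|\textstyle\sum_i(Y_i-D_2)|\ge\tfrac{u-u_0/2}{2}\Bigr\}.
\]
Since $\E V'(X_0)=\int V'e^{-V}dx=0$ and $\E(Y_0-D_2)=0$, and since $\|V'(X_0)\|_{L_{\Psi_1}}=D_{1,\Psi_1}$ and $\|Y_0-D_2\|_{L_{\Psi_1}}\le 2D_{2,\Psi_1}$, I apply Bernstein's inequality (Theorem~\ref{thm:Bern}) to each sum with the constant vector $(1,\dots,1)/\sqrt n$ (or $1/n$), obtaining sub-Gaussian tails of the form $\exp(-cn(u-u_0/2)^2/(w^2D_{1,\Psi_1}^2))$ and $\exp(-cn(u-u_0/2)^2/(w^4D_{2,\Psi_1}^2))$ in the regime where the Gaussian branch dominates the Poissonian one; for $n$ larger than a threshold $\eta^{wD_{1,\Psi_1},w^2D_{2,\Psi_1},w,|p|}_\delta$, this Gaussian branch indeed governs the entire range of $u$ that contributes to the integral.

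Plugging these tails into the tail-integral representation, setting $u_0=w^2D_2$, and computing the Gaussian integrals as at the end of the proof of Proposition~\ref{prop:conc-trans-model-2} produces the two multiplicative factors $\exp\bigl(cp^2w^2D_{1,\Psi_1}^2\bigr)$ and $\exp\bigl(cp^2w^4D_{2,\Psi_1}^2/n\bigr)$; the second factor is bounded by a universal constant once $n\ge\eta$, and combining with $e^{pu_0}=e^{pw^2D_2}$ yields the asserted bound $C\exp\bigl(Cw^2(|p|D_2+p^2D_{1,\Psi_1}^2)\bigr)$. The main technical point, and the only genuine difference with the previous proposition, is controlling the random quadratic Taylor remainder uniformly in $t\in[-w,w]$: replacing the pointwise bound $V''\ge-\kappa$ by the envelope $Y_i$ and then averaging via Bernstein is what makes the two-sided argument go through, at the price of an $n$-threshold depending on $\delta$ and $D_{2,\Psi_1}$.
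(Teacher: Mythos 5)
Your proposal is correct and follows essentially the same route as the paper: the tail--integral representation, the second-order Taylor expansion with envelope $Y_i$ once $n > (w/\delta)^2$, recentering $\sum Y_i$ by $nD_2$, the union bound, Bernstein applied to both sums, and finally the Gaussian tail integral. (Minor slip: the sub-Gaussian branch of the Bernstein bound for the $V'(X_i)$-term is $\exp\bigl(-c(u-u_0)^2/(w^2D_{1,\Psi_1}^2)\bigr)$ without the extra factor $n$ you wrote, but your subsequent conclusion $\exp(cp^2w^2D_{1,\Psi_1}^2)$ is consistent with the correct tail, so this is purely notational.)
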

\begin{proof}
Let us assume that $p > 0$, it will be evident from the proof that the case that $p<0$ is treated identically, after exchanging $H(x)$ and $H(\pi_E(x))$. Repeating verbatim the relevant parts of the proof of Proposition \ref{prop:conc-trans-model-2}, we verify that:
\begin{eqnarray}
\nonumber & &  \int_{\Real^n} \exp(p (H(x) - H(\pi_E(x)))) 1_{\abs{\pi_D(x)} \leq w}\,  d\mu_n(x) \\
\label{eq:integral-2} & \leq & e^{p u_0} + \int_{u_0}^{\infty} p e^{pu} \mu_n\set{x \in \Real^n \; ; \; \abs{\pi_D(x)} \leq w ~,~ H(x) - H(\pi_E(x)) \geq u} du~,
\end{eqnarray}
for some $u_0 \in \Real$ to be determined, and that:
 \begin{eqnarray*}
&   & \mu_n\set{x \in \Real^n \; ; \; \abs{\pi_D(x)} \leq w ~,~ H(x) - H(\pi_E(x)) \geq u} \\
& \leq & \mu_n \set{x \in \Real^n \, ; \; \exists t \in [-w,w] ,~ -\frac{t}{\sqrt{n}} \sum_{i=1}^n V'(x_i) - \frac{t^2}{2n} \sum_{i=1}^{n} \inf_{\xi_i \in [x_i,x_i + \frac{t}{\sqrt{n}}]} V''(\xi_i) \geq u } \\
& \leq & \mu_n \set{x \in \Real^n ; \frac{w}{\sqrt{n}} \abs{\sum_{i=1}^n V'(x_i)} + \frac{w^2}{2n} \sum_{i=1}^{n} \sup_{\xi_i \in [x_i-w/\sqrt{n},x_i + w/\sqrt{n}]} |V''(\xi_i)| \geq u } ~.
\end{eqnarray*}
When $n$ is greater than $(w/\delta)^2$, we evaluate this by:
\begin{eqnarray}
\nonumber & \leq &
 \mu_n \set{ x \in \Real^n ; \frac{w}{\sqrt{n}} \abs{\sum_{i=1}^n V'(x_i)} + \frac{w^2}{2n} \sum_{i=1}^{n} \brac{\sup_{\xi_i \in [x_i-\delta,x_i + \delta]} |V''(\xi_i)| - D_2} \geq u - \frac{D_2 w^2}{2}  } \\
\label{eq:two-terms} & \leq &
 \P \brac{\frac{w}{\sqrt{n}} \abs{\sum_{i=1}^n V'(X_i)} \geq \frac{u - u_0}{2} } + \P \brac{ \frac{w^2}{2n} \sum_{i=1}^{n} (Y_i - D_2) \geq \frac{u - u_0}{2}},
\end{eqnarray}
where $Y_1,\ldots,Y_n$ denote independent copies of $Y_0$ and $u_0 := D_2 w^2 / 2$. 

Observe that  by definition $\E (Y_0 - D_2) = 0$, and that $\E V'(X_0)  = 0$ as before. Consequently, whenever $u > u_0$ and $n$ tends to infinity, the first and second term in (\ref{eq:two-terms}) will be governed by the Central-Limit Theorem and Law of Large Numbers, respectively, in accordance with the heuristic argument from the Introduction. 
To obtain exponentially decaying estimates as $u \rightarrow \infty$ for each individual $n$, we apply Theorem \ref{thm:Bern} to each of the terms,  which together with (\ref{eq:integral-2}) yields:
\begin{eqnarray*}
& & \int_{\Real^n} \exp\brac{p (H(x) - H(\pi_E(x))} 1_{\abs{\pi_D(x)} \leq w}\, d\mu_n(x) \leq \exp(p u_0) \\
&+& 2 p \exp(p u_0) \int_{0}^\infty \exp\brac{pu - c' \min\brac{ \frac{\sqrt{n} \, u}{2w D_{1,\Psi_1}} , \frac{ u^2}{4w^2 D_{1,\Psi_1}^2} } } du \\
&+& p \exp(p u_0) \int_{0}^\infty \exp\brac{pu - c' n \min\brac{ \frac{ u}{w^2 (D_{2,\Psi_1}+D_2)} , \frac{ u^2}{w^4 (D_{2,\Psi_1}+D_2)^2}    }} du ~,
\end{eqnarray*}
where we have used that $\norm{Y_0 - D_2}_{L_{\Psi_1}} \leq \norm{Y_0}_{L_{\Psi_1}} + \norm{D_2}_{L_{\Psi_1}} = D_{2,\Psi_1}+D_2$.
We see that when $n$ is large enough, the above estimate is clearly bounded by:
\begin{eqnarray*}
&\leq & 2\exp(pu_0)\left(1+ p \exp(p u_0) \int_{0}^\infty \exp\brac{p u - c \frac{u^2}{w^2 D_{1,\Psi_1}^2}} du\right) \\
&\leq & 2\exp(p u_0) \brac{ 1 + p \exp(p^2 w^2 D_{1,\Psi_1}^2 / (4 c)) w D_{1,\Psi_1} \sqrt{\pi/c}  } ~.
\end{eqnarray*}
Recalling that $u_0 = D_2 w^2 / 2$ and using that $1 + \exp(x^2) x \leq C \exp(C x^2)$ for an appropriately chosen $C>1$, we obtain that the latter quantity is at most:
\[
 C' \exp(p D_2 w^2 / 2 + C' p^2 w^2 D_{1,\Psi_1}^2) ~,
\]
and the assertion follows.
\end{proof}

\begin{cor} \label{cor:conc-trans-model}
In addition to the assumptions of Proposition \ref{prop:conc-trans-model}, assume that the barycenter of $\mu$ is at the origin and that $M_3 = M_3(\mu) < \infty$. Then setting:
\begin{equation} \label{eq:w_0}
w_0  := \sqrt{\min\brac{M_2,\frac{1}{D_2 + D_{1,\Psi_1}^2}}} ~,
\end{equation}
the following estimate holds:
\[
\brac{\int_{\Real^n} \brac{\frac{d\mu_{E,w_0}}{d\mu_n}}^4 d\mu_n}^{1/4} \leq C \max(1,M_2 (D_2 + D_{1,\Psi_1}^2)) ~,
\]
for any integer $n$ greater than $\eta = \eta^{M_2,M_3 / M_2^{3/2},D_{1,\Psi_1},D_{2,\Psi_1}}_{\delta}$,
where $C>0$ is a universal constant.
\end{cor}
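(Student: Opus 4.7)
My plan is to follow the template of the proof of Corollary~\ref{cor:conc-trans-model-2}, adapting the normalization estimate to the two-sided setting where the local Berry--Esseen theorem is not available (since $\lambda = \|d\mu/dx\|_\infty$ need not be finite under the assumptions of Proposition~\ref{prop:conc-trans-model}). Writing out the density explicitly,
$$\int_{\Real^n}\brac{\frac{d\mu_{E,w}}{d\mu_n}}^4 d\mu_n = \frac{1}{Z_{E,w}^4}\int_{\Real^n}\exp\bigl(4(H(x)-H(\pi_E(x)))\bigr)\,\mathbf{1}_{|\pi_D(x)|\leq w}\,d\mu_n(x),$$
and Proposition~\ref{prop:conc-trans-model} with $p=4$ controls the numerator by $C\exp(Cw^2(D_2+D_{1,\Psi_1}^2))$ as soon as $n$ exceeds some threshold depending on upper bounds on $wD_{1,\Psi_1}$, $w^2 D_{2,\Psi_1}$, $w$ and on a lower bound on $\delta$. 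The whole problem is therefore reduced to producing a matching lower bound on $Z_{E,w}$.

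To bound $Z_{E,w}$ from below, I would exploit the identity $Z_{E,w} = \int_A \exp(H(x)-H(\pi_E(x)))\,d\mu_n(x)$ where $A=\{|\pi_D|\leq w\}$, and apply Jensen's inequality to the convex function $t\mapsto 1/t$ under the conditional measure $\mu_n(\cdot\mid A)$. This gives
$$Z_{E,w}\;\geq\;\frac{\mu_n(A)^2}{\int_A \exp(H(\pi_E(x))-H(x))\,d\mu_n(x)},$$
and the denominator is precisely the quantity bounded by Proposition~\ref{prop:conc-trans-model} with $p=-1$, namely $C\exp(Cw^2(D_2+D_{1,\Psi_1}^2))$. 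This is the key step that replaces the local Berry--Esseen appeal: instead of a pointwise density estimate on $p_n(0)$ we only need a CDF-level estimate on $\mu_n(A)$, at the cost of invoking Proposition~\ref{prop:conc-trans-model} a second time (with a negative exponent).

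For $\mu_n(A)$ itself, the classical Berry--Esseen theorem (Theorem~\ref{thm:BE}) applied to $S_n=(\sum_i X_i)/\sqrt{nM_2}$ yields
$$\mu_n\{|\pi_D|\leq w\}=\P(|S_n|\leq w/\sqrt{M_2})\;\geq\; 2\Phi(w/\sqrt{M_2})-1-C\frac{M_3/M_2^{3/2}}{\sqrt{n}}.$$
Since $w_0\leq \sqrt{M_2}$, we have $w_0/\sqrt{M_2}\leq 1$ and the Gaussian term is at least $cw_0/\sqrt{M_2}$. For $n$ past a threshold depending on $M_3/M_2^{3/2}$ (and $w_0/\sqrt{M_2}$), the Berry--Esseen error is absorbed, giving $\mu_n(A)\geq c'w_0/\sqrt{M_2}$. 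Combining the three estimates,
$$\brac{\int_{\Real^n}\brac{\frac{d\mu_{E,w_0}}{d\mu_n}}^4 d\mu_n}^{1/4}\leq C''\,\frac{M_2}{w_0^2}\exp\bigl(C'' w_0^2(D_2+D_{1,\Psi_1}^2)\bigr).$$
The choice of $w_0$ ensures $w_0^2(D_2+D_{1,\Psi_1}^2)\leq 1$, making the exponential a universal constant, while $M_2/w_0^2=\max(1,\,M_2(D_2+D_{1,\Psi_1}^2))$, which is the claimed bound. The composite threshold $\eta$ accumulates upper bounds on $M_2$ (through $w_0\leq\sqrt{M_2}$), on $M_3/M_2^{3/2}$ (Berry--Esseen), on $D_{1,\Psi_1},D_{2,\Psi_1}$ (the two applications of Proposition~\ref{prop:conc-trans-model}), and a lower bound on $\delta$, matching the statement.

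The only genuinely new point, compared with the one-sided Corollary~\ref{cor:conc-trans-model-2}, is the Jensen-$1/x$ bypass of the local Berry--Esseen theorem; the remainder is bookkeeping of the bound produced by Proposition~\ref{prop:conc-trans-model} at $p=-1,4$ and the optimization in $w$, already foreshadowed by the choice of $w_0$ in (\ref{eq:w_0}).
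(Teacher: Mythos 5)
Your proof is correct and is essentially the paper's own: you apply Proposition~\ref{prop:conc-trans-model} at $p=4$ for the numerator, lower-bound $Z_{E,w}$ by $\mu_n(A)^2\big/\int_A \exp(H(\pi_E(x))-H(x))\,d\mu_n(x)$ (the paper phrases this step as Cauchy--Schwarz applied to $\sqrt{Y}\,\mathbf 1_A$ and $\mathbf 1_A/\sqrt{Y}$, while you phrase it as Jensen for $t\mapsto 1/t$ under $\mu_n(\cdot\mid A)$ --- these are the identical inequality) together with Proposition~\ref{prop:conc-trans-model} at $p=-1$, and then invoke the CDF-level Berry--Esseen Theorem~\ref{thm:BE} for $\mu_n(A)$. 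The optimization at $w=w_0$, the resulting bound $M_2/w_0^2=\max(1,M_2(D_2+D_{1,\Psi_1}^2))$, and the accounting of which parameters enter the threshold $\eta$ all match the paper's proof.
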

\begin{proof}
Recall that:
\[
d\mu_{E,w}(x) = \frac{1}{Z_{E,w}} \exp\brac{-H(\pi_E(x))} 1_{|\pi_D(x)| \leq w}\,  dx ~.
\]
Consequently, applying Proposition \ref{prop:conc-trans-model} with $p=4$, we know that for any integer $n$ greater than $\eta^{D_{1,\Psi_1},D_{2,\Psi_1},w,4}_{\delta}$:
\begin{equation} \label{eq:cor-calc}
\brac{\int_{\Real^n} \brac{\frac{d\mu_{E,w}}{d\mu_n}}^4 d\mu_n}^{1/4} \leq \frac{C'}{Z_{E,w}} \exp(C' w^2 (D_2 + D_{1,\Psi_1}^2)) ~.
\end{equation}
Since $Z_{E,w} = 2 w Z_E$ and $Z_E$ is the density at $0$ of $\frac{1}{\sqrt{n}} \sum_{i=1}^n X_i$, it is clear by the Central-Limit principle that when $n$ is large enough, $Z_E$ should approximate the density at zero of a Gaussian random variable with zero mean and variance $\sigma^2 = \frac{1}{n} \sum_{i=1}^n \E(X_i^2)$, i.e. $1/(\sqrt{2 \pi} \sigma)$. It is possible to make this rigorous by invoking a local Central-Limit Theorem, as in the previous subsection; however, anticipating future situations where the Central-Limit Theorem is not available due to dependencies, we proceed as follows, even though this incurs a quadratic penalty in our final estimate. Applying the Cauchy-Schwartz inequality and using Proposition \ref{prop:conc-trans-model} again with $p=-1$, we obtain:
\begin{eqnarray*}
Z_{E,w} & = & \int_{\Real^n} \exp\brac{H(x) - H(\pi_E(x)} 1_{\abs{\pi_D(x)} \leq w}\,  d\mu_n(x) \\
& \geq & \frac{ \brac{\int_{\Real^n} 1_{\abs{\pi_D(x)} \leq w} \, d\mu_n(x)}^2}{\int_{\Real^n} \exp\brac{-(H(x) - H(\pi_E(x))} 1_{\abs{\pi_D(x)} \leq w} \, d\mu_n(x)} \\
& \geq & \frac{ \brac{\mu^D\set{[-w,w]}}^2 }{ C'' \exp(C'' w^2 (D_2 + D_{1,\Psi_1}^2))} ~.
\end{eqnarray*}
We can now estimate $\mu^D\set{[-w,w]}$ by invoking the Berry--Esseen Theorem \ref{thm:BE}, which yields:
\[
\forall w > 0 \;\;\; \abs{\mu^D\set{[-w,w]} - \P(Z \in [-w/\sigma,w/\sigma])} \leq \frac{C M_3}{M_2^{3/2} \sqrt{n}} ~,
\]
for some universal constant $C>0$, where $Z$ is a standard Gaussian random-variable on $\Real$. Consequently, when $n$ is large enough, we obtain:
\[
Z_{E,w} \geq c \min(1,w/\sigma)^2 \exp(-C'' w^2 (D_2 + D_{1,\Psi_1}^2)) ~.
\]
Now plugging this back into (\ref{eq:cor-calc}) and using $w = w_0$, we obtain:
\[
\brac{\int_{\Real^n} \brac{\frac{d\mu_{E,w_0}}{d\mu_n}}^4 d\mu_n}^{1/4} \leq C (\sigma/w_0)^2 ~,
\]
as asserted.
\end{proof}

\subsection{Log-Sobolev Transference from $\mu$ to $\mu_E$}

We now translate the above transference results from the concentration to the log-Sobolev level. 

\begin{thm} \label{thm:model-zero-spin}
Let $\mu= \exp(-V(x)) dx$ denote a probability measure on $\Real$ with barycenter at $0$ so that $D_{1,\Psi_1} = D_{1,\Psi_1}(\mu) < \infty$ and $D_{2,\Psi_1} = D^\delta_{2,\Psi_1}(\mu) < \infty$ for some $\delta > 0$. Assume in addition that $(\Real,|\cdot|,\mu)$ satisfies $LSI(\rho)$, and that:
\begin{equation} \label{eq:model-kappa}
-\kappa := \inf_{x \in \Real} V''(x) \geq -\frac{\rho}{8} ~.
\end{equation}
Then for any integer $n$ greater than $\eta = \eta^{M_3/M_2^{3/2},D_{1,\Psi_1},D_{2,\Psi_1}}_{\rho,\delta}$, the conservative zero-mean spin system $(E,|\cdot|,\mu_E)$ satisfies LSI with constant:
\[
\rho_{LS}(E,|\cdot|,\mu_E) \geq c \frac{\rho}{Q^C} ~,
\]
where $c,C>0$ are universal constants and $Q$ is the following scale-invariant quantity:
\[
Q:= \max(1,M_2 (D_2 + D_{1,\Psi_1}^2)) ~.
\]
\end{thm}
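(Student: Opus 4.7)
The plan is to apply the log-Sobolev transference principle of Theorem \ref{thm:log-Sob-trans} to pass from the product measure $\mu_n$ to the thickened conditional measure $\mu_{E,w_0}$, and then to recover the desired inequality for $\mu_E$ itself by exploiting the evident product structure of $\mu_{E,w_0}$.

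First, by tensorization of the log-Sobolev inequality (see Appendix), the hypothesis $LSI(\rho)$ for $\mu$ transfers to $\mu_n = \mu^{\otimes n}$. Next, I verify that $(\Real^n,|\cdot|,\mu_{E,w_0})$ satisfies our $\kappa$-semi-convexity assumptions. Its density on the convex slab $\{|\pi_D(x)| \le w_0\}$ is proportional to $\exp(-H \circ \pi_E)$; since $\pi_E$ is an orthogonal projection and $V'' \ge -\kappa$ pointwise, for any $v \in \Real^n$ one has
\[
\mathrm{Hess}_x(H \circ \pi_E)[v,v] = \sum_{i=1}^n V''(\pi_E(x)_i)(\pi_E(v)_i)^2 \ge -\kappa |\pi_E(v)|^2 \ge -\kappa |v|^2,
\]
so the required Bakry--\'Emery bound $\mathrm{Hess} \ge -\kappa\,\mathrm{Id}$ holds on the interior of the slab; the corners of the slab are smoothed out by the total-variation approximation permitted in the formulation of the assumption.

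Choose $p = 4$ in Theorem \ref{thm:log-Sob-trans}. The smallness condition $\rho > 4p\kappa/(p-1) = 16\kappa/3$ is enforced by the hypothesis $\kappa \le \rho/8$, and in fact yields $\theta = 1 - 16\kappa/(3\rho) \ge 1/3$, so $1/\theta$ is a universal constant. Corollary \ref{cor:conc-trans-model} then supplies the density estimate $\|d\mu_{E,w_0}/d\mu_n\|_{L^4(\mu_n)} \le C Q$ for every integer $n$ past the stated threshold. Feeding $L = C Q$ and $\theta \ge 1/3$ into the conclusion of Theorem \ref{thm:log-Sob-trans} gives
\[
\rho_{LS}(\Real^n,|\cdot|,\mu_{E,w_0}) \;\ge\; c\,\rho\,\exp\bigl(-C(1 + \log(C Q))\bigr) \;\ge\; c'\,\rho\,Q^{-C'}.
\]

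Finally, under the isometric decomposition $\Real^n = E \oplus D$, the measure $\mu_{E,w_0}$ factors exactly as $\mu_E \otimes U_{[-w_0,w_0]}$, where $U_{[-w_0,w_0]}$ is the uniform measure on the diagonal line. Tensorization of the LSI yields $\rho_{LS}(\mu_{E,w_0}) = \min\bigl(\rho_{LS}(\mu_E), \rho_{LS}(U_{[-w_0,w_0]})\bigr)$. The bounded log-concave uniform satisfies $\rho_{LS}(U_{[-w_0,w_0]}) \ge c/w_0^2 \ge c/M_2 \ge c\,\rho$, where the last inequality uses $\rho \le \rho_{SG}(\mu) \le 1/M_2$ (test the spectral-gap with $f(x) = x$, which has $\Var_\mu(f) = M_2$ and $|\nabla f| = 1$). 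Since $Q \ge 1$, this bound exceeds $c'\rho\,Q^{-C'}$, so the minimum in the tensorization identity is attained by the $\mu_E$-factor, and the desired estimate $\rho_{LS}(\mu_E) \ge c'\rho\,Q^{-C'}$ follows. The only non-routine step is the semi-convexity verification for $\mu_{E,w_0}$, which works precisely because orthogonal projection onto $E$ is $1$-Lipschitz; all other ingredients are careful bookkeeping over previously established results.
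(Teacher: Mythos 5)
Your proof is correct and follows essentially the same route as the paper: tensorize the $LSI$ to $\mu_n$, verify the $\kappa$-semi-convexity of $\mu_{E,w_0}$, apply Theorem \ref{thm:log-Sob-trans} with $p=4$ using Corollary \ref{cor:conc-trans-model} for the $L^4$ density bound, then tensorize the thickened measure to isolate $\mu_E$. One remark on the final step: the detour through $\rho_{LS}(U_{[-w_0,w_0]}) \geq c\rho$ and the deduction that ``the minimum is attained by the $\mu_E$-factor'' is both unnecessary and, as stated, a non-sequitur (knowing $\rho_{LS}(U) \geq c'\rho Q^{-C'}$ does not tell you which term achieves the minimum); the desired bound $\rho_{LS}(\mu_E) \geq c'\rho Q^{-C'}$ follows immediately from $\rho_{LS}(\mu_E) \geq \min\bigl(\rho_{LS}(\mu_E),\rho_{LS}(U_{[-w_0,w_0]})\bigr) = \rho_{LS}(\mu_{E,w_0})$, which is exactly what the paper does.
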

\begin{proof}
First, note that the Herbst argument (see Appendix) implies the sub-Gaussian decay of Lipschitz functions on a space satisfying LSI, which in particular ensures the existence of all finite moments of $X_0$, and implies $M_2 \leq C / \rho$, for some universal constant $C>0$.

It is well-known (see Appendix) that the LS inequality tensorizes with respect to the Euclidean  norm, and so $(\Real^n,|\cdot|,\mu_n)$ satisfies a LS inequality with the same constant $\rho$. Since $d\mu_n(x) = \exp(-H(x)) dx$ with $H(x) = \sum_{i=1}^n V(x_i)$, it follows that $ \mathrm{Hess} H \geq -\kappa Id$ as tensors in $\Real^n$. The same bound holds for the restriction of these tensors onto any linear subspace, and so it follows that $(E,|\cdot|,\mu_E)$ satisfies our $\kappa$-semi-convexity assumptions. Moreover, the uniform thickening of $\mu_E$ in the direction $D$ orthogonal to $E$ only adds $0$ as an eigenvalue to the Hessian matrix in that direction, and hence $(\Real^n,|\cdot|,\mu_{E,w})$ also satisfies our $\kappa$-semi-convexity assumptions for any $w > 0$.

We now transfer the log-Sobolev inequality on $(\Real^n,|\cdot|,\mu_n)$ onto $(\Real^n,|\cdot|,\mu_{E,w_0})$ by applying Theorem \ref{thm:log-Sob-trans} with $p=4$. Note that (\ref{eq:model-kappa}) implies that $\rho > (16/3) \kappa$, which is required for applying Theorem \ref{thm:log-Sob-trans}, and consequently the parameter $\theta$ in that theorem satisfies $\theta \geq 1/3$. Estimating $\int_{\Real^n} \brac{\frac{d\mu_{E,w_0}}{d\mu_n}}^4 d\mu_n$ using Corollary \ref{cor:conc-trans-model}, it follows that for any integer $n$ larger than $\eta^{M_2,M_3 / M_2^{3/2},D_{1,\Psi_1},D_{2,\Psi_1}}_{\delta}$:
\[
\rho_{LS}(\Real^n,|\cdot|,\mu_{E,w_0}) \geq c \frac{\rho}{Q^C} ~,
\]
for some universal constants $c,C>0$. By the tensorization property of the LS inequality:
\[
\rho_{LS}(\Real^n,|\cdot|,\mu_{E,w_0}) = \min(\rho_{LS}(E,|\cdot|,\mu_E),\rho_{LS}(\Real,|\cdot|,\nu_{[-w_0,w_0]})) ~,
\]
where $\nu_{[-w_0,w_0]}$ denotes the uniform measure on $[-w_0,w_0]$. It follows that:
\[
\rho_{LS}(E,|\cdot|,\mu_E) \geq \rho_{LS}(\Real^n,|\cdot|,\mu_{E,w_0}) \geq c \frac{\rho}{Q^C} ~,
\]
as asserted.
\end{proof}

Repeating the above argument and replacing Corollary \ref{cor:conc-trans-model} by Corollary \ref{cor:conc-trans-model-2}, we obtain:

\begin{thm} \label{thm:model-zero-spin-2}
Let $\mu= \exp(-V(x)) dx$ denote a probability measure on $\Real$ with barycenter at $0$ so that $D_{1,\Psi_1} = D_{1,\Psi_1}(\mu) < \infty$ and
$\lambda := \norm{d\mu/dx}_{L^\infty} < \infty$. Assume in addition that $(\Real,|\cdot|,\mu)$ satisfies $LSI(\rho)$, and that:
\begin{equation} \label{eq:model-kappa-2}
-\kappa := \inf_{x \in \Real} V''(x) \geq -\frac{\rho}{8} ~.
\end{equation}
Then for any integer $n$ greater than $\eta = \eta^{M_3/M_2^{3/2},M_3 \lambda^3}$, the conservative zero-mean spin system $(E,|\cdot|,\mu_E)$ satisfies LSI with constant:
\[
\rho_{LS}(E,|\cdot|,\mu_E) \geq c \frac{\rho}{Q^C} ~,
\]
where $c,C>0$ are universal constants and $Q$ is the following scale-invariant quantity:
\[
Q:= \max(1,M_2 (\kappa + D_{1,\Psi_1}^2)) ~.
\]
\end{thm}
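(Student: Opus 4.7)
The plan is to mimic the proof of Theorem \ref{thm:model-zero-spin} essentially verbatim, with the single substitution of the one-sided concentration estimate of Corollary \ref{cor:conc-trans-model-2} in place of the two-sided one of Corollary \ref{cor:conc-trans-model}. The upshot of this switch is that two-sided $\Psi_1$ control of $V''$ is no longer needed; it is traded for the pointwise one-sided bound $V'' \geq -\kappa$ together with the density bound $\lambda = \norm{d\mu/dx}_{L^\infty} < \infty$, the latter being exactly what the local Berry--Esseen Theorem \ref{thm:local-BE} requires (and what allows the normalization term $Z_{E,w}$ to be estimated directly, rather than via the Cauchy--Schwarz detour used in the proof of Corollary \ref{cor:conc-trans-model}).

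First I would invoke the Herbst argument on $(\Real,\abs{\cdot},\mu)$ to conclude that $\mu$ has sub-Gaussian tails, so that in particular $M_2, M_3 < \infty$ with $M_2 \leq C/\rho$. By tensorization of LSI, $(\Real^n,\abs{\cdot},\mu_n)$ satisfies $LSI(\rho)$, and since $V'' \geq -\kappa$ pointwise the Hamiltonian satisfies $\mathrm{Hess}\, H \geq -\kappa\, Id$, a bound inherited by restriction to the hyperplane $E$ and unaffected by uniform thickening in the transversal direction $D$. Hence $(\Real^n,\abs{\cdot},\mu_{E,w})$ satisfies our $\kappa$-semi-convexity assumptions for every $w>0$. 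Next I would apply Theorem \ref{thm:log-Sob-trans} with $p=4$: the hypothesis $\kappa \leq \rho/8$ gives $\rho > (16/3)\kappa$, so the transference parameter $\theta = 1 - 16\kappa/(3\rho)$ is at least $1/3$ and the theorem applies. Corollary \ref{cor:conc-trans-model-2}, applied with $w_0 = \sqrt{\min(M_2,1/(\kappa+D_{1,\Psi_1}^2))}$, provides the required $L^4(\mu_n)$ estimate $L \leq C \sqrt{Q}$ for all $n$ exceeding the threshold $\eta^{M_3/M_2^{3/2},\, M_3 \lambda^3}$; plugging this into the conclusion of Theorem \ref{thm:log-Sob-trans} and absorbing universal constants yields $\rho_{LS}(\Real^n,\abs{\cdot},\mu_{E,w_0}) \geq c\rho/Q^C$.

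The final step is the standard tensorization: since $\mu_{E,w_0}$ is the product of $\mu_E$ on $E$ with the uniform measure on $[-w_0,w_0]\subset D$, one has $\rho_{LS}(\Real^n,\abs{\cdot},\mu_{E,w_0}) = \min\brac{\rho_{LS}(E,\abs{\cdot},\mu_E),\, \rho_{LS}(\Real,\abs{\cdot},\nu_{[-w_0,w_0]})}$, whence $\rho_{LS}(E,\abs{\cdot},\mu_E) \geq c\rho/Q^C$ as claimed. There is no genuine new obstacle here; the mild delicacy worth flagging is that the specific choice of $w_0$ automatically makes the products $w_0 D_{1,\Psi_1}$ and $w_0^2 \kappa$ universally bounded (since $w_0^2(\kappa + D_{1,\Psi_1}^2) \leq 1$), so that the Bernstein-type threshold on $n$ arising in Proposition \ref{prop:conc-trans-model-2} gets swallowed by the local Berry--Esseen threshold and produces the clean dependence $\eta^{M_3/M_2^{3/2},\, M_3 \lambda^3}$ stated in the theorem.
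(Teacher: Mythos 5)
Your proposal is correct and is essentially the paper's own proof: the paper explicitly states that Theorem \ref{thm:model-zero-spin-2} follows by repeating the argument of Theorem \ref{thm:model-zero-spin} with Corollary \ref{cor:conc-trans-model-2} in place of Corollary \ref{cor:conc-trans-model}, which is precisely what you do. Your additional observation that $w_0^2(\kappa + D_{1,\Psi_1}^2) \leq 1$ universally bounds the Bernstein threshold from Proposition \ref{prop:conc-trans-model-2} is a nice explicit justification for the clean form of $\eta$ stated in the theorem.
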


\begin{rem} \label{rem:constants}
We did not attempt to optimize over numeric constants above. In particular, the constant $8$ in the conditions (\ref{eq:model-kappa}) and (\ref{eq:model-kappa-2}) may easily be improved down to $4+\eps$ by using a $p$ greater than $4$ in Theorem \ref{thm:log-Sob-trans} and Corollaries \ref{cor:conc-trans-model} and \ref{cor:conc-trans-model-2}. Moreover, it should be possible to improve it all the way down to $1+\eps$ (and similarly, the constant $4$ in (\ref{eq:ls-assump}) should be pushed down to $1$) by carefully revisiting Lemma \ref{lem:conc-trans}, and replacing $\K_1(r/2)$ in (\ref{eq:conc-trans-2}) by $\K_1((1-\xi) r)$ for some arbitrarily small $\xi > 0$. We refrain here from pushing these numeric constants to their limit, since this seems irrelevant for applications.
\end{rem}

\section{Weakly Interacting Conservative Model} \label{sec:inter}

In this section, we modify our non-interacting Hamiltonian $H(x) = \sum_{i=1}^n V(x_i)$ by adding some weak interaction term $-I_{A}(x)$ corresponding to a weighted $n$ by $n$ symmetric  matrix $A = \set{a_{i,j}}$ with zero diagonal ($a_{i,j} = a_{j,i}$ and $a_{i,i} = 0$):
\[
H_{A}(x) = H(x) - I_{A}(x) ~,~ I_{A}(x) := \sum_{i,j =1}^n a_{i,j} x_i x_j ~.
\]
Define the corresponding Gibbs probability measure:
\[
\mu_{A} :=  \frac{1}{Z_A} \exp(-H_A(x)) dx ~,
\]
where $Z_{A} > 0$ is a normalization term. To make sure that the Gibbs measure is indeed well defined (at least for weak enough interactions, see below), we will require throughout this section the following:

\medskip
\noindent \textbf{Assumptions for Interaction}:
\begin{itemize}
\item $\mu$ has barycenter at the origin: $\int x d\mu(x) = 0$.
\item $\mu$ has sub-Gaussian tail decay:
\begin{equation} \label{eq:sub-Gaussian}
\exists \rho > 0 \;\;\; \forall \lambda \in \Real \;\;\; \int \exp(\lambda x) d\mu(x) \leq \exp(\frac{\lambda^2}{2 \rho})  ~.
\end{equation}
\end{itemize}

Let $X = (X_1,\ldots,X_n)$ denote a random-vector in $\Real^n$ distributed according to $\mu_{A}$, set $S = \sum_{i=1}^n X_i$, and let $\mu_{A,E}$ denote the law of $X$ conditioned on $S = 0$, i.e.:
\[
\mu_{A,E} = \frac{1}{Z_{A,E}} \exp\brac{-H_{A}(x)} d\vol_E(x) ~.
\]
Again, $\mu_{A,E}$ is not absolutely continuous with respect to $\mu_A$, so we define $\mu_{A,E,w}$ by ``thickening" it uniformly in the diagonal direction by a width of $w > 0$ from each side:
\[
d\mu_{A,E,w}(x) = \frac{1}{2w Z_{A,E}} \exp\brac{-H_A(\pi_E(x))} 1_{|\pi_D(x)| \leq w} dx ~.
\]

To see that the Gibbs measure is indeed well defined for weak interactions, and to establish all of our estimates in this section, we require the following theorem, communicated to us by Rafal Lata{\l}a, to whom we are indebted; its proof is deferred to the Appendix. Recall that we denote the operator and Hilbert-Schmidt norms of $A$ by $\norm{A}_{op}$ and $\norm{A}_{HS}$, respectively.

\begin{thm}[Lata{\l}a] \label{thm:Latala}
Let $X_1,\ldots,X_n$ denote a sequence of independent random-variables on $\Real$, so that for each $i=1,\ldots,n$, the law $\mu_i$ of $X_i$ satisfies the Assumptions for Interaction. Then there exist universal constants $C_2,c_2 > 0$ so that for any integer $n \geq 1$ and $n$ by $n$ symmetric matrix $A = \set{a_{i,j}}$ with zero diagonal:
\[
\P\brac{ \abs{\sum_{i,j=1}^n a_{i,j} X_i X_j} \geq t } \leq C_2 \exp\brac{-c_2 \min\brac{\frac{\rho^2 t^2}{\norm{A}_{HS}^2} , \frac{\rho t}{\norm{A}_{op}}}} \;\;\; \forall t > 0 ~.
\]
\end{thm}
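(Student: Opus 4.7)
The plan is to prove a sub-Gaussian chaos deviation inequality of Hanson--Wright type for independent, mean-zero, sub-Gaussian $X_i$ by bounding the moment generating function of $S=\sum_{i,j} a_{i,j} X_i X_j$ and then applying a Chernoff/Markov argument. The zero-diagonal hypothesis is what enables a clean decoupling step; the sub-Gaussian hypothesis (\ref{eq:sub-Gaussian}) is what controls each linear form appearing along the way.

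The first step is a \emph{decoupling} in the spirit of de la Pe\~na--Montgomery-Smith: if $X'=(X_1',\ldots,X_n')$ is an independent copy of $X$, then for every convex $\Phi$,
\[
\E\,\Phi\bigl(\textstyle\sum_{i,j} a_{i,j} X_i X_j\bigr) \leq \E\,\Phi\bigl(C_0 \sum_{i,j} a_{i,j} X_i X_j'\bigr),
\]
for a universal constant $C_0$. This is where zero diagonal is crucial, since it guarantees that the centering used in standard symmetrization and decoupling arguments does not introduce diagonal contributions. Applied to $\Phi(t)=\exp(\lambda t)$, this reduces the problem to estimating the MGF of the \emph{decoupled} chaos $S'=\sum a_{i,j} X_i X_j'$.

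Next, I would iterate two sub-Gaussian conditionings. Conditioning on $X$, the decoupled chaos is the linear form $\sum_j \bigl(\sum_i a_{i,j} X_i\bigr)X_j'$ in the independent sub-Gaussian $X_j'$, so (\ref{eq:sub-Gaussian}) yields
\[
\E\bigl[\exp(\lambda S')\,\big|\, X\bigr] \leq \exp\Bigl(\frac{\lambda^2}{2\rho}\,\snorm{AX}_2^2\Bigr).
\]
To integrate out $X$, I would linearize the quadratic $\snorm{AX}_2^2$ via the Gaussian identity $\exp(a^2/2)=\E_g\exp(ag)$ applied coordinatewise: introduce an independent standard Gaussian vector $g\in\Real^n$ and write
\[
\exp\Bigl(\tfrac{\lambda^2}{2\rho}\snorm{AX}_2^2\Bigr)=\E_g\exp\!\Bigl(\tfrac{\lambda}{\sqrt{\rho}}\,\sscalar{A^T g, X}\Bigr).
\]
Fubini and a second application of (\ref{eq:sub-Gaussian}) in the variable $X$ then give, after conditioning on $g$,
\[
\E\exp(\lambda S')\leq \E_g \exp\Bigl(\tfrac{\lambda^2}{2\rho^2}\,\sscalar{A A^T g, g}\Bigr),
\]
reducing the original chaos to the MGF of a Gaussian quadratic form with PSD matrix $AA^T$.

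Finally, the Gaussian quadratic MGF is exactly $\det(I-\tfrac{\lambda^2}{\rho^2}AA^T)^{-1/2}$, and the bound $-\tfrac12\log(1-x)\leq x/2+x^2$ valid for $0\leq x\leq 1/2$, applied spectrally, yields
\[
\log\E\exp(\lambda S')\leq \tfrac{C\lambda^2}{\rho^2}\,\snorm{A}_{HS}^2 + \tfrac{C\lambda^4}{\rho^4}\,\snorm{A}_{op}^2\snorm{A}_{HS}^2
\]
in the regime $\lambda^2\leq c\rho^2/\snorm{A}_{op}^2$, where the quartic term is dominated by the quadratic one, leaving $\log\E\exp(\lambda S)\lesssim \lambda^2\snorm{A}_{HS}^2/\rho^2$. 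A Chernoff bound with the optimal choice $\lambda=c\rho^2 t/\snorm{A}_{HS}^2$ then produces the sub-Gaussian tail $\exp(-c\rho^2 t^2/\snorm{A}_{HS}^2)$ whenever this $\lambda$ obeys the constraint $\lambda\leq c\rho/\snorm{A}_{op}$, i.e.\ whenever $t\lesssim \snorm{A}_{HS}^2/\snorm{A}_{op}$; for larger $t$ one instead saturates $\lambda$ at the boundary and obtains the sub-exponential tail $\exp(-c\rho t/\snorm{A}_{op})$. Taking the symmetric bound for $-S$ and combining gives the two-sided claim. The main technical hurdle is step one: verifying a decoupling inequality with a clean universal constant under no symmetry assumption beyond that already present in $A$, since zero diagonal is what permits the symmetrization to go through without extra terms. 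The rest is careful but routine tracking of constants through two conditionings and the Gaussian quadratic estimate.
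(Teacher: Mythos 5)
Your argument is correct and takes a genuinely different route from the paper's. The paper works at the level of moments: it first uses symmetrization plus the contraction principle to show that moments of linear forms in the $X_i$ are comparable to moments of the analogous Gaussian linear forms, then invokes the de la Pe\~{n}a--Montgomery-Smith decoupling twice (once to decouple, once to re-couple) to transfer this to order-two chaoses, reducing to the known Hanson--Wright moment estimate $\snorm{\sum a_{ij}G_iG_j}_p \lesssim p\snorm{A}_{op}+\sqrt p\snorm{A}_{HS}$ for Gaussian chaos, and finally applies Markov's inequality and optimizes over $p$. You instead work at the level of the moment generating function: after the same decoupling step (applied directly to $\Phi=\exp$), you condition twice on the two independent blocks, use the Gaussian linearization identity $\exp(a^2/2)=\E_g\exp(ag)$ to pass the quadratic penalty from the first conditioning through the second, and land on the explicitly computable MGF $\det(I-\tfrac{\lambda^2}{\rho^2}AA^T)^{-1/2}$ of a centered Gaussian quadratic form, which you then bound spectrally. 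The payoff is that your proof is self-contained (it does not need Hanson--Wright as a black box, nor the moment-comparison lemma), it dispenses with the re-coupling step since you never need to return to a coupled Gaussian chaos, and it produces the dependence on $\snorm{A}_{op}$ rather than $\snorm{\,|A|\,}_{op}$ automatically for non-symmetric laws --- the very point the paper had to address with extra care. The cost is that you rely on the (less standard but true) fact that the de la Pe\~{n}a--Montgomery-Smith decoupling holds for the unbounded convex function $\Phi=\exp$ with a dimension-free constant; this is exactly the ``main technical hurdle'' you flagged, and it is indeed the one place where a careless citation could go wrong. Everything else --- the conditional sub-Gaussian estimates, the linearization, the eigenvalue bound $-\log(1-x)\le x/2+x^2$ controlled in the regime $\lambda\snorm{A}_{op}\lesssim\rho$, and the two-regime Chernoff optimization --- checks out and yields the stated mixed Gaussian/exponential tail.
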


\subsection{Concentration Transference from $\mu_n$ to $\mu_{E,w}$}

The main new calculation in this section is given in the following:

\begin{prop} \label{prop:conc-trans-inter}
Let $p \in \Real$, and assume that:
\begin{itemize}
\item The Assumptions for Interaction are satisfied.
\item For some appropriately chosen universal constants $C_3,C_4 > 0$:
\begin{equation} \label{eq:Aop-small}
\norm{A}_{op} \leq \frac{C_3}{1 + C_4 \sqrt{\rho} w} \frac{\rho}{|p|} ~.
\end{equation}
\end{itemize}
Then:
\[
\int_{\Real^n} \exp(p I_A(\pi_E(x))) 1_{\abs{\pi_D(x)} \leq w} d\mu_n(x) \leq C_5 \exp(|p| \norm{A}_{op} w^2 + C_6 p^2 (1 + C_4 \sqrt{\rho} w)^2 \norm{A}_{HS}^2 / \rho^2 )  ~.
\]
\end{prop}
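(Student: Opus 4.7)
The strategy is to reduce a bound on $I_A(\pi_E(x))$ to bounds on the full chaos $I_A(x)$ and on a linear functional of $x$, by exploiting the orthogonal decomposition $x = \pi_E(x) + \pi_D(x)$. Writing $\pi_D(x) = t\, e_D$ with $e_D = \mathbf{1}/\sqrt n$ and $\mathbf{1} = (1,\ldots,1)$, so that $\pi_E(x)_i = x_i - t/\sqrt n$, the symmetry of $A$ together with the zero-diagonal assumption yields the algebraic identity
\[
I_A(\pi_E(x)) = I_A(x) - \frac{2t}{\sqrt n} \scalar{A\mathbf{1},\, x} + \frac{t^2}{n} \scalar{\mathbf{1},\, A\mathbf{1}}.
\]
On the event $\abs{\pi_D(x)}\le w$ one has $|t|\le w$, and the last term is bounded in absolute value by $w^2 \norm{A}_{op}$ since $\tfrac{1}{n}|\scalar{\mathbf{1},A\mathbf{1}}| = |\scalar{e_D, A e_D}|\le \norm{A}_{op}$. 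Thus, on this event, the integrand is pointwise at most $\exp(|p|w^2 \norm{A}_{op}) \, \exp(p\, I_A(x)) \, \exp(2|p|w n^{-1/2} \abs{\scalar{A\mathbf{1}, x}})$.

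After pulling the deterministic factor out and dropping the indicator (which only enlarges the integral), the Cauchy--Schwarz inequality will give
\[
\int e^{p I_A(\pi_E(x))} 1_{\abs{\pi_D(x)}\le w}\, d\mu_n \;\le\; e^{|p|w^2 \norm{A}_{op}}\, \brac{\int e^{2p I_A(x)}\, d\mu_n}^{1/2} \brac{\int e^{4|p|w n^{-1/2}\abs{\scalar{A\mathbf{1},\, x}}}\, d\mu_n}^{1/2}.
\]
The first integral is an order-two chaos MGF, which I will estimate by applying Theorem~\ref{thm:Latala} and converting the resulting two-regime (sub-Gaussian then sub-exponential) tail bound into an MGF bound by a standard split-interval integration. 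This produces $\int e^{2p I_A}\, d\mu_n \le C\exp(C p^2 \norm{A}_{HS}^2/\rho^2)$ whenever the admissibility parameter $2|p|\norm{A}_{op}/\rho$ stays below a universal threshold---which is precisely what hypothesis (\ref{eq:Aop-small}) ensures, for $C_3$ chosen small enough.

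For the linear integral, the sub-Gaussian assumption (\ref{eq:sub-Gaussian}) on $\mu$ combined with the independence of the coordinates $X_i$ yields the product MGF bound
\[
\E\, e^{\lambda \scalar{A\mathbf{1},\, X}} \le \exp\brac{\frac{\lambda^2 \norm{A\mathbf{1}}_2^2}{2\rho}} \le \exp\brac{\frac{\lambda^2 n \norm{A}_{op}^2}{2\rho}},
\]
using $\norm{A\mathbf{1}}_2 \le \sqrt{n}\, \norm{A}_{op}$. Symmetrizing via $e^{|u|}\le e^u + e^{-u}$ (losing only a factor of $2$, absorbed into $C_5$) and substituting $\lambda = 4|p|w/\sqrt n$ produces a contribution of the form $\exp(C p^2 w^2 \norm{A}_{op}^2/\rho)$. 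Using the trivial inequality $\norm{A}_{op}^2 \le \norm{A}_{HS}^2$, this rewrites as $\exp(C p^2 (\sqrt{\rho}\, w)^2 \norm{A}_{HS}^2/\rho^2)$, which then merges with the chaos contribution to yield the target factor $\exp(C_6 p^2 (1+C_4\sqrt{\rho}\, w)^2 \norm{A}_{HS}^2/\rho^2)$. The main technical obstacle is the tail-to-MGF conversion for the Lata{\l}a chaos estimate and verifying the admissibility condition throughout: the $(1+C_4\sqrt\rho w)$-weakening in (\ref{eq:Aop-small}) is precisely the headroom needed to absorb both the $w$-independent chaos term and the $w^2$-growing linear term into a single clean factor, once $C_3, C_4$ are calibrated against the universal constants appearing in Theorem~\ref{thm:Latala}.
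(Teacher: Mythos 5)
Your proposal is correct, and it follows a genuinely different route from the paper even though both proofs start from the same algebraic identity $I_A(\pi_E(x)) = I_A(x) - \tfrac{2t}{\sqrt n}\langle A\mathbf 1, x\rangle + \tfrac{t^2}{n}\langle\mathbf 1, A\mathbf 1\rangle$ and both rest on the same two ingredients (the Lata{\l}a chaos tail bound of Theorem~\ref{thm:Latala} and the sub-Gaussian Laplace bound \eqref{eq:sub-Gaussian} for the linear term). The paper writes the left-hand side via the layer-cake formula $\int e^{pI_A(\pi_E(x))}\,1_{\cdots}\,d\mu_n = \int p\,e^{pu}\,\mu_n\{\cdots \geq u\}\,du$, bounds the integrand probability via a union bound with a free split parameter $\alpha\in(0,1)$ between the quadratic-chaos and linear deviations, integrates the two resulting tail pieces, and then optimizes $\alpha$; this optimization is what generates the $(1+C_4\sqrt{\rho}\,w)^{-1}$ factor in hypothesis \eqref{eq:Aop-small}. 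You instead absorb the constant term into the deterministic prefactor $e^{|p|w^2\|A\|_{op}}$, drop the indicator, and split the remaining integral by Cauchy--Schwarz into a pure chaos MGF $\int e^{2pI_A}\,d\mu_n$ and a pure linear MGF $\int e^{(4|p|w/\sqrt n)|\langle A\mathbf 1,x\rangle|}\,d\mu_n$. The chaos MGF is controlled by the standard conversion of the Bernstein/sub-exponential tail bound of Theorem~\ref{thm:Latala} to a Laplace transform bound $\le C\exp(Cp^2\|A\|_{HS}^2/\rho^2)$ valid on the admissibility range $2|p|\le c\,\rho/\|A\|_{op}$, which is implied by \eqref{eq:Aop-small}; the linear MGF gives $\exp(Cp^2w^2\|A\|_{op}^2/\rho)\le\exp(Cp^2\rho w^2\|A\|_{HS}^2/\rho^2)$, and the two contributions merge into the stated $(1+C_4\sqrt\rho\,w)^2$ factor. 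What your route buys is a cleaner argument: there is no auxiliary split parameter to optimize, and in fact your admissibility requirement is just $\|A\|_{op}\lesssim\rho/|p|$ without the $w$-dependent weakening, so you actually prove the conclusion under a slightly weaker hypothesis than \eqref{eq:Aop-small} (the latter of course still suffices). The one place you gloss over is the tail-to-MGF conversion for the chaos; this is standard sub-gamma machinery, but it is worth noting explicitly that it is precisely at this step that the constraint on $\|A\|_{op}$ enters (to ensure the sub-exponential branch integrates).
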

\begin{proof}
We may assume that $p > 0$, since otherwise we may replace $A$ by $-A$. Furthermore, all assumptions and both sides of the desired inequality are invariant under the transformations $X_0 \mapsto X_0 / \sqrt{\rho}$, $A / \rho \mapsto A$, $w \sqrt{\rho} \mapsto w$ and $\rho \mapsto 1$, and so we may assume that $\rho = 1$; nevertheless, we proceed in full generality. We evaluate:
\begin{eqnarray}
\nonumber & & \int_{\Real^n} \exp(p I_A(\pi_E(x))) 1_{\abs{\pi_D(x)} \leq w}\,  d\mu_n(x) \\
\nonumber & = &  \int_{-\infty}^\infty p \exp(pu) \mu_n\set{x \in \Real^n \; ; \; \abs{\pi_D(x)} \leq w ~,~ I_A(\pi_E(x)) \geq u} du \\
\label{eq:inter-integral} & \leq & e^{p u_0} + \int_{u_0}^{\infty} p e^{p u} \mu_n\set{x \in \Real^n \; ; \; \abs{\pi_D(x)} \leq w ~,~ I_A(\pi_E(x)) \geq u} du  ~,
\end{eqnarray}
for some $u_0$ to be determined. We proceed by roughly evaluating the integrand as follows:
\begin{eqnarray*}
&   & \mu_n\set{x \in \Real^n \; ; \; \abs{\pi_D(x)} \leq w ~,~ I_A(\pi_E(x)) \geq u}\\
& = & \mu_n \set{x \in \Real^n \; ; \; \abs{\pi_D(x)} \leq w ~,~ \sum_{i,j=1}^n a_{i,j} (\pi_E(x))_i (\pi_E(x))_j \geq u } \\
& \leq & \mu_n \set{x \in \Real^n \; ; \; \exists t \in [-w,w] ~  \sum_{i,j=1}^n a_{i,j} (x_i + t/\sqrt{n}) (x_j + t/\sqrt{n}) \geq u } \\
& \leq & \mu_n \set{x \in \Real^n \; ; \; \sum_{i,j=1}^n a_{i,j} x_i x_j + \frac{2 w}{\sqrt{n}} \abs{\sum_{i,j=1}^n a_{i,j} x_i} + \frac{w^2}{n} \abs{\sum_{i,j=1}^n a_{i,j}} \geq u } ~.
\end{eqnarray*}
Since:
\[
\frac{1}{n} \abs{\sum_{i,j=1}^n a_{i,j}} \leq \norm{A}_{op} ~,
\]
setting $u_0 = \norm{A}_{op} w^2$, we obtain:
\begin{eqnarray}
\nonumber & \leq & \P \brac{ \sum_{i,j=1}^n a_{i,j} X_i X_j + \frac{2 w}{\sqrt{n}} \abs{\sum_{i,j=1}^n a_{i,j} X_i} \geq u - u_0 } \\
\label{eq:inter-two-terms} & \leq & \P \brac{\sum_{i,j=1}^n a_{i,j} X_i X_j \geq \alpha (u - u_0)} +  \P \brac{ \frac{2 w}{\sqrt{n}} \abs{\sum_{i,j=1}^n a_{i,j} X_i} \geq (1-\alpha)(u - u_0)} ~,
\end{eqnarray}
for some $\alpha \in (0,1)$ to be determined. The first term above is immediately estimated using Theorem \ref{thm:Latala}. Bounding the second term is elementary, since using (\ref{eq:sub-Gaussian}) and independence:
\[
\forall t > 0 \;\;\; \P \brac{\sum_{i=1}^n \alpha_i X_i \geq t} \leq \inf_{\lambda > 0} \exp\brac{\frac{\lambda^2}{2 \rho} \sum_{i=1}^n \alpha_i^2 - \lambda t} = \exp\brac{-\frac{\rho t^2}{2 \sum_{i=1}^n \alpha_i^2}} ~.
\]
Since:
\[
\sum_{i=1}^n \brac{\frac{1}{\sqrt{n}} \sum_{j=1}^n a_{i,j}}^2 = | A (1/\sqrt{n},\ldots,1/\sqrt{n})|^2 \leq \norm{A}_{op}^2 ~,
\]
we obtain:
\[
\forall t > 0 \;\;\; \P \brac{ \frac{2 w}{\sqrt{n}} \abs{\sum_{i=1}^n (\sum_{j=1}^n a_{i,j}) X_i} \geq t} \leq 2 \exp\brac{-\frac{\rho t^2}{8 w^2 \norm{A}_{op}^2}} ~.
\]
Combining the estimates on both terms in (\ref{eq:inter-two-terms}), we conclude that:
\begin{eqnarray*}
& & \forall u \geq u_0, \;  \forall \alpha \in (0,1), \;\mu_n\set{x \in \Real^n \; ; \; \abs{\pi_D(x)} \leq w ~,~ I_A(\pi_E(x)) \geq u} \\
& \leq &
C_2 \exp\brac{-c_2 \min\brac{\frac{\alpha^2 \rho^2 (u-u_0)^2}{\norm{A}_{HS}^2} , \frac{\alpha \rho (s-s_0)}{\norm{A}_{op}}}} + 2 \exp\brac{-\frac{\rho (1-\alpha)^2 (u-u_0)^2}{8 w^2 \norm{A}_{op}^2}} ~.
\end{eqnarray*}
Plugging this into (\ref{eq:inter-integral}), we obtain:
\begin{eqnarray*}
& & \int_{\Real^n} \exp(p I_A(\pi_E(x))) 1_{\abs{\pi_D(x)} \leq w} d\mu_n(x) \leq \exp(p u_0) \\
& + & C_2 p \exp(p u_0) \int_{0}^\infty \exp\brac{p u - c_2 \min \brac{\frac{\alpha^2 \rho^2 u^2}{\norm{A}_{HS}^2} , \frac{\alpha \rho u}{\norm{A}_{op}}} } du \\
& + & 2 p \exp(p u_0) \int_{0}^\infty \exp\brac{p u - \frac{(1-\alpha)^2 \rho u^2}{8 w^2 \norm{A}_{op}^2}} du ~.
\end{eqnarray*}
Setting:
\[
\alpha = \frac{1}{1 + \sqrt{8 c_2 \rho} w \frac{\norm{A}_{op}}{\norm{A}_{HS}}} ~,
\]
we obtain:
\begin{eqnarray*}
&\leq& \exp(p u_0) + (C_2 + 2) p \exp(p u_0) \int_0^\infty \exp\brac{p u - \frac{c_2 \rho^2 u^2}{\brac{1 +\sqrt{8 c_2\rho} w \frac{\norm{A}_{op}}{\norm{A}_{HS}}}^2 \norm{A}_{HS}^2} } du \\
&+& C_2 p \exp(p u_0) \int_{0}^\infty  \exp\brac{p u - \frac{c_2 \rho u}{\brac{1 + \sqrt{8 c_2 \rho} w\frac{\norm{A}_{op}}{\norm{A}_{HS}}} \norm{A}_{op}}} du ~.
\end{eqnarray*}
Since $\norm{A}_{op} \leq \norm{A}_{HS}$, we see that as soon as (\ref{eq:Aop-small}) is satisfied with say $C_3 = c_2 / 2$ and $C_4 =\sqrt{8 c_2}$, we obtain:
\begin{eqnarray*}
&\leq& e^{p u_0}\brac{1 + (C_2 + 2) p \int_0^\infty \exp\brac{p u - \frac{c_2 \rho^2 u^2}{\brac{1 + C_4 \sqrt{\rho} w}^2 \norm{A}_{HS}^2}} du +
C_2 p \int_{0}^\infty e^{- pu} du }\\
&\leq& e^{p u_0} \brac{1 + C_2 + (C_2 + 2) p \exp\brac{\frac{p^2 (1+C_4 \sqrt{\rho} w)^2 \norm{A}_{HS}^2}{4 c_2 \rho^2}} \frac{ \sqrt{\pi} \norm{A}_{HS} (1 + C_4 \sqrt{\rho} w)}{\sqrt{c_2} \rho} } ~.
\end{eqnarray*}
Noting that $1 + x \exp(x^2) \leq C \exp(C x^2)$ for some appropriately chosen constant $C>1$, and recalling that $u_0 = \norm{A}_{op} w^2$, the assertion follows.
\end{proof}

\begin{prop} \label{prop:conc-trans-inter-full}
Let the Assumptions for Interaction and the assumptions specified in Proposition \ref{prop:conc-trans-model} be satisfied. Assume further that for an appropriate universal constant $c>0$:
\begin{equation} \label{eq:Aop-small-simple}
\norm{A}_{op} \leq c \rho ~.
\end{equation}
Then setting:
\[
w_0 := \sqrt{\min\brac{M_2 , \frac{1}{D_2 + D_{1,\Psi_1}^2}}} ~,
\]
The following estimate holds:
\[
\brac{\int_{\Real^n} \brac{\frac{d\mu_{A,E,w_0}}{d\mu_n}}^4 d\mu_n}^{1/4} \leq C \max\brac{1 , M_2 (D_2 + D_{1,\Psi_1}^2)} \exp(C \norm{A}_{HS}^2 / \rho^2) ~,
\]
for all integers $n$ greater than $\eta = \eta^{M_2, M_3 / M_2^{3/2},D_{1,\Psi_1},D_{2,\Psi_1}}_{\delta}$.
\end{prop}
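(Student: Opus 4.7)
The plan is to combine the two single-perturbation estimates of Propositions \ref{prop:conc-trans-model} and \ref{prop:conc-trans-inter} via the Cauchy--Schwartz factorization strategy of \eqref{eq:superimpose} from the introduction. Set $f(x) := \exp(H(x) - H(\pi_E(x)))$, $g(x) := \exp(I_A(\pi_E(x)))$, and $h(x) := 1_{\abs{\pi_D(x)} \leq w_0}$. Because $\mu_{A,E,w_0}$ is a probability measure, the normalization $2 w_0 Z_{A,E}$ must equal $\int fgh\, d\mu_n$, and hence
\[
\frac{d\mu_{A,E,w_0}}{d\mu_n}(x) = \frac{f(x) g(x) h(x)}{\int fgh\, d\mu_n}.
\]
Expanding the fourth moment (using $h^2 = h$ to keep the indicator inside the integrals) and applying Cauchy--Schwartz to the numerator yields $\int f^4 g^4 h\, d\mu_n \leq \bigl(\int f^8 h\, d\mu_n\bigr)^{1/2} \bigl(\int g^8 h\, d\mu_n\bigr)^{1/2}$. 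For the denominator, the reverse Cauchy--Schwartz inequality $\bigl(\int h\, d\mu_n\bigr)^2 \leq \int fgh\, d\mu_n \cdot \int h/(fg)\, d\mu_n$ combined with a further Cauchy--Schwartz $\int h/(fg)\, d\mu_n \leq \bigl(\int f^{-2} h\, d\mu_n\bigr)^{1/2} \bigl(\int g^{-2} h\, d\mu_n\bigr)^{1/2}$ produces the desired lower bound on $\int fgh\, d\mu_n$. After taking fourth roots,
\[
\norm{\frac{d\mu_{A,E,w_0}}{d\mu_n}}_{L^4(\mu_n)} \leq \frac{\bigl(\int f^8 h\, d\mu_n\bigr)^{1/8} \bigl(\int g^8 h\, d\mu_n\bigr)^{1/8} \bigl(\int f^{-2} h\, d\mu_n\bigr)^{1/2} \bigl(\int g^{-2} h\, d\mu_n\bigr)^{1/2}}{\bigl(\int h\, d\mu_n\bigr)^2}.
\]

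The four numerator integrals are now each handled by a previously established proposition. For the $f$-integrals (with $p = 8$ and $p = -2$), Proposition \ref{prop:conc-trans-model} gives bounds of the form $C \exp\bigl(C w_0^2 (D_2 + D_{1,\Psi_1}^2)\bigr) \leq C$, where the last step uses the definition of $w_0$. For the $g$-integrals (with $p = 8$ and $p = -2$), we apply Proposition \ref{prop:conc-trans-inter}, whose smallness hypothesis \eqref{eq:Aop-small} must be verified at these values of $p$. The sub-Gaussian assumption \eqref{eq:sub-Gaussian} forces $M_2 \leq 1/\rho$ (differentiate $\int e^{\lambda x}\, d\mu(x) \leq e^{\lambda^2/(2\rho)}$ twice at $\lambda = 0$), so $\sqrt{\rho}\, w_0 \leq \sqrt{\rho M_2} \leq 1$, and \eqref{eq:Aop-small} reduces to $\norm{A}_{op} \leq c \rho$ for an appropriate universal $c$, which is our standing hypothesis \eqref{eq:Aop-small-simple}. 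The proposition then delivers $\int g^p h\, d\mu_n \leq C \exp\bigl(|p|\norm{A}_{op} w_0^2 + C p^2 \norm{A}_{HS}^2 / \rho^2\bigr) \leq C \exp(C \norm{A}_{HS}^2/\rho^2)$, where the last step uses $\norm{A}_{op} w_0^2 \leq c \rho M_2 \leq c$.

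Finally, the denominator $\int h\, d\mu_n = \P\bigl(\abs{n^{-1/2} \sum_i X_i} \leq w_0\bigr)$ is lower bounded exactly as in the proof of Corollary \ref{cor:conc-trans-model}: by the Berry--Esseen Theorem \ref{thm:BE} (applied to the i.i.d.\ sum of variance $n M_2$, with finite third moment guaranteed by sub-Gaussianity), for all $n$ exceeding a function of $M_3/M_2^{3/2}$, this probability exceeds $c \min(1, w_0/\sqrt{M_2})$. Since $\min(1, w_0/\sqrt{M_2})^{-2} = \max(1, M_2 (D_2 + D_{1,\Psi_1}^2))$ by the definition of $w_0$, assembling the bounds yields the claimed estimate. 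The main technical point is really just verifying that the smallness requirement \eqref{eq:Aop-small} remains satisfied for all $p \in \{8, -2\}$; once the sub-Gaussian bound $M_2 \leq 1/\rho$ is used to show that $\sqrt{\rho}\, w_0$ is universally bounded, all the $w_0$-dependent exponents collapse to universal constants, and the rest is routine accounting of Cauchy--Schwartz inequalities.
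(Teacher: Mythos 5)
Your proof is correct and follows essentially the same route as the paper's: the factorization $d\mu_{A,E,w_0}/d\mu_n = fgh/\int fgh\,d\mu_n$, the chain of Cauchy--Schwartz inequalities reducing the fourth moment to the eighth and minus-second moments of $f$ and $g$ against $h$ and to $(\int h\,d\mu_n)^{-8}$, the invocation of Propositions \ref{prop:conc-trans-model} and \ref{prop:conc-trans-inter} for the $f$- and $g$-integrals, the observation that $\sqrt{\rho}\,w_0$ is universally bounded (via $M_2\leq C/\rho$ from sub-Gaussianity) so that the hypothesis \eqref{eq:Aop-small} reduces to \eqref{eq:Aop-small-simple}, and the Berry--Esseen lower bound on $\int h\,d\mu_n$ are all identical to the paper's argument. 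Your derivation of $M_2\leq 1/\rho$ by comparing second derivatives at $\lambda=0$ is a slight sharpening of the paper's $M_2\leq C/\rho$, but is otherwise the same idea.
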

\begin{proof}
Given $w > 0$, we write:
\begin{eqnarray*}
\int \brac{\frac{d\mu_{A,E,w}}{d\mu_n}}^4 d\mu_n = \frac{\int \brac{\frac{d\mu_{A,E,w}}{d\mu_n}}^4 d\mu_n}{\brac{\int \frac{d\mu_{A,E,w}}{d\mu_n} d\mu_n}^4}  = \frac{\int (f g h)^4 d\mu_n}{\brac{\int f g h \; d\mu_n}^4} ~,
\end{eqnarray*}
where:
\[
f(x) = \exp(H(x) - H(\pi_E(x))) ~,~ g(x) = \exp(I_A(\pi_E(x))) ~,~ h(x) = 1_{\abs{\pi_D(x)} \leq w} ~.
\]

Since we do not care about numerical constants (even inside exponents), we proceed by applying the Cauchy-Schwartz inequality several times, and obtain:
\begin{eqnarray}
\nonumber & & \int \brac{\frac{d\mu_{A,E,w}}{d\mu_n}}^4 d\mu_n  \leq \brac{\int f^8 h \; d\mu_n}^{1/2} \brac{\int g^8 h \; d\mu_n}^{1/2} \brac{\int \frac{1}{fg} h \; d\mu_n}^4 \brac{\int h \; d\mu_n}^{-8} \\
\label{eq:fgh} & \leq & \brac{\int f^8 h \; d\mu_n}^{1/2} \brac{\int g^8 h \; d\mu_n}^{1/2} \brac{\int f^{-2} h \; d\mu_n}^2 \brac{\int g^{-2} h \; d\mu_n}^2 \brac{\int h \; d\mu_n}^{-8} ~.
\end{eqnarray}

The integrals above involving $f$ and $g$ were estimated in Propositions \ref{prop:conc-trans-model} and \ref{prop:conc-trans-inter}, respectively. And since $\int h \; d\mu_n = \mu^D\set{[-w,w]}$, the Berry--Esseen Theorem \ref{thm:BE} implies as in the proof of Corollary \ref{cor:conc-trans-model} that:
\[
\forall w > 0 \;\;\; \abs{\mu^D\set{[-w,w]} - \P(Z \in [-w/\sigma,w/\sigma])} \leq \frac{C M_3}{M_2^{3/2} \sqrt{n}} ~,
\]
for some universal constant $C>0$, where $Z$ is a standard Gaussian random-variable on $\Real$ and $\sigma^2 = \frac{1}{n} \sum_{i=1}^n \E(X_i^2)$. Consequently, when $n$ is large enough, we obtain:
\[
\int h \; d\mu_n \geq c \min(1,w/\sigma) ~.
\]
Plugging all of these estimates into (\ref{eq:fgh}), we obtain for all integers $n$ also larger than $\eta^{D_{1,\Psi_1},D_{2,\Psi_1},w}_{\delta}$ from Proposition \ref{prop:conc-trans-model}, and all interaction matrices $A$ satisfying (\ref{eq:Aop-small}), that:
\begin{eqnarray*}
\brac{\int \brac{\frac{d\mu_{A,E,w}}{d\mu_n}}^4 d\mu_n}^{1/4} & \leq & C \max(1,\sigma/w)^2 \exp\brac{C \brac{w^2 (D_2 + D_{1,\Psi_1}^2) + w^2 \norm{A}_{op}}} \\
 & & \exp\brac{C (1 + C_4 \sqrt{\rho} w)^2 \frac{\norm{A}_{HS}^2}{\rho^2} }  ~.
\end{eqnarray*}
Setting $w = w_0$ and noting that $\sigma^2 = M_2 \leq C / \rho$ thanks to the sub-Gaussian decay assumption (\ref{eq:sub-Gaussian}), it follows that $\sqrt{\rho} w_0 \leq \sqrt{\rho} \sigma \leq \sqrt{C}$. The smallness assumption (\ref{eq:Aop-small}) then translates into the assumption (\ref{eq:Aop-small-simple}), and the assertion consequently readily follows.
\end{proof}

\subsection{log-Sobolev Transference from $\mu$ to $\mu_{E}$}

We can now obtain:

\begin{thm} \label{thm:inter-zero-spin}
Let $\mu = \exp(-V(x)) dx$ denote a probability measure on $\Real$ with barycenter at the origin, so that $D_{1,\Psi_1} = D_{1,\Psi_1}(\mu) < \infty$ and $D_{2,\Psi_1} = D^\delta_{2,\Psi_1}(\mu) < \infty$ for some $\delta > 0$. Let $A$ denote an $n$ by $n$ symmetric matrix with zero diagonal.
Assume that $(\Real,|\cdot|,\mu)$ satisfies $LSI(\rho)$, and that:
\begin{itemize}
\item \begin{equation} \label{eq:inter-kappa}
-\kappa := \inf_{x \in \Real} V''(x) \geq -\frac{\rho}{8} ~.
\end{equation}
\item \begin{equation} \label{eq:Aop-small-simple2}
\norm{A}_{op} \leq c \rho ~,
\end{equation}
where $c>0$ is an appropriate universal constant.
\end{itemize}
Then for any integer $n$ greater than $\eta = \eta^{M_3 / M_2^{3/2}, D_{1,\Psi_1},D_{2,\Psi_1}}_{\rho,\delta}$, the weakly-interacting conservative zero-mean spin system
$(E,|\cdot|,\mu_{A,E})$ satisfies LSI with constant:
\[
\rho_{LS}(E,|\cdot|,\mu_{A,E}) \geq c \frac{\rho}{Q^C} ~,
\]
where $c,C>0$ are universal constants and $Q$ is the following scale-invariant quantity:
\[
Q := \max\brac{1 , M_2 (D_2 + D_{1,\Psi_1}^2)} \exp(\norm{A}_{HS}^2 / \rho^2) ~.
\]
\end{thm}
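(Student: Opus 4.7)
The plan is to mirror the non-interacting argument (Theorem \ref{thm:model-zero-spin}), adding one step to absorb the interaction into the curvature lower bound and one use of Proposition \ref{prop:conc-trans-inter-full} to bound the $L^4$ density. First, since $(\Real,|\cdot|,\mu)$ satisfies $LSI(\rho)$, the Herbst argument yields sub-Gaussian concentration with parameter $\rho$, so $\mu$ satisfies the Assumptions for Interaction, $M_2\le C/\rho$, and in particular (after choosing $c$ small enough in $\norm{A}_{op}\le c\rho$) the smallness condition \eqref{eq:Aop-small-simple} of Proposition \ref{prop:conc-trans-inter-full} is met. Tensorization gives $\rho_{LS}(\Real^n,|\cdot|,\mu_n)=\rho$, and Proposition \ref{prop:conc-trans-inter-full} delivers, for $w_0=\sqrt{\min(M_2,1/(D_2+D_{1,\Psi_1}^2))}$ and all $n$ beyond the stated threshold,
\[
\brac{\int\brac{\frac{d\mu_{A,E,w_0}}{d\mu_n}}^{4}d\mu_n}^{1/4}\le L:=C\,\max\!\brac{1,M_2(D_2+D_{1,\Psi_1}^{2})}\exp(C\norm{A}_{HS}^{2}/\rho^{2}).
\]

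The essential new verification is the $\kappa'$-semi-convexity of $(\Real^n,|\cdot|,\mu_{A,E,w_0})$. As a tensor on $\Real^n$, the Hessian of $H_A(x)=\sum_i V(x_i)-\sum_{i,j}a_{i,j}x_i x_j$ equals $\mathrm{diag}(V''(x_i))-2A$, so using $V''\ge-\kappa$ and the operator-norm bound on $A$,
\[
\mathrm{Hess}\,H_A\ge-(\kappa+2\norm{A}_{op})\,Id=:-\kappa' Id.
\]
The potential of $\mu_{A,E,w_0}$ is $H_A\circ\pi_E$ plus the indicator of the slab $|\pi_D|\le w_0$; its Hessian in the $E$-directions inherits the bound $-\kappa' Id_E$ by restriction, and vanishes in the $D$-direction. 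Hence our $\kappa'$-semi-convexity assumptions hold. To apply Theorem \ref{thm:log-Sob-trans} with $p=4$ we need $\rho>(16/3)\kappa'$; using $\kappa\le\rho/8$, this amounts to $c<1/32$ in \eqref{eq:Aop-small-simple2}, which we simply impose (tightening $c$ from Proposition \ref{prop:conc-trans-inter-full} if necessary). Consequently $\theta=1-(16/3)\kappa'/\rho$ is bounded below by a universal constant.

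Applying Theorem \ref{thm:log-Sob-trans} then yields
\[
\rho_{LS}(\Real^n,|\cdot|,\mu_{A,E,w_0})\ge c\,\rho\,\exp\!\brac{-C(1+\log L)/\theta}\ge c\,\rho/Q^{C},
\]
because $1+\log L\lesssim \log\max(1,M_2(D_2+D_{1,\Psi_1}^2))+\norm{A}_{HS}^{2}/\rho^{2}\lesssim\log Q$. Finally, the metric-measure space $(\Real^n,|\cdot|,\mu_{A,E,w_0})$ splits isometrically as the product of $(E,|\cdot|,\mu_{A,E})$ and the uniform measure on $[-w_0,w_0]$ along $D$, so by the tensorization property of LSI
\[
\rho_{LS}(E,|\cdot|,\mu_{A,E})\ge\rho_{LS}(\Real^n,|\cdot|,\mu_{A,E,w_0})\ge c\,\rho/Q^{C},
\]
as asserted. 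The only genuine subtlety is coordinating the universal constant $c$ in the smallness hypothesis on $\norm{A}_{op}$ so that it serves simultaneously Proposition \ref{prop:conc-trans-inter-full} and the curvature condition $\theta\ge$ const of Theorem \ref{thm:log-Sob-trans}; everything else is a direct recycling of the non-interacting proof.
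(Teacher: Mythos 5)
Your proof is correct and follows essentially the same route as the paper: Herbst gives the sub-Gaussian assumption and $M_2\lesssim 1/\rho$, tensorization gives $LSI(\rho)$ for $\mu_n$, the curvature bound on $H_A$ is obtained from $V''\ge-\kappa$ and the operator norm of $A$, Proposition~\ref{prop:conc-trans-inter-full} supplies the $L^4$ density bound, Theorem~\ref{thm:log-Sob-trans} with $p=4$ transfers LSI to the thickened slab, and the product structure of $\mu_{A,E,w_0}$ finishes. The one small point where you diverge is that you correctly compute $\mathrm{Hess}\,I_A=2A$ (the paper writes $\mathrm{Hess}\,I_A\equiv A$, apparently dropping the factor $2$), which only tightens the threshold on the universal constant $c$ in $\norm{A}_{op}\le c\rho$ and has no effect on the validity of the statement.
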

\begin{proof}
The proof is almost identical to that of Theorem \ref{thm:model-zero-spin}.

Since $\mu$ satisfies $LSI(\rho)$, it follows by the Herbst argument (see Appendix) that the sub-Gaussian tail decay assumption (\ref{eq:sub-Gaussian}) is satisfied (with the same constant $\rho$). In particular, it follows that $M_2 \leq C / \rho$, for some universal constant $C>0$.
It is well-known (see Appendix) that LSI tensorizes with respect to the Euclidean ($\ell_2$) norm, and so $(\Real^n,|\cdot|,\mu_n)$ also satisfies $LSI(\rho)$. Since $ \mathrm{Hess} H \geq -\kappa Id$ and $ \mathrm{Hess} I_A \equiv A \leq \norm{A}_{op} Id$ as tensors in $\Real^n$, it follows that $ \mathrm{Hess} H_A =  \mathrm{Hess} H -  \mathrm{Hess} I_A \geq -\kappa_A Id$, where $\kappa_A = \kappa + \norm{A}_{op}$. The same bound holds for the restriction of these tensors onto any linear subspace, and so it follows that $(E,|\cdot|,\mu_{A,E})$ satisfies our $\kappa_A$-semi-convexity assumptions. Moreover, the uniform thickening of $\mu_{A,E}$ in the direction $D$ orthogonal to $E$ only adds $0$ as an eigenvalue to the Hessian matrix in that direction, and hence $(\Real^n,|\cdot|,\mu_{A,E,w})$ also satisfies our $\kappa_A$-semi-convexity assumptions for any $w > 0$. Note that by ensuring that the constant $c$ in (\ref{eq:Aop-small-simple2}) is not greater than $1/24$, it follows from (\ref{eq:Aop-small-simple2}) and (\ref{eq:inter-kappa}) that $\kappa_A \leq \frac{\rho}{6}$.

We now transfer the log-Sobolev inequality on $(\Real^n,|\cdot|,\mu_n)$ onto $(\Real^n,|\cdot|,\mu_{A,E,w_0})$ by applying Theorem \ref{thm:log-Sob-trans} with $p=4$.
To this end, we invoke Proposition \ref{prop:conc-trans-inter-full} for estimating $\int_{\Real^n} \brac{\frac{d\mu_{A,E,w_0}}{d\mu_n}}^4 d\mu_n$.
Note that $\rho \geq 6 \kappa_A > (16/3) \kappa_A$, which is required for applying Theorem \ref{thm:log-Sob-trans}, and consequently the parameter $\theta$ in that Theorem satisfies $\theta \geq 1/9$. Also note that all of the assumptions for applying Proposition \ref{prop:conc-trans-inter-full} are indeed satisfied, including the sub-Gaussian tail decay assumption (\ref{eq:sub-Gaussian}) and the smallness condition (\ref{eq:Aop-small-simple}) (by appropriately choosing $c$ in (\ref{eq:Aop-small-simple2})). Estimating $\int_{\Real^n} \brac{\frac{d\mu_{E,w_0}}{d\mu_n}}^4 d\mu_n$ using Proposition \ref{prop:conc-trans-inter-full}, it follows that for any integer $n$ larger than $\eta^{M_2,M_3 / M_2^{3/2},D_{1,\Psi_1},D_{2,\Psi_1}}_{\delta}$:
\[
\rho_{LS}(\Real^n,|\cdot|,\mu_{A,E,w_0}) \geq c \frac{\rho}{Q^C} ~,
\]
where $c,C>0$ are universal constants. By the tensorization property of the log-Sobolev inequality:
\[
\rho_{LS}(\Real^n,|\cdot|,\mu_{A,E,w_0}) = \min(\rho_{LS}(E,|\cdot|,\mu_{A,E}),\rho_{LS}(\Real,|\cdot|,\nu_{[-w_0,w_0]})) ~,
\]
where $\nu_{[-w_0,w_0]}$ denotes the uniform measure on $[-w_0,w_0]$. It follows that:
\[
\rho_{LS}(E,|\cdot|,\mu_{A,E}) \geq \rho_{LS}(\Real^n,|\cdot|,\mu_{A,E,w_0}) \geq c \frac{\rho}{Q^C} ~,
\]
as asserted.
\end{proof}

\newcommand{\leftexp}[2]{{#2}^{\wedge #1}}

\section{Uniform Bounds for Arbitrary Mean-Spins} \label{sec:mean-spin}

\subsection{Conservative Spin Model}

We have thus far only treated canonical ensembles obtained from conditioning the grand canonical ensemble $\mu_n$ on having the mean-spin $S = \frac{1}{n} \sum X_i$ fixed at $0$, the assumed barycenter of $\mu$. Consider now the canonical ensemble $\mu_{E_s}$ obtained from conditioning the grand canonical ensemble $\mu_n$ on having a mean-spin $S = s$, namely:
\[
d\mu_{E_s} = \frac{1}{Z_{E_s}} \exp(-H(x))\,  d\vol_{E_s}(x) ~,
\]
where $E_s$ denotes the affine hyperplane $\frac{1}{n} \sum_{i=1}^n x_i = s$ and $Z_{E_s} > 0$ is a normalization term. Naturally, we will only consider values of $s$ which lie in $\mathrm{isupp}(\mu)$, the interior of the support of $\mu$.

To handle arbitrary mean-spin values, we use the well-known Cram\'er trick, which is the key ingredient in the Cram\'er Theorem on Large Deviations, a central tool in the approaches of \cite{GOVWTwoScaleApproachForLSI,MenzOttoLSIForPerturbationsOfSuperQuadraticPotential,MenzLSIwithWeakInteraction}. Denote by $\leftexp{a}{\mu}$ the probability measure on $\Real$ obtained from $\mu$ by multiplying its density by $\exp(a x)$ and renormalizing:
\[
\leftexp{a}{\mu} := \frac{1}{Z^a} \frac{d\mu}{dx} \exp(a x)\, dx ~.
\]
Note that in our setting $\leftexp{a}{\mu}$ is indeed well defined, since $\mu$ is always assumed to satisfy LSI, and hence (by the Herbst argument) it must have sub-Gaussian tail-decay, and so $Z^a = \int \exp(a x) d\mu(x) < \infty$ for all $a \in \Real$. The key observation is that the densities of $\mu_n$ and $(\leftexp{a}{\mu})_n$ coincide up to a constant multiple on each hyperplane $E_s$, and consequently $\mu_{E_s} = (\leftexp{a}{\mu})_{E_s}$ for all $a,s \in \Real$.

Let $\leftexp{a}{X}$ be a random variable distributed according to $\leftexp{a}{\mu}$. Given $a \in \Real$, we denote $s(a) := \E(\leftexp{a}{X})$ (note that this expectation always exists in our setting). The function $\Real \ni a \mapsto s(a) \in \mathrm{isupp}(\mu)$ is well known to be increasing and onto, and we denote its inverse by $a(s)$. Denoting by $T^s$ the function $T^s(x) := x-s$ translating by $s$ to the left, we denote $\mu^a := (T^{s(a)})_*(\leftexp{a}{\mu})$, the translation of $\leftexp{a}{\mu}$ having barycenter at $0$. We denote by $X^a$ the random-variable with law $\mu^a$. Finally, observe that the measure-metric space $(E_s,\abs{\cdot},(\leftexp{a(s)}{\mu})_{E_s})$ is isometrically isomorphic to the measure-metric space $(E_0,\abs{\cdot},(\mu^{a(s)})_{E_0})$, since the Euclidean structure is compatible with translations. We conclude that the best constants in a LSI on these spaces coincide, and so to obtain uniform estimates on $\rho_{LS}(E_s,\abs{\cdot},(\leftexp{a(s)}{\mu})_{E_s})$ in $s \in \mathrm{isupp}(\mu)$, we must obtain uniform estimates on $\rho_{LS}(E_0,\abs{\cdot},(\mu^{a})_{E_0})$ in $a \in \Real$. Using Theorem \ref{thm:model-zero-spin}, we obtain:

\begin{thm} \label{thm:uniform-model}
Let $\mu = \exp(-V(x)) \, dx$ denote a probability measure on $\Real$ with $V \in C^2(\Real)$ and $\lim_{x \rightarrow \pm \infty} V(x) = \infty$. Fixing $\delta > 0$, denote $\rho^a := \rho_{LS}(\Real,\abs{\cdot},\mu^a)$, $M_p^a := M_p(\mu^a)$, $D_{1,\Psi_1}^a := D_{1,\Psi_1}(\mu^a)$ and $D_{2,\Psi_1}^a := D_{2,\Psi_1}^\delta(\mu^a)$. Assume that:
\[
\rho^a \geq \bar{\rho} > 0 ~,~ \frac{M_3^a}{(M_2^a)^{3/2}} \leq \bar{M} < \infty ~,~ D_{1,\Psi_1}^a \leq \bar{D_1} < \infty ~,~ D_{2,\Psi_1}^a \leq \bar{D_2} < \infty ~,
\]
uniformly in $a \in \Real$, and that:
\[
-\kappa := \inf_{x \in \Real} V''(x) \geq -\frac{\bar{\rho}}{8} ~.
\]
Then for $n \geq \eta^{\bar{M},\bar{D_1},\bar{D_2}}_{\bar{\rho},\delta}$, the canonical ensemble $(E_s,\abs{\cdot},\mu_{E_s})$ satisfies a LSI, uniformly in the system size $n$ and mean-spin value $s \in \Real$, depending solely on $\bar{\rho}$, $\bar{M}$, $\bar{D_1}$ and $\bar{D_2}$.
\end{thm}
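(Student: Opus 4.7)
The plan is to reduce the general mean-spin case to Theorem \ref{thm:model-zero-spin} via the Cram\'er tilt setup already described in the text preceding the statement. Recall that the densities of $\mu_n$ and $(\leftexp{a}{\mu})_n$ agree up to a multiplicative constant on each hyperplane $E_s$, so $\mu_{E_s} = (\leftexp{a(s)}{\mu})_{E_s}$, and translating by $-s(1,\ldots,1)$ (which is a Euclidean isometry carrying $E_s$ onto $E_0$) yields an isomorphism of measure-metric spaces
\[
(E_s, \abs{\cdot}, \mu_{E_s}) \;\cong\; (E_0, \abs{\cdot}, (\mu^{a(s)})_{E_0}).
\]
Hence $\rho_{LS}(E_s, \abs{\cdot}, \mu_{E_s}) = \rho_{LS}(E_0, \abs{\cdot}, (\mu^{a(s)})_{E_0})$, and it suffices to bound the right-hand side uniformly in $a = a(s) \in \Real$.

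Next I would verify that Theorem \ref{thm:model-zero-spin} applies to $\mu^a$ with constants that are uniform in $a$. By construction $\mu^a$ has barycenter at the origin, and writing $\mu^a = \exp(-V^a(x))\,dx$, the potential $V^a$ differs from $V(\,\cdot\, + s(a))$ by an affine term, so $(V^a)''(x) = V''(x + s(a))$. Therefore
\[
\inf_x (V^a)''(x) \;=\; \inf_y V''(y) \;=\; -\kappa \;\geq\; -\frac{\bar\rho}{8} \;\geq\; -\frac{\rho^a}{8},
\]
which is exactly the curvature hypothesis (\ref{eq:model-kappa}) of the theorem. The remaining hypotheses---that $(\Real,\abs{\cdot},\mu^a)$ satisfies $LSI(\rho^a)$ with $\rho^a \geq \bar\rho$, and that $D_{1,\Psi_1}(\mu^a), D^\delta_{2,\Psi_1}(\mu^a)$ are finite with uniform bounds $\bar{D_1}, \bar{D_2}$---are precisely the assumptions of the present theorem.

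Applying Theorem \ref{thm:model-zero-spin} to each $\mu^a$ then gives, whenever $n$ exceeds $\eta^{M_3^a/(M_2^a)^{3/2}, D_{1,\Psi_1}^a, D_{2,\Psi_1}^a}_{\rho^a, \delta}$,
\[
\rho_{LS}(E_0, \abs{\cdot}, (\mu^a)_{E_0}) \;\geq\; c\, \rho^a / (Q^a)^C , \qquad Q^a := \max\brac{1,\, M_2^a (D_2^a + (D_{1,\Psi_1}^a)^2)} .
\]
To turn this into a uniform bound I would argue as follows: the Herbst argument (Appendix) combined with $\rho^a \geq \bar\rho$ forces $M_2^a \leq C/\bar\rho$; the normalization of the $\Psi_1$-norm gives $D_2^a = \E Y_0 \leq \norm{Y_0}_{L_{\Psi_1}} = D_{2,\Psi_1}^a \leq \bar{D_2}$; and $D_{1,\Psi_1}^a \leq \bar{D_1}$ by hypothesis. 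Thus $Q^a$ is majorized by a constant depending only on $\bar\rho, \bar{D_1}, \bar{D_2}$, and the monotonicity convention introduced at the end of the Introduction (upper bounds in the superscripts of $\eta$, lower bounds in the subscripts) replaces the $a$-dependent threshold by the uniform value $\eta^{\bar M, \bar{D_1}, \bar{D_2}}_{\bar\rho, \delta}$. Combining everything produces the desired uniform lower bound on $\rho_{LS}(E_s, \abs{\cdot}, \mu_{E_s})$.

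The main obstacle is not analytical but organizational: one must track that each ingredient of the $Q^a$-estimate and of the threshold $\eta$ is controlled by the correct \emph{one-sided} bound on the hypothesized inputs, so that no $a$-dependence leaks through. The substantive analytical work---the large-deviation bound in Proposition \ref{prop:conc-trans-model} and the log-Sobolev transference principle of Theorem \ref{thm:log-Sob-trans}---has already been carried out in Sections \ref{sec:trans} and \ref{sec:model}, so this last step is essentially bookkeeping on top of Theorem \ref{thm:model-zero-spin}.
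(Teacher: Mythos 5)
Your proof is correct and follows essentially the same route as the paper: reduce to the zero-mean case via the Cram\'er tilt and translation, check that the curvature bound $-\kappa \leq (V^a)''$ is preserved and that $-\kappa \geq -\bar\rho/8 \geq -\rho^a/8$, apply Theorem~\ref{thm:model-zero-spin} to each $\mu^a$, and control $M_2^a$ by Herbst to bound $Q^a$ uniformly. The only difference is that you spell out the bookkeeping (e.g.\ $D_2^a \leq D_{2,\Psi_1}^a$ by Jensen from the $\Psi_1$-normalization) more explicitly than the paper does, which is fine.
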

\begin{proof}
By the preceding discussion, it is enough to verify a uniform lower bound on $\rho_{LS}(E_0,\abs{\cdot},\mu^{a}_{E_0})$ in $a \in \Real$. Write $\mu^a = \exp(-V_a(x))\, dx$, and apply Theorem \ref{thm:model-zero-spin} to $\mu^a$, which is possible thanks to the requirement that $-\rho_a / 8 \leq -\kappa \leq V_a''$ for all $a \in \Real$.
Note that always $M_2^a \leq C / \rho^a \leq C / \bar{\rho}$ by the sub-Gaussian tail decay (\ref{eq:sub-Gaussian}) guaranteed by the Hebst argument (since $\E(X^a) = 0$).
The proof is complete.
\end{proof}

To demonstrate the desired uniformity for a concrete class of measures $\mu$, we recall the following definition given in the Introduction:
\begin{dfn*}
A probability measure $\mu = \exp(-V(x))\,  dx$ is called $(\alpha,\beta,\omega)$ \emph{weakly Gaussian} if we may decompose $V = V_{\text{conv}} + V_{\text{pert}}$ so that:
\begin{itemize}
\item $V_{\text{conv}}, V_{\text{pert}} \in C^2(\Real)$.
\item $V_{\text{conv}}'' \geq \alpha > 0$.
\item $\sup V_{\text{pert}} - \inf V_{\text{pert}} \leq \omega < \infty$.
\item $- \kappa := - \frac{1}{8} \alpha \exp(-\omega) \leq V'' \leq \beta < \infty$. \end{itemize}
\end{dfn*}

\begin{lem} \label{lem:weakly-Gaussian}
Let $\mu$ be $(\alpha,\beta,\omega)$ weakly Gaussian. Then for any $a \in \Real$:
\begin{enumerate}
\item $\rho^a := \rho_{LS}(\Real,\abs{\cdot},\mu^a) \geq \alpha \exp(-\omega)$.
\item $\mu^a = \exp(-V_a(x)) dx$ with $-\rho^a / 8 \leq V_a'' \leq \beta$.
\item For any $\delta > 0$ and $a \in \Real$, the parameters $M_3^a/ (M_2^a)^{3/2}$, $D^a_{1,\Psi_1}$ and $D^a_{2,\Psi_1}$ associated to the measure $\mu^a$ are uniformly bounded above by functions of $\alpha$, $\beta$ and $\omega$.
\end{enumerate}
\end{lem}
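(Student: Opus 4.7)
\medskip

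\noindent\textbf{Proof proposal for Lemma \ref{lem:weakly-Gaussian}.}

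The plan is to start with the explicit formula for the translated tilted potential. Writing $\mu^a = \exp(-V_a(y))\,dy$ with $V_a(y) = V(y+s(a)) - ay + c_a$ (the constant $c_a$ being the normalization), one computes $V_a'(y) = V'(y+s(a)) - a$ and $V_a''(y) = V''(y+s(a))$. The choice of $s(a)$ was precisely so that $\E V_a'(X^a)=0$, and the hypothesis $-\kappa\le V''\le \beta$ transfers verbatim to $V_a$, yielding the two-sided bound on $V_a''$ in item (2) as soon as item (1) is proved.

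For item (1), I would decompose $V_a = U_a + W_a$ with $U_a(y) := V_{\text{conv}}(y+s(a)) - ay$ and $W_a(y) := V_{\text{pert}}(y+s(a))$. The function $W_a$ has the same oscillation $\omega$ as $V_{\text{pert}}$, and $U_a'' = V_{\text{conv}}''(\,\cdot + s(a)) \geq \alpha$, so the probability measure proportional to $\exp(-U_a)$ is strongly log-concave with constant $\alpha$ and hence satisfies $LSI(\alpha)$ by the Bakry--\'Emery criterion (see Appendix). Since $\mu^a$ is obtained from this measure by multiplying the density by $\exp(-W_a)$ (and renormalizing), the Holley--Stroock lemma \eqref{eq:HS} immediately gives $\rho^a \geq \alpha \exp(-\omega)$. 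Combined with the hypothesis $\kappa = \tfrac18 \alpha e^{-\omega}$, this also yields $-\rho^a/8 \leq -\kappa \leq V_a''$ and completes item (2).

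For item (3), the bounds on $D_{1,\Psi_1}^a$ and $D_{2,\Psi_1}^a$ are essentially free: since $|V_a''|=|V''(\,\cdot+s(a))|\le \beta$ (using that $\kappa\le\beta$), the function $V_a'$ is $\beta$-Lipschitz and the random variable $Y_0^a = \sup_{[X^a-\delta,X^a+\delta]}|V_a''|$ is bounded by $\beta$ almost surely. Hence $\|Y_0^a\|_{L_{\Psi_1}}\le \beta$, and $V_a'(X^a)$ is $\beta$-Lipschitz with zero mean, so the Herbst argument applied to $LSI(\rho^a)$ (see Appendix) yields a sub-Gaussian tail estimate for $V_a'(X^a)$, which controls $\|V_a'(X^a)\|_{L_{\Psi_1}}$ in terms of $\beta/\sqrt{\alpha e^{-\omega}}$.

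The step I expect to be the mildly subtle one is the ratio $M_3^a/(M_2^a)^{3/2}$: the Herbst argument provides an upper bound on $M_3^a$ (of order $(\rho^a)^{-3/2}$), but a matching \emph{lower} bound on $M_2^a$ is needed. I would obtain this from the upper bound $V_a''\le \beta$ as follows. Using $V_a'(y_a)=0$ at the minimizer $y_a$ of $V_a$ and integrating twice gives $V_a(y)\le V_a(y_a)+\tfrac{\beta}{2}(y-y_a)^2$ on all of $\Real$; since $\int\exp(-V_a)=1$, this forces the mode density $e^{-V_a(y_a)}\le \sqrt{\beta/(2\pi)}$, and hence the density of $\mu^a$ is pointwise bounded by $\sqrt{\beta/(2\pi)}$. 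Consequently $\mu^a([-r,r])\le 2r\sqrt{\beta/(2\pi)}$, so at least half the mass lies at distance $\gtrsim 1/\sqrt{\beta}$ from $0$ (recall $\E X^a=0$), giving $M_2^a \geq c/\beta$ for a universal $c>0$. This yields $M_3^a/(M_2^a)^{3/2}\le C(\beta/\alpha)^{3/2}e^{3\omega/2}$, and all three parameters are therefore bounded by explicit functions of $\alpha$, $\beta$, $\omega$, as required.
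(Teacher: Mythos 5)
Your arguments for items (1), (2), and for the bounds on $D_{1,\Psi_1}^a$ and $D_{2,\Psi_1}^a$ in item (3) are essentially identical to the paper's: decompose the tilted and translated potential $V_a$ into an $\alpha$-strongly convex piece plus a bounded perturbation of oscillation at most $\omega$, combine Bakry--\'Emery with Holley--Stroock to get $\rho^a\geq\alpha e^{-\omega}$, observe $V_a'' = V''(\cdot+s(a))$, bound $D_{2,\Psi_1}^a$ by $\sup|V_a''|$, and control $D_{1,\Psi_1}^a$ via the Herbst argument applied to the Lipschitz, mean-zero function $V_a'$.

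Where you genuinely depart from the paper is the bound on $M_3^a/(M_2^a)^{3/2}$. The paper reuses the decomposition $V_a = W_1 - W_2$ with $W_1$ convex and normalized and $\sup W_2 - \inf W_2 \leq\omega$, compares the moment integrals for $d\mu^a = \exp(-W_1+W_2)\,dx$ to those for $\exp(-W_1)\,dx$ (costing a factor $e^{3\omega/2}$), and then applies the reverse H\"older inequality for log-concave measures (Theorem~\ref{thm:log-concave}, item 1) to $\exp(-W_1)\,dx$; this gives a bound depending only on $\omega$. You instead bound $M_3^a$ from above by $C(\rho^a)^{-3/2}$ via the Herbst sub-Gaussian tail, and bound $M_2^a$ from below via the pointwise density bound $\norm{d\mu^a/dx}_{L^\infty}\leq\sqrt{\beta/(2\pi)}$ (obtained from $V_a''\leq\beta$ by Taylor expanding at the global minimizer) together with a small-ball estimate. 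Both routes are correct and give a bound that is a function of $\alpha,\beta,\omega$, which is all the lemma requires; the paper's is slightly sharper (no dependence on $\alpha,\beta$) and recycles the log-concave toolkit already in the Appendix, while yours is more self-contained and bypasses the reverse-H\"older step. Note that $\mu^a$ itself need not be log-concave here, so your direct density bound is the right substitute for item (4) of Theorem~\ref{thm:log-concave}.

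One small gap: you assert $\kappa\leq\beta$ in passing without justification. It does hold --- if $\kappa = \frac18\alpha e^{-\omega}>\beta$ then $\alpha > 8\beta e^\omega\geq 8\beta$, so $V_{\text{pert}}'' = V'' - V_{\text{conv}}''\leq\beta-\alpha<0$ uniformly, forcing $V_{\text{pert}}$ to be unbounded below and contradicting $\sup V_{\text{pert}} - \inf V_{\text{pert}}\leq\omega<\infty$ --- but the paper sidesteps the issue entirely by taking $\max(\beta,\kappa)$ as the Lipschitz constant for $V_a'$.
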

\begin{proof}
The first assertion is an immediate consequence of the Bakry--\'Emery condition for LSI (see Appendix) and the Holley--Stroock perturbation argument (\ref{eq:HS}). The second assertion is immediate from the definition of a weakly Gaussian measure and the first assertion.

For the third assertion, $D^a_{2,\Psi_1}$ are trivially bounded above by $\sup_{x \in \Real} |V_a''(x)| \leq D_{2,\infty} := \max(\beta,\kappa)$. As for $D^a_{1,\Psi_1}$, since $V_a'$ is Lipschitz (with constant $D_{2,\infty}$) and $\E(V_a'(X^a)) = \int V_a'(x) d\mu_a(x) = 0$, it follows by the Herbst argument (see Appendix) that:
\[
\E(\exp(\lambda |V_a'(X^a)|)) \leq \E(\exp(\lambda V_a'(X^a))) + \E(\exp(-\lambda V_a'(X^a))) \leq 2 \exp\brac{ \frac{\lambda^2 D_{2,\infty}^2}{2 \rho^a} } ~.
\]
Consequently, choosing $\lambda > 0$ small enough (uniformly in $a \in \Real$ thanks to the uniform bounds on $D_{2,\infty}$ and $\rho^a$), we see that the right-hand side is bounded by $e$, implying the asserted uniform upper bound on $D^a_{1,\Psi_1}$.

Lastly, we use the fact that $V_a$ may be written as $W_1 - W_2$, where $W_1$ is (strictly) convex, $\int \exp(-W_1(x)) dx = 1$, and $\sup W_2 - \inf W_2 \leq \omega$. Consequently, $\sup W_2 \geq 0$ and $\inf W_2 \leq 0$. We estimate:
\begin{eqnarray*}
\frac{M_3(\mu^a)}{M_2(\mu^a)^{3/2}} &=& \frac{\int |x|^3 \exp(-W_1(x) + W_2(x)) dx}{\brac{\int x^2 \exp(-W_1(x) + W_2(x)) dx}^{3/2}} \\
&\leq & \exp(\sup W_2 - \frac{3}{2} \inf W_2) \frac{\int |x|^3 \exp(-W_1(x)) dx}{\brac{\int x^2 \exp(-W_1(x)) dx}^{3/2}} ~.
\end{eqnarray*}
The first term on the right hand side above is bounded by $\exp((3/2) \omega)$, and the second term is bounded by a universal constant thanks to Theorem \ref{thm:log-concave} in the Appendix, which entails a reverse H\"{o}lder inequality for moments of log-concave measures such as $\exp(-W_1(x)) dx$. The proof is complete.
\end{proof}

Combining Lemma \ref{lem:weakly-Gaussian} with Theorem \ref{thm:uniform-model}, we obtain:
\begin{thm}
Let $\mu$ be a $(\alpha,\beta,\omega)$ weakly Gaussian measure. Then the canonical ensemble $(E_s,\abs{\cdot},\mu_{E_s})$ with mean-spin $s$ satisfies a LSI, uniformly in the system size $n \geq 2$ and mean-spin value $s \in \Real$, depending solely on a positive lower bound on $\alpha$ and upper bounds on $\beta$ and $\omega$.
\end{thm}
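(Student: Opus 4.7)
The plan is to verify the hypotheses of Theorem \ref{thm:uniform-model} using Lemma \ref{lem:weakly-Gaussian} to handle large $n$, and then to dispatch the remaining bounded-$n$ regime by a direct Bakry--\'Emery plus Holley--Stroock argument.

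Set $\bar\rho := \alpha\exp(-\omega)$. Lemma \ref{lem:weakly-Gaussian}(1)--(2) supplies $\rho^a \geq \bar\rho$ and $-\bar\rho/8 \leq V_a'' \leq \beta$ uniformly in $a\in\mathbb R$, which is precisely the curvature hypothesis $\inf V_a'' \geq -\bar\rho/8$ required by Theorem \ref{thm:uniform-model}. Part (3) of the same lemma provides, for any fixed $\delta>0$, uniform upper bounds $\bar M, \bar D_1, \bar D_2$ on $M_3^a/(M_2^a)^{3/2}$, $D_{1,\Psi_1}^a$ and $D_{2,\Psi_1}^a$ depending solely on $\alpha,\beta,\omega$. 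Theorem \ref{thm:uniform-model} then produces an integer threshold $\eta = \eta(\alpha,\beta,\omega)$ and a constant $\rho_\ast = \rho_\ast(\alpha,\beta,\omega)>0$ such that $\rho_{LS}(E_s, |\cdot|,\mu_{E_s}) \geq \rho_\ast$ for every $n \geq \eta$ and every $s \in \mathbb R$.

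For the remaining range $2 \leq n < \eta$, I would apply Bakry--\'Emery directly to the reference measure $\nu_{E_s}^{\mathrm{conv}} \propto \exp\bigl(-\sum_i V_{\mathrm{conv}}(x_i)\bigr)\,d\vol_{E_s}$: its potential has Hessian $\geq \alpha\,\mathrm{Id}$ on $\mathbb R^n$ and hence also on the hyperplane $E_s$, giving $\rho_{LS}(E_s,|\cdot|,\nu_{E_s}^{\mathrm{conv}}) \geq \alpha$. Since $d\mu_{E_s}/d\nu_{E_s}^{\mathrm{conv}}$ is proportional to $\exp(-\sum_i V_{\mathrm{pert}}(x_i))$, whose exponent has oscillation at most $n\omega \leq \eta\omega$, the Holley--Stroock principle \eqref{eq:HS} yields $\rho_{LS}(E_s,|\cdot|,\mu_{E_s}) \geq \alpha\exp(-\eta\omega)$, uniformly in $s$. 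Taking the minimum of $\rho_\ast$ and $\alpha\exp(-\eta\omega)$ then furnishes a LSI constant depending only on $\alpha,\beta,\omega$, valid for all $n\geq 2$ and $s\in\mathbb R$.

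The argument is essentially routine since both substantive inputs are already packaged in the preceding lemma and theorem; the only subtlety worth flagging is that Theorem \ref{thm:uniform-model} asserts uniformity only for $n$ above an unspecified threshold, which forces the separate (dimension-dependent but, for bounded $n$, perfectly acceptable) treatment of the small-$n$ range. Thus the main obstacle, if any, is bookkeeping — tracking which parameters each quantity ultimately depends on and verifying that the Lemma's estimates are genuinely uniform in $a$ so that the threshold $\eta$ depends only on $\alpha,\beta,\omega$.
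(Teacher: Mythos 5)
Your proof is correct and follows essentially the same two-regime strategy the paper uses: Lemma \ref{lem:weakly-Gaussian} feeding into Theorem \ref{thm:uniform-model} for $n\ge\eta^{\beta,\omega}_\alpha$, and Bakry--\'Emery applied to the $\alpha$-strongly-convex part combined with Holley--Stroock (oscillation $\le n\omega$, bounded for $n<\eta$) for the small-$n$ range. The only quibble is a harmless constant: with the Holley--Stroock formulation \eqref{eq:HS} as stated, both $L_1$ and $L_2$ can be as large as $e^{n\omega}$, so the guaranteed constant is $\alpha e^{-2\eta\omega}$ rather than $\alpha e^{-\eta\omega}$, which changes nothing qualitatively.
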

\begin{proof}
The theorem follows immediately from Lemma \ref{lem:weakly-Gaussian} and Theorem \ref{thm:uniform-model} when $n \geq \eta^{\beta,\omega}_{\alpha}$. For smaller $n$, it is an easy consequence of the Bakry--\'Emery criterion (see Appendix) together with the Holley--Stroock perturbation argument (\ref{eq:HS}): indeed, the strictly convex part $V_{\text{conv}}$ of the weakly-Gaussian potential implies that its restriction onto $E_s$ satisfies $LSI(\alpha)$, and the bounded perturbation $V_{\text{pert}}$ can only change the Hamiltonian by at most $n \omega$, which is bounded when $n < \eta^{\beta,\omega}_{\alpha}$.
\end{proof}

\subsection{Weakly Interacting Conservative Spin Model}

Analogously, let $\mu_{A,E_s}$ denote the conditioning of the grand canonical weakly-interacting ensemble $\mu_A$ to $S = s$, namely:
\[
d\mu_{A,E_s} = \frac{1}{Z_{A,E_s}} \exp(-H_A(x))\,  d\vol_{E_s}(x) ~,
\]
where $Z_{A,E_s} > 0$ is a normalization term. In fact, in applications, it is useful to consider our $n$-particle configuration $(x_1,\ldots,x_n)$ as a subset of a larger $N$-particle configuration $(x_1,\ldots,x_n,y_{n+1},\ldots,y_N)$, with which it interacts via an additional term $I_N(x,y)$ in the Hamiltonian:
\[
H_{A,(y)}(x) := H(x) - I_A(x) - I_N(x,y) ~,~ I_N(x,y) = \sum_{i=1}^n \sum_{j=n+1}^N a_{i,j} x_i y_j ~.
\]
Setting $b_i = - \sum_{j=n+1}^N a_{i,j} y_j$ and $b = (b_1,\ldots,b_n)$, we define:
\[
H_{A,b}(x) := H_{A,(y)}(x) = \sum_{i=1}^n (V(x_i) + b_i x_i) - I_A(x) ~,
\]
and introduce the canonical weakly-interacting ensemble with mean-spin $s$ and boundary contribution $b=(b_1,\ldots,b_n)$:
\[
d\mu_{A,E_s,b} = \frac{1}{Z_{A,E_s,b}} \exp(-H_{A,b}(x))\,  d\vol_{E_s}(x) ~.
\]

As before, note that:
\begin{equation} \label{eq:inter-invariance1}
(\mu^{\wedge u})_{A,E_s,b} = \mu_{A,E_s,b}  \;\;\; \forall u \in \Real ~.
\end{equation}
 However, contrary to the non-interacting case, the commutation with the operation of translation is not as nice. Indeed, recall that we denote the translation function on $\Real^n$ by $T^t(x) = x-t$, and note that given $t = (t_1,\ldots,t_n)$:
\begin{equation} \label{eq:inter-invariance2}
\frac{d (T^t)_*(\mu_{A,E_s,b})}{dx} = \prod_{i=1}^n \frac{d (T^{t_i})_*(\mu^{\wedge (b_i + 2 \sum_{j=1}^n a_{i,j} t_j)})}{d x_i} \frac{\exp(I_A(x))}{Z}  ~.
\end{equation}
Consequently, to apply Cram\'er's trick, we need a much more delicate argument than in the non-interacting case:

\begin{lem} \label{lem:inter-invariance}
Let $X^a$ denote a random-variable distributed according to $\mu^{a}$, and assume that $\bar{M_2} := \sup_{a \in \Real} \Var(X^a) < \infty$. Then for any $n$ by $n$ symmetric matrix with zero diagonal so that $\norm{A}_{op} < 1/(2\bar{M_2})$, and for any mean-spin $s \in \Real$ and boundary contribution $b = (b_1,\ldots,b_n) \in \Real^n$,
there exists a tilt $u_0 \in \Real$ and translation vector $(t_1,\ldots,t_n)$, so that $\frac{1}{n} \sum_{i=1}^n t_i = s$ and so that the barycenter of $(T^{t_i})_*(\mu^{\wedge (b_i+ u_0 + 2 \sum_{j=1}^n a_{i,j} t_j )})$ is at $0$ for all $i=1,\ldots,n$.
\end{lem}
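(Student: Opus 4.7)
The plan is to recognize that requiring the barycenter of $(T^{t_i})_*(\leftexp{c_i}{\mu})$ to vanish (with $c_i := b_i + u_0 + 2\sum_j a_{i,j} t_j$) is equivalent to $s(c_i) = t_i$, since $\leftexp{c}{\mu}$ has barycenter $s(c)$ and $T^{t_i}$ shifts this to $s(c_i) - t_i$. Together with the mean constraint, the unknowns $(t_1,\ldots,t_n,u_0) \in \Real^{n+1}$ must solve
\[
t_i = s\brac{b_i + u_0 + 2\sum_{j=1}^n a_{i,j} t_j} \;\;\; (i=1,\ldots,n), \qquad \frac{1}{n}\sum_{i=1}^n t_i = s.
\]
I would decouple these by first fixing $u_0$ and solving the $n$ coupled equations via Banach's fixed point theorem to obtain $t(u_0)$, and then choosing $u_0$ by a one-dimensional intermediate value argument to enforce the mean constraint.

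For the first step, fix $u_0 \in \Real$ and consider $F_{u_0}: \Real^n \to \Real^n$ with $F_{u_0}(t)_i := s(b_i + u_0 + 2\sum_j a_{i,j} t_j)$. Its Jacobian equals $2 D(t) A$, where $D(t)$ is diagonal with $D(t)_{ii} = s'(c_i) = \Var(X^{c_i}) \in (0, \bar{M_2}]$. Hence $\norm{DF_{u_0}(t)}_{op} \leq 2 \bar{M_2} \norm{A}_{op} < 1$ uniformly in $t$ and $u_0$, so $F_{u_0}$ is a strict contraction on Euclidean $\Real^n$ and admits a unique fixed point $t(u_0)$ by Banach, with $u_0 \mapsto t(u_0)$ of class $C^1$ by the implicit function theorem. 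Setting $\Sigma(u_0) := \frac{1}{n}\sum_i t_i(u_0)$ and differentiating the fixed-point identity yields $dt/du_0 = (I-2DA)^{-1} D \mathbf{1}$; the similarity identity $(I-2DA)^{-1} D^{1/2} = D^{1/2}(I-B)^{-1}$ with $B := 2 D^{1/2} A D^{1/2}$ symmetric and $\norm{B}_{op} < 1$ lets me recast this as
\[
\Sigma'(u_0) = \frac{1}{n} v^T (I-B)^{-1} v, \qquad v := D^{1/2}\mathbf{1}.
\]
Since $(I-B)^{-1}$ is symmetric positive-definite, $\Sigma'(u_0) > 0$, so $\Sigma$ is continuous and strictly increasing.

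The hard part will be showing surjectivity of $\Sigma$ onto $\Real$. To handle it I would substitute $\tau_i := a(t_i)$, converting the $n$ fixed-point equations into $H(\tau) = b + u_0 \mathbf{1}$ where $H(\tau) := \tau - 2 A s(\tau)$ (acting coordinatewise). The Lipschitz bound $|s(\tau_i) - s(0)| \leq \bar{M_2} |\tau_i|$ yields $\norm{H(\tau)} \geq (1 - 2 \bar{M_2} \norm{A}_{op}) \norm{\tau} - C$, so $H$ is proper; combined with invertibility of $DH = I - 2 A D$ everywhere, Hadamard's theorem identifies $H$ as a global $C^1$-diffeomorphism of $\Real^n$, so $\tau(u_0) = H^{-1}(b + u_0 \mathbf{1})$ is defined for every $u_0$. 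Projecting $H(\tau) = b + u_0 \mathbf{1}$ onto $\mathbf{1}$ gives $\bar{\tau} = \bar{b} + u_0 + \frac{2}{n} \mathbf{1}^T A s(\tau)$, and estimating the correction term via the same Lipschitz bound on $s$ and the smallness of $\norm{A}_{op}$ shows $\Sigma(u_0) \to \pm\infty$ as $u_0 \to \pm\infty$. Strict monotonicity and continuity then allow the intermediate value theorem to deliver a unique $u_0$ with $\Sigma(u_0) = s$, and $t(u_0)$ is the desired translation vector.
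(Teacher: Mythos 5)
Your route is genuinely different from the paper's: both arguments boil the problem down to the system $t_i = s\bigl(b_i + u_0 + 2\sum_j a_{i,j}t_j\bigr)$ subject to $\frac{1}{n}\sum_i t_i = s$ and invoke Banach's fixed-point theorem, but in the \emph{opposite order}. The paper fixes $t$ on the affine hyperplane $E_s$, solves uniquely for $u_0(t)$ by one-dimensional monotonicity of $a\mapsto s(a)$ (which is trivially onto $\Real$), and then shows that $G(t)=\bigl(s(b_i+u_0(t)+2\sum_j a_{i,j}t_j)\bigr)_i$ is a contraction on $E_s$, computing $dG/dt$ by differentiating the constraint $\sum_k s(z_k)=ns$. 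You instead fix $u_0$, obtain $t(u_0)$ as the unique fixed point of $F_{u_0}$ on all of $\Real^n$, and aim to tune $u_0$ by a one-dimensional intermediate value argument. The Banach step and the strict monotonicity of $\Sigma$ via the similarity $(I-2DA)^{-1}D^{1/2}=D^{1/2}(I-B)^{-1}$ are correct and pleasantly structured; but the price you pay is that all the delicacy is pushed into the surjectivity of $\Sigma$, which the paper never has to confront.

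That surjectivity step is where your proposal has a genuine gap. Projecting $H(\tau)=b+u_0\mathbf 1$ onto $\mathbf 1$ controls $\bar\tau:=\frac{1}{n}\sum_i\tau_i$, not $\Sigma(u_0)=\frac{1}{n}\sum_i s(\tau_i)$; these are not the same quantity, and $\bar\tau\to+\infty$ does not force $\Sigma\to+\infty$, because nothing prevents some $\tau_i$ from diverging to $-\infty$ (and hence $s(\tau_i)\to-\infty$) while the average $\bar\tau$ still grows. Moreover, even the bound on $\bar\tau$ is weaker than you claim: the correction term is bounded by $\frac{2}{\sqrt n}\norm{A}_{op}\norm{s(\tau)}_2\le\frac{\lambda}{\sqrt n}\norm\tau_2$ with $\lambda:=2\bar{M_2}\norm{A}_{op}$, and combining with the properness bound $\norm\tau_2\le\frac{\norm{b+u_0\mathbf 1}_2}{1-\lambda}$ yields $|\bar\tau-\bar b-u_0|\le\frac{\lambda}{1-\lambda}|u_0|+O(1)$. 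This only forces $\bar\tau\to\pm\infty$ when $\lambda<1/2$, i.e.\ $\norm{A}_{op}<1/(4\bar{M_2})$, which is strictly stronger than the stated hypothesis $\norm{A}_{op}<1/(2\bar{M_2})$. You would need a sharper argument (or a stronger hypothesis) to recover the claimed range, and in any case a separate argument to pass from $\bar\tau\to\pm\infty$ to $\Sigma\to\pm\infty$. The paper's choice of doing the one-dimensional solve \emph{inside} the fixed-point iteration avoids exactly this obstruction.
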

\begin{proof}
Let $F(a)$ (previously denoted $s(a)$) denote the barycenter of $\mu^{\wedge a}$, and recall that this is an increasing function. Moreover, observe that $F'(a) = \Var(X^a) \in (0, \bar{M_2}]$.
We would like to show that given $A$, $s$ and $b$ as in the assumption, there exists a solution $u_0 \in \Real$, $t = (t_1,\ldots,t_n) \in \Real^n$ to the following system of $n+1$ non-linear equations:
\begin{eqnarray*}
F_i(u_0,t) := F(z_i(u_0,t)) := F(b_i + u_0 + 2 \sum_{j=1}^n a_{i,j} t_j) & = & t_i ~,~ i=1,\ldots,n ~, \\
\sum_{i=1}^n t_i & = & s n ~.
\end{eqnarray*}

Recall the definition of the hyperplane $E_s = \set{t \in \Real^n ; \sum_{i=1}^n t_i =  s n }$. Given $t \in \Real^n$, we define $u_0 = u_0(t)$ to be the unique element of $\Real$ so that 
$$G(t) := (F_1(u_0(t),t),\ldots,F_n(u_0(t),t)) \in E_s.$$
 Since $F$ is strictly increasing, it is immediate to verify that $u_0(t)$ is indeed well-defined. Our goal is to find a solution $t \in E_s$ to the equation $G(t) = t$.
We will show that under our assumptions, $G$ is a strict contraction on $E_s$, when the latter space is equipped with the induced Euclidean structure from
$(\Real^n,\abs{\cdot})$, and hence the existence of a (unique)
solution will follow immediately from Banach's fixed point theorem.

To show that $G$ contracts Euclidean distance on $E_s$, we calculate the Jacobian matrix $dG/dt = \set{\partial G_i/ \partial t_j}_{i,j=1,\ldots,n}$ at $t$:
\begin{equation} \label{eq:main-partial-matrix}
\frac{\partial G_i}{\partial t_j} = F'(z_i) \brac{ \frac{\partial u_0}{\partial t_j}  + 2 a_{i,j} } ~.
\end{equation}
Next, observe that since:
\[
\sum_{k=1}^n F(z_k(u_0(t),t)) = s n ~,
\]
differentiating in $t_j$ reveals that at $(u_0(t),t)$:
\begin{equation} \label{eq:main-partial-matrix2}
\frac{\partial u_0}{\partial t_j} \sum_{k=1}^n F'(z_k) +  \sum_{k=1}^n F'(z_k) 2 a_{k,j} = 0 ~,~ j=1,\ldots,n ~.
\end{equation}
Denoting $F'(z) = (F'(z_1),\ldots,F'(z_n)) \in \Real^n$ and combining (\ref{eq:main-partial-matrix2}) with (\ref{eq:main-partial-matrix}), we obtain in matrix form:
\begin{eqnarray*}
\nonumber \frac{dG}{dt} &=& \mathrm{Diag}(F'(z)) \brac{-\frac{1}{\norm{F'(z)}_{\ell_1}} \mathrm{Row}(F'(z))\cdot 2 A + 2 A } \\
\label{eq:main-matrix} &=& \brac{\mathrm{Diag}(F'(z))  -  \frac{1}{\norm{F'(z)}_{\ell_1}} F'(z) \otimes F'(z)} 2 A ~,
\end{eqnarray*}
where $\mathrm{Diag}(F'(z))$ denotes the diagonal matrix with $F'(z)$ as its diagonal, $\mathrm{Row}(F'(z))$ denotes the $n$ by $n$ matrix having identical rows equal to $F'(z)$, and we have used that $F' > 0$ to dispense of the absolute values in $\norm{F'(z)}_{\ell_1}$. Note that as positive definite matrices:
\[
0 \leq \mathrm{Diag}(F'(z)) \leq \norm{F'(z)}_{\ell_\infty} Id \leq \bar{M_2} Id ~,
\]
and that by H\"{o}lder's inequality:
\[
0 \leq \frac{1}{\norm{F'(z)}_{\ell_1}} F'(z) \otimes F'(z) \leq \frac{\norm{F'(z)}_{\ell_2}^2}{\norm{F'(z)}_{\ell_1}} Id \leq \norm{F'(z)}_{\ell_\infty} Id \leq \bar{M_2} Id ~.
\]
and so consequently:
\[
-\bar{M_2} Id \leq B := \mathrm{Diag}(F'(z))  -  \frac{1}{\norm{F'(z)}_{\ell_1}} F'(z) \otimes F'(z) \leq \bar{M_2} Id ~.
\]
It follows from this, (\ref{eq:main-matrix}) and our assumption that $\norm{A}_{op} < 1/(2 \bar{M_2})$ that:
\[
\norm{dG/dt}_{op}(t) \leq 2 \norm{B}_{op} \norm{A}_{op} \leq 2 \bar{M_2} \norm{A}_{op} \leq \lambda < 1 ~,~ \forall t \in E_s ~,
\]
and the desired contraction property is demonstrated, concluding the proof.
\end{proof}

Given $u = (u_1,\ldots,u_n) \in \Real^n$, let us denote:
\[
\mu^u := \otimes_{i=1}^n \mu^{u_i} ~,~
\frac{d\mu^{u}_{A,E_0}}{dx} := \frac{d\mu^u}{dx} \frac{\exp(I_A(x))}{Z^u_{A,E_0}}\, d\vol_{E_0}(x)  ~,
\]
where recall $\mu^{a}$ denotes the translated $\mu^{\wedge a}$ with barycenter at the origin. An immediate consequence of (\ref{eq:inter-invariance1}), (\ref{eq:inter-invariance2}) and Lemma \ref{lem:inter-invariance} is:
\begin{cor} \label{cor:inter-reduction-to-zero-spin}
With the assumptions of Lemma \ref{lem:inter-invariance}, for any mean-spin $s \in \Real$ and boundary contribution $b = (b_1,\ldots,b_n) \in \Real^n$, there exist tilts $u = (u_1,\ldots,u_n) \in \Real^n$ so that $(E_s,|\cdot|,\mu_{A,E_s,b})$ is isometrically isomorphic as a measure-metric space to $(E_0,|\cdot|,\mu^u_{A,E_0})$. In particular:
\[
\inf_{s \in \Real , b \in \Real^n} \rho_{LS}(E_s,|\cdot|,\mu_{A,E_s,b}) \geq \inf_{u \in \Real^n} \rho_{LS}(E_0,|\cdot|,\mu^u_{A,E_0}) ~.
\]
\end{cor}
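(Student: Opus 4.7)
The plan is to combine the tilt-invariance identity \eqref{eq:inter-invariance1} with the translation formula \eqref{eq:inter-invariance2} and the fixed-point solution supplied by Lemma \ref{lem:inter-invariance}. The isometric isomorphism will be exhibited via a single affine map, namely a translation in $\Real^n$ sending $E_s$ to $E_0$; the ``In particular'' inequality then follows trivially, since the LS constant is an invariant of a metric-measure space up to isomorphism and the infimum is enlarged by passing from $u$'s that arise from $(s,b)$ to arbitrary $u \in \Real^n$.

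First, I fix $s\in\Real$ and $b = (b_1,\ldots,b_n)\in\Real^n$ and invoke Lemma \ref{lem:inter-invariance} (whose hypothesis $\norm{A}_{op} < 1/(2\bar M_2)$ is available) to obtain $u_0 \in \Real$ and $t = (t_1,\ldots,t_n)\in\Real^n$ with $\frac{1}{n}\sum_i t_i = s$ such that, setting
\[
u_i \;:=\; b_i + u_0 + 2\sum_{j=1}^n a_{i,j} t_j,
\]
the barycenter of $\mu^{\wedge u_i}$ equals $t_i$ for every $i$. By the definition $\mu^a := (T^{s(a)})_*(\mu^{\wedge a})$, this is precisely the statement that $(T^{t_i})_*(\mu^{\wedge u_i}) = \mu^{u_i}$.

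Next, the map $T^t\colon E_s \to E_0$ is a Euclidean isometry between affine hyperplanes (translations preserve all Euclidean distances, and $\frac{1}{n}\sum t_i = s$ ensures $T^t(E_s) = E_0$). I push forward $\mu_{A,E_s,b}$ by $T^t$: applying \eqref{eq:inter-invariance1} to rewrite $\mu_{A,E_s,b} = (\mu^{\wedge u_0})_{A,E_s,b}$ and then invoking the translation identity \eqref{eq:inter-invariance2} with the vector $t$, the density of $(T^t)_*(\mu_{A,E_s,b})$ with respect to $\vol_{E_0}$ becomes proportional to
\[
\prod_{i=1}^n \frac{d\mu^{u_i}}{dx_i}(x) \cdot \exp(I_A(x)),
\]
which is the defining density of $\mu^{u}_{A,E_0}$ with $u=(u_1,\ldots,u_n)$. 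The normalizing constants align automatically after restriction to $E_0$, since both measures are probability measures on $E_0$ with proportional densities. This produces the claimed isometric isomorphism of metric-measure spaces.

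For the final inequality, since LS constants are preserved under isomorphism of metric-measure spaces,
\[
\rho_{LS}(E_s,|\cdot|,\mu_{A,E_s,b}) \;=\; \rho_{LS}(E_0,|\cdot|,\mu^{u}_{A,E_0}) \;\geq\; \inf_{v\in\Real^n} \rho_{LS}(E_0,|\cdot|,\mu^{v}_{A,E_0}),
\]
and taking the infimum over $(s,b)$ on the left concludes the proof. I do not anticipate any genuine obstacle beyond carefully tracking normalizations through pushforward and conditioning; the substantive difficulty was already handled in Lemma \ref{lem:inter-invariance} via the Banach fixed-point argument, and the present corollary is essentially a repackaging of that solvability statement into metric-measure language.
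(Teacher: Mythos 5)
Your argument is correct and follows exactly the route the paper intends: the paper presents this corollary as an ``immediate consequence'' of \eqref{eq:inter-invariance1}, \eqref{eq:inter-invariance2} and Lemma~\ref{lem:inter-invariance}, and your proof supplies precisely the implicit chain --- tilt by $u_0$ via \eqref{eq:inter-invariance1}, translate by $t$ via \eqref{eq:inter-invariance2} (an isometry $E_s \to E_0$ since $\frac{1}{n}\sum_i t_i = s$), and identify each one-dimensional factor $(T^{t_i})_*(\mu^{\wedge u_i})$ with $\mu^{u_i}$ using the barycenter normalization $s(u_i)=t_i$ guaranteed by Lemma~\ref{lem:inter-invariance}. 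The closing infimum inequality follows from invariance of $\rho_{LS}$ under isomorphism of metric-measure spaces, as you note.
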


Although the results described in Sections \ref{sec:model} and \ref{sec:inter} were proved for the case of \emph{identically distributed} independent random variables $X_1,\ldots,X_n$, each having law $\mu$, all of the proofs carry through mutatis mutandis to the case of non-identically distributed independent r.v.'s. This is thanks to the general formulation of the central tools we have used - Theorems \ref{thm:Bern}, \ref{thm:BE} and \ref{thm:Latala} - which did not assume identical distribution, only uniform upper bounds on the appropriate parameters. Consequently, all of the results of those sections carry through to the case when $\mu_n$, $\mu_{E_0}$ and $\mu_{A,E_0}$ are replaced by $\mu^u$, $\mu^u_{0,E_0}$ and $\mu^u_{A,E_0}$, respectively, when the parameters:
\[
M_2(\mu) ~,~ \frac{M_3(\mu)}{M_2(\mu)^{3/2}} ~,~ D_{1,\Psi_1}(\mu) ~,~ D^\delta_{2,\Psi_1}(\mu) ~,~ \rho(\mu) ~,
\]
are replaced by:
\[
\sup_{a \in \Real} M_2(\mu^a) ~,~ \sup_{a \in \Real} \frac{M_3(\mu^a)}{M_2(\mu^a)^{3/2}} ~,~ \sup_{a \in \Real} D_{1,\Psi_1}(\mu^a) ~,~ \sup_{a \in \Real} D^\delta_{2,\Psi_1}(\mu^a) ~,~ \inf_{a \in \Real} \rho(\mu^a) ~,
\]
respectively. Combining this with Corollary \ref{cor:inter-reduction-to-zero-spin}, we obtain:

\begin{thm} \label{thm:uniform-inter}
Let $\mu = \exp(-V(x)) \, dx$ be a probability measure on $\Real$ with $V \in C^2(\Real)$ and $\lim_{x \rightarrow \pm \infty} V(x) = \infty$. Fixing $\delta > 0$, denote $\rho^a := \rho_{LS}(\Real,\abs{\cdot},\mu^a)$, $M_p^a := M_p(\mu^a)$, $D_{1,\Psi_1}^a := D_{1,\Psi_1}(\mu^a)$ and $D_{2,\Psi_1}^a := D_{2,\Psi_1}^\delta(\mu^a)$. Assume that:
\[
\rho^a \geq \bar{\rho} > 0 ~,~ \frac{M_3^a}{(M_2^a)^{3/2}} \leq \bar{M} < \infty ~,~ D_{1,\Psi_1}^a \leq \bar{D_1} < \infty ~,~ D_{2,\Psi_1}^a \leq \bar{D_2} < \infty ~,
\]
uniformly in $a \in \Real$, and that:
\[
-\kappa := \inf_{x \in \Real} V''(x) \geq -\frac{\bar{\rho}}{8} ~.
\]
Let $A$ denote an $n$ by $n$ symmetric matrix with zero diagonal satisfying:
\begin{equation} \label{eq:inter-general-Aop-small}
\norm{A}_{op} \leq c \bar{\rho} ~,
\end{equation}
for some appropriately chosen universal constant $c > 0$. Then for any $n \geq \eta^{\bar{M},\bar{D_1},\bar{D_2}}_{\bar{\rho},\delta}$,
the canonical ensemble with weak-interaction $(E_s,\abs{\cdot},\mu_{A,E_s,b})$ satisfies LSI, uniformly in the system size $n$, mean-spin value $s \in \Real$ and boundary contribution $b \in \Real^n$, depending solely on $\bar{\rho}$, $\bar{M}$, $\bar{D_1}$, $\bar{D_2}$ and $\norm{A}_{HS} / \bar{\rho}$.
\end{thm}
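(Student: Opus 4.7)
The plan is first to reduce the arbitrary-spin, arbitrary-boundary statement to a zero-mean, zero-boundary statement for independent but non-identically distributed coordinates, and then to observe that the argument leading to Theorem \ref{thm:inter-zero-spin} extends to that broader setting with every coordinate parameter replaced by its supremum (and $\rho$ by its infimum) over the coordinates.

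The reduction is a direct appeal to Corollary \ref{cor:inter-reduction-to-zero-spin}: given any $s\in\Real$ and $b\in\Real^n$ there exists $u=(u_1,\ldots,u_n)\in\Real^n$ such that $(E_s,\abs{\cdot},\mu_{A,E_s,b})$ is isometric as a measure-metric space to $(E_0,\abs{\cdot},\mu^u_{A,E_0})$, where $\mu^u:=\otimes_i \mu^{u_i}$. The hypothesis $\norm{A}_{op}<1/(2\bar{M_2})$ of the underlying Lemma \ref{lem:inter-invariance}, with $\bar{M_2}:=\sup_{a\in\Real} M_2(\mu^a)$, follows from \eqref{eq:inter-general-Aop-small}: the Herbst argument applied to the uniform $LSI(\bar{\rho})$ bound yields sub-Gaussian tails for each $\mu^a$ and hence $\bar{M_2}\le C/\bar{\rho}$, so that choosing the universal constant $c$ in \eqref{eq:inter-general-Aop-small} smaller than $1/(2C)$ suffices.

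It remains to bound $\rho_{LS}(E_0,\abs{\cdot},\mu^u_{A,E_0})$ uniformly in $u\in\Real^n$. For this I would re-run the proof of Theorem \ref{thm:inter-zero-spin} with the product $\mu_n$ replaced by $\mu^u$. Inspection of Propositions \ref{prop:conc-trans-model}, \ref{prop:conc-trans-inter} and \ref{prop:conc-trans-inter-full}, and of Theorem \ref{thm:inter-zero-spin} itself, shows that the i.i.d.\ assumption entered only through the introduction of coordinate-wise parameters of a common law $\mu$: the underlying deviation machinery --- Bernstein (Theorem \ref{thm:Bern}), Berry--Esseen (Theorem \ref{thm:BE}) and Lata\l{}a's chaos bound (Theorem \ref{thm:Latala}) --- is already stated for independent non-identically distributed summands, and the Bakry--\'Emery-type tensorization and $\kappa$-semi-convexity verifications are manifestly insensitive to identical distribution. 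Replacing $M_p$, $D_{1,\Psi_1}$, $D^\delta_{2,\Psi_1}$, $D^\delta_2$ by their suprema and $\rho$ by its infimum over the coordinate laws, the conclusion of Theorem \ref{thm:inter-zero-spin} becomes
\[
\rho_{LS}(E_0,\abs{\cdot},\mu^u_{A,E_0})\ge c\,\frac{\bar{\rho}}{Q^C},\qquad Q:=\max(1,\bar{M_2}(\bar{D_2}+\bar{D_1}^{2}))\exp(\norm{A}_{HS}^{2}/\bar{\rho}^{2}),
\]
uniformly in $u\in\Real^n$ and for all $n\ge\eta^{\bar{M},\bar{D_1},\bar{D_2}}_{\bar{\rho},\delta}$. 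Since $\bar{M_2}\le C/\bar{\rho}$ and $\E Y_0 \le \norm{Y_0}_{L_{\Psi_1}}$ by Jensen, the quantity $Q$ is controlled by the parameters allowed in the statement, so Corollary \ref{cor:inter-reduction-to-zero-spin} finishes the proof.

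The main obstacle is the bookkeeping inside the second step: one has to check, for every moment, $\Psi_1$-norm and normalization constant appearing in Sections \ref{sec:model}--\ref{sec:inter}, that it is indeed dominated by the appropriate coordinate-wise supremum and never secretly uses identical distribution. The most delicate point is the lower bound on the diagonal density $Z^u_{E,w_0}$ governing the normalization of $\mu^u_{A,E_0,w_0}$; I would handle it via the two-sided route of Subsection \ref{subsec:model-approach-1}, which uses the classical non-identically-distributed Berry--Esseen (Theorem \ref{thm:BE}), rather than the one-sided route of Subsection \ref{subsec:model-approach-2}, since $\norm{d\mu^a/dx}_{L^\infty}$ is not among the parameters uniformly controlled by the hypothesis whereas $D^\delta_{2,\Psi_1}(\mu^a)$ is.
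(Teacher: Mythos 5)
Your proposal is correct and follows essentially the same route as the paper: reduce to the zero-mean, zero-boundary setting via Corollary \ref{cor:inter-reduction-to-zero-spin} (using the Herbst argument to verify $\norm{A}_{op} < 1/(2\bar{M_2})$), and then observe that the Sections \ref{sec:model}--\ref{sec:inter} machinery extends verbatim to non-identically-distributed coordinates because Theorems \ref{thm:Bern}, \ref{thm:BE} and \ref{thm:Latala} are already stated in that generality, so Theorem \ref{thm:inter-zero-spin} applies to $\mu^u$ with coordinate suprema and infima replacing the single-law parameters. Your extra observation that the two-sided route (via the classical Berry--Esseen Theorem \ref{thm:BE}) is the right one to control the normalization, since $\norm{d\mu^a/dx}_{L^\infty}$ is not among the uniformly-controlled parameters, is exactly consistent with what Theorem \ref{thm:inter-zero-spin} uses through Proposition \ref{prop:conc-trans-inter-full}.
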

\begin{proof}
Choosing the constant $c>0$ in (\ref{eq:inter-general-Aop-small}) small enough, the sub-Gaussian tail decay (\ref{eq:sub-Gaussian}) guaranteed by the Hebst argument (since $\E(X^a) = 0$) ensures that:
\[
\norm{A}_{op} \leq c \bar{\rho} \leq c \frac{C}{\bar{M_2}} < \frac{1}{2 \bar{M_2}} ~.
\]
Consequently, the assumptions of Corollary \ref{cor:inter-reduction-to-zero-spin} are satisfied, and so it is enough to verify a uniform lower bound on $\rho_{LS}(E_0,\abs{\cdot},\mu^{u}_{A,E_0})$ in $u \in \Real^n$. We furthermore require that $c>0$ in (\ref{eq:inter-general-Aop-small}) is smaller than what is required in (\ref{eq:Aop-small-simple2}) to apply Theorem \ref{thm:inter-zero-spin}. Write $\mu^a = \exp(-V_a(x)) dx$, and
apply Theorem \ref{thm:inter-zero-spin} to $\mu^u$, which is possible thanks to the requirement that $-\bar{\rho} / 8 \leq -\kappa \leq V_a''$ for all $a \in \Real$. Note again that $\bar{M_2} \leq C / \bar{\rho}$ by the sub-Gaussian tail decay (\ref{eq:sub-Gaussian}).
The proof is complete.
\end{proof}

Combining Lemma \ref{lem:weakly-Gaussian} with Theorem \ref{thm:uniform-inter}, we obtain:
\begin{thm}
Let $\mu$ be a $(\alpha,\beta,\omega)$ weakly Gaussian measure, and let $A$ denote an $n$ by $n$ symmetric matrix with zero diagonal satisfying:
\[
\norm{A}_{op} \leq c \alpha \exp(-\omega) ~,
\]
for an appropriate universal constant $c > 0$. Then the canonical ensemble $(E_s,\abs{\cdot},\mu_{A,E_s,b})$ with weak-interaction $A$, satisfies a LSI, uniformly in the system size $n \geq 2$, mean-spin value $s \in \Real$ and boundary contribution $b \in \Real^n$, depending solely on a positive lower bound on $\alpha$ and upper bounds on $\beta$, $\omega$ and $\norm{A}_{HS} / (\alpha \exp(-\omega))$.
\end{thm}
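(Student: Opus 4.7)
The plan is to split on the system size $n$, handling two regimes separately. For $n$ above a threshold determined by the single-site parameters, the general uniformity result Theorem \ref{thm:uniform-inter} will do the work, after one verifies its hypotheses via the weak-Gaussian lemma. For the finitely many smaller values of $n$, the Bakry--\'Emery plus Holley--Stroock approach already sketched in the proof of the non-interacting weakly Gaussian theorem will handle everything, once the interaction is folded into the strongly-convex piece of the Hamiltonian.

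Setting $\bar{\rho} := \alpha \exp(-\omega)$, the first step is to observe that Lemma \ref{lem:weakly-Gaussian} yields exactly the uniform control on $\rho^a \geq \bar\rho$ and on the three scale-invariant tilt-parameters $M_3^a/(M_2^a)^{3/2}$, $D_{1,\Psi_1}^a$ and $D_{2,\Psi_1}^a$ required by Theorem \ref{thm:uniform-inter}, all bounded in terms of $\alpha, \beta, \omega$; the curvature condition $V'' \geq -\bar\rho/8$ is built into the definition of weakly Gaussian. Once one checks that the universal constant $c$ appearing in the hypothesis $\norm{A}_{op} \leq c\alpha \exp(-\omega)$ can be taken as small as the constant demanded by (\ref{eq:inter-general-Aop-small}), Theorem \ref{thm:uniform-inter} delivers LSI uniformly in $n \geq \eta = \eta^{\beta,\omega}_\alpha$, $s \in \Real$, and $b \in \Real^n$, with a constant depending solely on $\alpha, \beta, \omega$ and $\norm{A}_{HS}/(\alpha \exp(-\omega))$, which is exactly the dependence asserted.

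For the remaining range $2 \leq n < \eta^{\beta,\omega}_\alpha$, I would decompose $H_{A,b} = H^{\mathrm{conv}}_{A,b} + H^{\mathrm{pert}}$, where $H^{\mathrm{conv}}_{A,b}(x) := \sum_i V_{\text{conv}}(x_i) + \sum_i b_i x_i - I_A(x)$ absorbs both the boundary and the interaction into the strongly-convex piece, and $H^{\mathrm{pert}}(x) := \sum_i V_{\text{pert}}(x_i)$ carries only a bounded oscillation, at most $n\omega$. Since $\mathrm{Hess}\, H^{\mathrm{conv}}_{A,b} \geq (\alpha - 2\norm{A}_{op})\mathrm{Id}$ as tensors on $\Real^n$, restriction to $E_s$ preserves this lower bound, and the Bakry--\'Emery criterion yields LSI with constant at least $\alpha - 2\norm{A}_{op}$ for the auxiliary measure proportional to $\exp(-H^{\mathrm{conv}}_{A,b}) \, d\vol_{E_s}$ (which is integrable thanks to the same strong convexity). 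A single application of Holley--Stroock (\ref{eq:HS}) with the bounded perturbation $H^{\mathrm{pert}}|_{E_s}$ then yields
\[
\rho_{LS}(E_s,|\cdot|,\mu_{A,E_s,b}) \geq (\alpha - 2\norm{A}_{op}) \exp(-2 n\omega),
\]
which is controlled from below by a function of $\alpha, \beta, \omega$ only, since $n$ is a priori bounded by $\eta^{\beta,\omega}_\alpha$ in this regime.

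The main (and rather mild) obstacle is the bookkeeping over the universal constant $c$: it must simultaneously be small enough to satisfy the smallness requirement (\ref{eq:inter-general-Aop-small}) used in the large-$n$ regime \emph{and} to ensure $\alpha - 2\norm{A}_{op} \geq \alpha/2 > 0$ in the Bakry--\'Emery step of the small-$n$ regime. Both are achieved by taking $c$ equal to the minimum of the constant appearing in Theorem \ref{thm:uniform-inter} and $1/4$, so no genuine difficulty arises.
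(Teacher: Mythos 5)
Your proof is correct and takes essentially the same approach as the paper: split on the system size $n$, invoking Lemma \ref{lem:weakly-Gaussian} together with Theorem \ref{thm:uniform-inter} for $n$ above the threshold, and for smaller $n$ absorbing the boundary and interaction terms into the strongly convex part of the Hamiltonian so that Bakry--\'Emery followed by Holley--Stroock applies. Your explicit accounting of the factor $2$ in $\mathrm{Hess}\,I_A = 2A$ (hence $c < 1/4$) is slightly more careful than the paper's phrasing but amounts to the same argument.
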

\begin{proof}
The theorem follows from Lemma \ref{lem:weakly-Gaussian} and Theorem \ref{thm:uniform-inter} when $n \geq \eta^{\beta,\omega}_{\alpha}$. For smaller $n$, it is an easy consequence of the Bakry--\'Emery criterion (see Appendix) together with the Holley--Stroock perturbation argument (\ref{eq:HS}): indeed, note that
$\norm{A}_{op} \leq c \alpha$, and so the strictly convex part $V_{\text{conv}}$ of the weakly-Gaussian potential, is still strictly convex  even after adding the linear boundary term and the weak interactions if we assume that $c < 1$, since $\alpha Id - A \geq (1-c) \alpha Id$ as positive definite matrices. Consequently, so is the restriction onto $E_s$, and by the Bakry--\'Emery criterion, satisfies $LSI((1-c)\alpha)$. The bounded perturbation $V_{\text{pert}}$ can only change the Hamiltonian by at most $n \omega$, which is bounded when $n < \eta^{\beta,\omega}_{\alpha}$.
\end{proof}

\section{Spectral-Gap of Conservative Spin Model with Convex Potential} \label{sec:SG}

We now turn to study the \emph{spectral-gap} of a canonical ensemble, having a \emph{convex potential}. The prime example we have in mind, which was suggested to us by Pietro Caputo, is the two-sided exponential measure $\nu = \frac{1}{2} \exp(-|x|) \, dx$. More generally, recall that a measure $\mu = \exp(-V(x)) \, dx$ on $\Real^n$ is called log-concave when $V : \Real^n \rightarrow \Real \cup \set{\infty}$ is convex. 
Note that for a one-dimensional  log-concave probability measure $\mu$ on $\Real$, it is known (see Theorem \ref{thm:log-concave} in the Appendix) that:
\begin{equation} \label{eq:SG-Var}
 \rho_{SG}(\Real,|\cdot|,\mu) \simeq \frac{1}{\Var(\mu)} ~,
\end{equation}
where recall we use $A \simeq B$ to denote that $c_1 \leq A/B \leq c_2$ for some two universal constants $c_1,c_2 > 0$. 

\subsection{Zero Mean-Spin Case - Bounded System Size}

First, we take care of the zero mean-spin case when the system size $n$ is bounded above:

\begin{prop} \label{prop:SG-small-n}
Let $\mu$ denote a log-concave probability measure on $\Real$ with barycenter at $0$. Then for any integer $n \geq 2$:
\[
 C \, \rho_{SG}(\Real,|\cdot|,\mu) \ge \rho_{SG}(E,|\cdot|,\mu_E) \geq \frac{c}{n} \, \rho_{SG}(\Real,|\cdot|,\mu) ~,
\]
where $C,c>0$  are  universal constants.
\end{prop}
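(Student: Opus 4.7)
Both inequalities will follow from the two-sided moment comparison
\[
c\, M_2 \;\le\; M_2^E := \E_{\mu_E}(y_1^2) \;\le\; M_2,
\]
with $c>0$ universal, valid for every log-concave $\mu$ on $\Real$ with barycenter at $0$ and every integer $n\ge 2$. Indeed, the one-dimensional log-concave equivalence $\rho_{SG}(\Real,|\cdot|,\mu)\simeq 1/M_2$ from (\ref{eq:SG-Var}) turns this comparison directly into the asserted double bound on $\rho_{SG}(E,|\cdot|,\mu_E)$.

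For the \textbf{upper bound}, I would test the Poincar\'e inequality on $(E,|\cdot|,\mu_E)$ with the linear function $f(y)=y_1-y_2$. Since $e_1-e_2\in E$, $|\nabla^E f|^2\equiv 2$, and exchangeability of the coordinates under $\mu_E$ gives $\E_{\mu_E}(y_1)=0$ together with $\E_{\mu_E}(y_1 y_2)=-M_2^E/(n-1)$ (from $\sum_i y_i\equiv 0$), so $\Var_{\mu_E}(f)=2 M_2^E\, n/(n-1)$. The variational characterization of $\rho_{SG}$ then yields $\rho_{SG}(\mu_E)\le (n-1)/(n M_2^E)\le 1/M_2^E$, which via $M_2^E\ge c M_2$ and (\ref{eq:SG-Var}) is $\le C\,\rho_{SG}(\Real,|\cdot|,\mu)$. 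For the \textbf{lower bound}, I note that $\mu_E$ is log-concave on $E$ (as the restriction of the log-concave product $\mu_n$ to an affine subspace), with barycenter $0$ and $\E_{\mu_E}|y|^2 = n M_2^E$ by exchangeability. Invoking Bobkov's Kannan--Lov\'asz--Simonovits-type lower bound $\rho_{SG}(\nu)\ge c/\E_\nu|X-m_\nu|^2$ for log-concave measures $\nu$, together with $M_2^E\le M_2$, I would obtain $\rho_{SG}(\mu_E)\ge c/(n M_2^E)\ge c/(n M_2)\simeq (c/n)\rho_{SG}(\Real,|\cdot|,\mu)$.

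The \textbf{main obstacle} is therefore the moment comparison $M_2^E\simeq M_2$. My plan is to use the disintegration $\mu_n = \int \mu_{E_s}\,p(s)\,ds$ of $\mu_n$ over the mean-spin $\bar X$ (with $p$ the density of $\bar X$ under $\mu_n$). Since by exchangeability $\E_{\mu_n}(X_1\mid\bar X)=\bar X$, the law of total variance yields $\int \Var_{\mu_{E_s}}(X_1)\,p(s)\,ds = (1-1/n) M_2$, so the task reduces to comparing $\Var_{\mu_{E_0}}(X_1) = M_2^E$ with this $p$-average. The crucial input is that the joint density $\psi(x_1,s)$ of $(X_1,\bar X)$ under $\mu_n$ is log-concave on $\Real^2$, as an affine image of the log-concave $\mu_n$: by Pr\'ekopa both $s\mapsto p(s)$ and the profile $s\mapsto \sup_{x_1}\psi(x_1,s)$ are log-concave, while by (\ref{eq:SG-Var}) the variance of each one-dimensional log-concave conditional section is equivalent to the inverse of its squared peak. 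Combining these pieces will control the conditional variance at $s=0$ up to a universal constant in both directions. The harder direction will be the lower bound $M_2^E\ge c M_2$, where one must rule out that conditioning on the sliver $\{\bar X=0\}$ overly concentrates the $y_1$-marginal; here the scale mismatch between the $O(\sqrt{M_2/n})$ fluctuations of $\bar X$ and the $O(\sqrt{M_2})$ natural scale of the $y_1$-marginal provides the decisive buffer.
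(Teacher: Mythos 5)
Your high-level architecture coincides with the paper's: the entire proposition reduces to a two-sided comparison $\Var_{\mu_E}(y_1)\simeq\Var(\mu)$ (this is exactly the paper's Lemma~\ref{lem:SG-small-n}), and given that comparison, the upper bound follows by testing the Poincar\'e inequality on a linear functional and the lower bound follows from the Kannan--Lov\'asz--Simonovits bound coupled with Cheeger's inequality and the identity $\E_{\mu_E}|y|^2 = n\,\Var_{\mu_E}(y_1)$. Your test function $y_1-y_2$ and the paper's choice $x_1$ (restricted to $E$) give the same Rayleigh quotient $(n-1)/(n\,\Var_{\mu_E}(y_1))$, so that part is fine.

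The genuine gap is that you never actually prove the moment comparison, which you yourself call ``the main obstacle.'' What you offer is a plan: the law of total variance gives $\int \Var_{\mu_{E_s}}(X_1)\,p(s)\,ds = (1-\tfrac1n)M_2$, the joint density of $(X_1,\bar X)$ is log-concave by Pr\'ekopa, and for one-dimensional log-concave measures variance is equivalent to the inverse squared peak. These are the right ingredients, but the sentence ``Combining these pieces will control the conditional variance at $s=0$ up to a universal constant in both directions'' is where the proof has to happen and doesn't. The law of total variance only controls the $p$-average of $\Var(X_1\mid\bar X=s)$, not its value at $s=0$; the ``scale mismatch'' heuristic in your last sentence is intuition, not an estimate. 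The paper closes this gap by an explicit computation: writing $h_n$ for the density of $X^1_E$, $f$ for that of $\mu$, and $g_k$ for that of $\tfrac1{\sqrt k}\sum_{i=1}^k X_i$, a conditioning/Jacobian identity gives $h_n(0)=f(0)\tfrac{\sqrt n}{\sqrt{n-1}}\tfrac{g_{n-1}(0)}{g_n(0)}$; then $g_k(0)^{-2}\simeq\Var(g_k\,dx)=\Var(\mu)$ (the variance--peak equivalence applied to the normalized sums, not to the conditional section) yields $g_{n-1}(0)\simeq g_n(0)$, hence $h_n(0)\simeq f(0)$, and one more application of the variance--peak equivalence finishes. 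Your disintegration $\psi(x_1,s)=n\,f(x_1)\,q_{n-1}(ns-x_1)$ of the joint density of $(X_1,\bar X)$ can be pushed to recover exactly this identity, so your route does close if you carry it out; as written, it stops one step short of the actual estimate.

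A secondary issue: you assert the sharp one-sided bound $\Var_{\mu_E}(y_1)\le\Var(\mu)$ with constant $1$. This is true for Gaussians and plausible in general, but you give no argument for it, and it is in any case not needed -- a two-sided comparison up to universal constants (what Lemma~\ref{lem:SG-small-n} actually delivers) suffices for both inequalities you want.
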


The recent paper \cite{BartheCorderoVariance} gives a better  estimate involving a factor $\log(n)^{-2}$
instead of $n^{-1}$. However this improvement is not essential for our purpose. The proof of the proposition is based on the following:

\begin{lem} \label{lem:SG-small-n}
Let $\mu$ denote a log-concave probability measure on $\Real$ with barycenter at $0$.
Let $X = (X^1,\ldots,X^n)$ denote the random vector in $\Real^n$ distributed according to the product measure $\mu_n$, and let 
$X_E = (X_E^1,\ldots,X_E^n)$ denote its conditioning on the hyperplane $\sum_{i=1}^n X^i = 0$, i.e. having law $\mu_{E}$. Then:
\[
\Var(X^1) \simeq \Var(X_E^1) ~.
\]
In other words, the variance of the original measure $\mu$ is universally equivalent to the variance of the marginal of the conditioned measure $\mu_E$. 
\end{lem}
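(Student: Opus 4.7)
The plan is to write down the conditional density $g$ of $X_E^1$ explicitly and to compare $\|g\|_\infty$ with $\|f\|_\infty$, exploiting the basic fact that for a log-concave probability density $h$ on $\Real$ with standard deviation $\tau$ one has $\|h\|_\infty \simeq 1/\tau$. If the latter is applied to $g$, then $\sqrt{\Var(X_E^1)} \simeq 1/\|g\|_\infty$, and analogously for $f$, from which the conclusion follows by comparing the two sup-norms.

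Let $f = d\mu/dx$ and let $f_{n-1}$ denote the density of $X^2+\cdots+X^n$, namely the $(n-1)$-fold convolution of $f$. By Pr\'ekopa--Leindler, $f_{n-1}$ is log-concave; its barycenter is $0$ and its variance is $(n-1)\sigma^2$, where $\sigma^2 := \Var(\mu)$. The conditional density of $X_E^1$ is
\[
g(x) \;=\; \frac{f(x)\, f_{n-1}(-x)}{Z}, \qquad Z \;=\; \int f(y)\, f_{n-1}(-y)\, dy,
\]
and $g$ is itself log-concave (the logarithm of a product of two log-concave functions is concave).

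I will use three standard properties, all essentially contained in Theorem~\ref{thm:log-concave} of the appendix, for a log-concave probability density $h$ on $\Real$ with barycenter $m$ and variance $\tau^2$: (a) $\|h\|_\infty \leq C/\tau$; (b) $h(y) \geq c/\tau$ whenever $|y-m|\leq \tau$; (c) consequently $\int_{m-\tau}^{m+\tau} h \geq c'$. Apply (a) to $f_{n-1}$ to get $Z \leq C/(\sigma\sqrt{n-1})$. For the matching lower bound, restrict the defining integral for $Z$ to $[-\sigma,\sigma]$ (a subinterval of the standard-deviation window around the barycenter $0$ of $f_{n-1}$ for $n\geq 2$) and combine (b) for $f_{n-1}$ with (c) for $\mu$ to get $Z \geq c/(\sigma\sqrt{n-1})$. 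Thus $Z \simeq 1/(\sigma\sqrt{n-1})$.

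With $Z$ pinned down, the upper bound $\|g\|_\infty \leq \|f\|_\infty \|f_{n-1}\|_\infty / Z \lesssim 1/\sigma$ is immediate from (a) applied to $f$ and to $f_{n-1}$. For the matching lower bound, evaluate $g$ at the origin: by (b) applied to $f$ and $f_{n-1}$ at their common barycenter $0$, we get $g(0) \gtrsim (1/\sigma)\,(1/(\sigma\sqrt{n-1}))/Z \gtrsim 1/\sigma$. Hence $\|g\|_\infty \simeq 1/\sigma$, and applying (a)--(b) to the log-concave $g$ itself yields $\sqrt{\Var(X_E^1)} \simeq 1/\|g\|_\infty \simeq \sigma$, as desired. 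The only mildly subtle ingredient is property (b), which compares the density to its maximum over an interval of length comparable to the standard deviation around the \emph{barycenter} (rather than the mode); this is standard and follows by rescaling to isotropic position together with a short compactness argument for the one-parameter family of one-dimensional log-concave isotropic densities.
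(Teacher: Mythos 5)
Your proof is correct and the general strategy is the same as the paper's: both write down the marginal density of $X_E^1$ on the conditioning hyperplane and exploit the fact (Theorem~\ref{thm:log-concave}) that for a one-dimensional log-concave density with barycenter at the origin, the variance is universally equivalent to the reciprocal of the squared density at the origin (or equivalently of the squared sup-norm). The formula $g(x)=f(x)f_{n-1}(-x)/Z$ you derive is exactly the paper's, once one observes $Z=f_n(0)$ (the density of the $n$-fold sum at $0$), which in the paper's scaled notation is $g_n(0)/\sqrt{n}$.

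The place where your route diverges is in estimating $Z$. You bound $Z$ from above and below directly, and the lower bound relies on your property (b): a pointwise lower bound $h(y)\gtrsim 1/\tau$ on the full interval $\{|y-m|\le\tau\}$ around the barycenter. This is in fact a true statement, but it is a genuinely extra ingredient: even the non-vanishing of the density at the endpoints $m\pm\tau$ requires the reverse H\"{o}lder inequality $\E Y^2\le 2(\E Y)^2$ for nonnegative log-concave $Y$ (so that the support of $h$ always contains $[m-\tau,m+\tau]$, with equality only for the exponential), and the quantitative version then does need a compactness or renormalization argument. Citing it as ``standard'' without more is a bit optimistic. The paper sidesteps (b) entirely by noticing that $Z=f_n(0)$ is itself the density at the barycenter of a log-concave measure (the law of $\sum_{i=1}^n X^i$), so it can be estimated by another application of the very same density--variance equivalence (Theorem~\ref{thm:log-concave}(4)), yielding $Z\simeq 1/(\sigma\sqrt{n})$ with no further machinery. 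That trick is the key simplification you missed; with it, your (a) together with Theorem~\ref{thm:log-concave}(3)--(4) is all that is needed, and the lower bound on $g(0)$ falls out immediately. I would encourage you to rework the $Z$-estimate along those lines and drop (b): the resulting argument is shorter and requires only the facts already catalogued in the Appendix.
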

\begin{proof}
Write $\mu = f(x)\,  dx$. As usual, we denote the law of $\frac{1}{\sqrt{n}} \sum_{i=1}^n X^i$ by $\mu^D_n = g_n(x) dx$, and let $\mu_E^1 = h_n(x) dx$ denote the law of $X_E^1$. 
Since $\mu_n$ is log-concave, then so is $\mu_E$, and by the Pr\'ekopa--Leindler Theorem (e.g. \cite{GardnerSurveyInBAMS}), so are their marginals $\mu^D_n$ and $\mu_E^1$. In addition, the barycenter of $\mu^D_n$ is at the origin (by linearity of the projection), and so is the barycenter of $\mu_E^1$ (by symmetry: $\E(X_E^1) = E(X^1 | \sum_{i=1}^n X^i = 0) = 0$).
 
 It is known (see Theorem \ref{thm:log-concave}) that  for a log-concave probability measure $\nu = w(x) dx$ with barycenter at the origin:
 \begin{equation} \label{eq:SG-small-n-fact}
 \Var(\nu) \simeq \frac{1}{w(0)^2} ~.
 \end{equation}
 Consequently, we just need to show that $f(0) \simeq h_n(0)$. Indeed:
 \begin{eqnarray*}
 h_n(0) &=& \lim_{\eps \rightarrow 0+} \frac{1}{2\eps} \P(|X_E^1| \leq \eps) = \lim_{\eps,\delta \rightarrow 0+} \frac{1}{2\eps} \P\brac{ |X^1| \leq \eps \; \left | \; \abs{\sum_{i=1}^n X^i} \leq \delta \right . } \\
 &= &\lim_{\eps,\delta \rightarrow 0+} \frac{1}{2\eps} \frac{\P\brac{ |X^1| \leq \eps \wedge \abs{\sum_{i=1}^n X^i} \leq \delta }}{\P\brac{ \abs{\sum_{i=1}^n X^i} \leq \delta }} = f(0) \lim_{\delta \rightarrow 0+} \frac{\P\brac{ \abs{\sum_{i=2}^n X^i} \leq \delta }}{\P\brac{ \abs{\sum_{i=1}^n X^i} \leq \delta }} \\
 & = & f(0) \lim_{\delta \rightarrow 0+} \frac{\P\brac{ \abs{\frac{1}{\sqrt{n-1}}\sum_{i=2}^n X^i} \leq \frac{\delta}{\sqrt{n-1} }}}{\P\brac{ \abs{\frac{1}{\sqrt{n}}\sum_{i=1}^n X^i} \leq \frac{\delta}{\sqrt{n}} }} = f(0) \frac{\sqrt{n}}{\sqrt{n-1}} \frac{g_{n-1}(0)}{g_{n}(0)} ~,
 \end{eqnarray*}
and so it remains to show that $g_{n-1}(0) \simeq g_n(0)$. Since $g_k$ is log-concave itself, it follows from (\ref{eq:SG-small-n-fact}) that for any integer $k \geq 1$:
 \[
 \frac{1}{g_k(0)^2} \simeq \Var(\mu_k^D) = \Var\brac{\frac{1}{\sqrt{k}} \sum_{i=1}^k X^i} = \Var(X^1) ~,
 \]
 thereby completing the proof.
 \end{proof}

\begin{proof}[Proof of Proposition \ref{prop:SG-small-n}]
Since $\mu_E$ is log-concave, it follows by the isoperimetric bound of Kannan--Lov\'asz--Simonovits \cite{KLS} coupled with Cheeger's inequality \cite{CheegerInq} (see e.g. \cite{EMilman-RoleOfConvexity} for more information) that:
\[
\rho_{SG}(E,|\cdot|,\mu_E) \geq \frac{c}{\int |x|^2 d\mu_E(x)} = \frac{c}{\E(\sum_{i=1}^n (X^i_E)^2)} = \frac{c}{n \Var(X^1_E)}  ~,
\]
for some universal constant $c > 0$. Appealing to Lemma \ref{lem:SG-small-n} and (\ref{eq:SG-Var}), we observe that $\Var(X^1_E) \simeq \Var(X^1) \simeq 1 / \rho_{SG}(\Real,|\cdot|,\mu)$, thereby concluding the proof of the right-hand side inequality.

 Denoting as usual by $X_E$ the random vector distributed according to $\mu_E$, and testing the spectral-gap inequality on $(E,|\cdot|,\mu_E)$ with the function $E \ni x = (x_1,\ldots,x_n) \mapsto f(x) = x_1$, we have by Lemma \ref{lem:SG-small-n} and (\ref{eq:SG-Var}) that:
\[
\rho_{SG}(E,|\cdot|,\mu_{E}) \leq \frac{\E |\nabla f (X_E) |^2}{\Var(f(X_E))} = \frac{n-1}{n} \frac{1}{\Var(X_E^1)} \simeq \frac{1}{\Var(X^1)} \simeq \rho_{SG}(\Real,|\cdot|,\mu) ~,
\]
as asserted.
\end{proof}

\begin{rem}\label{rem:KLS}
In \cite{KLS}, Kannan, Lov\'asz and Simonovits proposed a daring conjecture, now commonly referred to as the \emph{KLS conjecture}. It predicts that
for log-concave probability measures on any Euclidean space, the spectral-gap can be evaluated up to dimension free constants
by just testing the inequality on linear functions.
In the case of the measure $\mu_E$, since the coordinates play symmetric roles,
the covariance of $\mu_E$ is a multiple of the identity of $E$. Hence all (non-zero) linear functions give rise to the same Rayleigh quotient.
Therefore if $\mu$ is log-concave, and with the notation $f(x)=x_1$  of the latter proof, the KLS conjecture predicts that:
$$ 
 \rho_{SG}(E,|\cdot|,\mu_{E}) \simeq \frac{\E |\nabla f (X_E) |^2}{\Var(f(X_E))} = \frac{n-1}{n} \frac{1}{\Var(X_E^1)} \simeq \frac{1}{\Var(X^1)} \simeq \rho_{SG}(\Real,|\cdot|,\mu) ~.
 $$
\end{rem}

\subsection{Zero Mean-Spin Case - General System Size}

\begin{thm} \label{thm:SG-model-zero-spin}
Let $\mu = \exp(-V(x)) dx$ denote a log-concave probability measure with barycenter at $0$ so that $V \in C^2(\Real)$ and $D_{1,\Psi_1} = D_{1,\Psi_1}(\mu) < \infty$, and let $\rho = \rho_{SG}(\Real,|\cdot|,\mu)$. Then for any integer $n \geq 2$:
\[
\rho_{SG}(E,|\cdot|,\mu_E) \geq c \frac{\rho}{\brac{ 1 + \log(Q)}^2} ~,
\]
where $c>0$ is a  universal constant and $Q$ is the following scale-invariant quantity:
\begin{equation} \label{eq:SG-Q}
Q:= \max(1,\sqrt{\Var(\mu)}  D_{1,\Psi_1}(\mu)) ~.
\end{equation}
\end{thm}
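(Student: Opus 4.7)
The plan is to mimic the scheme of Section \ref{sec:model} (used there for LSI), substituting the SG transference Theorem \ref{thm:SG-trans} for Theorem \ref{thm:log-Sob-trans}. The crucial simplification in the SG setting is that since $\mu$ is log-concave on $\Real$, the product $\mu_n$ and its thickening $\mu_{E,w}$ on $\Real^n$ are automatically log-concave (the density of $\mu_{E,w}$ is a log-concave function composed with a linear projection, times the indicator of a slab). Hence our convexity assumptions hold on the target space for free, and we avoid imposing any curvature condition of the form (\ref{eq:model-kappa-2}).

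First I would dispose of small $n$. For a log-concave probability measure on $\Real$ with barycenter at $0$, the reverse H\"older estimates for moments (Theorem \ref{thm:log-concave} in the Appendix) combined with the fact that $\|d\mu/dx\|_\infty \simeq 1/\sqrt{\Var(\mu)}$ ensure that both $M_3/M_2^{3/2}$ and $M_3\,\|d\mu/dx\|_\infty^3$ are bounded by universal constants. Consequently the threshold $\eta$ of Corollary \ref{cor:conc-trans-model-2} reduces to a universal constant $\eta_0$. For $n < \eta_0$, Proposition \ref{prop:SG-small-n} applied to the log-concave $\mu$ yields at once $\rho_{SG}(E,|\cdot|,\mu_E) \geq c\rho/n \geq c'\rho \geq c'\rho/(1+\log Q)^2$, since $n$ is bounded by a universal constant and $Q \geq 1$.

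For $n \geq \eta_0$, I would run the transference. The convexity of $V$ means Proposition \ref{prop:conc-trans-model-2} and Corollary \ref{cor:conc-trans-model-2} apply with $\kappa = 0$; with the choice $w_0 := \sqrt{\min(M_2,\, 1/D_{1,\Psi_1}^2)}$ they yield
\[
\Bigl(\int (d\mu_{E,w_0}/d\mu_n)^4 \, d\mu_n\Bigr)^{1/4} \leq C\,Q.
\]
Since $\mu_{E,w_0}$ is log-concave, Theorem \ref{thm:SG-trans} applied with $p=4$ and $L = CQ$, together with the tensorization identity $\rho_{SG}(\Real^n,|\cdot|,\mu_n) = \rho$, gives $\rho_{SG}(\Real^n,|\cdot|,\mu_{E,w_0}) \geq c\,\rho/(1+\log Q)^2$. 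I would then tensorize in the orthogonal decomposition $\Real^n = E \oplus D$:
\[
\rho_{SG}(\Real^n,|\cdot|,\mu_{E,w_0}) = \min\bigl(\rho_{SG}(E,|\cdot|,\mu_E),\, \rho_{SG}(\Real,|\cdot|,\nu_{[-w_0,w_0]})\bigr).
\]
Since $w_0^2 \leq M_2 \simeq 1/\rho$ by (\ref{eq:SG-Var}), the second factor is $\simeq 1/w_0^2 \gtrsim \rho$, so the minimum is (up to a universal constant) $\rho_{SG}(E,|\cdot|,\mu_E)$, and the asserted bound follows.

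The main obstacle I anticipate is the bookkeeping needed to check that both the threshold $\eta_0$ and the ratio $\rho_{SG}(\Real,|\cdot|,\nu_{[-w_0,w_0]})/\rho$ can be absorbed into universal constants; both hinge on the standard one-dimensional log-concavity facts collected in the Appendix. The transference step itself is then a direct invocation of Theorem \ref{thm:SG-trans} once the $L^4$ density estimate from Corollary \ref{cor:conc-trans-model-2} (in its $\kappa = 0$ form) is in hand.
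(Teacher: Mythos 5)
Your proposal is correct and follows essentially the same route as the paper: apply Corollary \ref{cor:conc-trans-model-2} with $\kappa=0$ to get the $L^4$ density bound, feed it into Theorem \ref{thm:SG-trans}, tensorize to peel off the diagonal slab, and handle $n<\eta_0$ via Proposition \ref{prop:SG-small-n}. The only unnecessary step is your estimate of $\rho_{SG}(\Real,|\cdot|,\nu_{[-w_0,w_0]})$: since tensorization gives $\rho_{SG}(\Real^n,|\cdot|,\mu_{E,w_0})=\min(\rho_{SG}(E,|\cdot|,\mu_E),\rho_{SG}(\Real,|\cdot|,\nu_{[-w_0,w_0]}))\leq \rho_{SG}(E,|\cdot|,\mu_E)$, the lower bound on $\rho_{SG}(E,|\cdot|,\mu_E)$ follows immediately without needing to know which factor attains the minimum.
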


\begin{proof} It is well-known (see the Appendix) that the spectral-gap inequality tensorizes with respect to the Euclidean ($\ell_2$) norm, and so $(\Real^n,|\cdot|,\mu_n)$ satisfies a spectral-gap inequality with the same constant $\rho$. Since $d\mu_n(x) = \exp(-H(x))\,  dx$ with $H(x) = \sum_{i=1}^n V(x_i)$, it follows that $ \mathrm{Hess} H \geq 0$ as a tensor field in $\Real^n$. The same bound holds for its restriction onto any linear subspace, and so it follows that $(E,|\cdot|,\mu_E)$ satisfies our convexity assumptions. Moreover, the uniform thickening of $\mu_E$ in the direction $D$ orthogonal to $E$ only adds $0$ as an eigenvalue to the Hessian matrix in that direction, and hence $(\Real^n,|\cdot|,\mu_{E,w})$ also satisfies our convexity assumptions for any $w > 0$.

We now transfer the spectral-gap inequality on $(\Real^n,|\cdot|,\mu_n)$ onto $(\Real^n,|\cdot|,\mu_{E,w_0})$ by applying Theorem \ref{thm:SG-trans} with $p=4$. 
Note that necessarily $\lambda := \norm{d\mu/dx}_{L^\infty} < \infty$ and that $M_3 / M_2^{3/2}$ and $M_3 \lambda^3$ are bounded above by universal constants, thanks to Theorem \ref{thm:log-concave}. Applying Corollary \ref{cor:conc-trans-model-2} with $\kappa = 0$, it follows that whenever $n$ exceeds some universal constant $\eta_0$, the following estimate holds:
\[
\brac{\int_{\Real^n} \brac{\frac{d\mu_{E,w_0}}{d\mu_n}}^4 d\mu_n}^{1/4} \leq C \max(1,\sqrt{M_2} D_{1,\Psi_1}) = C Q ~,
\]
where $C>0$ is some universal constant and:
\[
w_0  := \min\brac{\sqrt{M_2},\frac{1}{D_{1,\Psi_1}}} ~.
\]
Theorem \ref{thm:SG-trans} therefore implies that:
\[
\rho_{SG}(\Real^n,|\cdot|,\mu_{E,w_0}) \geq c \frac{\rho}{\brac{ 1 + \log(Q)}^2} ~,
\]
for some universal constant $c>0$. By the tensorization property of the spectral-gap inequality:
\[
\rho_{SG}(\Real^n,|\cdot|,\mu_{E,w_0}) = \min(\rho_{SG}(E,|\cdot|,\mu_E),\rho_{SG}(\Real,|\cdot|,\nu_{[-w_0,w_0]})) ~,
\]
where $\nu_{[-w_0,w_0]}$ denotes the uniform measure on $[-w_0,w_0]$. It follows that when $n \geq \eta_0$:
\[
\rho_{SG}(E,|\cdot|,\mu_E) \geq \rho_{SG}(\Real^n,|\cdot|,\mu_{E,w_0}) \geq  c \frac{\rho}{\brac{ 1 + \log(Q)}^2}   ~,
\]
as asserted. The case when $n < \eta_0$ is handled by Proposition \ref{prop:SG-small-n}. This concludes the proof. 
\end{proof}

\begin{rem} \label{rem:D12}
Observe that the quantity $D_{1,\Psi_1}(\mu)$ may be infinite, e.g. for $V(x)=e^{x^2}+c$, and so in particular 
the scale invariant quantity $\sqrt{\Var(\mu)} D_{1,\Psi_1}(\mu)$  appearing in (\ref{eq:SG-Q}) is not bounded above in the class of log-concave probability measures. However, this quantity is always bounded away from $0$: 
to see this, set $D_{1,2}(\mu) := \norm{V'}_{L^2(\mu)}$, use $2\exp(|t|)\ge t^2$, Cauchy-Schwartz and integrate by parts: 
\begin{eqnarray*}
\sqrt{2e} \, \sqrt{\Var(\mu)} D_{1,\Psi_1}(\mu) &\ge&   \sqrt{\Var(\mu)} D_{1,2}(\mu) 
=  \left(\int_{-\infty}^{\infty} t^2 d\mu(t) \, \int (V')^2 d\mu\right)^{\frac12}\\  &\ge& \int_{-\infty}^{\infty} t V'(t) \, d\mu(t) =1 ~.
\end{eqnarray*}
Lastly, we mention that even the scale invariant quantity $\sqrt{\Var(\mu)} D_{1,2}(\mu)$ is not uniformly bounded above in the class 
of log-concave probability measures: for potentials of the form $V(x)=|x|^p+c_p$, it is equivalent  (up to constants) to $p^{1/2}$ when $p$ is large. 
\end{rem}

\begin{cor} \label{cor:SG-mean-zero}
Let $\mu = \exp(-V(x)) \, dx$ denote a log-concave probability measure with barycenter at $0$. Assume that either:
\begin{enumerate}
\item $V$ is Lipschitz with constant $L$. 
\item $V \in C^1(\Real)$ and $V'$ is Lipschitz with constant $L^2$. 
\end{enumerate}
Let $\rho = \rho_{SG}(\Real,|\cdot|,\mu)$. Then for any integer $n \geq 2$:
\[
C \rho \ge \rho_{SG}(E,|\cdot|,\mu_E) \geq c \frac{\rho}{\log(2 + L^2/\rho)^2} ~,
\]
where $C,c>0$  are  universal constants. 
\end{cor}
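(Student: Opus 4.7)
The plan is to derive the lower bound from Theorem \ref{thm:SG-model-zero-spin} by estimating the scale-invariant quantity $Q$ from \eqref{eq:SG-Q} in terms of $L^2/\rho$; the upper bound $\rho_{SG}(E,|\cdot|,\mu_E) \leq C\rho$ is exactly the left-hand inequality in Proposition \ref{prop:SG-small-n}, which holds for all $n\geq 2$. So the task reduces to bounding $Q = \max(1,\sqrt{\Var(\mu)}\,D_{1,\Psi_1}(\mu))$ appropriately under each of the two hypotheses on $V$.

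First I would control $\Var(\mu)$. Testing the one-dimensional spectral-gap $SG(\rho)$ against $f(x)=x$ immediately gives $\Var(\mu) \leq 1/\rho$ (and in fact $\Var(\mu)\simeq 1/\rho$ by \eqref{eq:SG-Var}, though only the upper bound is needed here). Next I would estimate $D_{1,\Psi_1}(\mu)=\|V'(X_0)\|_{L_{\Psi_1}}$ in each case. In Case 1, $V$ is convex and $L$-Lipschitz, hence differentiable almost everywhere with $|V'|\leq L$, which by the definition of the $\Psi_1$-norm gives $D_{1,\Psi_1}(\mu)\leq L$. In Case 2, $V'$ is $L^2$-Lipschitz, and integration by parts yields $\E V'(X_0)=\int V'(x)e^{-V(x)}dx=0$; applying the Gromov--Milman exponential concentration \eqref{eq:gromowmilman} to the $1$-Lipschitz, mean-zero function $V'/L^2$ (absorbing the median–mean gap in the constants as usual) produces the bound $D_{1,\Psi_1}(\mu)\leq C L^2/\sqrt{\rho}$.

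Combining these estimates, in Case 1 one has $\sqrt{\Var(\mu)}\,D_{1,\Psi_1}(\mu) \leq L/\sqrt{\rho}$, so $Q^2 \leq \max(1,L^2/\rho) \leq 2+L^2/\rho$; in Case 2 one has $\sqrt{\Var(\mu)}\,D_{1,\Psi_1}(\mu) \leq CL^2/\rho$, so $Q^2 \leq C'\max(1,L^4/\rho^2) \leq C'(2+L^2/\rho)^2$. In both cases $1+\log Q \leq C''\log(2+L^2/\rho)$, and feeding this into Theorem \ref{thm:SG-model-zero-spin} delivers
\[
\rho_{SG}(E,|\cdot|,\mu_E) \;\geq\; c\,\frac{\rho}{(1+\log Q)^2} \;\geq\; c'\,\frac{\rho}{\log(2+L^2/\rho)^2},
\]
as required. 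Note that $D_{1,\Psi_1}(\mu)$ is automatically finite in both cases (cf.\ Remark \ref{rem:D12}), so the application of Theorem \ref{thm:SG-model-zero-spin} is legitimate.

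The main technical wrinkle is that Theorem \ref{thm:SG-model-zero-spin} formally requires $V\in C^2(\Real)$, whereas the hypotheses only give $V\in C^{0,1}$ in Case 1 and $V\in C^{1,1}$ in Case 2. I would handle this by a standard mollification: replace $V$ by $V_\varepsilon := V*\phi_\varepsilon$ with a smooth symmetric mollifier $\phi_\varepsilon$. The convolved potential $V_\varepsilon$ is smooth and convex, and the relevant Lipschitz constants of $V_\varepsilon$ (resp.\ $V_\varepsilon'$) are no larger than those of $V$ (resp.\ $V'$); the corresponding log-concave probability measures $\mu_\varepsilon$ converge weakly to $\mu$, with stability of both $\rho_{SG}(\mu_\varepsilon)$ and $\rho_{SG}((\mu_\varepsilon)_E)$ following from standard compactness results for log-concave measures. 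Applying the argument above to each $\mu_\varepsilon$ and passing to the limit $\varepsilon\to 0^+$ completes the proof. This regularization step, rather than any genuine analytic subtlety, is the only real obstacle.
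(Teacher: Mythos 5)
Your proposal is correct and follows essentially the same route as the paper: bound $Q=\max(1,\sqrt{\Var(\mu)}\,D_{1,\Psi_1}(\mu))$ under each hypothesis (using $|V'|\le L$ in Case~1 and Gromov--Milman exponential concentration of the mean-zero $L^2$-Lipschitz function $V'$ in Case~2), feed this into Theorem~\ref{thm:SG-model-zero-spin}, take the upper bound from Proposition~\ref{prop:SG-small-n}, and reduce to $C^2$ potentials by regularization. The only cosmetic differences are that you obtain $\Var(\mu)\le 1/\rho$ by testing the spectral gap on $f(x)=x$ rather than invoking Theorem~\ref{thm:log-concave}, and you justify the smoothing step via mollification plus weak-convergence stability rather than the paper's slightly cleaner argument that spectral gap is stable under convergence of the potential in the maximum norm.
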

\begin{proof}
It is easy to verify that it suffices to prove the claim when $V \in C^2(\Real)$. Indeed, any convex $V$ as in the first (respectively, second) case may be approximated in the maximum norm by convex functions $V_m \in C^2(\Real)$ so that the $\norm{V_m'}_{L^\infty}$ (respectively, $\norm{V_m''}_{L^\infty}$) converges to at most $L$ (respectively, $L^2$). Since the spectral-gap is stable under convergence of the potential in the maximum norm, the reduction follows.

Consequently, by Theorem \ref{thm:SG-model-zero-spin}, it is enough to bound $Q$ given by (\ref{eq:SG-Q}) from above. First, observe that $c_1 \leq \Var(\mu) \rho \leq c_2$ according to Theorem \ref{thm:log-concave}. In the first case, we obviously have $D_{1,\Psi_1}(\mu) = \norm{V'(X_0)}_{\Psi_1} \leq \norm{V'(X_0)}_{L^\infty} \leq L$, and hence $Q \leq \max(1,\sqrt{c_2} L / \sqrt{\rho})$. In the second case, we know by the result of M. Gromov and V. Milman \cite{GromovMilmanLevyFamilies} that spectral-gap implies exponential concentration of Lipschitz functions about their mean, and since $V'$ is $L^2$-Lipschitz with $\E(V'(X_0)) = 0$, it follows that $D_{1,\Psi_1}(\mu) = \norm{V'(X_0)}_{\Psi_1} \leq C L^2/\sqrt{\rho}$, and hence $Q \leq \max(1,\sqrt{c_2} C  L^2 / \rho)$. In either case,
the asserted lower  bound follows from Theorem \ref{thm:SG-model-zero-spin}. The upper bound follows from Proposition~\ref{prop:SG-small-n}. 
\end{proof}

\subsection{Dependence on Mean-Spin}

Given a log-concave probability measure $\mu = \exp(-V(x)) \, dx$ on $\Real$, we denote $a_{\pm} = \lim_{x \rightarrow \pm \infty} (V(x)-V(0)) / x$ (the latter function is monotone and hence the limits exist in the wide sense). Consequently, the probability measure $\mu^{\wedge a}$ is well defined for all $a \in (a_{-},a_{+})$, and as usual, we denote by $X^{\wedge a}$ the random variable distributed according to $\mu^{\wedge a}$. Again, we denote $s(a) := \E(X^{\wedge a})$, which always exists since $\mu^{\wedge a}$ is still a log-concave probability measure, and hence has exponential tail-decay. The function $(a_{-},a_{+})\ni a \mapsto s(a) \in \mathrm{isupp}(\mu)$ is increasing and onto, and we denote its inverse by $a(s)$.

Applying the Cram\'er trick as in Section \ref{sec:mean-spin}, we immediately see that:
\begin{equation} \label{eq:SG-Cramer}
 \rho_{SG}(E_s,|\cdot|,\mu_{E_s}) = \rho_{SG}(E_0,|\cdot|,\mu^{a(s)}_{E_0})  \;\;\; \forall s \in \mathrm{isupp}(\mu) ~,
\end{equation}
where recall $\mu^{a}$ is the translation of $\mu^{\wedge a}$ having barycenter at the origin. As a consequence of Corollary \ref{cor:SG-mean-zero}, we obtain:
\begin{thm} \label{thm:SG-any-mean}
Let $\mu = \exp(-V(x))\,  dx$ denote a log-concave probability measure on $\Real$. Assume that either:
\begin{enumerate}
\item $V$ is Lipschitz with constant $L$. 
\item $V \in C^1(\Real)$ and $V'$ is Lipschitz with constant $L^2$. 
\end{enumerate}
Given $s \in \Real$, denote $\rho_s := \rho_{SG}(\Real,|\cdot|,\mu^{a(s)})$. Then for any integer $n \geq 2$ and mean-spin $s \in \Real$:
\begin{equation} \label{eq:SG-bound}
C\rho_s \ge \rho_{SG}(E_s,|\cdot|,\mu_{E_s}) \geq c \frac{\rho_s}{\log(2 + L^2/\rho_s)^2} ~,
\end{equation}
where $C, c>0$ are  universal constants.
\end{thm}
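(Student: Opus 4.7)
The heart of the argument is the Cram\'er identity (\ref{eq:SG-Cramer}), which tells us that $\rho_{SG}(E_s,|\cdot|,\mu_{E_s}) = \rho_{SG}(E_0,|\cdot|,\mu^{a(s)}_{E_0})$, reducing matters to the zero mean-spin case for the tilted-and-translated measure $\mu^{a(s)}$. I will therefore just verify that the Lipschitz hypothesis on $V$ transfers to the potential $V_{a(s)}$ of $\mu^{a(s)}$, and then invoke Corollary~\ref{cor:SG-mean-zero} applied to $\mu^{a(s)}$, which is log-concave with barycenter at $0$ by construction.

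Explicitly, the density of $\mu^{a(s)}$ is proportional to $\exp(-V_a(x))$ with $V_a(x) = V(x+s(a)) - a(x+s(a))$ and $a=a(s)$. Translating the argument does not affect Lipschitz constants of $V$ or of $V'$, and adding the linear term $-ax$ preserves the Lipschitz constant of $V'$ (it only shifts it by a constant). Hence in Case~2 we immediately obtain that $V_a'$ is $L^2$-Lipschitz, exactly as needed to apply Case~2 of Corollary~\ref{cor:SG-mean-zero} to $\mu^{a(s)}$. In Case~1, I first bound $|a|$: since $V$ is $L$-Lipschitz, the quantities $a_\pm=\lim_{x\to\pm\infty}(V(x)-V(0))/x$ satisfy $a_\pm\in[-L,L]$, so $a(s)\in(a_-,a_+)$ forces $|a(s)|\le L$, and thus $V_a$ is $(L+|a|)\le 2L$-Lipschitz.

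Applying Corollary~\ref{cor:SG-mean-zero} to $\mu^{a(s)}$, with Lipschitz constant $2L$ in Case~1 and $L$ (for $V'$, in the scaling $L^2$ convention of the Corollary) in Case~2, gives a lower bound of the form $c\rho_s/\log(2+C L^2/\rho_s)^2$. Using the elementary bound $\log(2+C L^2/\rho_s)\le C'\log(2+L^2/\rho_s)$, valid for any universal constant $C$, we recover (\ref{eq:SG-bound}). The upper bound $\rho_{SG}(E_s,|\cdot|,\mu_{E_s})\le C\rho_s$ is immediate from the Cram\'er identity combined with Proposition~\ref{prop:SG-small-n} applied to the log-concave measure $\mu^{a(s)}$.

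There is no real obstacle here: the proof is a short bookkeeping argument that packages the Cram\'er trick, the invariance of the relevant Lipschitz norms under tilting and translation, and the zero mean-spin Corollary~\ref{cor:SG-mean-zero}. The only mildly delicate point is noticing that in Case~1 the Lipschitz constant of $V$ automatically forces $|a(s)|\le L$, so that the tilt does not inflate the Lipschitz constant beyond a universal factor.
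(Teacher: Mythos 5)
Your proposal is correct and follows essentially the same route as the paper: reduce to zero mean-spin via the Cram\'er identity \eqref{eq:SG-Cramer}, observe that in Case~1 $|a(s)|\le L$ so the tilted potential is $2L$-Lipschitz while in Case~2 the Lipschitz constant of $V_a'$ is unchanged, and then apply Corollary~\ref{cor:SG-mean-zero} (with Proposition~\ref{prop:SG-small-n} for the upper bound). The only difference is that you spell out the absorption of the factor $4$ from $(2L)^2$ into the universal constant via $\log(2+4u)\le 3\log(2+u)$, a step the paper leaves implicit.
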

\begin{proof}
Write $\mu^{a} = \exp(-V_a(x)) dx$ for $a \in (a_{-},a_{+})$. In the first case, since obviously $|a_{-}|,a_{+} \leq L$, then $V_a$ is Lipschitz with constant $2L$. In the second case, since by definition $V_a(x-s(a)) = V(x) - a x + c_a$, we see that the Lipschitz constant of $V_a'$ is identical to that of $V'$. In either case, the assertion follows from (\ref{eq:SG-Cramer}) and Corollary \ref{cor:SG-mean-zero}. 
\end{proof}

\begin{rem} \label{rem:KLS2}
The left-hand bound in (\ref{eq:SG-bound}), namely $C  \rho_{SG}(\Real,|\cdot|,\mu^{a(s)}) \ge \rho_{SG}(E_s,|\cdot|,\mu_{E_s})$, is true without any restriction on the log-concave measure $\mu$, as follows from Proposition~\ref{prop:SG-small-n} and  \eqref{eq:SG-Cramer}. 
The KLS conjecture (see Remark~\ref{rem:KLS}) predicts that the logarithmic term in the right-hand bound in 
\eqref{eq:SG-bound} and the technical restrictions on the log-concave measure $\mu$ may be removed. 
\end{rem}

We conclude this subsection with an estimate of the one-dimensional spectral-gaps appearing in the previous results.
It is optimal up to constants, as witnessed by the example of the two-sided exponential measure, studied in the next subsection.

\begin{prop}
Let $\mu $ be a log-concave measure on $\Real$ and let $X$ be a random variable distributed according to $\mu$.
Then for all $s\in \mathrm{isupp}(\mu)$:
$$  \rho_{SG}(\Real,|\cdot|,\mu^{a(s)}) \ge \frac{c}{\Var(X)+s^2}~,$$
where $c>0$ is a universal constant.
\end{prop}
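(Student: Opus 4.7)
The plan is to reduce to the variance bound $\Var(\mu^{a(s)}) \leq C(\Var(X)+s^2)$: combined with \eqref{eq:SG-Var}---the equivalence $\rho_{SG}(\Real,\abs{\cdot},\nu)\simeq 1/\Var(\nu)$ for one-dimensional log-concave measures---this immediately yields the claim. Note that $\mu^{a(s)}$ is log-concave (tilting preserves log-concavity) and differs from $\nu := \leftexp{a(s)}{\mu}$ only by a translation, so the two share the same variance. I assume $\mu$ is centered ($\E X=0$), as is the standing convention of the paper, and by symmetry I restrict to $s\geq 0$, so that $a:=a(s)\geq 0$.

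The heart of the argument is a one-sided tail control: since $a \geq 0$, the tilt factor $e^{ax}$ is at most $1$ on $\{x \leq 0\}$ and at least $1$ on $\{x \geq 0\}$. Hence the partition function satisfies $Z(a) = \int e^{ax}\, d\mu(x) \geq \mu([0,\infty)) \geq 1/e$ by the standard mass-at-the-barycenter bound for log-concave probability measures. Writing $X_\pm := \max(\pm X, 0)$, this gives
\[
\E_\nu[X_-] \;=\; \frac{1}{Z(a)}\int_{-\infty}^0 (-x)\, e^{ax}\, d\mu(x) \;\leq\; e\,\E_\mu[X_-] \;\leq\; e\,\E_\mu|X|,
\]
and combining with the defining identity $\E_\nu[X_+]-\E_\nu[X_-] = \E_\nu[X] = s$ yields the $L^1$-bound
\[
\E_\nu|X| \;=\; s + 2\E_\nu[X_-] \;\leq\; s + 2e\,\E_\mu|X|.
\]

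To upgrade to a second-moment bound, I invoke the reverse H\"older inequality for log-concave measures (Theorem~\ref{thm:log-concave}) applied to the (non-centered) log-concave $\nu$, yielding $\sqrt{\E_\nu[X^2]} \leq C_0\,\E_\nu|X|$ for some universal $C_0$. Since $\E_\nu[X^2] = \Var(\nu)+s^2$ and $\E_\mu|X| \leq \sqrt{\Var(X)}$ by Cauchy--Schwarz (this is where $\E X=0$ enters), combining the previous displays produces
\[
\sqrt{\Var(\nu)+s^2} \;\leq\; C_0\bigl(s + 2e\sqrt{\Var(X)}\bigr) \;\leq\; C_1 \sqrt{\Var(X)+s^2},
\]
and squaring completes the proof. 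The only subtlety I anticipate is the uniform-in-$a$ lower bound $Z(a)\geq 1/e$, which rests on the log-concave mass inequality at the barycenter; the resulting estimate is sharp in order, as is seen by comparing to the two-sided exponential of Theorem~\ref{thm:intro-two-sided-exp}.
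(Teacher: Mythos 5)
Your argument is correct, and it proves the same intermediate variance bound $\Var(\mu^{a(s)}) \le C\bigl(\Var(X)+s^2\bigr)$ as the paper, but by a genuinely different route for the $L^1$ estimate. You first control the partition function $Z(a)\ge \mu([0,\infty))\ge 1/e$ (the one-dimensional Gr\"unbaum/mass-at-barycenter bound for centered log-concave laws) and then use the pointwise monotonicity of the tilt $e^{ax}$ on $\{x\le 0\}$ to get $\E_\nu[X_-]\le e\,\E_\mu[X_-]$, hence $\E_\nu|X|\le s+2e\,\E_\mu|X|$. The paper instead proves the analogous $L^1$ bound $\E_\nu|X|\le\sqrt{\E_\mu X^2}+\E_\nu X$ by a differential-inequality argument: verifying the inequality at $a=0$ and then checking that the derivative (in $a$) of the right-hand side dominates that of the left-hand side, using only $\varphi(a):=\int t\,e^{ta}d\mu(t)\ge 0$ for $a\ge 0$. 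From that point both proofs upgrade to second moments via the same reverse H\"older inequality (Theorem~\ref{thm:log-concave}, part 1, which indeed does not require centering) and conclude via \eqref{eq:SG-Var}. The trade-offs: your version is shorter and conceptually transparent (a one-shot tilt comparison), but it imports the Gr\"unbaum bound $\mu([0,\infty))\ge 1/e$, a standard fact that the paper does not list in its appendix; the paper's version is self-contained using only elementary calculus of the moment-generating function and avoids any quantitative input beyond $\E|X|\le\sqrt{\E X^2}$ at $a=0$. Both yield the result with comparable universal constants.
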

\begin{proof}
Assume as we may that 0 is the barycenter of $\mu$. By (\ref{eq:SG-Var}), our goal is to prove that for all $s\in \mathrm{issup}(\mu)$:
$$ \Var(X^{\wedge a(s)}) \le C \big( \Var(X)+ s^2 \big) ~,$$
for some universal constant $C$. This is equivalent to showing that for all $a\in (a_-,a_+)$:
\begin{equation}\label{eq:goal-s}
 \Var(X^{\wedge a}) \le C \big( \Var(X)+ s(a)^2 \big) = C \big( \Var(X)+ (\E X^{\wedge a})^2 \big) ~.
 \end{equation}
It suffices to prove this for all $a\in [0, a^+)$, as we may apply it to $\mu\circ (-\mathrm{Id})$.
We claim that for all $a\in[0,a^ +)$, the following holds:
\begin{equation} \label{eq:claim-s}
\int |t| e^{ta}d\mu(t) \le \sqrt{\E(X^ 2)}\int  e^{ta} d\mu(t)+\int t e^{ta} d\mu(t) ~.
\end{equation}
Indeed, this is true for $a=0$ since $\E|X|\le \sqrt{\E(X^ 2)}$ and $\int t \, d\mu(t)=\E X=0$.
Now set $\varphi(a)=\int t e^{ta} d\mu(t)$, and note that $\varphi(0)=0$ and $\varphi'(a)=\int t^2 e^{ta} d\mu(t) \ge 0$. Hence for all $a\in [0, a^+)$,
$\varphi(a)\ge 0$. For $a\in[0,a^+)$, the derivative of the right-hand side of \eqref{eq:claim-s} is equal to:
  $$ \sqrt{\E(X^ 2)} \varphi(a) + \int t^2 e^{ta}d\mu(t) ~.$$
This no less than the derivative of the left-hand side of \eqref{eq:claim-s}: $\int t^2 \mathrm{sign}(t) e^{ta}d\mu(t)$.  
Hence we have proved \eqref{eq:claim-s}, which can be rephrased as  $\E |X^{\wedge a}| \le   \sqrt{\E(X^ 2)} + \E X^{\wedge a}$.
Using $(a+b)^2\le 2(a^2+b^2)$ and $ \E \big((X^{\wedge a})^2\big) \leq C (\E |X^{\wedge a}|)^2 $ (see Theorem \ref{thm:log-concave}), we obtain:
$$  \E \big((X^{\wedge a})^2\big)\le C' \left(\E(X^2)+ \big( \E X^{\wedge a} \big)^2 \right) ~.$$
This obviously implies \eqref{eq:goal-s} and consequently completes the proof.
\end{proof}

\subsection{The Two-Sided Exponential Measure}

In the special case of the two-sided exponential measure $\nu = \frac{1}{2} \exp(-|x|)\,  dx$, we can indeed confirm the KLS conjecture for $(E_s,|\cdot|,\nu_{E_s})$. Note that it is easy to check that if $a \in (-1,1)$
then 
$\Var(\mu^{a}) \simeq (1 + s(a)^2)$, and therefore $\rho_{SG}(\Real,|\cdot|,\mu^{a(s)}) \simeq 1 / (1 + s^2)$. 

\begin{thm}
Let $\nu = \frac{1}{2} \exp(-|x|)\,  dx$. Then the canonical ensemble $(E_s,|\cdot|,\nu_{E_s})$ satisfies:
\[
\rho_{SG}(E_s,|\cdot|,\nu_{E_s})  \simeq \frac{1}{1+s^2} ~,
\]
uniformly in the system-size $n \geq 2$ and mean-spin $s \in \Real$. 
\end{thm}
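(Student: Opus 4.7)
The upper bound is immediate: Proposition~\ref{prop:SG-small-n} applied to the log-concave measure $\nu^{a(s)}$, combined with Cramér's identification \eqref{eq:SG-Cramer}, gives
\[
\rho_{SG}(E_s,|\cdot|,\nu_{E_s}) = \rho_{SG}(E_0,|\cdot|,(\nu^{a(s)})_{E_0}) \leq C\, \rho_{SG}(\Real,|\cdot|,\nu^{a(s)}).
\]
An explicit moment calculation from $\int e^{-|x|+ax}dx = 2/(1-a^2)$ yields $s(a)=2a/(1-a^2)$ and $\Var(\nu^{a(s)})=2(1+a^2)/(1-a^2)^2\simeq 1+s^2$, so \eqref{eq:SG-Var} gives $\rho_{SG}(\Real,|\cdot|,\nu^{a(s)}) \simeq 1/(1+s^2)$.

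For the lower bound, I would first apply Theorem~\ref{thm:SG-any-mean} (Case 1) with $L=1$ (since $V(x)=|x|$ is $1$-Lipschitz) to obtain
\[
\rho_{SG}(E_s,|\cdot|,\nu_{E_s}) \geq \frac{c\,\rho_s}{\log(2+1/\rho_s)^2} \simeq \frac{c}{(1+s^2)\log^2(2+s^2)}.
\]
This already settles the bounded regime $|s|\leq O(1)$, in which the logarithmic factor is $O(1)$. The whole difficulty lies in removing this $\log^2(2+s^2)$ loss for $|s|\to\infty$, which is tantamount to proving the KLS conjecture for the specific family of log-concave measures $(\nu^a)_{E_0}$. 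Reducing by Cramér \eqref{eq:SG-Cramer} to zero mean-spin, the remaining task is to show that $\rho_{SG}(E_0,|\cdot|,(\nu^a)_{E_0}) \geq c/(1+s(a)^2)$ uniformly in $a\in(-1,1)$ and $n\geq 2$.

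My plan to remove the logarithmic loss would exploit the very special piecewise-affine structure of $\nu^a=\exp(-V_a(x))\,dx$: the derivative $V_a'$ takes only the two values $-(1+a)$ and $1-a$, so for the thickened measure $(\nu^a)_{E,w}$ the density ratio $d(\nu^a)_{E,w}/d(\nu^a)_n$ is essentially controlled by the single integer statistic $k:=\#\{i:y_i+s(a)>0\}$ via $\sum_i V_a'(y_i)=2k-n(1+a)$. A sharp Binomial/local-CLT analysis of $k$ under the conditional product measure should give control, uniform in $s$, of the $7/8$-quantile of this density ratio, so that Theorem~\ref{thm:SG-trans2} would apply with $L$ independent of $s$ (rather than the $\sqrt{1+s^2}$ that comes out of the naive $L^p$-bound underlying Theorem~\ref{thm:SG-any-mean}).

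The main obstacle is precisely the sharp control of the conditional distribution of $k$: the hyperplane constraint $\sum y_i=0$ couples all coordinates and the ordinary CLT must be replaced by a strong local central-limit-type estimate for the joint occupation numbers of the two linear pieces of $V_a$, uniformly in $a\in(-1,1)$. An alternative route I would pursue in parallel uses the symmetry $\nu\circ(-\mathrm{Id})=\nu$ to restrict to $s\geq 0$: for $s$ large, $\nu^{a(s)}$ (after suitable rescaling by $\sigma_{a(s)}\simeq s$) becomes a bounded $L^\infty$-perturbation of a one-sided exponential (Gamma) density on $\{x+s(a)>0\}$, and a Holley--Stroock argument \eqref{eq:HS} on $E_0$ should then transfer the dimension-free bound of Barthe--Wolff \cite{BartheWolffGammaDistributions} for the Gamma canonical ensemble to our setting, yielding $\rho_{SG}\simeq 1/s^2$ and matching the bounded-$s$ regime.
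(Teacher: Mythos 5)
Your upper bound and the bounded-$|s|$ part of the lower bound are exactly as in the paper (Remark~\ref{rem:KLS2} combined with~\eqref{eq:SG-Cramer}, and Theorem~\ref{thm:SG-any-mean} with $L=1$). The genuine gap is the sharp lower bound for large $|s|$, and neither of your two plans would close it. Plan~(a) is not worked out, as you yourself note. Plan~(b) invokes Holley--Stroock~\eqref{eq:HS}, which requires \emph{two-sided} $L^\infty$ control on $\log(d\mu_2/d\mu_1)$; but $(\nu^{a(s)})_{E_0}$ has full support on its hyperplane while the Gamma canonical ensemble is supported on a simplex, so the ratio $d(\nu^{a(s)})_{E_0}/d(\text{Gamma})_{E_0}$ is unbounded (indeed infinite) off the simplex, and Holley--Stroock does not apply. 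Moreover, you never verify the density-ratio bound that would be needed.

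The paper's argument fixes both issues with a structural observation: on $\Delta_{n,s}:=E_s\cap\Real^n_+$ one has $\sum_i|x_i|=\sum_ix_i=ns$, so the density $\exp(-\sum_i|x_i|)$ is \emph{constant} there, i.e.\ $\lambda_{\Delta_{n,s}} = \nu_{E_s}|_{\Delta_{n,s}}/\nu_{E_s}(\Delta_{n,s})$ and $\norm{d\lambda_{\Delta_{n,s}}/d\nu_{E_s}}_{L^\infty}=1/\nu_{E_s}(\Delta_{n,s})$. A direct union-bound computation then shows $\nu_{E_s}(\Delta_{n,s})\geq 1/2$ for $s\geq 1$, uniformly in $n$. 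Since $\lambda_{\Delta_{n,s}}\ll\nu_{E_s}$ with ratio bounded by $2$, and both measures are log-concave, Case~2 of the \emph{one-sided} transference result (Theorem~\ref{thm:old-SG-trans}) applies and transfers the known dimension-free bound $\rho_{SG}(\Delta_{n,s},|\cdot|,\lambda_{\Delta_{n,s}})\geq c/s^2$ (from \cite{BartheWolffGammaDistributions} and scaling) to $\nu_{E_s}$ with no logarithmic loss. The collapse of the density ratio to $1/\nu_{E_s}(\Delta_{n,s})$, rather than something growing with $n$, is the key point your plan is missing, and it is special to the two-sided exponential.
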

\begin{proof}
The upper bound on $\rho_{SG}(E_s,|\cdot|,\nu_{E_s})$ follows e.g. from Remark~\ref{rem:KLS2}. As for the lower bound,
since $V(x) = |x| + \log(2)$ is $1$-Lipschitz on $\Real$, Theorem \ref{thm:SG-any-mean} implies that:
\[
\rho_{SG}(E_s,|\cdot|,\nu_{E_s}) \geq \frac{c}{\brac{(2+|s|) \log(2+|s|)}^2} ~.
\]
Consequently, the asserted lower bound follows when $|s| < C_2$, for any fixed constant $C_2>0$, and it remains to verify it (by symmetry) when $s \geq C_2$.

Observe that $\nu_{E_s}$ has constant density on the simplex $\Delta_{n,s}$, defined as:
\[
\Delta_{n,s} := E_s \cap \Real^n_+ ~,
\]
where $\Real^n_+$ denotes the positive orthant. Denoting by $\lambda_{\Delta_{n,s}}$ the uniform measure on $\Delta_{n,s}$, we conclude that:
\[
\lambda_{\Delta_{n,s}} = \frac{\nu_{E_s}|_{\Delta_{n,s}}}{\nu_{E_s}(\Delta_{n,s})} ~.
\]
It is well-known (see e.g. \cite{BartheWolffGammaDistributions}) that uniformly in $n \geq 2$:
\[
\rho_{SG}(\Delta_{n,1},|\cdot|,\lambda_{\Delta_{n,1}}) \geq c ~,
\]
where $c>0$ is a universal constant, and hence by scaling:
\[
\rho_{SG}(\Delta_{n,s},|\cdot|,\lambda_{\Delta_{n,s}}) \geq \frac{c}{s^2} ~.
\]
We will deduce the desired assertion by transferring this bound on the spectral-gap from $\lambda_{\Delta_{n,s}}$ onto the entire $\nu_{E_s}$. Since both measures are log-concave on $E_s$, we employ the transference principle given by case 2 of Theorem \ref{thm:old-SG-trans}, which requires control over $\norm{d\lambda_{\Delta_{n,s}} / d\nu_{E_s}}_{L^\infty}$. We conclude that the remaining part of the assertion will follow if we show for instance that:
\begin{equation} \label{eq:simplex}
\nu_{E_s}(\Delta_{n,s}) \geq 1/2 \;\;\; \forall n \geq 2 \;\;\; \forall s \geq C_2 > 0 ~.
\end{equation}
This boils down to a simple calculation, which we now verify.

Indeed, denote:
\[
\Lambda(n-1,r) := \set{ x \in \Real^{n-1}_+ ; \sum_{i=1}^{n-1} x_i \leq r}  ~.
\]
Now write:
\[
\frac{\nu_{E_s}(E_s \setminus \Delta_{n,s})}{\nu_{E_s}(\Delta_{n,s})} = \frac{J \frac{1}{2^n} \int_{\Real^{n-1} \setminus \Lambda_{n-1,s n}} \exp\big(- \sum_{i=1}^{n-1} |x_i| - |s n - \sum_{i=1}^{n-1} x_i|\big) dx}{J \frac{1}{2^n} \int_{\Lambda_{n-1,s n}} \exp(- sn) dx} ~,
\]
where $J$ is an appropriate Jacobian factor. It is immediate to verify that $Vol(\Lambda(m,1)) = 1 / m!$, and hence by scaling:
\begin{eqnarray*}
&=&  \frac{(n-1)!}{(sn)^{n-1}} \int_{\Real^{n-1} \setminus \Lambda_{n-1,s n}} \exp\left( s n -  \sum_{i=1}^{n-1} |x_i| - \Big|s n - \sum_{i=1}^{n-1} x_i\Big|\right) dx \\
&=& \frac{(n-1)!}{(sn)^{n-1}} \int_{\Real^{n-1} \setminus \Lambda_{n-1,s n}} \exp\left(- 2 \Big(s n - \sum_{i=1}^{n-1} x_i\Big)_{-} - \sum_{i=1}^{n-1} 2 (x_i)_{-}\right) dx ~, \\
\end{eqnarray*}
where $a_{-} := (|a| - a)/2 = \max(-a,0)$. Since:
\[
\Real^{n-1} \setminus \Lambda_{n-1,s n} \subset \bigcup_{i=1}^{n-1} \set{x_i \leq 0} \cup \set{\sum_{i=1}^{n-1} x_i \geq s n} ~,
\]
it follows by the union bound that:
\[ 
\frac{\nu_{E_s}(E_s \setminus \Delta_{n,s})}{\nu_{E_s}(\Delta_{n,s})} \leq \frac{(n-1)!}{(sn)^{n-1}} \; n \int_0^\infty e^{-2 t}dt = \frac{n!}{2 (sn)^{n-1}} ~.
\]
In particular, the latter ratio is bounded above by $1$ (in fact, $1/2$) whenever e.g. $s \geq 1$, yielding (\ref{eq:simplex}), as desired, thereby concluding the proof. 
\end{proof}

\subsection{Further Remarks}

As witnessed by Theorems \ref{thm:old-SG-trans}, \ref{thm:SG-trans} and \ref{thm:SG-trans2}, there are many possibilities for transferring the spectral-gap estimate from the product measure $\mu_n$ onto the thickened conditioned measure $\mu_{E,w}$. We chose to present above the most convenient possibility for handling the two-sided exponential measure, which was to employ Theorem \ref{thm:SG-trans} coupled with the estimate already obtained in Proposition \ref{prop:conc-trans-model-2}, leading to a dependence on the parameter $D_{1,\Psi_1}(\mu)$. However, several other possibilities, each having its own advantages and disadvantages, are possible. In particular, we mention that it is possible to naively estimate $\mu_n\set{\frac{d\mu_{E,w}}{d\mu_n } \geq t}$ by repeating the argument of Proposition \ref{prop:conc-trans-model-2} and simply using Chebyshev's inequality (instead of Bernstein's Theorem \ref{thm:Bern}), resulting in a dependence on the parameter $D_{1,2} = \norm{V'}_{L^2(\mu)} = \sqrt{\norm{V''}_{L^1(\mu)}}$, instead of the more complicated $D_{1,\Psi_1}(\mu)$ (cf. Remark \ref{rem:D12}). However, this does not lead to a bound on $d_{TV}(\mu_n,\mu_{E,w})$, since Lemma \ref{lem:weak-TV} requires control over $\mu_n\set{(1_{\abs{\pi_D(x)} \leq w} d\mu_{n})/d\mu_{E,w} \geq t}$, to which end one would also need to control $D^\delta_{2}(\mu)$. And controlling $V''$ from above is problematic since this prevents approximating non-smooth densities such as that of the two-sided exponential measure.

Consequently, we also point out another method for directly handling non $C^2$ densities, by proposing an alternative to the second-order Taylor expansion of $V(x + \eps) $ which was crucially used in the proof of Proposition \ref{prop:conc-trans-model}.

\begin{dfn*}
Let $V : \Real \rightarrow \Real$ denote a locally Lipschitz function. Given $\eps \neq 0$, define:
\begin{itemize}
\item $W^\eps (x) = V(x+\eps) - V(x)$.
\item $W^\eps_1(x) = \frac{1 - \exp(-W^\eps(x))}{\eps}$.
\item $W^\eps_2(x) = \frac{\exp(-W^\eps(x)) - 1 + W^\eps(x)}{\eps^2/2}$.
\end{itemize}
\end{dfn*}

It is immediate to verify that:
\[
V(x + \eps) - V(x) = \eps W^\eps_1(x) + \frac{\eps^2}{2} W^\eps_2(x) ~,
\]
and that:
\[
\int W^\eps_1(x) \exp(-V(x)) dx = 0 \;\;\; \forall \eps \neq 0 ~.
\]
It is also easy to check that if $V$ is Lipschitz with constant $L$, then:
\begin{equation} \label{eq:global-L}
\forall \abs{\eps} \in (0,1/L] \;\;\; \forall x \in \Real \;\;\;\;\; \abs{ W^\eps_1(x) } \leq C L  ~,~ \abs{ W^\eps_2(x)} \leq C L^2 ~,
\end{equation}
for some universal constant $C>0$. If $V$ is only locally Lipschitz, then we have the less useful:
\begin{equation} \label{eq:LexpL}
\forall \abs{\eps} \in (0,1] \;\;\; \forall x \in \Real \;\;\;\;\; \abs{ W^\eps_1(x) } \leq L(x) \exp(L(x))  ~,~ \abs{ W^\eps_2(x)} \leq 2 L(x)^2 \exp(L(x)) ~,
\end{equation}
where $L(x) := \sup_{\abs{\eps} \in (0,1]} \frac{\abs{V(x+\eps) - V(x)}}{\abs{\eps}}$.

It is now immediate to verify that we may use the above decomposition instead of the second-order Taylor expansion in the proof of Proposition \ref{prop:conc-trans-model}, with the constants $D_{1,\Psi_1}$, $D^\delta_{2,\Psi_1}$ and $D_2$ replaced by the following ones (respectively):
\[
D^{W,\delta}_{1,\Psi_1} := \sup_{|\eps| \in (0,\delta]} \norm{W^\eps_1(X_0)}_{L_{\Psi_1}} ~,~ D^{W,\delta}_{2,\Psi_1} := \sup_{|\eps| \in (0,\delta]} \norm{W^\eps_2(X_0)}_{L_{\Psi_1}} ~,~
D^{W,\delta}_2 := \sup_{|\eps| \in (0,\delta]} \E(|W^\eps_2(X_0)|) ~,
\]
where as usual $X_0$ is distributed with law $\mu = \exp(-V(x)) dx$. Since we need our bounds to hold uniformly on $|\eps| \in (0,\delta]$, and since the ones given by (\ref{eq:LexpL}) are very bad, it is most convenient to consider the simplest case of a potential $V$ having global Lipschitz constant $L>0$, in which case (\ref{eq:global-L}) implies that:
\[
\forall \delta \leq 1/L \;\;\;\;\; D^{W,\delta}_{1,\Psi_1} \leq C L ~,~ D^{W,\delta}_{2,\Psi_1} \leq C L^2 ~,~ D^{W,\delta}_2 \leq C L^2 ~.
\]
The rest of the proof of Proposition \ref{prop:conc-trans-model} remains unchanged.

\section*{Appendix}
\renewcommand{\thesection}{A}
\setcounter{thm}{0}
\setcounter{equation}{0}  \setcounter{subsection}{0}

\subsection{Useful facts about log-Sobolev and spectral-gap inequalities}

\begin{thm}[Bakry--\'Emery \cite{BakryEmery}]
Let $(M,g)$ be a complete smooth oriented Riemannian manifold, equipped with its geodesic distance metric. Let $\mu = \exp(-\psi(x))\, d\vol_M(x)$ denote a probability measure on $(M,g)$, with $\psi \in C^2(M)$, and assume that as tensor fields:
\[
\exists \rho > 0 \;\;\;\;   \mathrm{Ric}_g +  \mathrm{Hess}_g \psi \geq \rho g ~.
\]
Then $(M,g,\mu)$ satisfies $LSI(\rho)$.  
\end{thm}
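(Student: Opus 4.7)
The plan is to use the classical $\Gamma_2$-calculus / semigroup method of Bakry and \'Emery. First, I would introduce the weighted Laplacian $L := \Delta_g - \nabla \psi \cdot \nabla$, which is the natural symmetric operator on $L^2(\mu)$ associated with the Dirichlet form $\mathcal{E}(f,f) = \int |\nabla f|^2 d\mu$: indeed, an integration by parts against $e^{-\psi}\, d\vol_M$ yields $\int f \, Lg \, d\mu = - \int \nabla f \cdot \nabla g \, d\mu$. The corresponding Markov semigroup $(P_t)_{t \geq 0}$, formally $P_t = e^{tL}$, is then $\mu$-symmetric, mass-preserving, and (because $\mu$ is a probability measure with positive density on a complete manifold) ergodic: $P_t f \to \int f\, d\mu$ as $t \to \infty$ for nice $f$.

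Next, the key computational input is Bochner's formula, which for $L$ reads
\[
\tfrac{1}{2} L |\nabla f|^2 - \nabla f \cdot \nabla (Lf) = \norm{\mathrm{Hess}_g f}_{HS}^2 + (\mathrm{Ric}_g + \mathrm{Hess}_g \psi)(\nabla f, \nabla f) ~.
\]
Under our hypothesis, the right-hand side is bounded below by $\rho |\nabla f|^2$, which is the $CD(\rho,\infty)$ curvature-dimension condition $\Gamma_2 \geq \rho \Gamma$. Differentiating the interpolation $s \mapsto P_s(|\nabla P_{t-s} f|^2)$ and applying this pointwise inequality gives the gradient commutation estimate $|\nabla P_t f| \leq e^{-\rho t} P_t(|\nabla f|)$ for every $t \geq 0$ and every sufficiently regular $f$.

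From here, the log-Sobolev inequality drops out of the standard Fisher information / de~Bruijn identity argument. For a smooth positive $f$ with $\int f\, d\mu = 1$, I would compute
\[
\frac{d}{dt} \Ent_\mu(P_t f) = - I(P_t f) ~, \qquad I(h) := \int \frac{|\nabla h|^2}{h} \, d\mu ~,
\]
so that $\Ent_\mu(f) = \int_0^\infty I(P_t f)\, dt$. Applying the Cauchy--Schwarz inequality for the Markov operator $P_t$ in the form $(P_t |\nabla f|)^2 \leq P_t(|\nabla f|^2 / f) \cdot P_t f$, and then the gradient commutation estimate, yields
\[
\frac{|\nabla P_t f|^2}{P_t f} \leq e^{-2\rho t}\, P_t\!\brac{\frac{|\nabla f|^2}{f}} ~,
\]
which after integration against the invariant measure $\mu$ gives $I(P_t f) \leq e^{-2\rho t} I(f)$. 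Integrating in $t$ produces $\Ent_\mu(f) \leq I(f)/(2\rho)$, and substituting $f = g^2$ yields $\Ent_\mu(g^2) \leq (2/\rho) \int |\nabla g|^2 d\mu$, i.e. $LSI(\rho)$.

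The main obstacle is not the conceptual scheme, but the analytic technicalities behind it: one must justify that the semigroup is well-defined on a complete (not necessarily compact) manifold, that $P_t f$ is smooth enough to support the Bochner identity and the $s$-differentiation on $[0,t]$, that $\Ent_\mu(P_t f) \to 0$ as $t \to \infty$, and that all boundary terms in the integrations by parts vanish. These points are standard but nontrivial; they are typically handled by approximating $f$ by compactly supported smooth truncations and invoking the essential self-adjointness of $L$ under the completeness assumption, or by working from the start in the abstract diffusion-generator framework of \cite{BakryEmery}. Once these regularity issues are settled, the rest of the proof is the four-line semigroup calculation outlined above.
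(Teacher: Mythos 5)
The paper does not prove this theorem: it is stated in the Appendix as a classical result (with the citation \cite{BakryEmery}) and used throughout as a black box, so there is no ``paper's own proof'' to compare against. Your sketch is the standard semigroup ($\Gamma_2$-calculus) proof of the Bakry--\'Emery criterion, and the overall scheme is correct. One imprecision worth flagging: differentiating $s \mapsto P_s\left(|\nabla P_{t-s}f|^2\right)$ together with $\Gamma_2 \geq \rho\,\Gamma$ yields the \emph{weak} gradient estimate $|\nabla P_t f|^2 \leq e^{-2\rho t} P_t\left(|\nabla f|^2\right)$, not the \emph{strong} commutation $|\nabla P_t f| \leq e^{-\rho t} P_t\left(|\nabla f|\right)$ that your Cauchy--Schwarz step actually requires (the weak estimate cannot be compared with $P_t(|\nabla f|^2/f)$ in the needed direction). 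To obtain the strong bound one should interpolate $s \mapsto P_s\left(\sqrt{\Gamma(P_{t-s}f) + \eps}\,\right)$, exploit the diffusion (chain-rule) property of $L$, and then let $\eps \to 0$. With that repair the rest of your argument --- the de~Bruijn identity, the exponential decay $I(P_t f) \leq e^{-2\rho t} I(f)$, integration in $t$, and the substitution $f = g^2$ --- gives $\Ent_\mu(g^2) \leq \frac{2}{\rho}\int|\nabla g|^2\,d\mu$, which is $LSI(\rho)$ in the paper's normalization. Your list of regularity caveats (essential self-adjointness under completeness, ergodicity of $P_t$, vanishing boundary terms) is the correct and standard set of issues to settle in the non-compact case.
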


\begin{thm}[Herbst Argument (e.g. \cite{Ledoux-Book})]
Assume that $(\Omega,d,\mu)$ satisfies $LSI(\rho)$. Then the following Laplace-functional inequality holds:
\[
 \int \exp(\lambda f) d\mu\leq \exp(\lambda^2/(2\rho)) \;\;\; \forall \lambda \geq 0 \;\; \forall \text{ $1$-Lipschitz $f$ s.t. } \int f d\mu = 0 ~.
\]
\end{thm}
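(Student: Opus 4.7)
The plan is to implement the classical Herbst argument, which proceeds by applying the log-Sobolev inequality to $g = \exp(\lambda f/2)$ and converting the resulting functional inequality into a first-order ODE for the Laplace transform. Fix a $1$-Lipschitz $f$ with $\int f \, d\mu = 0$. By a standard truncation/approximation, it suffices to treat the case when $f$ is bounded, so that the Laplace transform $H(\lambda) := \int \exp(\lambda f)\, d\mu$ is finite, smooth in $\lambda$, and differentiation under the integral sign is justified; the general case follows by monotone convergence on $f_N := \max(-N,\min(N,f))$, using that the $L^1$ centering of $f_N$ is controlled and tends to $0$.

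Next, I would plug $g^2 = \exp(\lambda f)$ into the LSI~(\ref{eq:LS-inq-def}). Since $|\nabla g|^2 = \tfrac{\lambda^2}{4} |\nabla f|^2 \exp(\lambda f) \leq \tfrac{\lambda^2}{4} \exp(\lambda f)$ by the Lipschitz hypothesis (and the very definition of $|\nabla f|$ given in the introduction), the right-hand side of the LSI is bounded by $\tfrac{\lambda^2}{4} H(\lambda)$. The entropy on the left unfolds as
\[
\Ent_\mu(\exp(\lambda f)) = \lambda H'(\lambda) - H(\lambda) \log H(\lambda),
\]
so the LSI yields the differential inequality
\[
\frac{\rho}{2}\bigl(\lambda H'(\lambda) - H(\lambda)\log H(\lambda)\bigr) \leq \frac{\lambda^2}{4} H(\lambda) \qquad \forall \lambda > 0 .
\]

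The final step is the standard change of variable $K(\lambda) := \lambda^{-1}\log H(\lambda)$ for $\lambda > 0$. A direct computation gives
\[
K'(\lambda) = \frac{H'(\lambda)}{\lambda H(\lambda)} - \frac{\log H(\lambda)}{\lambda^2},
\]
which is precisely (up to the factor $\rho/2$) the left-hand side of the above inequality after dividing by $\lambda^2 H(\lambda)$. We therefore obtain $K'(\lambda) \leq 1/(2\rho)$. To integrate this from $0$, I would check the initial condition $\lim_{\lambda \to 0^+} K(\lambda) = 0$: since $H(0) = 1$ and $H'(0) = \int f \, d\mu = 0$, we have $\log H(\lambda) = O(\lambda^2)$ near $0$, hence $K(\lambda) \to 0$. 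Integrating then gives $K(\lambda) \leq \lambda/(2\rho)$, i.e. $\log H(\lambda) \leq \lambda^2/(2\rho)$, which is the claimed bound.

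The only mild obstacle is the justification of smoothness and of the initial behaviour $K(\lambda) \to 0$, which is handled by the truncation argument outlined in the first paragraph; the algebraic core of the proof is the observation that the LSI for $\exp(\lambda f/2)$ is \emph{exactly} the Grönwall-type inequality $K'(\lambda) \leq 1/(2\rho)$ after the logarithmic substitution.
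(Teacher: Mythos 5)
Your proof is correct and is exactly the standard Herbst argument as found in the cited reference \cite{Ledoux-Book}; the paper itself states this theorem in the Appendix without proof, so there is no in-paper argument to compare against. The algebraic core is right: plugging $g = \exp(\lambda f/2)$ into the LSI, bounding $|\nabla g|^2 \le \tfrac{\lambda^2}{4}\exp(\lambda f)$ via the Lipschitz hypothesis, identifying $\Ent_\mu(\exp(\lambda f)) = \lambda H'(\lambda) - H(\lambda)\log H(\lambda)$, and recasting the resulting differential inequality as $K'(\lambda) \le 1/(2\rho)$ with $K(\lambda) = \lambda^{-1}\log H(\lambda)$ and $K(0^+)=0$, then integrating.

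One small technical remark on the truncation step: with the two-sided truncation $f_N = \max(-N,\min(N,f))$ the family $e^{\lambda f_N}$ is not monotone in $N$, so ``monotone convergence'' is not quite the right invocation. Either pass to the limit via Fatou's lemma (using $m_N := \int f_N\,d\mu \to 0$ by dominated convergence and $\int e^{\lambda f_N}\,d\mu \le e^{\lambda m_N}e^{\lambda^2/(2\rho)}$), or use the one-sided truncation $f_N = \min(N,f)$, which for $\lambda \ge 0$ is still $1$-Lipschitz, makes $e^{\lambda f_N}$ bounded, and does give monotone convergence since $f_N \uparrow f$. Either fix is routine; the argument is otherwise complete.
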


\begin{thm}[Tensorization (e.g. \cite{Ledoux-Book})]
Let $(\Omega_i,d_i,\mu_i)$, $i=1,2$ be Riemannian manifolds, equipped with their geodesic distance and 
an absolutely continuous probability measure. Observe that in this case, the geodesic distance on the product
manifols is given by the $\ell_2$ product metric
\[
\forall (x_i,y_i) \in \Omega_1 \times \Omega_2 \;\;\;\; d_1 \otimes d_2 ((x_1,y_1) , (x_2,y_2)) := \sqrt{ d_1(x_1,x_2) ^2 + d_2(y_1,y_2)^2}  ~.
 \]
Assume that $(\Omega_i,d_i,\mu_i)$ satisfies $LSI(\rho_i)$, $i=1,2$. Then the product measure-metric space $(\Omega_1 \times \Omega_2,d_1 \otimes d_2 , \mu_1 \otimes \mu_2)$ satisfies $LSI(\min(\rho_1,\rho_2))$. 
 The same statement holds for the spectral-gap inequality, with $LSI$ replaced by $SG$ in all occurrences above. 
\end{thm}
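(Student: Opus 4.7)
The plan is the classical subadditivity argument, implemented so that the Riemannian product structure gives the correct decomposition of the gradient. For the log-Sobolev half, let $f \in \F(\Omega_1 \times \Omega_2, d_1 \otimes d_2)$ and denote by $f(\cdot, y)$ and $f(x, \cdot)$ its restrictions to horizontal and vertical fibers. First apply the measure-theoretic subadditivity of entropy on a product,
\[
\Ent_{\mu_1 \otimes \mu_2}(f^2) \leq \int \Ent_{\mu_1}\bigl(f(\cdot,y)^2\bigr)\, d\mu_2(y) + \int \Ent_{\mu_2}\bigl(f(x,\cdot)^2\bigr)\, d\mu_1(x),
\]
which follows from the Gibbs variational formula $\Ent_\mu(g) = \sup\{\int gh\, d\mu : \int e^h\, d\mu \leq 1\}$ combined with Fubini (or equivalently from the entropy chain rule together with Jensen). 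Since restrictions of $\F$-functions to $\mu_2$-almost every fiber remain in $\F$ of that fiber (and symmetrically), applying $LSI(\rho_i)$ to each section gives
\[
\frac{\min(\rho_1,\rho_2)}{2}\, \Ent_{\mu_1 \otimes \mu_2}(f^2) \leq \int \bigl(|\nabla_x f|^2(x,y) + |\nabla_y f|^2(x,y)\bigr)\, d(\mu_1 \otimes \mu_2)(x,y),
\]
where $|\nabla_x f|$ and $|\nabla_y f|$ denote the metric slopes of the corresponding fiber restrictions.

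The second step is the Pythagorean identity $|\nabla f|^2 = |\nabla_x f|^2 + |\nabla_y f|^2$ on the Riemannian product with metric $g_1 \oplus g_2$. This is tautological for smooth $f$, since the product metric is block-diagonal and the Riemannian gradient splits as $(\operatorname{grad}_{g_1} f, \operatorname{grad}_{g_2} f)$. For a merely Lipschitz $f \in \F$ the pointwise metric-slope identity can fail (as for $(x,y)\mapsto \max(|x|,|y|)$ at the origin), so one passes through smooth approximation: regularize $f$ on the Riemannian product (heat semigroup or mollifier convolution) to obtain smooth $f_\eps$ with $\Ent_{\mu_1 \otimes \mu_2}(f_\eps^2) \to \Ent_{\mu_1 \otimes \mu_2}(f^2)$ and $\int |\nabla f_\eps|^2\, d(\mu_1 \otimes \mu_2) \to \int |\nabla f|^2\, d(\mu_1 \otimes \mu_2)$, which is the standard density of smooth functions in the Dirichlet form attached to the metric slope. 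This produces $LSI(\min(\rho_1,\rho_2))$ on the product.

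The spectral-gap case follows the same blueprint with $\Ent$ replaced by $\Var$: the orthogonal $L^2(\mu_1 \otimes \mu_2)$ decomposition
\[
f - \textstyle\int f\, d(\mu_1 \otimes \mu_2) = \Bigl(f - \textstyle\int f(\cdot,y)\, d\mu_1\Bigr) + \Bigl(\textstyle\int f(\cdot,y)\, d\mu_1 - \textstyle\int f\, d(\mu_1 \otimes \mu_2)\Bigr)
\]
yields the corresponding subadditivity of variance; then $SG(\rho_i)$ on each factor plus the same gradient decomposition give $SG(\min(\rho_1,\rho_2))$. The main (standard) subtlety throughout is the smooth-to-Lipschitz step: while $|\nabla f|^2 \geq \max(|\nabla_x f|^2, |\nabla_y f|^2)$ is immediate from the definition of the metric slope, the full additive identity needed to close the proof is guaranteed only for smooth $f$, and its extension to the class $\F$ is precisely what the Dirichlet-form density argument supplies.
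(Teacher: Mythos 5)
Your argument is correct and is the classical tensorization proof that the paper references via \cite{Ledoux-Book} without reproducing it: entropy (respectively variance) subadditivity on the product, fiber-wise application of the one-factor inequality, and the splitting of the gradient along the $\ell_2$ product metric. On the one subtlety you correctly flag --- the possible failure of $|\nabla f|^2 = |\nabla_x f|^2 + |\nabla_y f|^2$ for merely Lipschitz $f$ --- note that under the hypotheses of the statement (Riemannian manifolds and absolutely continuous $\mu_i$) Rademacher's theorem already yields that every $f \in \F$ is differentiable $\mu_1 \otimes \mu_2$-a.e., and at every point of differentiability the metric slope agrees with the Riemannian norm of the differential, so the Pythagorean identity holds $\mu_1 \otimes \mu_2$-a.e.; this closes the argument directly, without invoking density of smooth functions in the Dirichlet form, though your smooth-approximation route is also a valid (if slightly heavier) way to justify the same step.
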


\subsection{Useful facts about log-concave measures}

\begin{thm} \label{thm:log-concave}
Let $\mu = f(x) dx$ denote a log-concave probability measure on $\Real$. Then:
\begin{enumerate}
\item $M_q(\mu)^{\frac{1}{q}} \leq C \frac{q}{p} M_p(\mu)^{\frac{1}{p}}$ for all $1 \leq p \leq q < \infty$.
\item $c_1 \leq \rho_{SG}(\Real,|\cdot|,\mu) \Var(\mu) \leq c_2$. 
\item If the barycenter of $\mu$ is at the origin then $f(0) \geq \norm{f}_{L^\infty} / e$.  
\item $c_1 \leq \norm{f}^2_{L^\infty} \Var(\mu) \leq c_2$. 
\end{enumerate}
Here $C,c_1,c_2 > 0$ are universal numeric constants. 
\end{thm}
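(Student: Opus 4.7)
The theorem collects four classical facts about one-dimensional log-concave measures. I would prove them in the order $(3), (1), (4), (2)$, as later items build on earlier ones.

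For $(3)$, the one-dimensional case of Fradelizi's inequality, I argue directly. If $\norm{f}_{L^\infty}$ is attained at some point $x_0$, we may assume (by reflection) $x_0 \geq 0$. Setting $r := f(x_0)/f(0) \geq 1$, the concavity of $\log f$ yields $f(x) \geq f(0) r^{x/x_0}$ for $x \in [0,x_0]$ and the reverse inequality $f(x) \leq f(0) r^{x/x_0}$ for $x \leq 0$. Substituting these pointwise bounds into $\int f \, dx = 1$ and $\int x f(x) \, dx = 0$ and eliminating $f(0)$, a short algebraic computation yields $r \leq e$, i.e.\ $f(0) \geq \norm{f}_{L^\infty}/e$.

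For $(1)$, I invoke Borell's lemma, which asserts that any log-concave probability measure $\mu$ on $\Real$ satisfies an exponential tail bound $\mu(\set{|X| \geq t R}) \leq C_0 e^{-t}$ for $t \geq 1$, whenever $R$ is a scale with $\mu([-R,R]) \geq 3/4$. By Markov's inequality one can take $R = (4 M_p(\mu))^{1/p}$ for any fixed $p \geq 1$. Integrating this tail bound against $q|x|^{q-1}\, dx$ then gives $M_q(\mu)^{1/q} \leq C(q/p) M_p(\mu)^{1/p}$, as claimed; scale invariance of the inequality allows us to normalize $M_p(\mu)^{1/p} = 1$ during this step.

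For $(4)$ and the easier half of $(2)$, I combine $(3)$ with exponential comparisons. By translation assume $\mu$ is centered; then $(3)$ gives $f(0) \simeq \norm{f}_{L^\infty}$. Log-concavity forces exponential upper bounds on $f$ on each half-line at rates comparable to $f(0)$, which together with $\int f = 1$ and the tail estimate from Borell's lemma pin down $\Var(\mu) \simeq 1/f(0)^2 \simeq 1/\norm{f}_{L^\infty}^2$, giving $(4)$. For $(2)$, the upper bound $\rho_{SG}(\Real,|\cdot|,\mu) \cdot \Var(\mu) \leq 1$ follows by testing (\ref{eq:SG-inq-def}) on the identity function $x \mapsto x$. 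For the matching lower bound, I would use Bobkov's one-dimensional isoperimetric inequality for log-concave measures, which gives the Cheeger-type estimate $h(\mu) \geq c \norm{f}_{L^\infty}$ for the Cheeger constant, and combine this with Cheeger's inequality $\rho_{SG} \geq h(\mu)^2/4$ and item $(4)$ to conclude $\rho_{SG}(\Real,|\cdot|,\mu) \Var(\mu) \geq c'$.

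The main obstacle is Bobkov's isoperimetric estimate used in $(2)$. Its proof exploits the fact that on each side of a median, $f$ is log-concave and unimodal, so the optimal isoperimetric sets among those of prescribed $\mu$-measure are half-lines, and one then explicitly minimizes $f(t)/\min(F(t),1-F(t))$ over $t \in \Real$, where $F$ is the distribution function of $\mu$. All the remaining steps are short consequences of log-concavity together with Borell's lemma, which I would quote from the standard literature rather than reproduce in detail.
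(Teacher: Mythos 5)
Your proposal is, in substance, a reconstruction of the arguments behind the references the paper cites: the paper's proof of this theorem is nothing more than a list of citations (Berwald or Borell for item (1), Bobkov for (2), Fradelizi for (3), Milman--Pajor / Klartag / Paouris for (4)). Your direct argument for (3) is exactly Fradelizi's; note that in fact only the barycenter condition $\int x f(x)\,dx=0$ is used, not $\int f=1$, since the normalization $f(0)$ cancels and one is left with $\int_{-\infty}^{x_0} x\,r^{x/x_0}\,dx\le 0$, which forces $\log r\le 1$. Your routes to (4) and to the lower bound in (2) via $h(\mu)\gtrsim\norm{f}_{L^\infty}$ and Cheeger's inequality match Bobkov's approach, and the upper bound in (2) by testing on $x\mapsto x$ is the standard observation. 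One small caveat on (4): what decays exponentially at rate $\gtrsim f(0)$ is the survival function $t\mapsto\mu(X>t)$ (log-concave, with log-derivative $-f(0)/\mu(X>0)\lesssim -f(0)$ at $t=0$), not the density $f$ itself, whose decay rate is unconstrained; the argument should be phrased through the survival function.

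The one genuine gap is in item (1). Fixing a single scale $R\lesssim M_p^{1/p}$ by Markov and integrating Borell's tail bound $\mu(|X|>tR)\le C_0\,e^{-ct}$ against $q|x|^{q-1}\,dx$ yields $M_q^{1/q}\le C\,q\,M_p^{1/p}$, \emph{not} the stated $C\,(q/p)\,M_p^{1/p}$. The factor $1/p$ cannot come out of this computation: Markov bounds $R$ by $M_p^{1/p}$ only from above, and the two can genuinely differ by a factor of order $p$ (compare the uniform law, where $R\simeq M_p^{1/p}$, with the exponential law, where $R\simeq M_p^{1/p}/p$); normalizing $M_p^{1/p}=1$ fixes the scale but does not sharpen the bound on $R$. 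The extra $1/p$ in the theorem is Berwald's sharp constant $\Gamma(q+1)^{1/q}/\Gamma(p+1)^{1/p}\simeq q/p$, whose proof uses the monotonicity of $p\mapsto M_p^{1/p}/\Gamma(p+1)^{1/p}$ for log-concave laws rather than a single tail estimate. This does not affect the paper: the result is only invoked with small fixed exponents (e.g.\ $p=2$, $q=3$ in the proof of Lemma \ref{lem:weakly-Gaussian}), where $Cq$ and $Cq/p$ are both universal constants, so the weaker bound you actually derive would suffice for all downstream applications.
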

\begin{proof}
The first assertion is a well-known Kahane--Khintchine inequality, which entails a reverse H\"{o}lder inequality for moments of linear (and more generally, homogeneous convex) functionals on the class of log-concave measures (see Berwald \cite{BerwaldMomentComparison}), or deduce this from Borell's lemma \cite{Borell-logconcave} as in \cite[Appendix III]{Milman-Schechtman-Book}). Note that the barycenter of $\mu$ may not be at the origin, but this is not required for obtaining the asserted reverse H\"{o}lder inequality. The second assertion is due to Bobkov \cite{BobkovGaussianIsoLogSobEquivalent}.
The third assertion may be found in \cite{FradeliziCentroid}, and the fourth one may be found in \cite{Milman-Pajor-LK} when $f$ is even and in \cite[Lemmas 2.5 and 2.6]{KlartagPerturbationsWithBoundedLK} or \cite[Lemma 3.2 and (3.12)]{PaourisSmallBall} in the general case.
\end{proof}

\subsection{Order Two Sub-Gaussian Chaos}

Let us now present the proof of Theorem \ref{thm:Latala}, communicated to us by Rafal Lata{\l}a. We greatly thank him for allowing us to include it here. It will be convenient to employ the following:

\begin{dfn*}
A real-valued random variable $X$ is called $\alpha$-sub-Gaussian ($\alpha > 0$), if $\norm{X}_{2k} \leq \alpha \norm{G}_{2k}$ for all integers $k \geq 1$, where $G$ denotes a standard Gaussian random-variable, and $\norm{Y}_{p} := (\E |Y|^p)^{1/p}$. 
\end{dfn*}
An elementary calculation verifies that there is a universal constant $C>0$ with:
\[
\E(\exp(\lambda X)) \leq \exp(\lambda^2 / (2 \rho)) \;\;\; \forall \lambda \in \Real \;\; \Rightarrow \;\;  \text{$X$ is $C / \sqrt{\rho}$-sub-Gaussian} ~.
\]
 Hence Theorem \ref{thm:Latala} follows from (and in fact is equivalent to) the following:

\begin{thm}[Lata{\l}a] \label{thm:Latala2}
Let $\alpha > 0$,  and let $X_1,\ldots,X_n$ denote a sequence of independent random-variables on $\Real$, so that for each $i=1,\ldots,n$, $\E(X_i) = 0$ and $X_i$ is $\alpha$-sub-Gaussian. There exists a universal constant $c > 0$ so that for any integer $n \geq 1$ and $n$ by $n$ symmetric matrix $A = \set{a_{i,j}}$ with zero diagonal:
\[
\P\brac{ \abs{\sum_{i,j=1}^n a_{i,j} X_i X_j} \geq t } \leq 2 \exp\brac{-c \min\brac{\frac{t^2}{\alpha^4 \norm{A}_{HS}^2} , \frac{t}{\alpha^2 \norm{A}_{op}}}} \;\;\; \forall t > 0 ~.
\]
\end{thm}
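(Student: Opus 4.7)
}
The strategy is to establish the moment bound
\[
\|Z_A\|_{2k} \leq C\alpha^2\brac{\sqrt{k}\,\norm{A}_{HS} + k\,\norm{A}_{op}}, \qquad k\geq 1,
\]
where $Z_A := \sum_{i,j} a_{ij} X_i X_j$, and then to extract the tail estimate via Markov's inequality: $\P(|Z_A|\geq e t)\leq \|Z_A\|_{2k}^{2k} t^{-2k}$, optimized in $k$. Setting the right-hand side equal to $t$ produces exactly the Gaussian/exponential dichotomy claimed in the theorem.

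\paragraph{Reduction to an iterated chaos.} Since $A$ is symmetric with vanishing diagonal, the decoupling inequality of de la Pe\~na--Montgomery-Smith gives $\|Z_A\|_p\leq C_0\|\tilde Z_A\|_p$ for every $p\geq 1$, where $\tilde Z_A := \sum_{i,j} a_{ij}X_i X_j'$ and $(X_j')$ is an independent copy of $(X_j)$. Conditioning on $X'$, the sum $\tilde Z_A=\sum_i (AX')_i X_i$ is a linear combination of independent, mean-zero, $\alpha$-sub-Gaussian variables, hence itself $\alpha|AX'|$-sub-Gaussian; this yields
\[
\|\tilde Z_A\|_{2k}^2 \leq C_1\alpha^2 k \cdot \|\sscalar{X',A^2 X'}\|_k.
\]
So the two-sided chaos in $A$ is traded for a one-sided chaos in the matrix $B:=A^2$. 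Writing $\sscalar{X',BX'}$ as the sum of its expectation, a centered diagonal part $\sum_i B_{ii}(X_i'^2-\E X_i'^2)$, and an off-diagonal part $Z_{B_\text{off}}$, one controls each term: the expectation is at most $\alpha^2\norm{A}_{HS}^2$ since $\sum_i B_{ii}=\norm{A}_{HS}^2$; the centered diagonal part is handled by a scalar Bernstein moment inequality applied to the sub-exponential variables $X_i'^2-\E X_i'^2$ (each with $\psi_1$-norm $\lesssim \alpha^2$), using that the weight vector $c_i=B_{ii}=\sum_j a_{ij}^2$ satisfies $|c|_\infty\leq \norm{A}_{op}^2$ and $|c|_2\leq \norm{A}_{op}\norm{A}_{HS}$; and $B_\text{off}$ is again symmetric with zero diagonal, with $\norm{B_\text{off}}_{HS}\leq \norm{A}_{op}\norm{A}_{HS}$ and $\norm{B_\text{off}}_{op}\leq 2\norm{A}_{op}^2$.

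\paragraph{Closing by induction and main obstacle.} The moment bound is proved by induction on $k$. The base case $k=1$ is the direct computation $\E Z_A^2\leq 2\alpha^4\norm{A}_{HS}^2$ (using centering, independence, and the vanishing diagonal). For the inductive step, the hypothesis applied to $B_\text{off}$ gives $\|Z_{B_\text{off}}\|_k\lesssim \alpha^2(\sqrt{k}\,\norm{A}_{op}\norm{A}_{HS}+k\norm{A}_{op}^2)$. Combining the three contributions,
\[
\|\tilde Z_A\|_{2k}^2 \lesssim \alpha^4\brac{k\norm{A}_{HS}^2 + k^{3/2}\norm{A}_{op}\norm{A}_{HS} + k^2\norm{A}_{op}^2} \leq C'\alpha^4\brac{\sqrt{k}\norm{A}_{HS}+k\norm{A}_{op}}^2,
\]
closing the induction provided the universal constant $C$ is taken large enough to absorb the constants from decoupling, the conditional sub-Gaussian estimate, and Bernstein. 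The heart of the argument--and the main obstacle--is this self-similarity: the passage $A\mapsto A^2$ in the conditioning step multiplies \emph{both} structural norms exactly by $\norm{A}_{op}$, which is precisely what makes the target expression $\sqrt{k}\norm{A}_{HS}+k\norm{A}_{op}$ reproduce itself under the recursion. A secondary subtlety is the careful bookkeeping needed to show that the induction can be closed with a single universal constant; this ultimately amounts to checking that the inductive relation is of the schematic form $C \leq C_*+C_*\sqrt{C}$, which admits a bounded solution.
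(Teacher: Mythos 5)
Your proposal is correct in spirit and takes a genuinely different route from the paper. After the shared initial step (decoupling via de la Pe\~na--Montgomery-Smith and using that a fixed linear combination of independent $\alpha$-sub-Gaussian variables is again sub-Gaussian --- note this latter fact, which you dispatch with a single ``hence,'' is not automatic under the paper's \emph{moment} definition of sub-Gaussianity and is exactly the content of the paper's Lemma, proved by symmetrization plus the contraction principle), the two arguments diverge. The paper's strategy is to continue the conditioning one more step, replacing first $X$ and then $X'$ by Gaussian vectors, re-coupling, and then invoking the known Hanson--Wright moment bound for \emph{Gaussian} chaos of order two as a black box; the final tail estimate then follows from Chebyshev plus optimization over $p$. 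Your strategy instead proves the sub-Gaussian Hanson--Wright moment bound directly by induction on the moment level, exploiting the key self-similarity that the conditioning step maps $A$ to $A^2$ while $\|A^2\|_{HS}\le\|A\|_{op}\|A\|_{HS}$ and $\|A^2\|_{op}=\|A\|_{op}^2$, so the target expression $\sqrt{k}\|A\|_{HS}+k\|A\|_{op}$ reproduces itself. The paper's reduction is shorter and modular but depends on an external Gaussian result; your induction is more self-contained (needing only scalar Bernstein and the linear-combination Lemma), at the cost of the bookkeeping you flag --- fixing the base case, passing between $\|\cdot\|_k$ and even moments $\|\cdot\|_{2\lceil k/2\rceil}$, and closing the recursion for the constant. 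Those are all routine; I see no essential gap.
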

 
For the proof, we require the following two intermediate steps:
\begin{lem}
Let $X_1,\ldots,X_n$ be as in Theorem \ref{thm:Latala2} and let $G_1,\ldots,G_n$ denote independent standard Gaussian random-variables. 
Then for any $(a_1,\ldots,a_n) \in \Real^n$ and $p \geq 2$:
\[
\norm{\sum_{i=1}^n a_i X_i}_{p} \leq 4 \alpha \norm{\sum_{i=1}^n a_i G_i}_{p} ~.
\]
\end{lem}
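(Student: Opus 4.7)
The plan is a three-step argument: symmetrize, expand moments in the even-integer case, and then pass from even integers to arbitrary $p\ge 2$.

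First I would symmetrize. Let $X_i'$ be independent copies of $X_i$ and set $Y_i := X_i - X_i'$. Since $\E X_i' = 0$, Jensen's inequality applied to $\sum a_i X_i = \E[\sum a_i (X_i - X_i') \mid X_1,\dots,X_n]$ gives $\norm{\sum a_i X_i}_p \le \norm{\sum a_i Y_i}_p$. Moreover each $Y_i$ is symmetric, and a triangle inequality plus the sub-Gaussian hypothesis yields $\norm{Y_i}_{2k} \le 2\norm{X_i}_{2k} \le 2\alpha\norm{G}_{2k}$, so $Y_i$ is $(2\alpha)$-sub-Gaussian.

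Next I would prove the desired comparison for the symmetric variables $Y_i$ and even integer exponents $p = 2k$. Expanding
\[
\E\Bigl(\sum_i a_i Y_i\Bigr)^{2k} = \sum_{i_1,\dots,i_{2k}} a_{i_1}\cdots a_{i_{2k}} \prod_m \E Y_m^{c_m},
\]
where $c_m$ is the multiplicity of index $m$, symmetry of $Y_m$ kills every term in which some $c_m$ is odd. Writing $c_m = 2\ell_m$ with $\sum_m \ell_m = k$ and counting multi-indices with prescribed multiplicities, one gets
\[
\E\Bigl(\sum_i a_i Y_i\Bigr)^{2k} = \sum_{\sum\ell_m = k} \frac{(2k)!}{\prod_m (2\ell_m)!} \prod_m a_m^{2\ell_m}\,\E Y_m^{2\ell_m}.
\]
Applying $\E Y_m^{2\ell_m}\le (2\alpha)^{2\ell_m}\E G^{2\ell_m} = (2\alpha)^{2\ell_m}\frac{(2\ell_m)!}{2^{\ell_m}\ell_m!}$ term by term, the factorials collapse via the multinomial theorem to
\[
(2\alpha)^{2k}\frac{(2k)!}{2^k k!}\Bigl(\sum_m a_m^2\Bigr)^k = (2\alpha)^{2k}\,\E\Bigl(\sum_m a_m G_m\Bigr)^{2k},
\]
so that $\norm{\sum a_i Y_i}_{2k}\le 2\alpha\,\norm{\sum a_i G_i}_{2k}$, and hence $\norm{\sum a_i X_i}_{2k}\le 2\alpha\,\norm{\sum a_i G_i}_{2k}$.

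Finally I would extend from even integers to arbitrary real $p\ge 2$ by sandwiching. Let $2k$ be the least even integer with $2k\ge p$, so $p\le 2k\le p+2$. Monotonicity of $L^q$-norms gives $\norm{\sum a_iX_i}_p\le \norm{\sum a_iX_i}_{2k}$, while on the Gaussian side $\norm{\sum a_iG_i}_{2k}=(\sum a_i^2)^{1/2}\norm{G}_{2k}$ and $\norm{\sum a_iG_i}_p=(\sum a_i^2)^{1/2}\norm{G}_p$, so
\[
\norm{\textstyle\sum a_i X_i}_p \le 2\alpha\,\frac{\norm{G}_{2k}}{\norm{G}_p}\,\norm{\textstyle\sum a_i G_i}_p.
\]
A direct check of the ratio $\norm{G}_{p+2}/\norm{G}_p$ using $\norm{G}_q=\sqrt{2}\,\bigl(\Gamma((q+1)/2)/\Gamma(1/2)\bigr)^{1/q}$ shows it is bounded by some universal $C<2$ uniformly in $p\ge 2$ (the worst case being at $p=2$, where the ratio is $3^{1/4}\approx 1.316$). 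Combining gives $\norm{\sum a_i X_i}_p\le 4\alpha\norm{\sum a_i G_i}_p$.

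The genuinely computational step is the moment expansion in the middle paragraph; once the combinatorial identity is recognized, everything else is bookkeeping, and the constant $4$ is comfortable, absorbing the factor $2$ from symmetrization and the factor $\le 2$ from the Gaussian moment-ratio. I expect no essential difficulty: the only point to watch is to use the hypothesis $\norm{X_i}_{2k}\le \alpha\norm{G}_{2k}$ precisely at the even moments, which is exactly where the expansion requires it.
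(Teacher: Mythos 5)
Your proof is correct and follows essentially the same three-step route as the paper's: symmetrize via an independent copy, compare even moments term-by-term using the sub-Gaussian hypothesis, and interpolate to non-integer $p$ by a Gaussian moment ratio bound. The only cosmetic differences are that you work directly with the symmetric variables $Y_i = X_i - X_i'$ (which are $2\alpha$-sub-Gaussian) rather than the paper's $\eps_i X_i$ plus a triangle inequality, and that you spell out the multinomial combinatorics the paper leaves as ``immediate by algebraic expansion''; the paper's interpolation factor $\sqrt{(2k-1)/(p-1)}$ is the hypercontractive form of the Gaussian moment ratio you verify by a $\Gamma$-function computation, and both give the same constant $4\alpha$.
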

\begin{proof}
Let $\set{X_i'}_{i=1}^n$ denote independent copies of $\set{X_i}_{i=1}^n$, and let $\set{\eps_i}_{i=1}^n$ denote an independent sequence of (symmetric) Bernoulli $\pm 1$ random variables. Then by the Contraction Principle (e.g. \cite{LedouxTalagrand-Book}):
\[
\norm{\sum_{i=1}^n a_i X_i}_{p} \leq \norm{\sum_{i=1}^n a_i (X_i - X_i')}_{p} = \norm{\sum_{i=1}^n a_i \eps_i (X_i - X_i')}_{p} \leq 2  \norm{\sum_{i=1}^n a_i \eps_i X_i }_{p} ~.
\]
Since $\set{\eps_i X_i}_{i=1}^n$ is a sequence of independent symmetric $\alpha$-sub-Gaussian r.v.'s, it follows immediately by algebraic expansion and vanishing of the odd coefficients that for any integer $k \geq 1$:
\[
\norm{\sum_{i=1}^n a_i \eps_i X_i }_{2k} \leq \alpha  \norm{\sum_{i=1}^n a_i G_i}_{2k} ~,
\]
and so the assertion follows for $p = 2k$. When $p \geq 2$  is not an integer, we simply choose an integer $k$ so that $2k-2 < p \leq 2k$, and evaluate:
\[
\norm{\sum_{i=1}^n a_i \eps_i X_i }_{p} \leq \norm{\sum_{i=1}^n a_i \eps_i X_i }_{2k} \leq \alpha \norm{\sum_{i=1}^n a_i G_i}_{2k} \leq \sqrt{\frac{2k-1}{p-1 }} \alpha \norm{\sum_{i=1}^n a_i G_i}_{p} \leq 2 \alpha \norm{\sum_{i=1}^n a_i G_i}_{p}  ~.
\]
\end{proof}

\begin{prop}
Let $\set{X_i}_{i=1}^n$, $\set{G_i}_{i=1}^n$ and $\set{a_{i,j}}_{i,j=1}^n$ be as in the Lemma and Theorem above. Then for any $p \geq 2$:
\[
\norm{\sum_{i,j=1}^n a_{i,j} X_i X_j}_{p} \leq C \alpha^2 \norm{\sum_{i,j=1}^n a_{i,j} G_i G_j}_{p} ~,
\]
where $C>0$ is a universal constant. 
\end{prop}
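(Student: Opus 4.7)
The plan is to reduce the order-two comparison to the order-one (linear) comparison provided by the preceding Lemma, via a decoupling/recoupling argument. First I would invoke the classical de la Pe\~na--Montgomery-Smith decoupling inequality for $U$-statistics with a symmetric zero-diagonal kernel: for a universal constant $K$,
\[
\norm{\sum_{i,j=1}^n a_{i,j} X_i X_j}_{p} \le K \norm{\sum_{i,j=1}^n a_{i,j} X_i X_j'}_{p},
\]
where $(X_1',\ldots,X_n')$ is an independent copy of $(X_1,\ldots,X_n)$ (the vanishing diagonal lets us pass freely between $\sum_{i,j}$ and $\sum_{i\ne j}$).

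Once the two blocks of variables are independent, I would Gaussianize them one at a time using the preceding Lemma. Conditioning on $(X_j')_j$ and setting $b_i := \sum_j a_{i,j} X_j'$, the decoupled sum becomes the linear combination $\sum_i b_i X_i$, so the Lemma applied conditionally (combined with Fubini) yields
\[
\norm{\sum_{i,j} a_{i,j} X_i X_j'}_{p} \le 4\alpha \norm{\sum_{i,j} a_{i,j} G_i X_j'}_{p}.
\]
Repeating the same argument conditionally on $(G_i)_i$ gives
\[
\norm{\sum_{i,j} a_{i,j} G_i X_j'}_{p} \le 4\alpha \norm{\sum_{i,j} a_{i,j} G_i G_j'}_{p},
\]
where $(G_j')_j$ is an independent standard Gaussian vector.

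The last step is to recouple the two independent Gaussian blocks into a single one. Setting $G^{\pm}_i := (G_i \pm G_i')/\sqrt{2}$ produces two independent standard Gaussian vectors, and since $a_{i,i} = 0$ the polarization identity
\[
\sum_{i,j} a_{i,j} G_i G_j' \;=\; \tfrac{1}{2}\Big(\sum_{i,j} a_{i,j} G^+_i G^+_j - \sum_{i,j} a_{i,j} G^-_i G^-_j\Big)
\]
(which uses the symmetry $a_{i,j}=a_{j,i}$) together with the triangle inequality yields $\norm{\sum_{i,j} a_{i,j} G_i G_j'}_{p} \le \norm{\sum_{i,j} a_{i,j} G_i G_j}_{p}$. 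Chaining the four estimates gives the proposition with $C = 16 K$. The main obstacle is the decoupling inequality in the first step: this is the only non-elementary ingredient and is a classical result of de la Pe\~na and Montgomery-Smith which can be invoked as a black box — everything else reduces to conditional applications of the preceding Lemma and a polarization identity for Gaussians.
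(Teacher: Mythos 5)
Your proof is correct, and the first three of your four inequalities coincide with the paper's: a single application of the de la Pe\~na--Montgomery-Smith decoupling inequality, followed by two conditional applications of the preceding Lemma (via Fubini) to Gaussianize the two blocks, giving the factor $4\alpha \cdot 4\alpha = 16\alpha^2$. Where you differ is the final recoupling step. The paper invokes the de la Pe\~na--Montgomery-Smith inequality a second time, in the ``recoupling'' direction, to pass from $\sum a_{i,j} G_i G_j'$ back to $\sum a_{i,j} G_i G_j$; you instead use the orthogonal change of variables $G_i^\pm = (G_i \pm G_i')/\sqrt 2$, the symmetry of $A$ and vanishing of the diagonal, and the polarization identity $\sum a_{i,j} G_i G_j' = \tfrac12\bigl(\sum a_{i,j} G_i^+ G_j^+ - \sum a_{i,j} G_i^- G_j^-\bigr)$, plus the triangle inequality. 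This is a clean elementary alternative: it exploits Gaussian rotational invariance, which is available here since both blocks are standard Gaussians, and thereby reduces the number of non-elementary inputs from two citations of decoupling to one. The resulting constant is also slightly cleaner, $C = 16K$ with $K$ the single decoupling constant, versus the paper's $16 C_1 C_2$ with two decoupling constants.
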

\begin{proof}
Let $\set{X_i'}_{i=1}^n$ and $\set{G_i'}_{i=1}^n$ denote independent copies of $\set{X_i}_{i=1}^n$ and $\set{G_i}_{i=1}^n$, respectively. Then:
\begin{eqnarray*}
\norm{\sum_{i,j=1}^n a_{i,j} X_i X_j}_{p} &\leq& C_1 \norm{\sum_{i,j=1}^n a_{i,j} X_i X_j'}_{p} \leq  16 C_1 \alpha^2 \norm{\sum_{i,j=1}^n a_{i,j} G_i G_j'}_{p}\\
& \leq& 
16 C_1 C_2 \alpha^2 \norm{\sum_{i,j=1}^n a_{i,j} G_i G_j}_{p} ~,
\end{eqnarray*}
where the first and third inequalities are decoupling inequalities due to de la Pe\~{n}a and Montgomery-Smith \cite{DeLaPenaMontgomerySmith-UStatistics}, and the middle one follows from separability and the Lemma. 
\end{proof}

\begin{proof}[Proof of Theorem \ref{thm:Latala2}]
By the above Proposition and the Hanson--Wright estimates on moments of Gaussian Chaoses of order 2 \cite{HansonWright}, we have for any $p \geq 2$:
\[
\norm{\sum_{i,j=1}^n a_{i,j} X_i X_j}_{p} \leq C \alpha^2 \norm{\sum_{i,j=1}^n a_{i,j} G_i G_j}_{p} \leq C' \alpha^2 (p \norm{A}_{op} + \sqrt{p} \norm{A}_{HS}) ~.
\]
The asserted tail decay estimate now follows by a standard application of Chebyshev's inequality and optimization on $p$. 

Note that the Hanson--Wright estimates in \cite{HansonWright} are in fact valid for any \emph{symmetric} sub-Gaussian distribution, and yield a dependence on $\norm{\;|A|\;}_{op}$ (where $|A|$ is the matrix with entries $\set{|a_{i,j}|}$ if $A = \set{a_{i,j}}$). However, it is important for our purposes to apply this result to non-symmetric distributions, explaining the symmetrization procedure carried out above. Furthermore, a better reference for the dependence on $\norm{A}_{op}$ as opposed to $\norm{\;|A|\;}_{op}$ above is e.g. Lata{\l}a \cite{LatalaGaussianChaos}. 
\end{proof}

\setlinespacing{1.0}
\bibliographystyle{plain}
\bibliography{../../ConvexBib}
\end{document}